\newif\ifhyper\IfFileExists{hyperref.sty}{\hypertrue}{\hyperfalse}
\ifhyper\usepackage{hyperref}\fi
\newenvironment{proofof}[1]{\par{\noindent \bf Proof of #1:}}{\qed\par}
\newtheorem{theorem}{Theorem}
\newtheorem{lemma}[theorem]{Lemma}
\newtheorem{proposition}[theorem]{Proposition}
\newtheorem{corollary}[theorem]{Corollary}
\newtheorem{claim}[theorem]{Claim}
\newtheorem{fact}[theorem]{Fact}
\newtheorem{definition}[theorem]{Definition}
\newtheorem{remark}[theorem]{Remark}
\newtheorem{question}[theorem]{Question}
\newtheorem*{theorem*}{Theorem}
\newcommand{\sgn}{\mathrm{sign}}
\renewcommand{\span}{\mathrm{span}}
\newcommand{\ignore}[1]{}
\newcommand{\cref}[1]{Corollary~\ref{cor:#1}}
\newcommand{\inp}[2]{\langle #1, #2\rangle}
\newcommand{\R}{{\mathbb{R}}}
\newcommand{\E}{\operatorname{{\bf E}}}
\newcommand{\poly}{\mathrm{poly}}
\renewcommand{\Pr}{\operatorname{{\bf Pr}}}
\newcommand{\dist}{\mathrm{dist}}
\date{}
\begin{document}

\setcounter{page}{0}

\title{Is your function low-dimensional?}

\author{
Anindya De\footnote{Northwestern. Email: \texttt{ anindya@eecs.northwestern.edu} Supported by NSF grant CCF 1814706} \and
Elchanan Mossel \footnote{MIT. Email: \texttt{elmos@mit.edu}. Partially supported by NSF award DMS-1737944 and
ONR award N00014-17-1-2598} \and
Joe Neeman\footnote{UT Austin. Email: \texttt{jneeman@math.utexas.edu}.}
}

\maketitle

\thispagestyle{empty}

\begin{abstract}

We study the problem of testing if a function depends on a small number of linear directions of its input data. We call a function $f$  a \emph{linear $k$-junta} if it is completely determined by some $k$-dimensional subspace of the input space. In this paper, we study the problem of testing whether a given $n$ variable function $f : \R^n \to \{0,1\}$, is a linear $k$-junta or $\epsilon$-far from all linear $k$-juntas, where the closeness is measured with respect to the Gaussian measure on $\R^n$. Linear $k$-juntas are a common generalization of two fundamental classes from Boolean function analysis (both of which have been studied in property testing) \textbf{1.} $k$- juntas which are functions on the Boolean
cube which depend on at most k of the variables and \textbf{2.}  intersection of $k$ halfspaces, a fundamental geometric concept class.  

We show that the class of linear $k$-juntas is not testable, but adding a surface area constraint makes it testable: we give a $\mathsf{poly}(k \cdot s/\epsilon)$-query non-adaptive tester for linear $k$-juntas with surface area at most $s$. We show that the polynomial dependence on $s$ is necessary.
Moreover, we show that if the function is a linear $k$-junta with surface area at most $s$, 
we give a $(s \cdot k)^{O(k)}$-query non-adaptive algorithm to learn the function \emph{up to a rotation of the basis}.  In particular, this implies that we can test the class of intersections of $k$ halfspaces in $\R^n$ with query complexity independent of $n$.

\end{abstract}
\newpage

\section{Introduction}
 Property testing of Boolean functions was  initiated in the seminal work of Blum, Luby and Rubinfeld~\cite{BLR93} and Rubinfeld and Sudan~\cite{RS96}. The high level goal of property testing is the following: Given (query) access to a Boolean function $f$, the algorithm must distinguish between (i) the case that $f$ belongs to a class $\mathcal{C}$ of Boolean functions (i.e., \emph{has a property $\mathcal{C}$}), and (ii) the case that $f$ is $\epsilon$-far from every function belonging to $\mathcal{C}$. Here the distance between functions is measured with respect to some underlying distribution $\mathcal{D}$ and is defined as $\mathsf{dist}(f,g) = \Pr_{x \sim \mathcal{D}} [f(x) \not = g(x)]$. Also, the algorithm is randomized and thus only needs to succeed with high probability (as opposed to probability one). The quality of a testing algorithm is measured by the number of oracle calls it makes to $f$ -- its \emph{query complexity} -- and the goal is to minimize this query complexity.

Since the works of \cite{BLR93, RS96}, property testing of Boolean functions has been a thriving field and by now several classes $\mathcal{C}$ have been studied from this perspective. These include classes such as linear functions~\cite{BLR93}, low-degree polynomials~\cite{jutpatrudzuc04, bhattacharyya2010optimal}, monotonicity~\cite{FLNRRS, chakrabarty2016n, khot2015monotonicity}, algebraic properties~\cite{KaufmanSudan:08, bhattacharyya2015unified, bhattacharyya2013} and juntas~\cite{FKRSS03, blais2009testing, Chen:2017:SQC} among many others (see the surveys~\cite{ron2010algorithmic, goldreich_2017}).

Special attention has been devoted to the problem of testing juntas.  Recall
that a Boolean function $f: \{-1,1\}^n \rightarrow \{-1,1\}$ is said to be a
$k$-junta if $f$ is only dependent on a subset $S \subseteq [n]$ (of size  $k$)
of the coordinates. Given (query) access to a function $f$, the problem of
testing juntas is to decide whether $f$ is a $k$-junta or $\epsilon$-far from
every $k$-junta (under the uniform distribution on $\{-1,1\}^n$). Some of the
initial motivation~\cite{FKRSS03} to study this  came from the problem of long-code
testing~\cite{belgolsud98, PRS02} (related to PCPs and inapproximability).
Another motivation comes from the \emph{feature selection} problem in machine
learning. It is well-known (see, e.g.~\cite{Blum:94, BlumLangley:97}) that
learning a $k$-junta requires at least $\Omega(k \log n)$ samples, however
$k$-juntas can be tested with query complexity independent of
$n$~\cite{FKRSS03}.

The most obvious generalization of $k$-juntas to functions $f: \R^n \to \{-1, 1\}$ is
to consider functions that depend only on $k$ of the $n$ coordinates. However,
in many statistical and machine learning models (e.g. PCA, ICA, kernel learning,
dictionary learning) the choice of basis is not a priori clear.
Therefore, it is natural to consider a notion of junta that is linearly invariant.
We define a function $f: \mathbb{R}^n \rightarrow \{-1,1\}$ to be a {\em linear $k$-junta} if there are $k$ unit vectors $u_1, \ldots, u_k \in \mathbb{R}^n$ and $g: \mathbb{R}^k \rightarrow \{-1,1\}$ such that $f(x) = g(\langle u_1, x \rangle,  \ldots, \langle u_k, x \rangle)$.

We note that the family of linear $k$-juntas includes important classes of functions that have been studied in the learning and testing literature. Notably it includes: 
\begin{itemize}
\item Boolean juntas: If $h : \{-1,1\}^n \to \{0,1\}$ is a Boolean junta, then 
$f(x) : \R^n \to \{0,1\}$ defined as $f(x) = h(\mathsf{sgn}(x_1),\ldots,\mathsf{sgn}(x_n))$ is a linear $k$-junta.  
\item {Functions of halfspaces: Linear $k$-juntas include as a special case both halfspaces and intersections of $k$-halfspaces. The testability of halfspaces was studied in ~\cite{MORS:09,MORS:10,RS:13}.} 
\end{itemize} 
{We consider the scenario where the ambient dimension $n$ is large but the dimension of the relevant subspace, i.e., $k$ is small. In this setting, we consider the following property testing question:}
\begin{question} \label{q:1}
Given a function $f$ and access to random examples, $(x,f(x))$, is it possible to test in number of queries  that depends on $k$ (but not on $n$) if 
$f$ is  a linear $k$-junta or far from all linear $k$-juntas?
\end{question} 
The problem of testing linear-juntas is closely related to the problem of {\em model compression} in machine learning. 
The goal of model compression is to take as an input a complex predictor/classifier function and to output a simpler predictor/classifier  see e.g
~\cite{bucilua2006model}. 
The question of model compression is extensively studied in the context of deep nets, see e.g.,~\cite{ba2014deep},
and follow up work, 
where the models are often rotationally invariant (with the caveat that the regularization often used in optimization might not be). Thus as a motivating example we may ask: given a complex deep net classifier, is there a classifier that has essentially the same performance and depends only on $k$ of the features?

To formally state question~\ref{q:1} we need to define what ``close" means. The standard definition is to state that $f$ is close to $g$ if $\Pr[f(x) \neq g(x)]$ is small, for some probability measure $\Pr$. The most natural choice of $\Pr$ for learning and testing  functions 
$f : \R^n \to \{-1,1\}$ is the Gaussian measure~\cite{MORS:09, kothari2014testing, neeman2014testing, balcan2012active, chen2017sample, KOS:08, vempala2010learning, diakonikolas2018learning, balcan2013active, harsha2012invariance}. It is particularly natural in our setup since the Gaussian measure is invariant under many linear transformation, e.g., all rotations.  

It is possible to show that the answer to question~\ref{q:1} is {\bf no} even if $n=2$ and $k=1$, since without smoothness assumptions,   
measurable functions $f: \mathbb{R}^n \rightarrow \{-1,1\}$ can look arbitrarily random to any finite number of queries (a more formal statement with stronger results will be discussed shortly). 
Since the groundbreaking work of~\cite{KOS:08}, it was recognized that the surface area of a function $f: \mathbb{R}^n \rightarrow \{-1,1\}$ is a natural complexity parameter (see Definition~\ref{def:surface-area}  for the definition of surface area -- roughly speaking, if $A = \{x: f(x)=1\}$, then the surface area of $f$ is the size of the boundary of $A$ weighted by the Gaussian measure). 
We therefore ask the following question: 
\begin{question} 
Given a function $f$ and access to random examples, $(x,f(x))$, is it possible to test in number of queries that depends on $k$ and $s$ (but not on $n$) if $f$ is close to any linear $k$-junta with surface area at most $s$?
\end{question} 
In our main result we give an affirmative answer to the question above: 
\begin{theorem*}
There is an algorithm \textsf{Test-linear-junta} which has the following guarantee: Given oracle access to $f : \mathbb{R}^n \rightarrow \{-1,1\}$, rank parameter $k$, surface area parameter $s$ and error parameter $\epsilon>0$, it makes $\mathsf{poly}(s,\epsilon^{-1}, k)$ queries and distinguishes between the following cases: 
\begin{enumerate}
\item The function $f$ is a linear $k$-junta whose surface area is at most $s$. 
\item {The function $f$ is $\epsilon$-far from any linear $k$-junta with surface area at most $s(1+\epsilon)$.} 
\end{enumerate}
\end{theorem*}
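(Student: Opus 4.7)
The plan is to reduce the testing problem to approximate subspace recovery: identify a $k$-dimensional subspace $\hat U \subset \R^n$ on which $f$ appears to depend, and then run a tolerant junta-type test on $\hat U$ together with a surface-area estimate. The foundation is a pair of classical smoothing facts for surface area: a function of Gaussian surface area at most $s$ satisfies $\|f - T_{1-\delta} f\|_2 = O(s\sqrt{\delta})$, and $T_{1-\delta} f$ is essentially supported on Hermite polynomials of degree at most $d = O(1/\delta)$. Choosing $\delta = \poly(\epsilon/s)$ therefore replaces $f$ throughout by a low-degree, smooth proxy while losing at most $\epsilon/10$ in $L^2$-norm.

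For subspace recovery, I would look at the Gaussian influence matrix
\[
M \;=\; \Ex_{x}\bigl[\nabla (T_\rho f)(x)\,\nabla (T_\rho f)(x)^{T}\bigr],
\]
which has rank at most $k$ and column span equal to $\span(u_1,\dots,u_k)$ whenever $f$ is a linear $k$-junta, since $T_\rho$ preserves the junta structure. Any quadratic form $v^{T}Mv$ is estimable from correlated Gaussian queries to $f$ alone: by Stein's identity, $\nabla T_\rho f(x) = \tfrac{\rho}{1-\rho^{2}}\,\Ex[f(Y)(Y-\rho x)\mid X=x]$ for $Y$ that is $\rho$-correlated with $x$. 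Since we only need the top-$k$ eigenspace of $M$, not the whole $n\times n$ matrix, I would combine this estimator with subspace power iteration initialized at $\poly(k,s/\epsilon)$ random Gaussian directions, producing the candidate $\hat U$ in $\poly(s,k,\epsilon^{-1})$ queries.

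Given $\hat U$, the verification has two components. First, dependence on $\hat U$ is tested by a conditional noise-sensitivity test: sample pairs $(X,Y)$ with $P_{\hat U}X=P_{\hat U}Y$ and independent residuals, and estimate $\Pr[f(X)\ne f(Y)]$; this probability is $0$ in the completeness case and lower-bounds the distance of $f$ to every function of $\hat U$ otherwise. Second, the surface-area constraint is verified via the Ledoux-type asymptotic $\mathrm{NS}_{1-\delta}(f) = (1+o(1))\sqrt{2\delta/\pi}\,s$, which allows estimating $s$ up to a $(1\pm\epsilon)$ factor from $\poly(s,\epsilon^{-1})$ correlated queries at a carefully chosen $\delta$.

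The main obstacle I anticipate is the soundness half of the subspace-recovery step: I need a quantitative inverse theorem stating that whenever the top-$k$ eigenspace of the empirical $M$ already captures most of $\|T_\rho f - \Ex f\|_2^2$, the function $f$ must be close to some linear $k$-junta of surface area at most $s(1+\epsilon)$. The subtle point is that projecting $f$ onto a nearby-but-slightly-rotated subspace can in principle inflate the surface area by a constant factor rather than a $(1+\epsilon)$ factor; preventing this requires coupling a spectral perturbation bound for $\hat U$ with the semigroup smoothing from Step~1 so that the approximating junta inherits the surface-area guarantee up to multiplicative $(1+\epsilon)$. A secondary concern is ensuring that the $\poly(k,s/\epsilon)$-dimensional random projection used to extract $\hat U$ does not miss a small but important eigendirection of $M$; this requires an eigenvalue lower bound on the true junta subspace, which is the natural ``noticeability'' counterpart to the surface-area upper bound on $f$.
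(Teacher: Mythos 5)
There is a genuine gap, and it is the central one: your plan requires \emph{explicitly} recovering a candidate subspace $\hat U \subset \R^n$ (via power iteration on the gradient covariance matrix $M$) and then sampling pairs $(X,Y)$ conditioned on $P_{\hat U}X = P_{\hat U}Y$. Neither step can be carried out with query complexity independent of $n$. Each power-iteration step needs an estimate of the $n$-dimensional vector $Mv$; even using the Stein-identity estimator you describe, the per-sample vector $(Y-\rho x)f(Y)$ has Euclidean norm $\Theta(\sqrt n)$, so driving the estimation error below the constant-size signal costs $\Omega(n)$ correlated queries, and merely representing (or conditioning on) $\hat U$ carries $\Omega(n)$ bits of information. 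The paper is built precisely to avoid this: it never forms any direction in $\R^n$. Instead it samples $r=\poly(k,s/\epsilon)$ random points $y_1,\dots,y_r$, estimates only the $r\times r$ Gram matrix $A_{i,j}=\langle D(P_t f)(y_i), D(P_t f)(y_j)\rangle$ (each entry via $O(1)$ correlated oracle queries, Lemma~\ref{lem:inner-product-1}), and tests whether its $(k+1)$-st singular value is small. Rank, unlike the eigenspace itself, is detectable from this small Gram matrix, which is what makes $n$-independence possible; your verification stage has to be replaced by something of this flavor or the whole query bound collapses.

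Beyond that, the two technical obstacles you flag at the end are not side concerns but are exactly the content of the paper's proof, and your proposal leaves them unresolved. The ``noticeability'' eigenvalue lower bound is the paper's Lemma~\ref{lem:subspace-escape}: if $f$ is $(\epsilon,s)$-smooth and $\epsilon$-far from every linear $k$-junta, then for \emph{every} subspace $W$ of codimension at most $k$, a random $y$ has $\Vert D(P_t f)(y)\Vert_W^2 \ge \epsilon^2/8$ with probability $\Omega(\epsilon^6/s^2)$; this is proved by combining the Poincar\'e-inequality argument of Lemma~\ref{lem:gradient-subspace} with the Ledoux smoothing bound, and it feeds a stopping-time argument showing the Gram matrix acquires $k+1$ singular values of size $\Omega(\epsilon^2)$. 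Your worry about the approximating junta's surface area inflating by a constant factor is resolved in the paper by Lemma~\ref{lem:dual-closeness}: given one function certifying junta-closeness and another certifying small surface area, one averages the smooth proxy over the junta's orthogonal complement and thresholds at a level chosen via the co-area formula, producing a single function that is simultaneously a linear $k$-junta and has surface area at most $s(1+\sqrt\epsilon)$. Without (a) an $n$-independent replacement for the subspace-recovery/conditional test, (b) the subspace-escape lemma, and (c) the co-area merging argument, the proposal does not yet constitute a proof.
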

This theorem is proven in Section~\ref{sec:test-rank}. 
We note that while the tester allows a slack of $1+\epsilon$ in 
the surface area between the soundness and completeness cases, such a slack factor is required even for the easier problem of estimating surface area in $\mathbb{R}^2$~\cite{neeman2014testing}.
 It is natural to ask if our dependence on the surface area is optimal. 
Towards answering this, in 
Section~\ref{sec:lb}, we prove: 
\begin{theorem*}
Any non-adaptive algorithm for testing whether an unknown Boolean function $f$ is a linear $1$-junta with surface area at most $s$ versus $\Omega(1)$-far from a linear $1$-junta makes at least $s^{\frac{1}{10}}$ queries. 
\end{theorem*}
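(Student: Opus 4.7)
The plan is to apply Yao's minimax principle by exhibiting two distributions $\mathcal{D}_{\mathrm{yes}}$ and $\mathcal{D}_{\mathrm{no}}$ on Boolean functions $\R^n\to\{-1,1\}$ and showing that for any deterministic non-adaptive tester with $q<s^{1/10}$ queries, the induced response distributions on $\{-1,1\}^q$ have total variation $o(1)$. Set $N=\Theta(s)$. Under $\mathcal{D}_{\mathrm{yes}}$ we sample $u\in S^{n-1}$ uniformly and $\sigma\in\{-1,1\}^{2N}$ uniformly and set $f(x)=\sigma_{\lfloor N(\langle u,x\rangle+1)\rfloor}$ on $\langle u,x\rangle\in[-1,1]$ (extended by $+1$ outside). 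Each such $f$ is a linear $1$-junta whose boundary is at most $2N$ parallel hyperplanes, with total Gaussian surface measure $\le s$ for an appropriate constant in $N$. Under $\mathcal{D}_{\mathrm{no}}$ we sample an orthonormal pair $(u_1,u_2)$ uniformly and $\tau\in\{-1,1\}^{(2N)^2}$ uniformly and set $f(x)=\tau_{\lfloor N(\langle u_1,x\rangle+1)\rfloor,\lfloor N(\langle u_2,x\rangle+1)\rfloor}$ (and $+1$ outside the square): a random $\pm 1$ pattern on a fine grid in a random $2$-plane.

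To verify that a draw from $\mathcal{D}_{\mathrm{no}}$ is $\Omega(1)$-far from \emph{every} linear $1$-junta (without any surface-area restriction), fix a direction $v\in\R^n$ and predictor $h:\R\to\{-1,1\}$, decompose $v=a_1u_1+a_2u_2+a_\perp u_\perp$, and condition on $\langle v,x\rangle$: the conditional distribution of the cell index of $(\langle u_1,x\rangle,\langle u_2,x\rangle)$ covers $\Omega(N)$ cells with comparable probability, so the i.i.d.\ random signs $\tau$ ensure conditional agreement with $h(\langle v,x\rangle)$ at most $\tfrac12+O(N^{-1/2})$ in expectation. A standard concentration argument on $\tau$ together with a net over $(v,h)$ promotes this to: with high probability over $\tau$, the resulting $f$ has Gaussian distance $\ge \tfrac14-o(1)$ from every linear $1$-junta.

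The heart of the argument is the indistinguishability of query responses. Because $\sigma,\tau$ supply fresh uniform signs to each cell, the response vector $(f(x_1),\ldots,f(x_q))$ is, conditioned on the vector of cell indices, uniform on $\{-1,1\}^{|\pi|}$ where $\pi$ is the partition of $[q]$ by cell coincidence; thus the TV distance between response distributions equals the TV distance between the induced laws on partitions. For a single pair $\{i,j\}$, the $\mathcal{D}_{\mathrm{yes}}$ collision probability is a fixed function $p(\rho_{ij})$ of the query correlation $\rho_{ij}=\langle x_i,x_j\rangle/n$, whereas under $\mathcal{D}_{\mathrm{no}}$ it is $p(\rho_{ij})^2$ (the two orthonormal coordinates induce asymptotically independent Gaussians). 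To control higher-order events I will use a Hermite--Fourier expansion on $\{-1,1\}^q$: for each subset $S\subseteq[q]$ I bound $\E[\prod_{i\in S}f(x_i)]$ by $(\max_{i\ne j}|\rho_{ij}|)^{\Omega(N)}$ via Mehler's formula, using that the cell-indicator function's Hermite spectrum concentrates on degrees of order $N=\Theta(s)$. Summing by Parseval and converting $\ell_2$ to $\ell_1$ via a factor $2^{q/2}$ yields $\|P_{\mathrm{yes}}-P_{\mathrm{no}}\|_{\mathrm{TV}}\le 2^{q/2}\cdot s^{-\Omega(1)}=o(1)$ in the regime $q<s^{1/10}$.

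The main obstacle is handling adversarial query configurations with pairwise correlations $\rho_{ij}$ near $\pm 1$, where the Hermite-decay estimate weakens. I plan to handle these by partitioning the query set into a \emph{generic} subset, on which the Hermite analysis applies, and a \emph{clustered} subset of nearly-collinear queries; for the latter the collision probabilities tend to $1$ in both $\mathcal{D}_{\mathrm{yes}}$ and $\mathcal{D}_{\mathrm{no}}$, and their difference is lower order. The exponent $1/10$ arises from balancing the Hermite-decay rate against the $2^{q/2}$ blowup from Cauchy--Schwarz; a tighter analysis would likely improve the exponent, but $s^{1/10}$ suffices for the claim.
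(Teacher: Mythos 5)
There is a genuine flaw in your choice of $\mathcal{D}_{\mathrm{no}}$: it is distinguishable from $\mathcal{D}_{\mathrm{yes}}$ with $O(1)$ non-adaptive queries, so Yao's lemma cannot yield any growing lower bound from this pair. First, as written (patterns supported on $[-1,1]$, value $+1$ outside), even a \emph{single} query at a point $x_0$ of norm $\sqrt n$ distinguishes: $\Pr_{\mathrm{yes}}[f(x_0)=-1]=\tfrac12\Pr[|\langle u,x_0\rangle|\le 1]\approx\tfrac12\Pr[|Z|\le 1]$, while $\Pr_{\mathrm{no}}[f(x_0)=-1]\approx\tfrac12\Pr[|Z|\le 1]^2$, a constant gap. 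Even if you repair this by extending the random pattern to all of $\R$ (resp.\ the whole $2$-plane), so that single-bit marginals are exactly uniform, a \emph{pair} of queries kills the construction: take $x_1,x_2$ with $\|x_1-x_2\|\asymp\sqrt n/N$, so that $N\langle u,x_1-x_2\rangle$ is of order a standard Gaussian. Then $\E_{\mathrm{yes}}[f(x_1)f(x_2)]=\Pr[\text{same cell}]$ is an absolute constant $p\in(0,1)$, whereas under $\mathcal{D}_{\mathrm{no}}$ the collision requires coincidence in both grid coordinates and the correlation is $\approx p^2$ -- exactly the $p$ versus $p^2$ phenomenon you yourself note, except that the adversary can place the pair at the intermediate scale where $p\approx 1/2$, so $|p-p^2|=\Omega(1)$ and the two-bit response distributions are at constant total variation distance. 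This also shows the claimed Fourier bound $\E[\prod_{i\in S}f(x_i)]\le(\max_{i\ne j}|\rho_{ij}|)^{\Omega(N)}$ is false at precisely this scale, which falls in the gap between your ``generic'' regime (Hermite decay) and your ``clustered'' regime (collision probability near $1$ in both distributions); the patch you defer to is where the whole difficulty lives, and it cannot work for this $\mathcal{D}_{\mathrm{no}}$.

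The paper avoids this by coupling the two distributions far more tightly: the no-functions are built from \emph{the same} random direction $\theta$, the same random strip boundaries $a_i$ and the same signs $b_i$ as the yes-functions, and differ only in that each strip is cut once by a single random orthogonal line $\{\langle x,\theta^\perp\rangle=z\}$, with the sign flipped across the cut. Consequently all low-order query statistics match exactly, and conditioned on the event that no strip contains query points on both sides of the cut, the two response distributions are \emph{identical}; the indistinguishability argument then reduces to showing this event fails with probability $O(s^{-1/10})$ for any $s^{1/10}$ queries, via the dichotomy that $\delta$-far pairs ($\delta=s^{-1/3}$) land in different strips w.h.p.\ (random rotation keeps them separated, and one of the $s$ random boundaries splits them), while $\delta$-close pairs land on the same side of the cut w.h.p. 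If you want to salvage your route, you would need to redesign $\mathcal{D}_{\mathrm{no}}$ so that it agrees with $\mathcal{D}_{\mathrm{yes}}$ on all pairwise (indeed all low-order) statistics, which essentially forces a coupled construction of the paper's type; separately, your farness argument for the no-distribution (net over directions $v$ and predictors $h$, concentration over the random signs) is plausible but would also need to be carried out carefully, as the paper does with its balanced-lines analysis.
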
 
Thus our tester  is optimal in the dependence on $s$ up to polynomial factors. 

\subsubsection*{Finding the linear-invariant structure} 
Given the previous theorem it is natural to ask for more, i.e., not just test if the function is a linear-junta but also find the junta in number of queries that depends only on $k$ and $s$ (but not on $n$). 
 In other words, could we output $g: \mathbb{R}^k \rightarrow \{-1,1\}$ such that there exists a {projection} matrix $A: \mathbb{R}^n \rightarrow \mathbb{R}^k$ and 
 $f$ is close to $g(A x)$ with query complexity independent of $n$? We give an affirmative answer to this question: 
 \begin{theorem*}
 Let $f:\mathbb{R}^n \rightarrow \{-1,1\}$ be a linear $k$-junta with surface area at most $s$. Then, there is  an algorithm \textsf{Find-invariant-structure} which on error parameter $\epsilon>0$, makes $(s \cdot k/\epsilon)^{O(k)}$ queries and outputs $g: \mathbb{R}^k \rightarrow [-1,1]$ so that the following holds:  there exists an orthonormal set of vectors $w_1, \ldots, w_k \in \mathbb{R}^n$ such that 
 $$
 \mathbf{E}[|f(x)  - g(\langle w_1, x\rangle, \ldots, \langle w_k, x \rangle)|] = O(\epsilon). 
 $$
Moreover,  for some $g^{\ast} : \R^k \to \R$: 
 $$
 f(x) = g^{\ast}(\langle w_1, x\rangle, \ldots, \langle w_k, x \rangle).
 $$

 \end{theorem*}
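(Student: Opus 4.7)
The plan is to split the task into (i) finding an approximation to the hidden $k$-dimensional subspace $V = \mathrm{span}(u_1,\ldots,u_k)$ on which $f$ truly depends, and (ii) learning the restriction of $f$ to $V$ by sampling on a fine grid. To avoid paying in the ambient dimension $n$, the representational trick throughout is to encode every $n$-dimensional vector we handle as an implicit formal combination $v = \sum_j c_j x^{(j)}$ of already-queried Gaussian samples; an inner product $\langle v, x\rangle$ against a fresh Gaussian $x$ is then computed from the scalars $\langle x^{(j)}, x\rangle$, so the $n$ coordinates of $v$ are never materialised.

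For step (i) I would exploit the low-degree Hermite expansion of $f$. Because $f(x) = g^{\ast}(u_1^{\top} x,\ldots,u_k^{\top} x)$, every Hermite moment tensor $T_\ell(f) = \E[f(x)\,\mathcal H_\ell(x)]$ lies in $V^{\otimes \ell}$; in particular the matrix $B = \E[f(x) x x^{\top}] - \E[f]\, I$ has rank at most $k$ with $\mathrm{range}(B) \subseteq V$, and higher unfoldings have the same row-space property. Iteratively for $\ell = 1, 2, \ldots, D$ with $D = \poly(s, k, 1/\eps)$, I would apply such moments to implicitly stored probe vectors, orthogonalise against the directions collected so far, and keep any residual of non-negligible norm. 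For instance, to estimate $B w$ for an implicitly stored $w$ we use the unbiased Monte-Carlo average
\[
\widehat{Bw} \;=\; \frac{1}{M}\sum_{j=1}^{M} f(x^{(j)}) \bigl(\langle x^{(j)}, w\rangle\, x^{(j)} - w\bigr),
\]
which uses $M = \poly(s, k, 1/\eps)$ queries and produces another implicit vector in $\R^n$. After at most $k$ successful augmentations we obtain an orthonormal set $\tilde w_1, \ldots, \tilde w_k$ that is $\eps$-close to some orthonormal basis of $V$.

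With the subspace pinned down, learning $g^{\ast}$ reduces to a problem about a function on $\R^k$ of surface area at most $s$. I would lay down an $\eps/s$-grid $\Gamma$ inside a Euclidean ball of radius $O(\sqrt{k \log(1/\eps)})$, query $f$ at the point $\sum_i y_i \tilde w_i$ for each node $y \in \Gamma$ (optionally averaging over a thin Gaussian shell in the orthogonal complement, to wash out the $\tilde w_i$-versus-$w_i$ discrepancy), and output the nearest-neighbour interpolant $g : \R^k \to [-1,1]$. A Boolean function on $\R^k$ of surface area $s$ is mislabelled on at most an $O(s \cdot \eps/s) = O(\eps)$ Gaussian-measure neighbourhood of its boundary under a grid of side $\eps/s$, so the expected $L_1$ error is $O(\eps)$; the grid contains $(sk/\eps)^{O(k)}$ points, which dominates the overall query budget. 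The ``moreover'' clause is automatic: because $f$ is genuinely a linear $k$-junta, taking $w_1,\ldots,w_k$ to be any orthonormal basis of $V$ (say, the closest one to $\tilde w_1,\ldots,\tilde w_k$) gives $f(x) = g^{\ast}(\langle w_1, x\rangle,\ldots,\langle w_k, x\rangle)$ by definition, and the $O(\eps)$ perturbation from $\tilde w_i$ to $w_i$ is absorbed into $g$.

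The main obstacle is the quantitative claim that Hermite moments of degree at most $D = \poly(s, k, 1/\eps)$ already span $V$. If some direction $u \in V$ were almost invisible at all moments of degree $\le D$, then the standard surface-area tail bound $\sum_{|S|>D}\hat f(S)^2 = O(s^2/D)$ combined with a Parseval argument would force $f$ to be $\eps$-close to a linear $(k-1)$-junta supported on $V \cap u^{\perp}$; dropping $u$ reduces the problem to a smaller value of $k$, and iterating handles any such degeneracy. Propagation of estimation errors across the $k$ iterations is then a routine Davis--Kahan perturbation argument on the empirical moment matrices, and the final query count is dominated by the grid of step (ii), matching the claimed $(sk/\eps)^{O(k)}$ bound.
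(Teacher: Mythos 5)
There is a genuine gap, and it is fatal to the approach as written. In step (i) you promise explicit (if implicitly represented) vectors $\tilde w_1,\ldots,\tilde w_k$ that are $\epsilon$-close to an orthonormal basis of the hidden subspace $V$, obtained from Monte-Carlo estimates such as $\widehat{Bw}=\frac1M\sum_j f(x^{(j)})\bigl(\langle x^{(j)},w\rangle x^{(j)}-w\bigr)$. Each summand is a vector of Euclidean norm $\Theta(\sqrt n)$, so the estimator's $\ell_2$ error is $\Theta(\sqrt{n/M})$; with $M=\mathsf{poly}(s,k,1/\epsilon)$ independent of $n$ the estimate is swamped by a noise component lying essentially outside $V$. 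Concretely, for $k=1$ and $f=\sgn(\langle u,\cdot\rangle)$ the normalized estimate has correlation only $O(\sqrt{M/n})$ with $u$, i.e.\ it is nearly orthogonal to $V$, and your ``keep any residual of non-negligible norm'' rule would also admit purely spurious directions, since the noise part never shrinks. This is not a repairable technicality: each oracle answer is one bit, so producing a unit vector within constant distance of $V$ requires $\Omega(n)$ queries information-theoretically --- this is exactly why the paper stresses that the relevant directions can only be identified implicitly. Step (ii) then inherits the damage: $f$ only sees the projection of $\tilde w_i$ onto $V$, which in this regime has norm $o(1)$, so your grid queries $f(\sum_i y_i\tilde w_i)$ probe $g^{\ast}$ only in a vanishing neighbourhood of the origin and the interpolant cannot be $O(\epsilon)$-accurate; moreover, even with a genuinely correct but implicit basis one cannot physically form the query point $\sum_i y_i w_i\in\R^n$ without an explicit representation, so the grid-sampling step is not executable in the stated query model.

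The paper's proof is organized precisely to avoid ever estimating an $n$-dimensional vector. It works with the exact gradients $DP_tf(y_i)$ at sampled base points $y_i\sim\gamma_n$, which lie \emph{exactly} in $V$ because $P_tf$ is a $V$-junta, and it only estimates scalars: pairwise inner products $\langle DP_tf(y_i),DP_tf(y_j)\rangle$ (Lemma~\ref{lem:inner-product-1}) are used to build an implicit near-orthonormal basis (Lemma~\ref{prop:linear}, Lemma~\ref{lem:orthogonalize}), and for each fresh sample $x\sim\gamma_n$ the projections $\langle DP_tf(y_j),x\rangle$ are estimated by further oracle calls (Lemma~\ref{lem:compute-derivative-x}). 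Lemma~\ref{lem:subspace-escape} plays the role you assign to the ``invisible direction'' argument: either a random $y$ yields a new direction with non-negligible gradient mass outside the current span, or $P_tf$ is already close to a junta on that span. Finally, instead of grid interpolation, the paper performs hypothesis selection against a Lipschitz net $\mathsf{Cover}(t,\ell,\delta)$ of functions on $\R^\ell$ (Theorem~\ref{thm:net}), comparing each candidate to $P_tf$ on Gaussian samples using the estimated projections. To salvage your outline you would need to replace step (i) by such a gradient/inner-product scheme (never committing to explicit vectors) and step (ii) by per-sample projection estimates plus a net, at which point you have essentially reconstructed the paper's argument.
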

Informally, the theorem states that it is possible to find the ``linear-invariant" structure (i.e., the structure up to unitary transformation) of $f$ in number of queries that dependens on $s$ and $k$. Of course, one cannot hope to output the relevant directions $w_1, \ldots, w_k$ explicitly as even describing these directions will require $\omega(n)$ bits of information and thus, at least those many queries. We note that the number of functions in $k$ dimensions with $O(1)$ surface area (even up to a unitary rotation) is $\exp (\exp (k))$ and thus even our output has to be $\exp(k)$ bits. Thus, it is not possible to significantly improve on our $\exp(k  \log k)$ query complexity in finding the linear-invariant structure.
  
 {
 \subsubsection*{Testability of linear invariant families of linear $k$-juntas} 
 
 Our ability to find the linear-invariant structure of linear $k$-juntas additionally allows us to test subclasses of linear $k$-juntas which are closed under rotation.
 
 \begin{definition}
 Let $\mathcal{C}$ be any collection of functions mapping $\mathbb{R}^k$ to $\{-1,1\}$. For any $n \in \mathbb{N}$ let:  
 \[
 \mathsf{Ind}(\mathcal{C})_n= \{f : \exists g \in \mathcal{C} \ \textrm{and orthonormal vectors } w_1, \ldots, w_k \textrm{ such that } f(x) = g(\langle w_1, x\rangle, \ldots, \langle w_k,x \rangle). \} 
 \]
Define $\mathsf{Ind}(\mathcal{C}) = \cup_{n=k}^\infty \mathsf{Ind}(\mathcal{C})_n$  and call it the \emph{induced class of $\mathcal{C}$}. 
\end{definition}
The two key properties of $\mathsf{Ind}(\mathcal{C})$ are (i) each function $f \in \mathsf{Ind}(\mathcal{C})$ is a linear $k$-junta, (ii) the class $\mathsf{Ind}(\mathcal{C})$ is closed under unitary 
 transformations. 
 The  definition is a continuous analogue of
 the so-called ``induced subclass of $k$-dimensional functions" from \cite{gopalan2009testing} (that paper was about testing functions over $\mathsf{GF}^n[2]$). 
 The following theorem shows that for any $\mathcal{C}$, $\mathsf{Ind}(\mathcal{C})$ is testable without any dependence on the ambient dimension. 

 \begin{theorem*}
 Let $\mathcal{C}$ be a collection of functions mapping $\mathbb{R}^k$ to $\{-1,1\}$. Further, for every $f \in \mathsf{Ind}(\mathcal{C})$, $\mathsf{surf}(f) \le s$. Then, there is an algorithm \textsf{Test-structure-$\mathcal{C}$} which has the following guarantee: Given oracle access to $f: \mathbb{R}^n \rightarrow \{-1,1\}$ and an error parameter $\epsilon>0$, the algorithm makes $(s \cdot k/\epsilon)^{O(k)}$ queries and distinguishes between the cases (i) $f \in \mathsf{Ind}(\mathcal{C})$ and (ii) $f$ is $\epsilon$-far from every function $g \in \mathsf{Ind}(\mathcal{C})$. 
 \end{theorem*}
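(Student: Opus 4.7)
The plan is to implement \textsf{Test-structure-$\mathcal{C}$} as a three-stage pipeline that reduces everything to a purely $k$-dimensional computational question. Concretely, I would (i) invoke \textsf{Test-linear-junta} with rank $k$, surface-area bound $s$, and accuracy $\epsilon' := \epsilon/M$, where $M=(sk/\epsilon)^{ck}$ is the query count of \textsf{Find-invariant-structure} for a suitable absolute constant $c$, rejecting immediately if that sub-tester rejects; (ii) invoke \textsf{Find-invariant-structure} with parameters $(k,s,\epsilon)$ to obtain an explicit description of a function $g\colon\mathbb{R}^k\to[-1,1]$; and (iii) decide, using no further queries to $f$, whether
\[
\inf_{h\in\mathcal{C},\ O\in\mathrm{O}(k)} \mathbf{E}_{y\sim\mathcal{N}(0,I_k)}\bigl[\,\lvert g(y)-h(Oy)\rvert\,\bigr]\;\le\;C\epsilon
\]
for a sufficiently large absolute constant $C$, accepting iff so. The query complexity stays at $(sk/\epsilon)^{O(k)}$: stage (ii) by hypothesis, and stage (i) because \textsf{Test-linear-junta} is polynomial in $1/\epsilon'$, so the extra $\mathrm{poly}(M)$ factor is absorbed into the $O(k)$ exponent of $M$.

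For completeness, if $f\in\mathsf{Ind}(\mathcal{C})$ then $f$ is a linear $k$-junta with $\mathsf{surf}(f)\le s$, so stage (i) accepts with high probability. Stage (ii) then returns $g$ together with an implicit orthonormal frame $(w_1,\ldots,w_k)$ such that $f(x)=g^\star(Wx)$ exactly (writing $Wx:=(\langle w_i,x\rangle)_i$) and $g$ is $O(\epsilon)$-close to $g^\star$ in the $L^1$ Gaussian norm on $\mathbb{R}^k$. Writing $f(x)=h(Vx)$ with $h\in\mathcal{C}$ and $V$ orthonormal, the essential subspace of $f$ must coincide with $\mathrm{span}(w_i)=\mathrm{span}(v_i)$, so some orthogonal $O\in\mathrm{O}(k)$ satisfies $g^\star=h\circ O$; the pair $(h,O)$ then witnesses the threshold in stage (iii).

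For soundness, suppose every stage accepts. Passing stage (i) implies that $f$ is $O(\epsilon')$-close in $L^1$ to some genuine linear $k$-junta $\tilde f$ with $\mathsf{surf}(\tilde f)\le s(1+\epsilon)$. Since \textsf{Find-invariant-structure} is non-adaptive -- its queries are Gaussian samples drawn independently of the oracle -- I can couple its executions on $f$ and on $\tilde f$; a union bound over the $M$ queries gives $\Pr[\text{some oracle answer differs}]\le M\cdot O(\epsilon')=O(\epsilon)$, so with high probability the same $g$ is output, and the guarantee of \textsf{Find-invariant-structure} applied to the actual linear junta $\tilde f$ yields $\mathbf{E}[\lvert\tilde f(x)-g(Wx)\rvert]=O(\epsilon)$ for some orthonormal frame $W$. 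If stage (iii) accepts, then $g(\cdot)\approx h(O\cdot)$ for some $h\in\mathcal{C}$ and $O\in\mathrm{O}(k)$; setting $V':=OW$ makes $V'$ orthonormal and $h\circ V'\in\mathsf{Ind}(\mathcal{C})$. Chaining the three approximations bounds $\mathbf{E}[\lvert f(x)-h(V'x)\rvert]$ by $O(\epsilon)$, contradicting the $\epsilon$-farness of $f$ from $\mathsf{Ind}(\mathcal{C})$ once $C$ is chosen appropriately.

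The main technical obstacle is this coupling argument: it requires that \textsf{Find-invariant-structure} genuinely be non-adaptive with a query distribution independent of $f$, and that \textsf{Test-linear-junta} have polynomial (rather than exponential) dependence on $1/\epsilon'$, so that shrinking $\epsilon'$ to $\epsilon/M$ remains affordable. Both properties are exactly what the preceding two theorems promise, so the reduction should go through; the backup plan, if \textsf{Find-invariant-structure} turned out to be subtly adaptive, would be to prove a robust version of its guarantee that degrades additively in the $L^1$ distance from an exact linear $k$-junta, after which the rest of the reduction is unchanged.
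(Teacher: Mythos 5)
Your proposal matches the paper's own construction of \textsf{Test-structure-$\mathcal{C}$}: run \textsf{Test-linear-junta} with error parameter of order $(\epsilon/(s\cdot k))^{O(k)}$, then \textsf{Find-invariant-structure}, whose non-adaptive, Gaussian-marginal query distribution (exactly the content of Remark~\ref{rem:gaussian}) licenses your coupling with a nearby genuine linear $k$-junta, and finally an offline, query-free check of the output $g$ against $\mathcal{C}$ up to rotation. The only cosmetic deviation is that the paper invokes \textsf{Find-invariant-structure} with surface parameter $2s$ so that the $(1+\epsilon)s$ slack from the surface-area tester is covered, which your stage (ii) should do as well.
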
 
 

 A particularly important instantiation of the 
 above theorem is the following: Let $\mathcal{C}_{B}$ be any collection of functions mapping $\{-1,1\}^k \rightarrow 
 \{-1,1\}$ and let $\mathcal{C}$ be defined as 
 \[
 \mathcal{C} = \{g: x \mapsto h(\langle w_1, x \rangle - \theta_1, \ldots, \langle w_k, x \rangle - \theta_k) | \ w_1,\ldots, w_k \in \mathbb{R}^k, \ \theta_1, \ldots, \theta_k \in \mathbb{R}, \ h \in \mathcal{C}_B \}. 
 \]
 Note that $\mathcal{C}$ defined above is the set of functions obtained by composing a function from $\mathcal{C}_B$ with $k$-dimensional halfspaces. Consequently, $\mathsf{Ind}(\mathcal{C})$ is the of all functions which can be obtained by composing a function from $\mathcal{C}_B$ with halfspaces. As an example, if $\mathcal{C}_B$ consists of the $\mathsf{AND}$ function on $k$ or fewer bits, then $\mathsf{Ind}(\mathcal{C})$ is the class of ``intersections of $k$-halfspaces". Since the surface area of any Boolean function of $k$-halfspaces is bounded by $O(k)$ it follows that the this class is testable with $(k/\epsilon)^{O(k)}$ queries.

 Roughly speaking, the algorithm \textsf{Test-structure-$\mathcal{C}$} works as follows: 
we first run the routine \textsf{Test-linear-junta} -- if the target function $f$ passes this test, we are guaranteed that it is (very close to) a linear $k$-junta with surface area $s$. We then run the routine \textsf{Find-invariant-structure}. If the output of this step is $g$, then we can check whether $g$ is close to some function in $\mathsf{Ind}(\mathcal{C})_k$ and accept accordingly. We crucially note here that the last step, namely checking whether $g$ is close to a function in $\mathsf{Ind}(\mathcal{C})_k$ makes no queries to $f$.  While the overall intuition of this procedure is obvious, the precise  proof is more delicate and is given in Section~\ref{aff:inv}. 
}

\subsection{Related Work} 

\paragraph{Testing Boolean juntas}

As we have already mentioned, the problem of testing juntas on $\{-1, 1\}^n$
has already been well-studied. For example, it is known~\cite{blais2009testing,Chen:2017:SQC} that $\tilde \Theta(k^{3/2})$
queries are necessary and sufficient for non-adaptively testing $k$-juntas
with respect to the uniform distribution, while $\tilde \Theta(k)$ queries
are necessary and sufficient in the adaptive setting~\cite{blais2012property}.
It even turns out to be possible to test $k$-juntas with respect to an
unknown distribution~\cite{CLSSX18}, although in that setting the non-adaptive
query complexity becomes exponential in $k$.
{{We emphasize that while the problem of junta testing inspires the problems considered in this paper, junta testing algorithms have no bearing on the problem of testing linear juntas
-- e.g., unlike~\cite{CLSSX18}, there is  no  reason to believe that distribution-free testing of
linear juntas on $\R^n$ is even possible, given that the space of probability
measures on $\R^n$ is much richer than the space of probability measures on $\{-1, 1\}^n$.}}

\paragraph{Learning juntas of half-spaces.} 

{There has been extensive work on {\em learning} intersections and other functions of $k$  half-spaces~\cite{BlumKannan:97, vempala2010random, VX13, KOS:08} . 
Note that these algorithms (necessarily) require time polynomial in $n$ (whereas our \emph{raison d'etre is a query complexity independent of $n$}). In particular, 
\cite{BlumKannan:97} provided conditions under which intersections of halfspaces can be learnt under the uniform distribution on the ball. 
Vempala~\cite{vempala2010random} extended  their result to  arbitrary log-concave distributions. 
In terms of the expressivity of the function class, \cite{VX13} explicitly considered the problem of learning linear $k$-juntas (they called it subspace juntas) and showed that a linear $k$-junta of the form 
 $g(\langle w_1, x \rangle, \ldots, \langle w_k, x \rangle)$ is learnable in polynomial time if the function $g$ is identified by low moments and robust to small rotations in $\mathbb{R}^n$. Along a related but different axis, \cite{KOS:08} showed that functions of bounded surface area in the Gaussian space are learnable in polynomial time. Finally, we remark that there also has been work in learning intersections and other functions of halfspaces over the Boolean hypercube as well~\cite{KOS:02, gopalan2012learning}. 
}
 \ignore{learned some functions $g$ of $k$ half-spaces in polynomial time if the functions $g$ are identified by low moments and robust to small rotations in $\R^n$, while \cite{KOS:08} learned functions of bounded Gaussian surface area.}


\paragraph{Linearly Invariant Testing over Finite Fields}
We note that the set of linear-juntas is linearly invariant. If $f$ is a linear $k$-junta and $B$ is any $n \times n$ matrix then $x \mapsto f(Bx)$ is also a linear $k$-junta. 
Over finite fields, \cite{KaufmanSudan:08} studied general criteria for when a linearly invariant property is testable, see also 
\cite{bhattacharyya2013}. In particular, \cite{gopalan2009testing}, gave a $2^{O(k)}$ query complexity algorithm to test linear juntas over finite fields. Moreover, they also show that an exponential lower bound on $k$ is  necessary. 
This should be contrasted with our result which shows that linear juntas over the Gaussian space can be tested with $\mathsf{poly}(k)$ queries. 

{\paragraph{Testing (functions) of halfspaces}
The question of testing halfspaces was first considered in \cite{MORS:10} who showed that in the Gaussian space (as well as the Boolean space), halfspaces are testable with $O(1)$ queries. Subsequently, the second and third authors (Mossel and Neeman~\cite{mossel2015robust}) gave a different testing algorithm for a single halfspace in the Gaussian space. In fact, Harms~\cite{harms19} recently showed that halfspaces over any rotationally invariant distribution can be tested with sublinear number of queries. 
 However, as far as we are aware, prior to our work, no non-trivial bounds were known for even testing the intersection of two halfspaces. As remarked earlier, from our work, it follows that for any arbitrary $k$, intersection of $k$-halfspaces can be tested in the Gaussian space with $\exp(k \log  k)$ queries.}\\

\subsection{Techniques}
A major difference between linear juntas over finite fields and linear juntas over Gaussian space is the ``infinitesimal geometry" that can be used in the latter and does not exist in the former.
In particular, the linear part $\mathcal{W}_1(f)$ 
of the Hermite expansion of $f$ is approximately given by 
$e^{-t} (P_t f - \E[f])$ for large $t$. Here $P_t f$ is the Ornstein-Uhlenbeck operator. 
Both the quantities, $\E[f]$ and $P_t f$ can be approximated by sampling a small number of points from the Gaussian distribution and evaluating $f$ at those points. 
Moreover, if  $f(x) = g(\langle u_1, x \rangle,  \ldots, \langle u_k, x \rangle)$ is a linear junta, then the linear part of its Hermite expansion, $\mathcal{W}_1(f)$, lies in the span of $u_1,\ldots,u_k$. 

We would like to obtain  ``many more directions" that lie in the span of $u_1,\ldots,u_k$. 
We do so by considering functions of the form $f_{t,y}(x) = f(e^{-t} y + \sqrt{1-e^{-2 t}} x)$, for randomly chosen $y$ and an appropriate value of $t$ 
(the experts will recognize $f_{t,y}$ as part of the definition of the Ornstein-Uhlenbeck operator). Note that $f_{t,y}$ is also a linear junta defined by the same direction $u_1,\ldots,u_k$ and therefore the linear part of the Hermite expansion of $f_{t,y}$,  is also in the span of $u_1,\ldots,u_k$. 

It is now natural to propose the following algorithm to test if a function is a linear $k$-junta: choose points $y_i$ at random and ``compute''
$\mathcal{W}_1(f_{t,y_i})$ at these points. Then if the rank of the matrix spanned by 
$(\mathcal{W}_1(f_{t,y_i}))_i$ is at most $k$, then output YES; otherwise, output NO. 

Of course, actually computing $\mathcal{W}_1(f_{t,y})$ requires $\poly(n) \gg \poly(k)$ samples. 
Instead we will approximately compute the Gram matrix 
\[
A_{i,j} = \langle \mathcal{W}_1(f_{t,y_i}), \mathcal{W}_1(f_{t,y_j}) \rangle.
\]
and test if it is close or far from a matrix of rank $k$. 
One advantage of using the Gram matrix, is that we can evaluate the entries $A_{i,j}$ by sampling random inputs to evaluate the expected values
\[ 
\E[\mathcal{W}_1(f_{t,y_i})(x)  \mathcal{W}_1(f_{t,y_j})(x)].
\] 

How do we know that $\mathcal{W}_1(f_{t,y_i})(x)$ are not very close to $0$?
If $f$ has a bounded surface area then $f$ is close to the noise stable function $P_t f$. For such noise stable functions, we prove that with good probability at a random point $x$,
$\mathcal{W}_1(f_{t,y_i})(x)$ will be of non-negligible size. In fact, one of our main technical lemmas (Lemma \ref{lem:subspace-escape}) proves much more. It shows that if $f$ is $\epsilon$ far from any linear-$k$-junta then for any subspace $W$ with co-dimension at most $k$, it holds that for a random $y$ with probability at least $\poly(\epsilon)$, the projection of $\mathcal{W}_1(f_{t,y_i})(x)$ into $W$ will have norm at least 
$\poly(\epsilon)$. This result is later combined with a perturbation argument to establish to show that if $f$ is $\epsilon$-far from a linear $k$-junta then indeed the Gram matrix will have $k+1$ large eigenvalues. 
Since our analysis relies on the function $f$ having surface area at most $s$, the first stage of the algorithm uses the algorithm by the third author \cite{neeman2014testing} to test if the function of interest is of bounded surface area. 

The algorithm to identify  the linear invariant structure of $f$ builds up on the ideas in the algorithm to test linear $k$-juntas. More precisely, we can show that if $f$ is a linear $k$-junta with surface area $s$, 
\begin{enumerate}
\item we can find directions $y_1,\ldots, y_\ell$ such that 
$f$ is close to a function on the space spanned by the directions $\mathcal{W}_1(f_{t,y_1}),\ldots,\mathcal{W}_1(f_{t,y_\ell})$ (for some $\ell \le k$). 
\item While we cannot find $\mathcal{W}_1(f_{t,y_j})$ explicitly for any $j$, we can  evaluate $\langle \mathcal{W}_1(f_{t,y_j}), x\rangle$ at any point $x$ up to good accuracy. 
\item With the above observation, the high level idea is to \emph{try out all smooth functions} on the subspace spanned by $\{\langle\mathcal{W}_1(f_{t,y_1}), x\rangle ,\ldots,\langle \mathcal{W}_1(f_{t,y_\ell}),x \rangle\}$.  Perform \emph{hypothesis testing} for each such function against $f$ and output the most accurate one. 
\end{enumerate}
The crucial part in the above argument is that even if we have $\mathcal{W}_1(f_{t,y_1}),\ldots,\mathcal{W}_1(f_{t,y_\ell})$ implicitly, the space of ``all smooth functions" on $\mathsf{span}(\langle\mathcal{W}_1(f_{t,y_1}), x\rangle ,\ldots,\langle \mathcal{W}_1(f_{t,y_\ell}),x \rangle)$ has a cover whose size is independent of $n$. This lets us identify the linear invariant function defining $f$ with query complexity just dependent on $k$ and $s$.  

In order to prove lower bounds in terms of surface area, we construct a distribution over linear $1$-juntas
with large surface area by splitting $\R^2$ into many very thin parallel strips (oriented in a random direction)
and assign our function a random $\pm 1$ value on each strip. (Note that the surface area of such a function
is proportional to the number of strips.) The intuition is
that no algorithm that makes non-adaptive queries can tell that such a random
function is a 1-junta, because in order to ``see'' one of these strips, the
algorithm would need to have queried multiple far-away points in a single
strip. But if the number of queries is small relative to the number of strips
then this is impossible -- with high probability every pair of far-away query points
will end up in different strips.
In order to make this intuition rigorous, we also introduce a distribution on linear $2$-juntas
by randomly ``cutting'' the thin strips once in the orthogonal direction. We show that
for any non-adaptive set of queries, the two distributions induce almost identical query distributions,
and Yao's minimax lemma implies that no algorithm can distinguish between our random $1$-juntas and our
random $2$-juntas.

\vspace{-0.2cm}

\section{Preliminaries}

In this paper, unless explicitly mentioned otherwise, the domain $\mathbb{R}^n$ is always endowed with the measure $\gamma_n$, the standard $n$-dimensional Gaussian measure. Likewise, we will only consider functions $f \in L_2(\gamma_n)$. For such a function, and $t>0$, we recall that the so-called Ornstein-Uhlenbeck operator $P_t$ is defined as follows: 
\[
P_t f (x) = \int_y f(e^{-t} x + \sqrt{1-e^{-2t}}z) \gamma_n(z) dz
\]

We will also need to recall some very basic facts about Hermite expansion for functions $f \in L_2(\gamma_n)$. In particular, recall that for all $q \ge 0$, we can define the  Hermite polynomial $H_q: \mathbb{R} \rightarrow \mathbb{R}$ as 
\[
H_0(x) =1 ; \   H_q(x) = \frac{(-1)^q}{\sqrt{q!}}  \cdot e^{x^2/2}  \cdot \frac{d^q}{dx^q} e^{-x^2/2}. 
\]

Further, for the ambient space $\mathbb{R}^n$, let us
define the space $\mathcal{W}_q$ to be the linear subspace of $L_2(\gamma_n)$ spanned by $\{H_q(\langle v,x\rangle) : v \in \mathbb{S}_n\}$. Here $\mathbb{S}_n$ denotes the unit sphere in $n$-dimensions.  For a function $g \in L_2(\gamma_n)$, we let $\widehat{g}_q: \mathbb{R}^n \rightarrow \mathbb{R}$ denote the projection of $g$ to the subspace $\mathcal{W}_q$. Note that for any $g$, $\widehat{g}_q$ will be a degree-$q$ polynomial lying in the subspace $\mathcal{W}_q$.
We now recall some standard facts from Hermite analysis which can be found in any standard text on the subject (see
the book by O'Donnell~\cite{o2014analysis}. 
\begin{proposition}~\label{prop:Hermite-basics}
\begin{enumerate}
\item For $q \not =q'$, the subspaces $\mathcal{W}_q$ and $\mathcal{W}_{q'}$ are orthogonal. In other words, if $r \in \mathcal{W}_q$ and $s \in \mathcal{W}_{q'}$, then $\mathbf{E}_{x \sim \gamma_n}[r(x) \cdot s(x)]=0$.
\item Every function $g \in L_2(\gamma_n)$ can be expressed as $g (x) = \mathop{\sum}_{q \ge 0} \widehat{g}_q (x)$ where $\widehat{g}_q$ is the projection of $g$ to $\mathcal{W}_q$. 
\item For any $t>0$, $(P_t g)(x) = \sum_{q \ge 0} e^{-t \cdot q} \cdot \widehat{g}_q(x)$. 
\end{enumerate}
\end{proposition}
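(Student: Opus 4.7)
The plan is to prove the three items in order, building on the one-dimensional theory of Hermite polynomials (which I will treat as standard) and lifting it to $L_2(\gamma_n)$. First I would recall the basic one-dimensional facts: the $H_q$ form an orthonormal basis of $L_2(\gamma_1)$, they satisfy $H_q(x) \in \mathrm{span}\{1, x, \dots, x^q\}$ with leading coefficient nonzero, and the generating function identity $\sum_{q \ge 0} H_q(x) \tfrac{\rho^q}{\sqrt{q!}} = e^{\rho x - \rho^2/2}$ is available. From this one immediately gets the multivariate tensor basis $H_\alpha(x) = \prod_i H_{\alpha_i}(x_i)$ indexed by $\alpha \in \N^n$, which is orthonormal in $L_2(\gamma_n)$.

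For item (1), the key identification is that $\mathcal{W}_q$ equals the linear span of the tensor products $\{H_\alpha : |\alpha| = q\}$. This holds because for any unit vector $v$, expanding the generating function identity in the variable $\langle v, x\rangle$ and comparing coefficients in $\rho$ shows that $H_q(\langle v, x\rangle)$ is the degree-$q$ homogeneous polynomial obtained from $(\rho \mapsto e^{\rho\langle v,x\rangle - \rho^2/2})$, which lies in the span of the $H_\alpha$ with $|\alpha|=q$; conversely, varying $v$ produces all such tensor monomials by taking partial derivatives in $v$. Orthogonality of $\mathcal{W}_q$ and $\mathcal{W}_{q'}$ then follows from the orthogonality of distinct $H_\alpha, H_\beta$ in the tensor basis.

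For item (2), I would use that the tensor basis $\{H_\alpha\}_{\alpha \in \N^n}$ is complete in $L_2(\gamma_n)$ (a standard fact, following from, e.g., density of polynomials in $L_2(\gamma_n)$ via the exponential moment bound on Gaussians). Hence every $g \in L_2(\gamma_n)$ decomposes as $g = \sum_\alpha \langle g, H_\alpha\rangle H_\alpha$, and grouping terms by degree $q = |\alpha|$ gives $g = \sum_q \hat{g}_q$, with $\hat{g}_q$ the orthogonal projection onto $\mathcal{W}_q$ by item (1).

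For item (3), it suffices by linearity and continuity to check the identity $P_t H_\alpha = e^{-t|\alpha|} H_\alpha$ on the tensor basis, and by the product structure of $P_t$ (since $P_t$ acts coordinatewise on tensor products under the standard Gaussian) this reduces to the one-dimensional identity $P_t H_q = e^{-tq} H_q$. The cleanest way is to apply $P_t$ to both sides of the generating function identity: $P_t(e^{\rho x - \rho^2/2}) = \Ex_z[e^{\rho(e^{-t}x + \sqrt{1-e^{-2t}}z) - \rho^2/2}] = e^{\rho e^{-t}x - \rho^2 e^{-2t}/2}$, and then reading off the coefficient of $\rho^q/\sqrt{q!}$ yields $P_t H_q(x) = e^{-tq} H_q(x)$. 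The main obstacle, if any, is the identification $\mathcal{W}_q = \mathrm{span}\{H_\alpha : |\alpha|=q\}$ in item (1); the rest is routine bookkeeping once this is in hand.
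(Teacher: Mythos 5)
Your proof is correct and is essentially the standard development that the paper itself does not spell out: the paper simply cites these three items as standard Hermite analysis (O'Donnell's book), and your argument — identifying $\mathcal{W}_q$ with $\mathrm{span}\{H_\alpha : |\alpha|=q\}$ via the generating function, invoking completeness of the tensor Hermite basis for item (2), and reading the eigenrelation $P_t H_q = e^{-tq}H_q$ off the transformed generating function $e^{\rho e^{-t}x - \rho^2 e^{-2t}/2}$ for item (3) — is exactly the textbook route. The only cosmetic point is the step ``varying $v$ produces all such tensor monomials by taking partial derivatives in $v$'': since $H_q(\langle v,x\rangle)$ is only given for unit $v$, it is cleaner to argue via the expansion $H_q(\langle v,x\rangle)=\sqrt{q!}\sum_{|\alpha|=q} \frac{v^\alpha}{\sqrt{\alpha!}}H_\alpha(x)$ together with linear independence of the monomials $v^\alpha$, $|\alpha|=q$, restricted to the unit sphere, which yields the reverse inclusion by finite linear combinations of values at unit vectors.
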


\subsubsection{Oracle computation}
We now list several useful claims which all fit the same motif: Given oracle access to $f: \mathbb{R}^n \rightarrow \mathbb{R}$, what \emph{interesting} quantities can be computed? 

\begin{lemma}~\label{lem:oracle-access-1}
Given oracle access to $f:\mathbb{R}^n \rightarrow [-1,1]$, error parameter $\eta>0$, there is a function $f_{\partial,\eta}: \mathbb{R}^{n} \rightarrow \mathbb{R}$ such that the following holds for every $\lambda \ge 1$, 
\[
\mathop{\Pr}_{x \sim \gamma_n} \big[ \big|f_{\partial,\eta}(x) - \widehat{f}_1(x) \big| > \lambda \cdot \eta \big] \le \lambda^{-2}. 
\]
Further, for any $x \in \mathbb{R}^n$, we can compute $f_{\partial,\eta}(x)$  to additive error $\pm \epsilon$ with confidence $1-\delta$ by making $\mathsf{poly}(1/\eta, 1/\epsilon, \log (1/\delta))$ queries to the oracle for $f$. 
\end{lemma}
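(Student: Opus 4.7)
The plan is to define $f_{\partial,\eta}$ via the Ornstein-Uhlenbeck semigroup, exploiting the fact (Proposition~\ref{prop:Hermite-basics}, part 3) that $P_t f$ isolates the degree-1 Hermite component up to an exponentially small tail. Specifically, I would set $t = \log(1/\eta)$ and define
\[
f_{\partial,\eta}(x) \;=\; e^{t}\bigl(P_t f(x) - \E[f]\bigr) \;=\; \eta^{-1}\bigl(P_t f(x) - \E[f]\bigr).
\]
By the Hermite expansion and the commutation formula for $P_t$, we have $e^t(P_t f - \E[f]) = \widehat{f}_1 + \sum_{q \ge 2} e^{-t(q-1)} \widehat{f}_q$, so
\[
\bigl\| f_{\partial,\eta} - \widehat{f}_1 \bigr\|_2^2 \;=\; \sum_{q \ge 2} e^{-2t(q-1)} \|\widehat{f}_q\|_2^2 \;\le\; e^{-2t} \sum_{q \ge 2}\|\widehat{f}_q\|_2^2 \;\le\; e^{-2t}\|f\|_2^2 \;\le\; \eta^2,
\]
using $\|f\|_\infty \le 1$ so that $\|f\|_2 \le 1$. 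Applying Markov's inequality to the nonnegative random variable $(f_{\partial,\eta}(x)-\widehat{f}_1(x))^2$ (whose expectation over $x \sim \gamma_n$ is at most $\eta^2$) yields
\[
\Pr_{x \sim \gamma_n}\bigl[ |f_{\partial,\eta}(x)-\widehat{f}_1(x)| > \lambda\eta \bigr] \;\le\; \frac{\eta^2}{\lambda^2 \eta^2} \;=\; \lambda^{-2},
\]
which is the tail bound claimed.

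For the query bound, note that at a fixed $x$ we can approximate $P_t f(x) = \E_{z \sim \gamma_n}[f(e^{-t}x + \sqrt{1-e^{-2t}}\, z)]$ by an empirical mean over i.i.d.\ Gaussian $z$'s, and analogously approximate $\E[f]$ by an empirical mean over i.i.d.\ Gaussian samples. Since $f$ takes values in $[-1,1]$, Hoeffding's inequality gives that $m = O(\eta^{-2}\epsilon^{-2}\log(1/\delta))$ queries suffice to estimate each of $P_t f(x)$ and $\E[f]$ to within $\pm \epsilon\eta/3$, hence to estimate $P_t f(x) - \E[f]$ within $\pm \epsilon\eta$, and therefore $f_{\partial,\eta}(x) = \eta^{-1}(P_t f(x) - \E[f])$ within $\pm\epsilon$, all with total failure probability at most $\delta$. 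This is $\mathrm{poly}(1/\eta, 1/\epsilon, \log(1/\delta))$ queries, as required.

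This proof is essentially routine once one has the idea of using $P_t f - \E[f]$ as an unbiased (up to higher-order Hermite mass) proxy for $\widehat{f}_1$; there is no real obstacle. The only mild subtlety is being careful that the $L^2$ bound $\|f_{\partial,\eta} - \widehat{f}_1\|_2 \le \eta$ gets upgraded to a pointwise-in-$x$ tail bound via Markov's inequality on the squared deviation (rather than Chebyshev on the deviation itself), so that the $\lambda^{-2}$ rate in the statement matches exactly.
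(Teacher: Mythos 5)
Your proposal is correct and follows essentially the same route as the paper: both define $f_{\partial,\eta} = e^{t}(P_t f - \E[f])$ with $e^{-t} = \eta$, bound the second moment of the higher-order Hermite tail by $\eta^2$, and apply Chebyshev/Markov to get the $\lambda^{-2}$ tail, then estimate $P_t f(x)$ and $\E[f]$ empirically via Hoeffding with $\mathrm{poly}(1/\eta,1/\epsilon,\log(1/\delta))$ queries. No gaps.
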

\begin{proof}
Observe that for any $t>0$, $P_t f = \sum_{q \ge 0} e^{-tq} \widehat{f}_q(x)$. This implies that 
\[
\frac{P_t f - \mathbf{E}[f]}{e^{-t}} = \widehat{f}_1(x) + \sum_{q>1}e^{-t(q-1)} \widehat{f}_q(x). 
\]
Set $t$ so that $e^{-t} = \eta$ and let us define $f_{\partial,\eta}$ as
$
f_{\partial,\eta} = \frac{P_t f - \mathbf{E}[f]}{e^{-t}}. 
$ Now, observe that for $h(x)=\sum_{q>1}e^{-t(q-1)} \widehat{f}_q(x)$, $\mathbf{E}[h(x)]=0$ and $\mathsf{Var}[h(x)] \le \eta^2$. We now apply Chebyshev's inequality to obtain 
\[
\mathop{\Pr}_{x \sim \gamma_n} \big[ \big|f_{\partial,\eta}(x) - \widehat{f}_1(x) \big| > \lambda \cdot \eta \big] \le \lambda^{-2}. 
\]
Next, observe that both $P_tf(x)$ and $\mathbf{E}[f(x)]$ can be computed to error $\pm \epsilon \cdot \eta$ with confidence $1-\frac{\delta}{2}$ using $\mathsf{poly}(1/\eta, 1/\epsilon, \log (1/\delta))$ queries to the oracle for $f$. This immediately implies that $f_{\partial,\eta}$ can be computed to error $\pm \epsilon$ using $\mathsf{poly}(1/\eta, 1/\epsilon, \log (1/\delta))$ queries to the oracle for $f$. 
\end{proof}

\begin{lemma}~\label{lem:inner-products} 
    Given oracle access to functions $f,g : \mathbb{R}^n \rightarrow [-1,1]$, error parameter $\epsilon >0$ and confidence parameter $\delta>0$, there is an algorithm which makes $\mathsf{poly}(1/\epsilon,\log(1/\delta))$ queries to $f,g$ and computes $\langle \widehat{f}_1, \widehat{g}_1 \rangle$ up to error $\epsilon$ with confidence $1-\delta$. 
\end{lemma}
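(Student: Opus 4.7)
\noindent\emph{Proof plan.} The strategy is to reduce the problem to estimating $\Ex_{x \sim \gamma_n}[f_{\partial,\eta}(x) \, g_{\partial,\eta}(x)]$ by Monte Carlo sampling, where $f_{\partial,\eta}$ and $g_{\partial,\eta}$ are the explicit functions supplied by Lemma~\ref{lem:oracle-access-1} with $\eta = c\sqrt{\epsilon}$ for a suitably small constant $c>0$. The core algebraic observation is that, writing $t$ with $e^{-t}=\eta$, we have $f_{\partial,\eta} = \widehat{f}_1 + R_f$ where $R_f := \sum_{q \geq 2} e^{-t(q-1)} \widehat{f}_q$ belongs to the orthogonal complement of $\mathcal W_1$. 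Hence by Proposition~\ref{prop:Hermite-basics}(1), the cross terms $\langle \widehat{f}_1, R_g\rangle$ and $\langle R_f, \widehat{g}_1\rangle$ vanish, and
\[
\Ex[f_{\partial,\eta}\cdot g_{\partial,\eta}] \;=\; \langle \widehat{f}_1, \widehat{g}_1\rangle + \langle R_f, R_g\rangle.
\]
By Cauchy--Schwarz and the fact that $\|R_f\|_2^2 = \sum_{q\ge 2} e^{-2t(q-1)} \|\widehat{f}_q\|_2^2 \le e^{-2t}\|f\|_2^2 \le \eta^2$ (and similarly for $R_g$), the residual term satisfies $|\langle R_f, R_g\rangle| \le \eta^2 \le c^2 \epsilon$, which is $\le \epsilon/2$ for $c$ small enough.

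It therefore suffices to estimate $\Ex_x[f_{\partial,\eta}(x)\,g_{\partial,\eta}(x)]$ to additive error $\epsilon/2$ with confidence $1-\delta$. Since $f_{\partial,\eta} = (P_t f - \Ex[f])/e^{-t}$ and $|f|\le 1$, we have the deterministic pointwise bound $|f_{\partial,\eta}(x)| \le 2/\eta$, and likewise for $g_{\partial,\eta}$, so the product is bounded by $4/\eta^2 = O(1/\epsilon)$. Draw $m = O(\epsilon^{-4}\log(1/\delta))$ independent samples $x_1,\dots,x_m \sim \gamma_n$ and output the empirical average of $f_{\partial,\eta}(x_i)\,g_{\partial,\eta}(x_i)$; by a standard Hoeffding bound this is within $\epsilon/4$ of the true expectation with confidence $1-\delta/2$.

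Of course, we do not have exact access to $f_{\partial,\eta}(x_i)$ or $g_{\partial,\eta}(x_i)$, only to approximations furnished by Lemma~\ref{lem:oracle-access-1}. I would run Lemma~\ref{lem:oracle-access-1} for each $x_i$ to obtain estimates $\tilde f_i, \tilde g_i$ of $f_{\partial,\eta}(x_i), g_{\partial,\eta}(x_i)$ with additive error $\rho := \epsilon \eta / 100$ and confidence $1-\delta/(4m)$; by Lemma~\ref{lem:oracle-access-1} each such estimate costs $\poly(1/\eta, 1/\rho, \log(m/\delta)) = \poly(1/\epsilon, \log(1/\delta))$ queries. Expanding $\tilde f_i \tilde g_i - f_{\partial,\eta}(x_i) g_{\partial,\eta}(x_i) = \tilde f_i(\tilde g_i - g_{\partial,\eta}(x_i)) + g_{\partial,\eta}(x_i)(\tilde f_i - f_{\partial,\eta}(x_i))$ and using $|f_{\partial,\eta}|, |g_{\partial,\eta}|, |\tilde f_i|, |\tilde g_i| \le 2/\eta + \rho$, the per-sample evaluation error is $O(\rho/\eta) \le \epsilon/50$, and a union bound over the $m$ samples ensures all evaluations are good with total failure probability $\delta/4$.

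Combining the three sources of error (the $|\langle R_f, R_g\rangle|$ residual, the Monte Carlo fluctuation, and the per-sample evaluation error) yields the claimed $\pm \epsilon$ accuracy with confidence $1-\delta$, using $\poly(1/\epsilon,\log(1/\delta))$ total oracle queries. The main obstacle I expect is the bookkeeping: $f_{\partial,\eta}$ has $\ell_\infty$-norm $\Theta(1/\eta)$ even though its $L^2$-norm is $O(1)$, so one must amplify the per-point accuracy by a factor of $1/\eta$ to compensate, but this still yields a polynomial dependence after substituting $\eta=\Theta(\sqrt{\epsilon})$.
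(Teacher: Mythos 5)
Your proposal is correct, and it shares the paper's core identity -- smoothing by $P_t$, subtracting the mean, rescaling by $e^{t}$, and observing that the higher-degree Hermite levels are attenuated enough that the product's expectation is within the error budget of $\langle \widehat{f}_1,\widehat{g}_1\rangle$ -- but it differs in how the expectation is actually estimated. The paper avoids your nested (two-level) Monte Carlo: it writes $P_t f(x)-\E[f]=\E_y[f(e^{-t}x+\sqrt{1-e^{-2t}}\,y)-f(y)]$ and likewise for $g$ with an independent $z$, so that the single random variable $J(x,y,z)=e^{2t}\,(f(e^{-t}x+\sqrt{1-e^{-2t}}\,y)-f(y))(g(e^{-t}x+\sqrt{1-e^{-2t}}\,z)-g(z))$ is an unbiased estimator of the desired expectation, bounded by $O(e^{2t})=O(1/\epsilon^2)$, costing only two queries to $f$ and two to $g$ per sample; a single Hoeffding bound then finishes. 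You instead invoke Lemma~\ref{lem:oracle-access-1} as a black box to estimate $f_{\partial,\eta}(x_i)$ and $g_{\partial,\eta}(x_i)$ to accuracy $\rho=\Theta(\epsilon\eta)$ at each outer sample point, with a union bound over the $2m$ inner estimations, and you exploit the exact orthogonality of $\mathcal{W}_1$ to the higher levels to get away with the milder smoothing $\eta=\Theta(\sqrt{\epsilon})$ (the paper's Cauchy--Schwarz bound on the full remainder forces $e^{-t}=\Theta(\epsilon)$). Both routes give $\poly(1/\epsilon,\log(1/\delta))$ queries; the paper's estimator is leaner (no per-point accuracy amplification and no inner union bound, hence better polynomial exponents), while yours is a more modular argument that reuses Lemma~\ref{lem:oracle-access-1} verbatim and makes the error decomposition (residual, sampling fluctuation, evaluation error) fully explicit. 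Your constants are slightly loose in two places -- the per-sample product error is closer to $\epsilon/25$ than $\epsilon/50$, and the evaluation failure probability sums to $\delta/2$ rather than $\delta/4$ -- but the total error and failure budgets still close, so these are cosmetic.
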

\begin{proof}
    Consider the function
    \[
        h(x) = e^{2t} (P_t f(x) - \E[f]) (P_t g(x) - \E[g]).
    \]
    Writing out the Fourier expansions of $P_t f$ and $P_t g$, note that $P_t f = \sum_{q \ge 0} e^{-tq} \widehat f_q(x)$,
    and so
    \[
        h(x) = \widehat f_1(x) \widehat g_1(x) + \sum_{\substack{q, r \ge 1 \\ q + r \ge 3}} e^{-t(q + r - 2)} \widehat f_q(x) \widehat g_r(x).
    \]
    Since $\widehat f_1$ and $\widehat g_1$ are linear functions, $\E[\widehat f_1(x) \widehat g_1(x)] = \widehat f_1 \cdot \widehat g_1$. On the other hand, $\E[\sum_{q \ge 0} \widehat f_q^2(x)] = \E[f^2] \le 1$, and so the Cauchy-Schwarz inequality implies that
    \[
        \E\Big[\sum_{\substack{q, r \ge 1 \\ q + r \ge 3}} e^{-t(q + r - 2)} \widehat f_q(x) \widehat g_r(x)\Big]
        \le e^{-t}.
    \]
    Hence, $|\E[h(x)] - \hat f_1 \cdot \hat g_1| \le e^{-t}$. If we choose $t$ so that $e^{-t} = \epsilon/2$, then it
    only remains to show that we can estimate $\E[h(x)]$ within additive error $\epsilon/2$ with confidence $1 - \delta$.

    Let $y$ and $z$ be Gaussian random variables, independent of $x$, and write $P_t f(x) = \E_y[f(e^{-t} x + \sqrt{1-e^{-2t}} y)]$ and $P_t g(x) = \E_z[g(e^{-t} x + \sqrt{1-e^{-2t}} z)]$. In particular, we can express $\E[h(x)]$ in the form $\E[J(x, y, z)]$ where
    \[
        J(x,y,z) = e^{2t} (f(e^{-t} x + \sqrt{1-e^{-2t}} y) - f(y)) (g(e^{-t} x + \sqrt{1-e^{-2t}}z) - g(z)).
    \]
    Recalling that $e^{-2t} = 4/\epsilon^2$, it follows that $J$ takes values in $[-4/\epsilon^2, 4/\epsilon^2]$,
    and it follows from Hoeffding's inequality that we can approximate $J$ to additive error $\epsilon/2$
    with confidence $1 - \delta$ using $\mathsf{poly}(1/\epsilon, \log(1/\delta))$ samples of $J$. Moreover,
    each sample of $J$ can be computed using two oracle queries to $f$ and two oracle queries to $g$. 
\end{proof}

\begin{definition}
A function $f: \mathbb{R}^n \rightarrow \mathbb{R}$ 
is said to be a linear $k$-junta if there are at most $k$ orthonormal vectors $u_1, \ldots, u_k \in \mathbb{R}^n$ and a function $g: \mathbb{R}^k \rightarrow \mathbb{R}$ such that 
\[
f(x) = g(\inp{u_1}{x}, \ldots, \inp{u_k}{x}).  
\]
Further, if $u_1, \ldots, u_k \in W$ (a linear subspace of $\mathbb{R}^n$), then $f$ is said to be a $W$-junta. 
%
%
\end{definition}

\subsection{Derivatives of functions}
{We will use $D$ to denote the derivative operator. In case, there are two sets of variables involved, we will explicitly indicate the variable with respect to which we are taking the derivative.} 

\begin{definition}
For $f: \mathbb{R}^n \rightarrow \mathbb{R}$ ($f \in \mathcal{C}^{\infty}$) and $t \ge 0$, define the function $f_t: \mathbb{R}^n \times \mathbb{R}^n \rightarrow \mathbb{R}$, 
\[
f_t(y,x) = f(e^{-t} y + \sqrt{1-e^{-2t}} x). 
\]
Further, in the same setting as above, we let 
$f_{t,y}: \mathbb{R}^n \rightarrow \mathbb{R}$, 
\[
f_{t,y}(x) = f(e^{-t} y + \sqrt{1-e^{-2t}} x). 
\]
\end{definition}

Let $D_x$ denote the derivative operator with respect to $x$ and let $D_y$ denote the derivative operator with respect to $y$. Then, it is easy to observe that 
\begin{equation}~\label{eq:derivative-y-1}
\sqrt{e^{2t}-1} \cdot D_y f_{t}(y,x)  = D_x f_{t}(y,x). 
\end{equation}
Next, for a function $g: \mathbb{R}^n \rightarrow \mathbb{R}$, define $\mathcal{W}_{1}(g) \in \mathbb{R}^n$ as the degree-$1$ Hermite coefficients of $g$. In other words, the $i^{th}$ coordinate of $\mathcal{W}_1(g)$ 
\[
\mathcal{W}_{1}(g)[i] = \mathbf{E}[g(x) \cdot x_i],
\]
where $x \sim \gamma_n$, the standard $n$-dimensional Gaussian measure. With respect to our earlier definition of $\widehat{g}_1$,  observe that we have: 
$
\widehat{g}_1(x) = \inp{\mathcal{W}_{1}(g)}{x}. 
$
We next prove the following important lemma which connects the gradient of $P_t f$ at $y$ with $\mathcal{W}_{1}(f_{t,y})$. In particular, we have the following lemma. 
\begin{lemma}~\label{lem:derivative-shift}
\[
\mathcal{W}_{1}(f_{t,y})= \sqrt{e^{2t}-1} \cdot D (P_t f)(y). 
\]
\end{lemma}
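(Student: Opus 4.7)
The plan is to combine Gaussian integration by parts (Stein's identity) with the chain-rule identity \eqref{eq:derivative-y-1} that has already been recorded. First I would unpack the definition: coordinate-wise,
$$\mathcal{W}_1(f_{t,y})[i] = \mathbf{E}_{x \sim \gamma_n}\big[ x_i \cdot f_{t,y}(x)\big].$$
Since the standard Gaussian density $\phi(x)$ satisfies $x_i \phi(x) = -\partial_{x_i}\phi(x)$, integrating by parts in the $x_i$ variable (which is valid as long as $f$ is smooth enough that the boundary terms at infinity vanish in the Gaussian sense, which for this lemma we can assume by a standard density argument or because the theorem will later be applied to $f$ smoothed by $P_t$) gives
$$\mathbf{E}_{x \sim \gamma_n}[x_i f_{t,y}(x)] = \mathbf{E}_{x \sim \gamma_n}[\partial_{x_i} f_{t,y}(x)].$$
So at the vector level, $\mathcal{W}_1(f_{t,y}) = \mathbf{E}_x[D_x f_t(y,x)]$.

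Next I would apply the already-noted identity \eqref{eq:derivative-y-1}, namely $D_x f_t(y,x) = \sqrt{e^{2t}-1}\cdot D_y f_t(y,x)$, to pull the factor $\sqrt{e^{2t}-1}$ outside and replace the $x$-derivative with a $y$-derivative:
$$\mathcal{W}_1(f_{t,y}) = \sqrt{e^{2t}-1}\cdot \mathbf{E}_x\big[D_y f_t(y,x)\big].$$
Finally I would swap the expectation and the $y$-derivative (justified by dominated convergence, since the integrand $D_y f_t(y,x) = e^{-t}\nabla f(e^{-t}y+\sqrt{1-e^{-2t}}x)$ is dominated by an integrable function for $f$ bounded and smooth, with the general case handled by approximation) and recognize
$$\mathbf{E}_x[f_t(y,x)] = \mathbf{E}_{x \sim \gamma_n}\big[f(e^{-t} y+\sqrt{1-e^{-2t}}x)\big] = (P_t f)(y),$$
so $\mathbf{E}_x[D_y f_t(y,x)] = D(P_t f)(y)$, yielding the claimed identity.

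The only real subtlety is justifying the integration by parts and the interchange of $D_y$ with $\mathbf{E}_x$; neither is hard, but for a rigorous write-up I would first prove the lemma for $f \in \mathcal{C}^{\infty}$ with bounded derivatives (where both manipulations are classical) and then extend to the setting needed in the paper by the usual Gaussian mollification/density argument, noting that $P_t f$ is already smooth so in applications this technicality is essentially free.
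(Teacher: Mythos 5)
Your proposal is correct and follows essentially the same route as the paper's proof: Gaussian integration by parts (Stein's identity) to write $\mathcal{W}_1(f_{t,y}) = \mathbf{E}_x[D_x f_t(y,x)]$, the chain-rule identity \eqref{eq:derivative-y-1} to trade $D_x$ for $\sqrt{e^{2t}-1}\,D_y$, an interchange of $D_y$ with $\mathbf{E}_x$ to recognize $D(P_t f)(y)$, and an approximation argument for general $f$. The only difference is cosmetic (you spell out the domination/density justifications slightly more explicitly than the paper does).
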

\begin{proof}
First of all, observe that for any function $g: \mathbb{R}^n \rightarrow \mathbb{R}$ with bounded derivatives, and for any $i \in [n]$
\begin{eqnarray*}
\mathbf{E}_{x \sim \gamma_n} \bigg[ \frac{\partial g(x)}{\partial x_i} \bigg] = \int_{x} \frac{\partial g(x)}{\partial x_i}\gamma_n(x) dx = \int_{x} x_i g(x) \gamma_n(x) dx = \mathbf{E}_{x \sim \gamma_n} [x_i \cdot g(x)]. 
\end{eqnarray*}
While the first and last equalities are trivial, the middle is a consequence of integration by parts. Assuming that
$f$ has bounded derivatives, we may apply this identity to $g = f_{t,y}$, yielding
\begin{eqnarray*}
\mathcal{W}_1(f_{t,y}) &=&  \mathbf{E}_x [D_xf_{t}(y,x)] 
\\ &=& \sqrt{e^{2t}-1} \cdot \mathbf{E}_x [D_y f_{t}(y,x)] \ \ \textrm{(applying (~\ref{eq:derivative-y-1}))}
 \\
&=& \sqrt{e^{2t}-1} \cdot D_y (\mathbf{E}_x [f_{t}(y,x)]) = \sqrt{e^{2t}-1} \cdot D_y (P_t f)(y). 
\end{eqnarray*}
This proves the lemma in the case that $f$ has bounded derivatives. In the general case, we approximate
choose a sequence of functions that have bounded derivatives and approximate $f_{t,y}$ in $L_2(\gamma)$.
Applying the lemma to these functions and taking the limit proves the general case.
\end{proof} 
\begin{lemma}~\label{lem:inner-product-1}
Given oracle access to $f$, noise parameter $t>0$, error parameter $\epsilon>0$, confidence parameter $\delta>0$ and $y_1, y_2 \in \mathbb{R}^n$, there is an algorithm which makes $\mathsf{poly}(1/\epsilon, 1/\delta, 1/t)$ queries to $f$ and computes $\langle D(P_t f)(y_1),D(P_t f)(y_2)\rangle$ up to error  $\epsilon$ with confidence $1-\delta$. 
\end{lemma}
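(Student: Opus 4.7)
The plan is to reduce this directly to Lemma \ref{lem:inner-products} via the identity provided by Lemma \ref{lem:derivative-shift}. Specifically, Lemma \ref{lem:derivative-shift} tells us that $\mathcal{W}_1(f_{t,y}) = \sqrt{e^{2t}-1}\cdot D(P_t f)(y)$, so
\[
\langle D(P_t f)(y_1), D(P_t f)(y_2)\rangle \;=\; \frac{1}{e^{2t}-1}\,\langle \mathcal{W}_1(f_{t,y_1}), \mathcal{W}_1(f_{t,y_2})\rangle \;=\; \frac{1}{e^{2t}-1}\,\langle \widehat{(f_{t,y_1})}_1, \widehat{(f_{t,y_2})}_1\rangle,
\]
using the fact recorded earlier in the preliminaries that the degree-$1$ Hermite projection of $g$ is the linear function $x \mapsto \langle \mathcal{W}_1(g), x\rangle$, and that $\langle \widehat{f}_1, \widehat{g}_1\rangle$ (as elements of $L_2(\gamma_n)$) equals $\langle \mathcal{W}_1(f), \mathcal{W}_1(g)\rangle$ (as vectors in $\mathbb{R}^n$).

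Next, I would invoke Lemma \ref{lem:inner-products} applied to the two functions $f_{t,y_1}$ and $f_{t,y_2}$, both of which map $\mathbb{R}^n \to [-1,1]$ and each of which can be evaluated at any point using a single query to the oracle for $f$ (since $f_{t,y_i}(x)$ is just $f$ evaluated at an affine image of $x$). To get additive error $\epsilon$ in the final answer after dividing by $e^{2t}-1$, I would run Lemma \ref{lem:inner-products} with error parameter $\epsilon' := \epsilon\cdot (e^{2t}-1)$ and confidence $1-\delta$. This uses $\poly(1/\epsilon', \log(1/\delta))$ queries to $f_{t,y_1}$ and $f_{t,y_2}$, hence the same number of queries to $f$.

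To finish, I need only verify that the query bound has the form claimed. Since $e^{2t}-1 \ge \min(1, 2t)$ for all $t>0$, the reciprocal $1/\epsilon'$ is at most $\poly(1/\epsilon, 1/t)$, so the total query complexity is $\poly(1/\epsilon, 1/t, \log(1/\delta))$, which is within the $\poly(1/\epsilon, 1/\delta, 1/t)$ bound stated. The main (minor) subtlety is keeping track of this error amplification from the scaling factor $1/(e^{2t}-1)$, which blows up as $t \to 0$; there is no deeper obstacle, since Lemmas \ref{lem:derivative-shift} and \ref{lem:inner-products} together do all the real work.
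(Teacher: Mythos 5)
Your proposal is correct and follows exactly the paper's own route: apply Lemma~\ref{lem:derivative-shift} to rewrite the inner product of gradients as $\frac{1}{e^{2t}-1}\langle \mathcal{W}_1(f_{t,y_1}), \mathcal{W}_1(f_{t,y_2})\rangle$ and then invoke Lemma~\ref{lem:inner-products} on $f_{t,y_1}, f_{t,y_2}$. The only difference is that you spell out the error rescaling by $e^{2t}-1$ and why it stays within $\mathsf{poly}(1/\epsilon,1/t)$, a detail the paper leaves implicit.
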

\begin{proof}
By Lemma~\ref{lem:derivative-shift}, we have
\[
\langle D(P_t f)(y_1),D(P_t f)(y_2)\rangle = \frac{1}{e^{2t}-1} \cdot \langle \mathcal{W}_1(f_{t,y_1}), \mathcal{W}_1(f_{t,y_2})\rangle. 
\]
We can now apply Lemma~\ref{lem:inner-products} to finish the proof. 

\end{proof}
\begin{proposition}~\label{prop:derivative-bound}
For any $f: \mathbb{R}^n \rightarrow [-1,1]$, $\Vert D(P_t f)(y) \Vert_2 \le (e^{2t} -1)^{-\frac12}$.  
\end{proposition}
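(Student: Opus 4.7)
The plan is to combine \lref{derivative-shift} with Parseval's identity. By \lref{derivative-shift}, it suffices to show that $\|\mathcal{W}_1(f_{t,y})\|_2 \le 1$, since then
\[
\|D(P_t f)(y)\|_2 = \tfrac{1}{\sqrt{e^{2t}-1}} \|\mathcal{W}_1(f_{t,y})\|_2 \le (e^{2t}-1)^{-1/2}.
\]

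To bound $\|\mathcal{W}_1(f_{t,y})\|_2$, I would first unpack the definition: the $i$-th coordinate is $\E_x[f_{t,y}(x)\, x_i]$, which is precisely the coefficient of the degree-$1$ Hermite basis vector $e_i \mapsto x_i$ in the Hermite expansion of $f_{t,y}$. Equivalently, $\|\mathcal{W}_1(f_{t,y})\|_2^2$ equals the squared $L_2$-norm of the projection $\widehat{(f_{t,y})}_1$ onto the subspace $\mathcal{W}_1$. By the orthogonal decomposition in \pref{Hermite-basics}, Parseval gives
\[
\|f_{t,y}\|_2^2 = \sum_{q \ge 0} \|\widehat{(f_{t,y})}_q\|_2^2 \;\ge\; \|\widehat{(f_{t,y})}_1\|_2^2 \;=\; \|\mathcal{W}_1(f_{t,y})\|_2^2.
\]

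Finally, since $f$ takes values in $[-1,1]$, so does $f_{t,y}$, hence
\[
\|f_{t,y}\|_2^2 = \E_{x \sim \gamma_n}\bigl[ f(e^{-t} y + \sqrt{1-e^{-2t}}\, x)^2 \bigr] \le 1,
\]
which combined with the previous display yields $\|\mathcal{W}_1(f_{t,y})\|_2 \le 1$ and completes the proof.

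There is no real obstacle here; the statement is essentially a bookkeeping consequence of \lref{derivative-shift} plus Parseval. The only mild care needed is that \lref{derivative-shift} was proved under a bounded-derivative assumption with a density argument, but our use of it is purely as an identity at fixed $y$, and the final bound involves only $L_2$-norms, so no additional regularity beyond $f \in L_2(\gamma_n) \cap L_\infty$ is required.
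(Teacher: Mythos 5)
Your proposal is correct and follows essentially the same route as the paper: apply Lemma~\ref{lem:derivative-shift} and then bound $\Vert \mathcal{W}_1(f_{t,y}) \Vert_2 \le 1$ from the fact that $f_{t,y}$ takes values in $[-1,1]$. The only difference is that you spell out the Parseval/Bessel step that the paper leaves implicit.
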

\begin{proof}
By Lemma~\ref{lem:derivative-shift}, we have $\Vert \mathcal{W}_1(f_{t,y}) \Vert_2 = \sqrt{e^{2t}-1} \cdot \Vert D(P_t f)(y) \Vert_2$. Now, observe that the range of $f_{t,y}$ is $[-1,1]$ and thus, $\Vert \mathcal{W}_1(f_{t,y}) \Vert_2\le 1$, implying the stated upper bound. 
\end{proof}

\begin{lemma}~\label{lem:compute-derivative-x}
Given oracle access to $f: \mathbb{R}^n \rightarrow [-1,1]$, $y \in \mathbb{R}^n$, noise parameter $t>0$,
error parameter $\eta>0$, there is a function 
$f_{\partial,\eta,t,y}: \mathbb{R}^n \rightarrow \mathbb{R}$ such that the following holds for every $\lambda \ge 1$,
\[
\Pr_{x \sim \gamma_n} [|f_{\partial,\eta,t,y}(x) - \langle D(P_t f)(y),x\rangle |  > \lambda \cdot \eta] \le \lambda^{-2}. 
\]
Further, for an error parameter $\epsilon>0$, confidence parameter $\delta>0$, we can compute 
$f_{ \partial, t,\eta,y}$ to additive error $\pm \epsilon$ with confidence $1-\delta$ using $\mathsf{poly}(1/t, 1/\eta, 1/\epsilon, \log(1/\delta))$ queries to $f$. 
\end{lemma}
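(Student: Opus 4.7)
The plan is to reduce this lemma to a direct application of Lemma~\ref{lem:oracle-access-1} applied to the function $f_{t,y}$. By Lemma~\ref{lem:derivative-shift}, we already have the identity $\mathcal{W}_1(f_{t,y}) = \sqrt{e^{2t}-1}\, D(P_t f)(y)$, which translates into
\[
\langle D(P_t f)(y), x\rangle \;=\; \frac{1}{\sqrt{e^{2t}-1}}\, \widehat{(f_{t,y})}_1(x).
\]
So the task essentially becomes: approximate the first-degree Hermite part of $f_{t,y}$ evaluated at $x$, and rescale.

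The key step is to notice that we can simulate oracle access to $f_{t,y}$ using oracle access to $f$: a single query to $f_{t,y}$ at $x'$ is just one query to $f$ at the point $e^{-t} y + \sqrt{1-e^{-2t}}\, x'$. Since $f$ (hence $f_{t,y}$) takes values in $[-1,1]$, Lemma~\ref{lem:oracle-access-1} applies to $f_{t,y}$. I will set $\eta' = \eta \sqrt{e^{2t}-1}$, let $(f_{t,y})_{\partial,\eta'}$ be the approximator produced by Lemma~\ref{lem:oracle-access-1} for the function $f_{t,y}$ with error parameter $\eta'$, and define
\[
f_{\partial,\eta,t,y}(x) \;=\; \frac{1}{\sqrt{e^{2t}-1}}\, (f_{t,y})_{\partial,\eta'}(x).
\]
The probabilistic guarantee of Lemma~\ref{lem:oracle-access-1} immediately yields
\[
\Pr_{x \sim \gamma_n}\!\left[ \left| (f_{t,y})_{\partial,\eta'}(x) - \widehat{(f_{t,y})}_1(x) \right| > \lambda \eta' \right] \le \lambda^{-2},
\]
which, after dividing through by $\sqrt{e^{2t}-1}$, becomes exactly the desired tail bound for $f_{\partial,\eta,t,y}$ with parameter $\eta$.

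For the query complexity, to guarantee that $f_{\partial,\eta,t,y}(x)$ is computed to additive error $\pm\epsilon$ with confidence $1-\delta$, it suffices to compute $(f_{t,y})_{\partial,\eta'}(x)$ to error $\pm \epsilon\sqrt{e^{2t}-1}$ with the same confidence. By the second part of Lemma~\ref{lem:oracle-access-1}, this requires $\mathsf{poly}(1/\eta',\, 1/(\epsilon\sqrt{e^{2t}-1}),\, \log(1/\delta))$ queries to $f_{t,y}$, which is the same number of queries to $f$. Using the elementary bounds $\sqrt{e^{2t}-1} \ge \sqrt{2t}$ for small $t$ and $\sqrt{e^{2t}-1} \le e^t$, both $1/\eta'$ and $1/(\epsilon\sqrt{e^{2t}-1})$ are bounded by $\mathsf{poly}(1/t,\, 1/\eta,\, 1/\epsilon)$, so the total query complexity is $\mathsf{poly}(1/t,\, 1/\eta,\, 1/\epsilon,\, \log(1/\delta))$ as required. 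I do not foresee a serious obstacle here, since the lemma is essentially the composition of the previously established identity relating $D(P_t f)$ to $\mathcal{W}_1(f_{t,y})$ with the already-proved Hermite-level-$1$ approximation routine; the only care needed is bookkeeping the scalar $\sqrt{e^{2t}-1}$ through the error parameters.
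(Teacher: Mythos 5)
Your proposal is correct and matches the paper's proof: the paper likewise invokes Lemma~\ref{lem:derivative-shift} to write $\langle D P_t f(y), x\rangle = \frac{1}{\sqrt{e^{2t}-1}}\,\widehat{f_{t,y}}_1(x)$ and then concludes via Lemma~\ref{lem:oracle-access-1} applied to $f_{t,y}$ (whose oracle is simulated by one query to $f$). Your write-up simply makes explicit the rescaling of the error parameter by $\sqrt{e^{2t}-1}$ and the resulting $\mathsf{poly}(1/t,1/\eta,1/\epsilon,\log(1/\delta))$ bookkeeping, which the paper leaves implicit.
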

\begin{proof}
We first use Lemma~\ref{lem:derivative-shift} and obtain that 
$$
D_{}P_tf(y) = \frac{1}{\sqrt{e^{2t}-1}} \cdot \mathcal{W}_1(f_{t,y}). 
$$
Consequently, we have that 
\[
\langle DP_tf(y), x\rangle = \frac{1}{\sqrt{e^{2t}-1}} \cdot \widehat{f_{t,y}}_1(x).
\]
The claim now follows from Lemma~\ref{lem:oracle-access-1}. 
\end{proof}

\subsection{Some useful inequalities  concerning noise stability}
\begin{lemma}~\label{lem:Poincare} \textbf{[Poincar\'{e} inequality]}  Let $f: \mathbb{R}^n \rightarrow \mathbb{R}$ be a $\mathcal{C}^1$ function. Then, $\mathsf{Var}[f] \le \mathbf{E}[\Vert Df \Vert_2^2]$. 
\end{lemma}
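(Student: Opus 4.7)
My plan is to exploit the Hermite decomposition provided by Proposition~\ref{prop:Hermite-basics}. If $\E[\|Df\|_2^2] = \infty$ the inequality is trivial, so I may assume the right-hand side is finite, which forces $f \in L^2(\gamma_n)$ (after subtracting a constant if necessary). By Proposition~\ref{prop:Hermite-basics} I may write $f = \sum_{q \ge 0} \wh{f}_q$ with $\wh{f}_q \in \mathcal{W}_q$, and orthogonality of the Hermite levels gives
$$
\Var[f] \;=\; \E[f^2] - (\E[f])^2 \;=\; \sum_{q \ge 1} \E[\wh{f}_q^2].
$$

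The main technical ingredient I will establish is the identity
$$
\E\bigl[\|\nabla h\|_2^2\bigr] \;=\; q \cdot \E[h^2] \qquad \text{for every } h \in \mathcal{W}_q.
$$
This falls out of the classical Hermite recurrence $H_q'(x) = \sqrt{q}\, H_{q-1}(x)$: for a generator $h(x) = H_q(\langle v, x\rangle)$ with $\|v\|_2 = 1$, one has $\partial_i h = v_i \sqrt{q}\, H_{q-1}(\langle v, x\rangle)$, so $\|\nabla h\|_2^2 = q\, H_{q-1}(\langle v, x\rangle)^2$, which integrates to $q$ against $\gamma_n$. The identity then extends to all of $\mathcal{W}_q$ by linearity, using again the orthogonality of $\mathcal{W}_{q-1}$ with itself being an isometry statement on the level below.

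Assembling the two pieces and noting that $\nabla \wh{f}_q \in \mathcal{W}_{q-1}^{\oplus n}$, so gradients from different Hermite levels are orthogonal coordinatewise in $L^2(\gamma_n)$,
$$
\E\bigl[\|Df\|_2^2\bigr] \;=\; \sum_{q \ge 0} \E\bigl[\|\nabla \wh{f}_q\|_2^2\bigr] \;=\; \sum_{q \ge 1} q \cdot \E[\wh{f}_q^2] \;\ge\; \sum_{q \ge 1} \E[\wh{f}_q^2] \;=\; \Var[f],
$$
since $q \ge 1$ in every surviving term. Equality holds iff $f$ is affine.

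The one subtle step I expect to be the main obstacle is justifying termwise differentiation and summation when $f$ is only assumed $\mathcal{C}^1$. My plan to handle this is to regularize $f$ via the Ornstein--Uhlenbeck semigroup: $P_\eta f$ is smooth and has a genuinely convergent Hermite expansion for every $\eta > 0$, so the argument above applies cleanly to $P_\eta f$. Using the commutation $D P_\eta f = e^{-\eta} P_\eta (Df)$ together with Jensen's inequality one sees $\E[\|D P_\eta f\|_2^2] \le e^{-2\eta}\, \E[\|Df\|_2^2]$, while $\Var[P_\eta f] \to \Var[f]$ as $\eta \downarrow 0$ by $L^2$-continuity of the semigroup. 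Passing $\eta \to 0$ yields the inequality for the original $\mathcal{C}^1$ function $f$.
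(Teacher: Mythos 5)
The paper never proves this lemma --- it is the classical Gaussian Poincar\'e inequality, stated as a known fact and used as a black box (e.g.\ in Lemma~\ref{lem:gradient-subspace}) --- so there is no in-paper argument to compare against. Your Hermite-decomposition proof is the standard spectral argument and is essentially correct: with the paper's normalization one indeed has $H_q' = \sqrt{q}\,H_{q-1}$, so $\E[\Vert \nabla h \Vert_2^2] = q\,\E[h^2]$ on $\mathcal{W}_q$, the coordinatewise orthogonality of gradients across levels gives $\E[\Vert Df \Vert_2^2] = \sum_{q\ge 1} q\,\E[\wh{f}_q^2] \ge \Var[f]$, and your Ornstein--Uhlenbeck regularization (using $D P_\eta f = e^{-\eta} P_\eta(Df)$, Jensen, and $L^2$-continuity of $P_\eta$) is the right way to dispose of the termwise-differentiation issue for a merely $\mathcal{C}^1$ function. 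Two small points deserve tightening. First, a quadratic identity does not extend from generators ``by linearity''; you need its polarized form $\E[\langle \nabla h_1, \nabla h_2\rangle] = q\,\E[h_1 h_2]$ for $h_1,h_2 \in \mathcal{W}_q$, which follows just as easily from $\E[H_q(\langle u,x\rangle)H_q(\langle v,x\rangle)] = \langle u,v\rangle^q$ (both sides equal $q\langle u,v\rangle^q$ on generators), and then passes to the closed span by density --- painless for $P_\eta f$ thanks to the $e^{-\eta q}$ damping of its coefficients. Second, the opening assertion that $\E[\Vert Df\Vert_2^2] < \infty$ ``forces'' $f \in L^2(\gamma_n)$ is itself a nontrivial (if standard) fact that ordinarily requires a truncation argument; it is cleaner either to state the lemma for $f \in L^2(\gamma_n)$, or to note that if $f \notin L^2$ one applies the $L^2$ case to smooth truncations of $f$ and lets the truncation level grow. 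With those two adjustments the proof is complete and self-contained.
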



\begin{definition}~\label{def:surface-area}
For a Borel set $A \subseteq \mathbb{R}^n$, we define its Gaussian surface area $\Gamma(A)$ to be
\[
\Gamma(A) = \liminf_{\delta \rightarrow 0} \frac{\mathsf{vol}(A_{\delta} \setminus A)}{\delta},
\]
provided the limit exists. Here, for any body $K$, $\mathsf{vol}(K)$ denotes the Gaussian volume of $K$, i.e., $\int_{x \in K} \gamma_n(x) dx$. Further, $A_{\delta} = \{x : d(x,A) \le \delta\}$ where $d(x,A)$ denotes the Euclidean distance of $x$ from $A$. 

For a function $f : \mathbb{R}^n \rightarrow \{-1,1\}$, we denote its surface area $\Gamma(f) = \Gamma(A_f)$ where $A_f = \{x: f(x)=1\}$. 
\end{definition}

Ledoux~\cite{Ledoux:94} (and implicitly Pisier~\cite{Pisier:86}) proved the following connection between noise sensitivity and surface area of functions. 
\begin{lemma}~\label{lem:Ledoux}[Ledoux~\cite{Ledoux:94}]
For any $t \ge 0$ and $f : \mathbb{R}^n \rightarrow \{-1,1\}$, $x,y \sim \gamma_n$, we have \[
\Pr_{x,y} [f(x) \not = f(e^{-t} x+ \sqrt{1-e^{-2t}} y)] \le \frac{2\sqrt{t}}{\sqrt{\pi}} \cdot \Gamma(f)\] 
\end{lemma}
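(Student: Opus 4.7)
The plan is to use Pisier's rotation-based semigroup interpolation. I would parameterize the OU coupling by angle rather than by $t$: set $\theta = \arccos(e^{-t})$, and for $\phi \in [0, \theta]$ define $Z_\phi := \cos(\phi)\, x + \sin(\phi)\, y$. Then $Z_0 = x$ and $Z_\theta$ is jointly distributed with $x$ exactly like $e^{-t} x + \sqrt{1-e^{-2t}}\, y$. The key structural observation is that the ``angular velocity'' $W_\phi := dZ_\phi/d\phi = -\sin(\phi)\, x + \cos(\phi)\, y$ is also a standard $n$-dimensional Gaussian, and $(Z_\phi, W_\phi)$ is jointly Gaussian with zero covariance in every coordinate pair (since $x, y$ are independent standard Gaussians), so $Z_\phi$ and $W_\phi$ are independent.

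For a $C^1$ bounded function $g$, I would apply the fundamental theorem of calculus along the curve $\phi \mapsto Z_\phi$:
\[
g(Z_\theta) - g(Z_0) = \int_0^\theta \nabla g(Z_\phi) \cdot W_\phi \, d\phi.
\]
Taking absolute values, expectations, and exploiting the independence of $Z_\phi$ and $W_\phi$, the conditional expectation $\E[|\nabla g(Z_\phi) \cdot W_\phi| \mid Z_\phi]$ equals $\|\nabla g(Z_\phi)\|_2 \cdot \E|\mathcal{N}(0,1)| = \sqrt{2/\pi}\, \|\nabla g(Z_\phi)\|_2$. Since each $Z_\phi \sim \gamma_n$, this would give the ``Pisier-type'' inequality
\[
\E\bigl|g(Z_\theta) - g(Z_0)\bigr| \;\le\; \sqrt{\tfrac{2}{\pi}} \cdot \theta \cdot \E\|\nabla g\|_2.
\]

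To apply this to a $\{-1,1\}$-valued $f$, I would smooth by the OU semigroup itself, $f_\delta := P_\delta f$, which is $C^\infty$ and $[-1,1]$-valued. On the left, $f_\delta \to f$ in $L^1(\gamma_n)$ and (along a subsequence) pointwise, so $\E|f_\delta(Z_0) - f_\delta(Z_\theta)| \to \E|f(Z_0) - f(Z_\theta)| = 2\Pr[f(x) \neq f(X_t)]$. On the right, the Gaussian BV characterization of surface area yields $\E\|\nabla f_\delta\|_2 \to 2\Gamma(f)$ as $\delta \to 0^+$ (the factor $2$ arising from the jump size of $f$ across $\partial A_f$). Passing to the limit thus gives $2\Pr[f(x) \neq f(X_t)] \le \sqrt{2/\pi}\, \theta \cdot 2\Gamma(f)$.

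Finally I would replace $\theta = \arccos(e^{-t})$ by $\sqrt{2t}$: setting $\psi(t) := \arccos(e^{-t}) - \sqrt{2t}$, one has $\psi(0) = 0$ and $\psi'(t) = e^{-t}/\sqrt{1-e^{-2t}} - 1/\sqrt{2t} \le 0$, which rearranges to the elementary bound $e^{2t} \ge 1 + 2t$. Hence $\arccos(e^{-t}) \le \sqrt{2t}$, and combining with the previous display produces $\Pr[f(x) \neq f(X_t)] \le (2\sqrt{t}/\sqrt{\pi}) \cdot \Gamma(f)$, as desired. The main obstacle I anticipate is the approximation step in the previous paragraph: justifying $\lim_{\delta \to 0^+} \E\|\nabla P_\delta f\|_2 = 2\Gamma(f)$ when $\Gamma$ is defined through Minkowski content requires some care. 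For smooth $\partial A_f$ this is a direct Mehler-plus-coarea calculation, and the general Borel case is handled by approximating $A_f$ by smooth sets while invoking lower semicontinuity of Gaussian perimeter.
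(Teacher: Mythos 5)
The paper never actually proves Lemma~\ref{lem:Ledoux}: it is imported from Ledoux~\cite{Ledoux:94} (with the underlying interpolation going back to Pisier~\cite{Pisier:86}), so there is no internal proof to compare against. Your reconstruction is precisely that classical argument, and the quantitative bookkeeping checks out: the rotational coupling $Z_\phi=\cos(\phi)x+\sin(\phi)y$ with independent velocity $W_\phi$ gives $\E|g(Z_\theta)-g(Z_0)|\le\sqrt{2/\pi}\,\theta\,\E\|\nabla g\|_2$ for smooth bounded $g$; taking $g=P_\delta f$ is legitimate since $P_\delta f$ is smooth with bounded gradient; the left-hand side converges to $2\Pr_{x,y}[f(x)\neq f(e^{-t}x+\sqrt{1-e^{-2t}}y)]$ because $\|P_\delta f-f\|_{L^1(\gamma_n)}\to 0$ and both $Z_0$ and $Z_\theta$ are marginally $\gamma_n$; and $\arccos(e^{-t})\le\sqrt{2t}$ (equivalent to $e^{2t}\ge 1+2t$) converts $\sqrt{2/\pi}\,\theta$ into $2\sqrt{t}/\sqrt{\pi}$, matching the stated constant.

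The one step that needs repair is exactly the one you flagged, and the repair is not quite the one you propose. The equality $\lim_{\delta\to 0^+}\E\|\nabla P_\delta f\|_2=2\Gamma(f)$ is false in general for the Minkowski-content definition of $\Gamma$ in Definition~\ref{def:surface-area}: that limit equals twice the Gaussian perimeter of $A_f$, which can be strictly smaller than $\Gamma(A_f)$ (adjoin to $A_f$ a dense Gaussian-null set: $P_\delta f$ and the noise sensitivity are unchanged, but $\Gamma$ becomes infinite). Moreover, ``approximate $A_f$ by smooth sets and invoke lower semicontinuity of perimeter'' controls limits of perimeters from \emph{below} by the perimeter of $A_f$, which is the wrong direction for what you need. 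Fortunately you only need the one-sided bound $\E\|\nabla P_\delta f\|_2\le 2\Gamma(f)$, and it holds for every fixed $\delta>0$: assuming $\Gamma(f)<\infty$ (otherwise there is nothing to prove), let $u_r(x)=\max\{0,\,1-d(x,A_f)/r\}$, so that $u_r$ is $r^{-1}$-Lipschitz, $u_r\to\mathbf{1}_{A_f}$ in $L^1(\gamma_n)$, and $\E\|\nabla u_r\|_2\le\gamma_n\big((A_f)_r\setminus A_f\big)/r$. The Ornstein--Uhlenbeck commutation $\nabla P_\delta u=e^{-\delta}P_\delta(\nabla u)$ gives $\E\|\nabla P_\delta u_r\|_2\le\E\|\nabla u_r\|_2$, and sending $r\to 0$ along a sequence attaining the $\liminf$ in Definition~\ref{def:surface-area} (using Fatou together with the pointwise convergence $\nabla P_\delta u_r\to\nabla P_\delta\mathbf{1}_{A_f}$, available since $P_\delta$ has a smooth kernel) yields $\E\|\nabla P_\delta\mathbf{1}_{A_f}\|_2\le\Gamma(f)$, hence $\E\|\nabla P_\delta f\|_2\le 2\Gamma(f)$ because $f=2\cdot\mathbf{1}_{A_f}-1$. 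Substituting this inequality for your limit statement makes your argument a complete and correct proof of the lemma.
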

The following proposition is an immediate consequence of the above lemma. 
\begin{proposition}~\label{prop:Ledoux}
Let $f: \mathbb{R}^n \rightarrow \{-1,1\}$, $t \ge 0$ and $\Gamma(f) \le s$. Then, 
\begin{enumerate}
\item $\mathbf{E}[(f(x) - P_tf (x))^2] =  8 s \sqrt{t}$. 
\item For any $\epsilon >0$ and $T = O(s^2/\epsilon^2)$, $\sum_{q \ge T} \mathbf{E}[\widehat{f}_q^2] \le \epsilon$. 
\end{enumerate}
\end{proposition}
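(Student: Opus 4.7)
The plan is to prove both parts by combining the noise sensitivity bound from Lemma~\ref{lem:Ledoux} with the Hermite expansion of $f$.

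For part (1), I would use the integral representation $P_t f(x) = \mathbf{E}_y[f(e^{-t} x + \sqrt{1-e^{-2t}} y)]$ and apply Jensen's inequality (convexity of $z \mapsto z^2$) to pull the square inside the expectation:
\[
\mathbf{E}_x[(f(x) - P_t f(x))^2] \le \mathbf{E}_{x,y}\bigl[(f(x) - f(e^{-t}x + \sqrt{1-e^{-2t}}y))^2\bigr].
\]
Because $f$ is $\{-1,1\}$-valued, $(f(x) - f(z))^2 = 4 \cdot \mathbf{1}[f(x) \ne f(z)]$, so the right-hand side equals $4 \Pr_{x,y}[f(x) \ne f(e^{-t}x + \sqrt{1-e^{-2t}}y)]$. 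Plugging in Lemma~\ref{lem:Ledoux} yields a bound of $\frac{8 s \sqrt{t}}{\sqrt{\pi}} \le 8 s \sqrt{t}$.

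For part (2), I would pass through the Hermite expansion. Writing $f = \sum_q \widehat{f}_q$ gives $f - P_t f = \sum_q (1 - e^{-tq}) \widehat{f}_q$, and orthogonality (Proposition~\ref{prop:Hermite-basics}) combined with part (1) yields
\[
\sum_{q \ge 1} (1 - e^{-tq})^2 \mathbf{E}[\widehat{f}_q^2] = \mathbf{E}[(f - P_t f)^2] \le 8 s \sqrt{t}.
\]
Now I choose $t = 1/T$, so that for every $q \ge T$ we have $1 - e^{-tq} \ge 1 - 1/e$, a positive absolute constant. Discarding the terms $q < T$ from the left-hand sum and dividing gives
\[
\sum_{q \ge T} \mathbf{E}[\widehat{f}_q^2] \le \frac{8 s \sqrt{t}}{(1 - 1/e)^2} = \frac{8 s}{(1-1/e)^2 \sqrt{T}},
\]
and taking $T = C s^2 / \epsilon^2$ for a sufficiently large absolute constant $C$ makes this at most $\epsilon$.

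The argument is essentially routine once Lemma~\ref{lem:Ledoux} is in hand; the only small subtlety is the Jensen step in part (1) (which is necessary because Lemma~\ref{lem:Ledoux} controls the two-point noise sensitivity rather than the $L_2$ distance from $P_t f$ directly) and the choice of $t$ in part (2), where one must trade off the prefactor $\sqrt{t}$ in the upper bound against the decay rate $1 - e^{-tT}$ of the ``truncation'' factor. Both tradeoffs are optimized (up to constants) by setting $t \asymp 1/T$.
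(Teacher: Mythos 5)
Your proof is correct. For item (1) you follow essentially the same route as the paper: both arguments reduce $\E[(f-P_tf)^2]$ to $4\Pr_{x,y}[f(x)\neq f(e^{-t}x+\sqrt{1-e^{-2t}}y)]$ and then invoke Lemma~\ref{lem:Ledoux} (the paper uses $|f(x)-P_tf(x)|=2\E_y[\mathbf{1}(\mathcal{E}_1)]$ together with $p^2\le p$ for $p\in[0,1]$, while you use Jensen; these are interchangeable, and both in fact give the slightly stronger bound $\tfrac{8s\sqrt{t}}{\sqrt{\pi}}$, so the ``$=$'' in the statement should be read as ``$\le$''). Where you genuinely diverge is item (2): the paper does not prove it at all, instead citing Theorem~15 of \cite{KOS:08}, whereas you derive it directly from item (1) via the Hermite identity $\E[(f-P_tf)^2]=\sum_{q\ge 1}(1-e^{-tq})^2\E[\widehat{f}_q^2]$, choosing $t=1/T$ so that the truncation factor $1-e^{-tq}$ is bounded below by $1-1/e$ for $q\ge T$. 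This is in fact the standard argument underlying the cited result, and your version has the advantage of being self-contained and of making the $t\asymp 1/T$ trade-off explicit; the constants work out, since $T\ge 64 s^2/((1-1/e)^4\epsilon^2)$ suffices, which is indeed $O(s^2/\epsilon^2)$.
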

\begin{proof}
Let $\mathcal{E}_1(x,y)$ denote the event that 
$f(x) \not = f(e^{-t} x+ \sqrt{1-e^{-2t}} y)$. 
To prove the first item, observe that for any $x$, 
\[
(f(x) - P_t f(x))^2  = (2\mathop{\mathbf{E}}_{y \sim \gamma_n}[\mathbf{1}(\mathcal{E}_1(x,y))])^2 =4 \big(\mathop{\mathbf{E}}_{y \sim \gamma_n}[\mathbf{1}(\mathcal{E}_1(x,y))]\big)^2 \le 4 \big(\mathop{\mathbf{E}}_{y \sim \gamma_n}[\mathbf{1}(\mathcal{E}_1(x,y))]\big)
\]
Thus, we obtain that $$\mathbf{E}[(f(x) - P_tf(x))^2] \le 4\mathop{\mathbf{E}}_{x, y \sim \gamma_n}[\mathbf{1}(\mathcal{E}_1(x,y))] \le 8 s\sqrt{t}, $$
where the last inequality is an application of Lemma~\ref{lem:Ledoux}. The second item here is the same as Theorem~15 (full version) of ~\cite{KOS:08}. So, we do not prove it here. 
\end{proof}

\subsection{Inequalities for matrix perturbation}
We will require some basic results on matrix perturbations. For this, we adopt the following notation: Let $A \in \mathbb{C}^{n \times n}$ be a Hermitian matrix. Then $\sigma_1(A) \ge \ldots \ge \sigma_n(A)$ denote its singular values in order.

\begin{lemma}~\label{lem:Weyl}[Weyl's inequality]
Let $A, E \in \mathbb{R}^{n \times n}$ be real symmetric matrices. 
Then for any $j$, $$|\sigma_j(A+E) - \sigma_j(A)| \le \Vert E \Vert_F. $$
\end{lemma}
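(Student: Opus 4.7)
The plan is to prove the result by reducing to the standard min-max (Courant–Fischer) characterization of singular values and then bounding the operator norm of $E$ by its Frobenius norm. Since $A$ and $A+E$ are real symmetric, their singular values coincide with the absolute values of their eigenvalues; it will be convenient to first prove the analogous statement for the sorted eigenvalues $\lambda_1(A) \ge \cdots \ge \lambda_n(A)$ and then translate back to singular values at the end.

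First I would recall the min-max formula
\[
\lambda_j(M) \;=\; \max_{\substack{V \subseteq \mathbb{R}^n \\ \dim V = j}} \min_{\substack{v \in V \\ \|v\|_2 = 1}} \langle Mv, v\rangle
\]
for any real symmetric $M$. For any unit vector $v$ we have $|\langle Ev, v\rangle| \le \|E\|_{\mathrm{op}}$, so $\langle (A+E)v,v\rangle \ge \langle Av,v\rangle - \|E\|_{\mathrm{op}}$. Taking the min over $v \in V$ and then the max over $j$-dimensional subspaces $V$ yields $\lambda_j(A+E) \ge \lambda_j(A) - \|E\|_{\mathrm{op}}$. Swapping the roles of $A$ and $A+E$ (using $-E$ in place of $E$) gives the matching upper bound, so $|\lambda_j(A+E) - \lambda_j(A)| \le \|E\|_{\mathrm{op}}$.

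To pass from eigenvalues to singular values, note that the singular values of a real symmetric matrix are the absolute values of its eigenvalues, reordered in decreasing order. Sorting a sequence by decreasing absolute value is a $1$-Lipschitz operation (in the $\ell^\infty$ sense) on the sorted eigenvalue sequence: if $\{a_i\}$ and $\{b_i\}$ are two real sequences sorted in some fashion with $|a_i - b_i| \le c$ after appropriate matching, then their rearrangements by decreasing absolute value differ coordinatewise by at most $c$ (this is the content of Mirsky's theorem applied in this elementary setting, and follows from two applications of the standard sorting inequality). Hence $|\sigma_j(A+E) - \sigma_j(A)| \le \|E\|_{\mathrm{op}}$.

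Finally, I would note that $\|E\|_{\mathrm{op}}^2 = \sigma_1(E)^2 \le \sum_{i=1}^n \sigma_i(E)^2 = \|E\|_F^2$, so $\|E\|_{\mathrm{op}} \le \|E\|_F$, completing the proof. There is no real obstacle here — the only small subtlety is the translation between eigenvalues and singular values in the symmetric case, which is handled by the observation in the previous paragraph.
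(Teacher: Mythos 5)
Your proof is correct. The paper itself states this lemma as a standard fact (Weyl's inequality) and gives no proof, so there is no paper argument to compare against; your argument supplies a complete and valid one. The three steps all check out: the Courant--Fischer min-max bound gives $|\lambda_j(A+E)-\lambda_j(A)|\le \Vert E\Vert_{\mathrm{op}}$ for the decreasingly sorted eigenvalues; the passage to singular values is justified because taking absolute values entrywise and then sorting in decreasing order are each $1$-Lipschitz with respect to the $\ell^\infty$ norm, so the coordinatewise bound survives the rearrangement; and $\Vert E\Vert_{\mathrm{op}}\le \Vert E\Vert_F$ finishes it. Note that you actually establish the stronger operator-norm version $|\sigma_j(A+E)-\sigma_j(A)|\le \Vert E\Vert_{\mathrm{op}}$, of which the Frobenius-norm statement used in the paper is an immediate weakening; this stronger form would serve equally well everywhere the paper invokes Lemma~\ref{lem:Weyl}.
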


\begin{fact}~\label{lem:mat-1-bound}~\cite{schmitt1992perturbation}
Let $A_1, A_2$ be two psd matrices. Let $\sigma_{\min}(A_1), \sigma_{\min}(A_2) \ge c$. Then, 
\[
\Vert A_2^{1/2} - A_1^{1/2} \Vert_2 \le \Vert A_2 - A_1 \Vert_2 \cdot \frac{1}{2 \sqrt{c}}. 
\]
\end{fact}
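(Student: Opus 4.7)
The plan is to reduce the claim to a Sylvester-type operator equation. Letting $B_i := A_i^{1/2}$, I would exploit the algebraic identity
\[
A_1 - A_2 \;=\; B_1^2 - B_2^2 \;=\; B_1(B_1 - B_2) + (B_1 - B_2)\,B_2,
\]
so that $X := B_1 - B_2$ and $Y := A_1 - A_2$ satisfy $B_1 X + X B_2 = Y$. Since each $A_i$ is PSD with smallest eigenvalue at least $c$, its positive square root $B_i$ is PSD with smallest eigenvalue at least $\sqrt{c}$, i.e.\ $B_i \succeq \sqrt{c}\, I$. The task thus becomes to bound $\Vert X \Vert_2$ in terms of $\Vert Y \Vert_2$ given this positivity.

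Next, I would write the solution of this Sylvester equation via the integral representation
\[
X \;=\; \int_0^\infty e^{-t B_1}\, Y\, e^{-t B_2}\, dt,
\]
which is verified by noting that $\tfrac{d}{dt}\bigl[e^{-tB_1} Y e^{-tB_2}\bigr] = -B_1 e^{-tB_1} Y e^{-tB_2} - e^{-tB_1} Y e^{-tB_2} B_2$; integrating from $0$ to $\infty$ and using $e^{-tB_i}\to 0$ (which holds because $B_i \succeq \sqrt{c}\, I > 0$) recovers $B_1 X + X B_2 = Y$. Taking spectral norms and pulling them inside yields
\[
\Vert X \Vert_2 \;\le\; \int_0^\infty \Vert e^{-t B_1}\Vert_2 \cdot \Vert Y \Vert_2 \cdot \Vert e^{-t B_2}\Vert_2 \, dt \;\le\; \Vert Y \Vert_2 \int_0^\infty e^{-2\sqrt{c}\,t}\, dt \;=\; \frac{\Vert Y \Vert_2}{2\sqrt{c}},
\]
where I used $\Vert e^{-t B_i}\Vert_2 = e^{-t\sqrt{c}}$ (the largest eigenvalue of $e^{-tB_i}$ comes from the smallest eigenvalue of $B_i$). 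Substituting back $X = A_1^{1/2} - A_2^{1/2}$ and $Y = A_1 - A_2$ gives the claimed bound.

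The only delicate point is verifying convergence of the integral and justifying the exchange of integration and matrix multiplication; both follow directly from the spectral gap $B_i \succeq \sqrt{c}\, I$, which gives the exponential decay $\Vert e^{-tB_i}\Vert_2 \le e^{-\sqrt{c}\,t}$. I do not expect any serious obstacle here: the statement is really just a quantitative form of the fact that the linear map $X \mapsto B_1 X + X B_2$ is invertible with a well-controlled inverse whenever $B_1, B_2$ are uniformly strictly positive. An alternative but essentially equivalent route would be to diagonalize $B_1 = \sum_i \lambda_i e_i e_i^T$, $B_2 = \sum_j \mu_j f_j f_j^T$, obtain the pointwise solution $(e_i^T X f_j) = (e_i^T Y f_j)/(\lambda_i + \mu_j)$, and bound $1/(\lambda_i+\mu_j) \le 1/(2\sqrt{c})$; the integral-representation proof is cleaner because it avoids choosing a basis.
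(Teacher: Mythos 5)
Your proposal is correct. Note, however, that the paper does not prove this fact at all: it is imported as a black-box citation to Schmitt (1992), so any valid proof you give is by definition a different (indeed, the only) route. Your Sylvester-equation argument is the standard clean one and it works: writing $B_i=A_i^{1/2}$, the identity $B_1X+XB_2=Y$ with $X=B_1-B_2$, $Y=A_1-A_2$, combined with the integral representation and $\Vert e^{-tB_i}\Vert_2\le e^{-\sqrt{c}\,t}$ (inequality, not equality, since $\lambda_{\min}(B_i)$ may exceed $\sqrt{c}$) gives exactly the claimed bound, and in fact the sharper constant $1/(\sqrt{\lambda_{\min}(A_1)}+\sqrt{\lambda_{\min}(A_2)})$. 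Two small points worth tightening: (i) as written, you verify that the integral \emph{solves} $B_1X+XB_2=Y$ and then need uniqueness of the Sylvester solution (which holds because the spectra of $B_1$ and $-B_2$ are disjoint, or simply because in finite dimensions a surjective linear map is injective) to conclude that it equals $B_1-B_2$; alternatively, apply $\int_0^\infty e^{-tB_1}(\cdot)e^{-tB_2}\,dt$ directly to both sides of $B_1X+XB_2=Y$ and use $\int_0^\infty -\tfrac{d}{dt}\bigl(e^{-tB_1}Xe^{-tB_2}\bigr)dt=X$, which avoids the uniqueness step entirely; (ii) the norm exchange under the integral sign is immediate from the triangle inequality for Bochner integrals of matrix-valued functions with exponentially decaying integrand, as you say. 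Your eigenbasis remark gives the Frobenius-norm version directly but is less immediate for the operator norm, so the integral route is indeed the right main argument. What the paper's citation buys is brevity; what your argument buys is a self-contained, elementary proof with an explicit and slightly better constant.
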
 
\begin{fact}~\label{lem:mat-2-bound}\cite{stewart1973introduction}
Let $A_1, A_2$ be two psd matrices. Let $\sigma_{\min}(A_1)\ge c$ and $\Vert A_2 - A_1 \Vert_2 \le c/100$. Then, 
\[
\Vert A_2^{-1} - A_1^{-1} \Vert_2 \le \Vert A_2 - A_1 \Vert_2 \cdot \frac{1}{ c^2}. 
\]
\end{fact}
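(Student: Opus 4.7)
The plan is to proceed via the standard resolvent identity $A_2^{-1} - A_1^{-1} = A_1^{-1}(A_1 - A_2)A_2^{-1}$, which is valid whenever both inverses exist and can be verified by multiplying out. Applying the submultiplicativity of the spectral norm gives
\[
\|A_2^{-1} - A_1^{-1}\|_2 \;\le\; \|A_1^{-1}\|_2 \cdot \|A_2 - A_1\|_2 \cdot \|A_2^{-1}\|_2.
\]
The first factor is immediate from the hypothesis $\sigma_{\min}(A_1) \ge c$: since $A_1$ is psd, $\|A_1^{-1}\|_2 = 1/\sigma_{\min}(A_1) \le 1/c$.

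The only real step is to control $\|A_2^{-1}\|_2 = 1/\sigma_{\min}(A_2)$, which requires showing that $A_2$ is still well-conditioned despite the perturbation. I would invoke Weyl's inequality (Lemma~\ref{lem:Weyl}) applied with $A = A_1$ and $E = A_2 - A_1$ (note $A_1, A_2$ are real symmetric, so this applies): it yields $|\sigma_{\min}(A_2) - \sigma_{\min}(A_1)| \le \|A_2 - A_1\|_2 \le c/100$. Combined with $\sigma_{\min}(A_1) \ge c$, this gives $\sigma_{\min}(A_2) \ge 99c/100$, hence $\|A_2^{-1}\|_2 \le 100/(99c)$.

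Plugging back in, I obtain
\[
\|A_2^{-1} - A_1^{-1}\|_2 \;\le\; \frac{1}{c} \cdot \|A_2 - A_1\|_2 \cdot \frac{100}{99c} \;=\; \frac{100}{99} \cdot \frac{\|A_2 - A_1\|_2}{c^2},
\]
which matches the stated bound up to a constant factor of $100/99$ (the paper's bound of $1/c^2$ is presumably absorbing this absolute constant, or one could tighten the hypothesis slightly from $c/100$ to absorb it exactly). I do not see a genuine obstacle: everything is a direct consequence of the resolvent identity plus Weyl. The mildly delicate point is remembering that one needs Weyl to justify that $A_2$ is invertible at all under the assumption $\|A_2 - A_1\|_2 \le c/100$; without it the right-hand side would not make sense.
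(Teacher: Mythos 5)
Your argument is sound, and in fact the paper does not prove this statement at all: it is quoted as a black-box Fact from Stewart's book, so your resolvent-identity derivation ($A_2^{-1}-A_1^{-1}=A_1^{-1}(A_1-A_2)A_2^{-1}$, then submultiplicativity plus a lower bound on $\sigma_{\min}(A_2)$) is the standard self-contained proof and is exactly the right way to fill it in. Two small remarks. First, the $100/99$ slack you flag is genuine: taking $A_1=cI$ and $A_2=\frac{99c}{100}I$ gives $\Vert A_2^{-1}-A_1^{-1}\Vert_2=\frac{1}{0.99}\cdot\frac{\Vert A_2-A_1\Vert_2}{c^2}$, so the bound with constant exactly $1$ is not attainable under these hypotheses; this is immaterial for the only place the Fact is used (Corollary~\ref{corr:mat-perturb}, feeding into Lemma~\ref{prop:linear}), where such constants are absorbed, but one should either state the Fact with a slightly larger constant or tighten the hypothesis as you suggest. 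Second, be careful which Weyl inequality you invoke: Lemma~\ref{lem:Weyl} as recorded in the paper bounds the singular-value shift by $\Vert E\Vert_F$, which is \emph{not} controlled by the hypothesis $\Vert A_2-A_1\Vert_2\le c/100$; you need the operator-norm form of Weyl (which is standard and true), or, more simply, the elementary bound that for any unit vector $x$, $x^{\top}A_2x\ge x^{\top}A_1x-\Vert A_2-A_1\Vert_2\ge c-\frac{c}{100}$, which gives $\sigma_{\min}(A_2)\ge\frac{99c}{100}$ directly since $A_2$ is symmetric psd, and in particular shows $A_2$ is invertible.
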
 
Combining these two facts, we have the following corollary.
\begin{corollary}~\label{corr:mat-perturb}
Let $0<c<1$ and let $A_1$ be a psd matrix such that $\sigma_{\min}(A_1) \ge c$. Let $A_2 - A_1$ be real symmetric that $\Vert A_2 - A_1 \Vert_2 \le \xi \cdot c$ for $|\xi| \le 1/100$. Then, 
$\Vert A_{1}^{-1/2} - A_2^{-1/2} \Vert_2 \le \frac{\xi}{2\sqrt{c}}$. 
\end{corollary}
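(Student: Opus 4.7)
The plan is to derive the bound by applying the two preceding facts (Fact~\ref{lem:mat-1-bound} and Fact~\ref{lem:mat-2-bound}) to $A_1^{1/2}$ and $A_2^{1/2}$ in sequence, using Weyl's inequality (Lemma~\ref{lem:Weyl}) to verify the hypotheses along the way. The key observation is that $A^{-1/2} = (A^{1/2})^{-1}$, so the difference $A_2^{-1/2} - A_1^{-1/2}$ can be analyzed via a two-step reduction: first pass from $A_i$ to $A_i^{1/2}$ (controlling the square-root perturbation by Fact~\ref{lem:mat-1-bound}), then invert (controlling the inverse perturbation by Fact~\ref{lem:mat-2-bound}).

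In more detail, I would first note that by Weyl's inequality, $\sigma_{\min}(A_2) \geq \sigma_{\min}(A_1) - \|A_2 - A_1\|_2 \geq c - \xi c \geq (1 - 1/100)c$, so $A_2$ is positive definite with a lower bound on its smallest singular value of order $c$. Next, applying Fact~\ref{lem:mat-1-bound} with the common lower bound $(1-1/100)c$ on $\sigma_{\min}(A_1)$ and $\sigma_{\min}(A_2)$, we get
\[
\|A_2^{1/2} - A_1^{1/2}\|_2 \;\leq\; \frac{\|A_2 - A_1\|_2}{2\sqrt{(1-1/100)c}} \;\leq\; \frac{\xi c}{2\sqrt{(1-1/100)c}} \;\leq\; \frac{\xi \sqrt{c}}{2\sqrt{1-1/100}}.
\]
Now I would view $A_1^{1/2}$ and $A_2^{1/2}$ as the two psd matrices to which Fact~\ref{lem:mat-2-bound} is applied. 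We have $\sigma_{\min}(A_1^{1/2}) \geq \sqrt{c}$, and the bound above, together with $|\xi| \leq 1/100$, guarantees $\|A_2^{1/2} - A_1^{1/2}\|_2 \leq \sqrt{c}/100$, so the hypothesis of Fact~\ref{lem:mat-2-bound} is met. Fact~\ref{lem:mat-2-bound} then yields
\[
\|A_1^{-1/2} - A_2^{-1/2}\|_2 \;\leq\; \frac{\|A_2^{1/2} - A_1^{1/2}\|_2}{(\sqrt{c})^2} \;\leq\; \frac{\xi}{2\sqrt{(1-1/100)c}},
\]
which is the desired bound up to the innocuous factor $\sqrt{1-1/100}$; absorbing this into a slightly tightened constant (or stating the conclusion with a harmless $1.01$ slack) gives $\|A_1^{-1/2} - A_2^{-1/2}\|_2 \leq \xi/(2\sqrt{c})$.

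There is no real obstacle here: the argument is a clean chaining of two black-box perturbation estimates, and the only subtle point is the bookkeeping needed to verify that the hypothesis $\|A_2^{1/2} - A_1^{1/2}\|_2 \leq \sqrt{c}/100$ of Fact~\ref{lem:mat-2-bound} is satisfied, which is exactly what the assumption $|\xi| \leq 1/100$ is calibrated for. In writing up, I would simply chain the two inequalities in a single display and note that the constants work out because $\xi$ is required to be at most $1/100$.
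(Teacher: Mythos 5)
Your proposal is correct and follows essentially the same route as the paper: apply Fact~\ref{lem:mat-1-bound} to pass from $A_i$ to $A_i^{1/2}$ and then Fact~\ref{lem:mat-2-bound} to the square roots, using $\sigma_{\min}(A_1^{1/2}) \ge \sqrt{c}$ and the hypothesis $|\xi| \le 1/100$ to verify the perturbation condition. The only difference is that you invoke Weyl's inequality (Lemma~\ref{lem:Weyl}) to lower-bound $\sigma_{\min}(A_2)$ by $(1-1/100)c$ before applying Fact~\ref{lem:mat-1-bound} -- a hypothesis check the paper's proof silently skips -- which costs you the harmless factor $\sqrt{1-1/100}$ in the final constant that you correctly note can be absorbed.
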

\begin{proof}
We first apply Fact~\ref{lem:mat-1-bound} to obtain that $$
\Vert A_2^{1/2} - A_1^{1/2} \Vert_2 \le \frac{\xi c^{1/2}}{2}. 
$$
Observe that $\sigma_{\min}(A_1^{1/2}) \ge \sqrt{c}$. Since 
$c<1$ and $|\xi| \le \frac{1}{100}$,  we apply 
Fact~\ref{lem:mat-2-bound} to obtain that
$$
\Vert A_2^{-1/2} - A_1^{-1/2} \Vert_2 \le \frac{\xi}{2\sqrt{c}}. 
$$
This finishes the proof. 
\end{proof}


\section{Algorithm to test $k$-juntas}~\label{sec:test-rank}
In this section, we will prove the following theorem. 
\begin{theorem}~\label{thm:main1}
There is an algorithm \textsf{Test-linear-junta} which has the following guarantee: Given oracle access to $f: \mathbb{R}^n \rightarrow \{-1,1\}$, rank parameter $k$, surface area parameter $s$ and error parameter $\epsilon>0$, it makes $\mathsf{poly}(s,\epsilon^{-1}, k)$ queries and 
\begin{enumerate}
\item If $f$ is a linear $k$-junta with $\mathsf{surf}(f) \le s$, then the algorithm outputs \textsf{yes} with probability at least $0.9$. 
\item If $f$ is $O(\epsilon)$-far from any linear $k$-junta $g$ with {$\mathsf{surf}(g) \leq (1+\epsilon) \cdot s$}, then the algorithm outputs \textsf{no} with probability at least $0.9$. 
\end{enumerate}
\end{theorem}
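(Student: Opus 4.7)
The plan is to realize the Gram-matrix-based algorithm sketched in the techniques section. First, invoke the surface-area tester of \cite{neeman2014testing} with tolerance $(1, 1+\epsilon)$; if it rejects, output \textsf{no}. In the yes-case this test accepts with high probability, and in the surviving no-case we may assume $\Gamma(f) \le (1+\epsilon) s$, while $f$ remains $\Omega(\epsilon)$-far from every linear $k$-junta (by the theorem's promise). Next, set $t = \Theta(\epsilon^4/s^2)$ so that $\E[(f - P_t f)^2] \le \epsilon^2$ via Proposition~\ref{prop:Ledoux}. Draw $m = \poly(k/\epsilon)$ fresh samples $y_1, \ldots, y_m \sim \gamma_n$, and for every pair $(i,j)$ use Lemma~\ref{lem:inner-product-1} to estimate
$$
\tilde A_{ij} \;:=\; \bigl\langle D(P_t f)(y_i),\; D(P_t f)(y_j)\bigr\rangle
$$
to additive error $\eta = \poly(\epsilon/k)$, assembling a symmetric matrix $A$. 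Compute $\sigma_{k+1}(A)$ and accept iff it lies below a threshold $\tau$ to be chosen.

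\emph{Completeness.} If $f = g(\langle u_1, \cdot\rangle, \ldots, \langle u_k, \cdot\rangle)$, then each shift $f_{t,y}$ is a junta in the same directions, so Lemma~\ref{lem:derivative-shift} places $D(P_t f)(y_i) \in \span(u_1,\ldots,u_k)$ for every $i$. Hence the exact Gram matrix $\tilde A$ has rank at most $k$ and $\sigma_{k+1}(\tilde A) = 0$. Choosing $\eta$ small enough that $\|A - \tilde A\|_F \le m\eta \le \tau/2$ with high probability, Weyl's inequality (Lemma~\ref{lem:Weyl}) yields $\sigma_{k+1}(A) \le \tau$, and the tester accepts.

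\emph{Soundness.} The technical heart of the argument is the subspace-escape Lemma~\ref{lem:subspace-escape}: when $f$ is $\Omega(\epsilon)$-far from every linear $k$-junta, for every subspace $W \subseteq \R^n$ of codimension at most $k$, a random $y \sim \gamma_n$ produces $\|\Pi_W \mathcal{W}_1(f_{t,y})\|_2 \ge \poly(\epsilon)$ with probability at least $\poly(\epsilon)$. Iterating this $k+1$ times---at the $j$-th step taking $W_j$ to be the orthogonal complement of the vectors $D(P_t f)(y_{i_1}), \ldots, D(P_t f)(y_{i_{j-1}})$ already selected---and using $m = \poly(k/\epsilon)$ samples with a union bound, one finds with high probability indices $i_1, \ldots, i_{k+1}$ such that the selected vectors have norms $\ge \poly(\epsilon)$ and each is nearly orthogonal to the span of its predecessors. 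A Gram-determinant (or eigenvalue interlacing) computation then forces $\sigma_{k+1}(\tilde A) \ge \poly(\epsilon)$, and a second application of Weyl's inequality with $\tau$ wedged between the two regimes gives $\sigma_{k+1}(A) > \tau$, so the tester rejects.

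\emph{Main obstacle.} The decisive ingredient is the subspace-escape lemma itself. Its proof must promote a qualitative farness statement---no $k$-dimensional subspace $U$ admits a function on $U$ that matches $f$ well---into a quantitative claim that $D(P_t f)(y)$ frequently has non-negligible mass outside every such $U$. The approach will be to suppose for contradiction that escape fails for some $W$ of codimension at most $k$; then $\mathcal{W}_1(f_{t,y})$ is concentrated on the $k$-dimensional $W^\perp$ for a $1 - \poly(\epsilon)$ fraction of $y$. From this we aim to synthesise an explicit approximant: average $f_{t,y}$ (or a suitably rounded surrogate) against a distribution on $W^\perp$ whose existence follows from the concentration, producing a genuine linear $k$-junta close to $P_t f$, hence close to $f$ by the noise-stability bound obtained from Proposition~\ref{prop:Ledoux}---contradicting the farness hypothesis. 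The perturbation arithmetic needed to pass from approximate projections back to an actual junta is exactly where Corollary~\ref{corr:mat-perturb} is expected to enter; by comparison, the remaining pieces (Frobenius-norm control via Weyl, Hoeffding-style sampling bounds for the Gram entries, and the initial surface-area test) are essentially routine.
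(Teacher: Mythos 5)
Your algorithm and the bulk of your analysis (surface-area pre-test, Gram matrix of the vectors $D(P_tf)(y_i)$, completeness via rank-$k$ plus Weyl, soundness via the subspace-escape lemma iterated $k+1$ times) is the same route as the paper. But there is a genuine gap in how you reduce the soundness case to the subspace-escape lemma. You write that after the surface-area test survives ``we may assume $\Gamma(f)\le(1+\epsilon)s$, while $f$ remains $\Omega(\epsilon)$-far from every linear $k$-junta (by the theorem's promise).'' Neither half of this is available. First, the surface-area tester (Theorem~\ref{thm:neeman-testing}) does not certify a bound on $\Gamma(f)$ itself; it only certifies that $f$ is $(\epsilon,s)$-smooth, i.e.\ close in $L_1$ to \emph{some} function of surface area at most $s(1+\epsilon)$ (this weaker guarantee is unavoidable, since an arbitrarily small perturbation of $f$ can blow up its surface area). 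Second, and more importantly, the theorem's promise in the no-case is farness from linear $k$-juntas \emph{of surface area at most $(1+\epsilon)s$}, not farness from all linear $k$-juntas. A function $f$ could pass the surface-area test, be close to a linear $k$-junta with enormous surface area, and yet be far from every low-surface-area linear $k$-junta. For such an $f$ the hypothesis of Lemma~\ref{lem:subspace-escape} fails, the Gram matrix can have small $(k+1)$-st singular value, and your tester would accept --- violating item 2 of the theorem.

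The paper closes exactly this hole with Lemma~\ref{lem:dual-closeness}: if $f$ is simultaneously close to a linear $k$-junta $g$ (no surface-area constraint) and close to a function $h$ of surface area at most $s$, then $f$ is $O(\sqrt\epsilon)$-close to a single function that is a linear $k$-junta \emph{and} has surface area at most $s(1+\sqrt\epsilon)$. Its proof is not routine: one smooths $h$, projects onto the junta subspace, thresholds at a level $t$, and uses the co-area formula to argue that some threshold keeps the surface area bounded while staying close to $f$. With this lemma the contrapositive argument goes through: passing the surface-area test gives the low-surface-area certificate $h$, passing the rank test gives (via the soundness of \textsf{Test-rank} applied to a smooth $f$) the junta certificate $g$, and the lemma merges them; hence farness from low-surface-area linear $k$-juntas forces rejection. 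Your proposal has no counterpart to this merging step, and your sketched contradiction argument for the subspace-escape lemma would at best produce a junta approximant with no control on its surface area, so it cannot substitute for it. (As a minor point, the paper proves the escape lemma more directly: the Poincar\'e-based Lemma~\ref{lem:gradient-subspace} gives $\mathbf{E}[\Vert D P_t f\Vert_W^2]\ge\Omega(\epsilon^2)$, and the pointwise bound $\Vert DP_tf\Vert^2\le O(s^2/\epsilon^4)$ converts this into the $\mathrm{poly}(\epsilon/s)$ probability bound.)
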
 
\begin{remark}{A convention that we shall adopt (to avoid proliferation of parameters) is to sometimes ignore the confidence parameter of the testing algorithm. Typically, whenever we can estimate a parameter within $\pm \epsilon$ with $T$ queries with confidence $2/3$, we can do the usual ``median trick" and get the same accuracy with confidence $1-\delta$ with a multiplicative $O(\log(1/\delta))$ overhead in the query complexity. Since we only need to succeed with probability $0.9$ in the final algorithm, it is sufficient for each of the individual subroutines to succeed with probability sufficiently close to $1$. So, unless it is crucial, at some places,we shall ignore the confidence parameter in the theorem statements and many of the calculations. It will be implicit that the confidence parameter is sufficiently close to $1$. 
 }
 \end{remark}
The algorithm \textsf{Test-linear-junta} is described in Figure~\ref{fig:tlj}. The algorithm invokes two different subroutines, \textsf{Test-surface-area} and \textsf{Test-rank} whose guarantees we state now. To
do this, we first define the notion of $(\epsilon, s)$ smooth function. 
\begin{definition}~\label{def:smooth-perturb}
A function $f: \mathbb{R}^n \rightarrow \{-1,1\}$ is said to be $(\epsilon, s)$-smooth if there is a function $g: \mathbb{R}^n \rightarrow \{-1,1\}$ such that $\mathbf{E}[|f-g|]\le\epsilon$ and $\mathsf{surf}(g) \le s (1+\epsilon)$. 
\end{definition}
In other words, a function $f$ is $(\epsilon,s)$ smooth if $f$ is $\epsilon$-close to some other function $g$ (in $\ell_1$ distance) and $g$ has surface area which is essentially bounded by $s$. With this definition, we can now state the guarantee of the routine \textsf{Test-surface-area} 
(due to Neeman~\cite{neeman2014testing}). 
\begin{theorem}~\label{thm:neeman-testing}
There is an algorithm \textsf{Test-surface-area} which given oracle access to a function $f: \mathbb{R}^n \rightarrow \{-1,1\}$ and error parameter $\epsilon>0$ makes $T_{\mathsf{test}} = \mathsf{poly}(s/\epsilon)$ queries and has the following guarantee: 
\begin{enumerate}
\item If $f$ is a function with surface area at most $s$, then the algorithm outputs \textsf{yes} with probability at least $1-\epsilon$. 
\item {Any function $f$ which passes the test with probability $0.1$ is $(\epsilon,s)$-smooth.} 
\end{enumerate}
\end{theorem}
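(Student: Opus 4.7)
My plan is to implement a noise-sensitivity-based test. Define the noise sensitivity
$NS_t(f) := \Pr_{x,z \sim \gamma_n}[f(x) \ne f(e^{-t}x + \sqrt{1-e^{-2t}}z)]$.
The test picks a noise time $t = t(\epsilon, s)$ chosen so that $\frac{2\sqrt{t}}{\sqrt{\pi}} s = \Theta(\epsilon)$ (concretely, $t = \Theta(\epsilon^2/s^2)$), samples $N = \mathsf{poly}(s/\epsilon)$ i.i.d.\ pairs $(x_i, y_i)$ with $y_i$ obtained from $x_i$ by one step of the Ornstein--Uhlenbeck process at time $t$, forms the empirical estimate $\widehat{NS}_t$, and accepts iff $\widehat{NS}_t \le (1+\epsilon/4) \cdot \tfrac{2\sqrt{t}}{\sqrt{\pi}} s$. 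Each query to $NS_t$ requires two oracle calls, and Hoeffding's inequality lets us choose $N = \mathsf{poly}(s/\epsilon)$ so that $|\widehat{NS}_t - NS_t(f)| \le \epsilon\sqrt{t}\,s/100$ with probability $\ge 1 - \epsilon$.

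Completeness is immediate from Lemma~\ref{lem:Ledoux}: if $\Gamma(f) \le s$ then $NS_t(f) \le \tfrac{2\sqrt{t}}{\sqrt{\pi}} s$, so the concentrated empirical estimate falls below the threshold.

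The real content is soundness: given only $NS_t(f) \le (1+\epsilon/2)\tfrac{2\sqrt{t}}{\sqrt{\pi}} s$, we must exhibit a Boolean $g$ with $\E[|f-g|] \le \epsilon$ and $\Gamma(g) \le (1+\epsilon)s$. The natural witness is a threshold of the smoothed function $P_t f$: set $g_\theta := \mathrm{sgn}(P_t f - \theta)$. The first payoff is easy. Since $f$ is $\{-1,1\}$-valued, a direct computation gives $\E_x|f(x) - P_t f(x)| = 2\, NS_t(f) = O(\epsilon)$; hence for any $\theta \in [-1/2, 1/2]$, a Markov bound yields $\Pr[f \ne g_\theta] \le \Pr[|f - P_t f| > 1 - |\theta|] \le 4\, NS_t(f) \le \epsilon$, provided we tune $t$ so that this quantity is small. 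The second payoff requires averaging over $\theta$ via the Gaussian co-area formula:
\[
\int_{-1}^{1} \Gamma\bigl(\{x : P_t f(x) \ge \theta\}\bigr)\, d\theta \;=\; \E_{x \sim \gamma_n}\bigl[\,\|\nabla P_t f(x)\|\,\bigr].
\]
If we can prove that the right-hand side is at most $(1 + O(\epsilon)) \cdot 2s$, then by averaging there exists $\theta^\star \in [-1/2, 1/2]$ with $\Gamma(g_{\theta^\star}) \le s(1+\epsilon)$, and $g_{\theta^\star}$ is the sought witness.

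The hard step, and the main obstacle, is the sharp gradient bound
\[
\E_{x \sim \gamma_n}\bigl[\|\nabla P_t f(x)\|\bigr] \;\le\; \sqrt{\tfrac{\pi}{t}}\cdot NS_t(f)\,(1 + O(\epsilon)),
\]
which (combined with our assumption on $NS_t(f)$) delivers exactly the $2s(1+O(\epsilon))$ we need after integrating, matching the Ledoux constant without any blow-up. My approach is to write $\nabla P_t f(y) = \tfrac{e^{-t}}{\sqrt{1-e^{-2t}}} \E_z[f(e^{-t}y + \sqrt{1-e^{-2t}}z)\, z]$ via Gaussian integration by parts (Mehler's formula), pair up $z$ and $-z$ to rewrite the integrand as a product of a difference $f(\cdot) - f(\cdot')$ and a Gaussian vector, then apply the triangle and Cauchy--Schwarz inequalities together with $\E|\langle u, z\rangle| = \sqrt{2/\pi}\|u\|$ to convert $\E\|\nabla P_t f\|$ into a one-dimensional normalization of $NS_t(f)$. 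Because $f \in \{-1,1\}$ the difference $|f - f'| \in \{0, 2\}$, which is precisely what allows the constants to line up and reproduce Ledoux's bound rather than a loose variant. Tracking the lower-order corrections carefully (using that in the regime $t \ll 1$ we have $\sqrt{1-e^{-2t}} = \sqrt{2t}(1 + O(t))$ and $e^{-t} = 1 - O(t)$) then yields exactly the $(1 + O(\epsilon))$ slack in the surface area bound that the statement permits.
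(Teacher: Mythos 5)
First, note that the paper does not prove this statement at all: Theorem~\ref{thm:neeman-testing} is imported as a black box from the surface-area tester of \cite{neeman2014testing}, and the $(1+\epsilon)$ slack in the soundness guarantee is precisely the main technical contribution of that paper. So you are attempting to reprove a genuinely hard result, and your sketch has a real gap exactly at the step you label ``the hard step.'' Writing $\nabla P_t f(y) = \frac{e^{-t}}{2\sqrt{1-e^{-2t}}}\,\E_z\big[\big(f(e^{-t}y+\sqrt{1-e^{-2t}}\,z)-f(e^{-t}y-\sqrt{1-e^{-2t}}\,z)\big)z\big]$ and bounding $\Vert \E_z[(f_+-f_-)z]\Vert \le 2\,\E_z\big[\mathbf{1}\{f_+\neq f_-\}\,|\langle u_y,z\rangle|\big]$ is fine, but the next move is not: you cannot replace $\E_z\big[\mathbf{1}\{f_+\neq f_-\}\,|\langle u_y,z\rangle|\big]$ by $\Pr[f_+\neq f_-]\cdot\E|\langle u_y,z\rangle| = \sqrt{2/\pi}\,\Pr[f_+\neq f_-]$, because the disagreement event is strongly correlated with $|\langle u_y,z\rangle|$ (disagreement is most likely when $z$ has a large component along the gradient direction). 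Already for the half-space $f=\sgn(x_1)$ one computes $\E_z\big[\mathbf{1}\{f_+\neq f_-\}\,|z_1|\big] \approx \sqrt{\pi/2}\,\Pr[f_+\neq f_-]$, i.e.\ the factorization you invoke is violated by a factor $\pi/2$ in the extremal case; in general the event can concentrate on large $|\langle u_y,z\rangle|$ and costs an extra $\sqrt{\log(1/\mathrm{NS}_t)}$ factor, while a straight Cauchy--Schwarz only yields $\E\Vert\nabla P_tf\Vert \lesssim t^{-1/2}\sqrt{\mathrm{NS}_{2t}(f)}$, which is useless here. So the inequality $\E\Vert \nabla P_t f\Vert \le \sqrt{\pi/t}\,\mathrm{NS}_t(f)(1+O(\epsilon))$ --- which is tight for half-spaces and is essentially equivalent to the sharp soundness of the tester --- is not established by your argument; the known proof goes through the pointwise Bakry--Ledoux gradient bound (cf.\ Lemma~\ref{prop:gradient-bound} in its sharp, isoperimetric-profile form) together with a more careful thresholding scheme, and cannot be recovered by triangle inequality plus Cauchy--Schwarz.

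There is also a smaller, fixable slip in the thresholding step: the co-area identity controls $\int_{-1}^{1}\Gamma(\{P_tf\ge\theta\})\,d\theta \le 2s(1+O(\epsilon))$, but you then average only over $\theta\in[-1/2,1/2]$ (where your Markov argument gives closeness). Nothing prevents the co-area mass from concentrating on that middle window, so averaging over an interval of length $1$ only yields some $\theta^\star$ with $\Gamma(g_{\theta^\star})\le 2s(1+O(\epsilon))$, a factor-$2$ loss rather than the required $s(1+\epsilon)$. This can be repaired by taking $|\theta|\le 1-\sqrt{\epsilon}$ (so the excluded set has length $2\sqrt{\epsilon}$ and the closeness bound degrades only to $O(\sqrt{\epsilon})$, after which one reparametrizes $\epsilon$), exactly as the paper does with $\eta=\sqrt{\epsilon}$ in the proof of Lemma~\ref{lem:dual-closeness}; but as written your claim that some $\theta^\star\in[-1/2,1/2]$ satisfies $\Gamma(g_{\theta^\star})\le s(1+\epsilon)$ does not follow. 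Your test statistic (empirical noise sensitivity with threshold calibrated by Lemma~\ref{lem:Ledoux}) and the completeness argument are fine; the soundness, which is the entire content of the cited theorem, remains unproved.
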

Next, we state the guarantee of the routine \textsf{Test-rank}. 
\begin{lemma}~\label{lem:far-k-junta}
The routine \textsf{Test-rank} has a query complexity of $\mathsf{poly}(k,s, \epsilon^{-1})$. Further, we have
\begin{enumerate}
\item If the function $f$ is a linear-$k$-junta, then the algorithm \textsf{Test-rank} outputs \textsf{yes} with probability $1-\epsilon$. 
\item 
If $f : \mathbb{R}^n \rightarrow \{-1,1\}$  is a $((\epsilon/30)^2,s)$-smooth function which is $\epsilon$-far from a linear $k$-junta,
 then the algorithm \textsf{Test-rank} outputs \textsf{no} with probability $1-\epsilon$. 
 \end{enumerate}
\end{lemma}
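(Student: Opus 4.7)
My plan for \textsf{Test-rank} is to sample $m = \mathsf{poly}(k,s,1/\epsilon)$ independent points $y_1,\ldots,y_m \sim \gamma_n$, fix a noise parameter $t = \mathsf{poly}(\epsilon/s)$ small enough that Proposition~\ref{prop:Ledoux} gives $\mathbf{E}[(h - P_t h)^2] = O((\epsilon/30)^4)$ for any $h$ with $\mathsf{surf}(h) \leq s(1+\epsilon)$, and then for every pair $(i,j)$ use Lemma~\ref{lem:inner-product-1} to estimate
\[
A_{ij} \;:=\; \langle \mathcal{W}_1(f_{t,y_i}), \mathcal{W}_1(f_{t,y_j}) \rangle \;=\; (e^{2t}-1)\,\langle D(P_tf)(y_i), D(P_tf)(y_j)\rangle
\]
to additive error $\tau := \mathsf{poly}(\epsilon/(ks))$, producing a matrix $\widehat A$. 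The total query cost is $\mathsf{poly}(k,s,1/\epsilon)$. Output \textsf{yes} iff the $(k{+}1)$-st singular value of $\widehat A$ is below a threshold $\Theta = \Theta(k,s,\epsilon)$ to be chosen later.

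\textbf{Completeness.} If $f(x) = g(\langle u_1,x\rangle,\ldots,\langle u_k,x\rangle)$ with $u_j$ orthonormal, then for every $y$ the function $f_{t,y}$ is itself a linear $k$-junta in the same directions, so $\mathcal{W}_1(f_{t,y}) \in \mathrm{span}(u_1,\ldots,u_k)$ by the chain rule. Thus the \emph{true} matrix $A$ is a Gram matrix of vectors lying in a $k$-dimensional subspace, so $\mathrm{rank}(A) \leq k$ and $\sigma_{k+1}(A) = 0$. By Weyl's inequality (Lemma~\ref{lem:Weyl}), $\sigma_{k+1}(\widehat A) \leq \|\widehat A - A\|_F \leq m\tau$, which is below $\Theta$ by construction, so the test accepts with high probability.

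\textbf{Soundness.} Let $v_y := \mathcal{W}_1(f_{t,y})$ and $B := \mathbb{E}_y[v_y v_y^\top]$. My central claim is that $\sigma_{k+1}(B) \geq c$ for some $c = \mathsf{poly}(\epsilon)$ whenever $f$ is $((\epsilon/30)^2,s)$-smooth and $\epsilon$-far from every linear $k$-junta. I would prove this iteratively: start with $W_0 = \{0\}$, and at step $j$ invoke Lemma~\ref{lem:subspace-escape} on $W_j^{\perp}$ (of codimension $\leq k$) to pull out a unit direction $w_{j+1} \in W_j^{\perp}$ such that $\mathbf{E}_y [\langle v_y,w_{j+1}\rangle^2] \geq \mathsf{poly}(\epsilon)$, then set $W_{j+1} = W_j \oplus \mathrm{span}(w_{j+1})$. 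After $k{+}1$ rounds, a Gram--Schmidt/perturbation argument based on Lemma~\ref{lem:mat-1-bound} and Corollary~\ref{corr:mat-perturb} turns the resulting system of orthonormal directions with significant second moments into the bound $\sigma_{k+1}(B) \geq c$. Having this, I note that $\tfrac{1}{m} M M^\top$ concentrates around $B$ in operator norm for $m = \mathsf{poly}(k,s,1/\epsilon)$ by a matrix Chernoff / Bernstein inequality --- norms $\|v_y\|$ are bounded by $1/\sqrt{e^{2t}-1}$ via Proposition~\ref{prop:derivative-bound}, so variances and operator bounds are controlled. Since $M^\top M = A$ shares the non-zero spectrum of $MM^\top$, this yields $\sigma_{k+1}(A) \geq mc/2$, and Weyl transfers this to $\sigma_{k+1}(\widehat A) > \Theta$, so the test rejects.

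\textbf{Main obstacle.} The hardest step is establishing $\sigma_{k+1}(B) \geq \mathsf{poly}(\epsilon)$ from Lemma~\ref{lem:subspace-escape}, which is only a one-subspace-at-a-time statement. The iterative scheme must be carried out delicately: the direction $w_{j+1}$ produced by the lemma only has to have \emph{significant} component outside $W_j$, not be exactly orthogonal to $w_1,\ldots,w_j$, so I need to track how the true orthogonal component degrades as I Gram--Schmidt the accumulated directions --- this is precisely where the matrix perturbation bounds are used. A secondary technical issue is that Lemma~\ref{lem:subspace-escape} is naturally stated for a function of bounded surface area, while the hypothesis gives only $((\epsilon/30)^2,s)$-smoothness; I therefore apply the lemma to the nearby surface-area-bounded approximation $\tilde f$ of Definition~\ref{def:smooth-perturb} and propagate the $L_1$-closeness of $f$ and $\tilde f$ into operator-norm closeness of the associated Gram matrices, absorbing the resulting perturbation into $\Theta$.
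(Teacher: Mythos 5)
Your algorithm and completeness analysis coincide with the paper's (this is exactly the test of Figure~\ref{fig:trj}, plus Weyl's inequality). The genuine gap is in your soundness argument, at its central step. The claim $\sigma_{k+1}(B)\ge \poly(\epsilon)$ for the population matrix $B=\E_y[v_yv_y^\top]$ does not follow from Lemma~\ref{lem:subspace-escape}, and in fact it is false with bounds independent of $n$. Lemma~\ref{lem:subspace-escape} controls $\Vert (DP_tf(y))_W\Vert^2$, the norm of the projection onto the codimension-$\le k$ (hence $(n-k)$-dimensional) subspace $W$; in expectation this is a trace-type bound, $\Tr(\Pi_W B\,\Pi_W)\ge\poly(\epsilon)$. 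From a trace bound over an $(n-k)$-dimensional space you can only extract a single unit direction $w\in W$ with $\E_y[\langle v_y,w\rangle^2]\ge \poly(\epsilon)/(n-k)$, so your iterative step ``pull out $w_{j+1}$ with $\E_y[\langle v_y,w_{j+1}\rangle^2]\ge\poly(\epsilon)$'' hides a dimension factor. Moreover the statement itself can fail: take $f(x)=\sgn\big(x_1-c\,m^{-1/2}\sum_{i=2}^{m+1}\tanh(x_i)\big)$. This has surface area $O(1)$, is $\Omega(1)$-far from every linear $k$-junta for constant $k$ (the nonlinear part of the inner sum has constant variance spread essentially evenly over $m$ coordinates), yet each direction $e_i$, $i\ge2$, carries second moment $\E[\langle DP_tf(y),e_i\rangle^2]=O(1/m)$, so $\sigma_{k+1}(B)\to0$ as $m\to\infty$ while $s,\epsilon,k$ stay fixed. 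The same example breaks your concentration step: operator-norm concentration of $\frac1m MM^\top$ around $B$ with $m$ independent of $n$ is not available (matrix Bernstein/Chernoff carries a $\log n$ or effective-rank factor, and here the empirical $(k+1)$st eigenvalue is genuinely far from $m\cdot\sigma_{k+1}(B)$).

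The paper's proof avoids the population matrix entirely and argues on the sample path. It works with the unnormalized empirical sum $\mathcal{G}=\sum_i DP_tf(y_i)\,DP_tf(y_i)^\top$ and a stopping-time/greedy argument (Claim~\ref{clm:large-singular}): maintain the top-$j$ eigenspace $W_j$ of the matrix accumulated so far; Lemma~\ref{lem:subspace-escape} guarantees that each fresh sample has, with probability $\Omega(\epsilon^6/s^2)$, a component of norm at least $\epsilon/(2\sqrt2)$ orthogonal to $W_j$ --- note it only needs the \emph{norm} of the projection onto $W_j^{\perp}$ to be large, which is exactly what the lemma provides, rather than a fixed direction with large second moment. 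Adding that rank-one psd term pushes the $(j+1)$st eigenvalue up to $\epsilon^2/8$, subsequent psd additions never decrease it, and after $k+1$ geometric waiting times (in total $O(s^2k/\epsilon^7)\le r$ samples, with probability $1-\epsilon$) the empirical Gram matrix has $k+1$ eigenvalues at least $\epsilon^2/8$; Weyl's inequality then transfers this to the estimated matrix $B$ of the algorithm. If you want to keep your outline, you must replace the population-matrix claim and the matrix-Chernoff step by an argument of this empirical, greedy type; as written, the soundness proof does not go through.
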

{In order to prove Theorem~\ref{thm:main1}, we will need the following claim which shows that property of closeness to a linear $k$-junta and closeness to a smooth function can be certified using a single function.
\begin{lemma}~\label{lem:dual-closeness}
For a function $f: \R^n \to \{-1, 1\}$, suppose that there is a linear $k$-junta $g: \R^n \to \{-1, 1\}$
    and a function $h: \R^n \to \{-1, 1\}$ of surface area at most $s$ such that both $g$ and $h$
    are $\epsilon$-close to $f$. Then there is a function $\tilde h: \R^n \to \{-1, 1\}$ that is a linear $k$-junta \emph{and}
    has surface area at most $s(1 + \sqrt \epsilon)$, and which is $O(\sqrt{\epsilon})$-close to $f$.
\end{lemma}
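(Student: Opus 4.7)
The plan is to construct $\tilde h$ by thresholding a conditional expectation of $h$ onto the subspace on which $g$ depends. Since $g$ is a linear $k$-junta, write $g(x) = G(x_W)$ for some $k$-dimensional subspace $W \subseteq \R^n$, where $x_W$ denotes the orthogonal projection of $x$ onto $W$. Define $H: W \to [-1,1]$ by $H(y) = \E_{z \sim \gamma_{W^\perp}}[h(y+z)]$, and for each $\theta \in \R$ let $\tilde h_\theta(x) = \mathrm{sign}(H(x_W) - \theta)$. Each $\tilde h_\theta$ is a linear $k$-junta by construction; the aim is to show that some $\theta^* \in [-1/2, 1/2]$ yields simultaneously small surface area and small distance to $f$.

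For closeness, the triangle inequality gives $\Pr[g \neq h] \leq 2\epsilon$, so $\E(h-g)^2 \leq 8\epsilon$. Since $g$ is a $W$-junta, $P_W g = g$, and conditional expectation is an $L^2$-contraction, giving $\E(H(x_W) - g(x))^2 = \E(P_W(h-g))^2 \leq 8\epsilon$. For $\theta \in [-1/2, 1/2]$, a disagreement $\tilde h_\theta(x) \neq g(x)$ forces $|H(x_W) - g(x)| \geq 1/2$, so Chebyshev yields $\Pr[\tilde h_\theta \neq g] \leq 32\epsilon$ and hence $\Pr[\tilde h_\theta \neq f] \leq 33\epsilon = O(\sqrt\epsilon)$.

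For the surface area, I would apply the Gaussian co-area inequality on the subspace $W$:
$$\int_\R \Gamma_W(\{y \in W : H(y) > \theta\}) \, d\theta \leq \int_W |\nabla H(y)| \, d\gamma_W(y).$$
Since $\nabla H(y) = \E_z[\nabla_W h(y+z)]$ (where $\nabla_W$ is the projection of the gradient of $h$ onto $W$), Jensen's inequality gives $\int_W |\nabla H| \, d\gamma_W \leq \int_{\R^n} |\nabla_W h| \, d\gamma_n \leq \int_{\R^n} |\nabla h| \, d\gamma_n = \Gamma(h) \leq s$. Hence $\int_{-1/2}^{1/2} \Gamma_W(\{H > \theta\}) \, d\theta \leq s$, and by averaging there is $\theta^* \in [-1/2, 1/2]$ with $\Gamma_W(\{H > \theta^*\}) \leq s$. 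A Fubini argument shows that the $\R^n$-Gaussian surface area of the $W$-junta $\tilde h_{\theta^*}$ equals the $W$-Gaussian surface area of the level set $\{H > \theta^*\}$, so $\Gamma(\tilde h_{\theta^*}) \leq s \leq s(1+\sqrt\epsilon)$.

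The main technical obstacle is making the co-area step rigorous when $h$ is $\{-1,1\}$-valued, since then $\nabla h$ exists only as a vector-valued Radon measure whose total variation equals $\Gamma(h)$. This can be handled either by working directly in the Gaussian $BV$ framework (where the conditional expectation $P_W$ remains a contraction on the perimeter seminorm) or by first approximating $h$ in $L^1$ by $\mathcal{C}^1$ functions whose Dirichlet energies converge to $\Gamma(h)$, applying the chain of inequalities to the smooth approximants, and passing to the limit using lower semicontinuity of Gaussian perimeter. Once this analytic point is in place, the construction above delivers a $\tilde h$ with the required three properties.
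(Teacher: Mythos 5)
Your construction follows the same route as the paper's proof in spirit (condition $h$ onto the subspace on which $g$ depends, threshold the resulting $[-1,1]$-valued function, and use the co-area formula to select a good threshold), but the surface-area step has a genuine gap: for a $\{-1,1\}$-valued $h$, the total variation of $\nabla h$ with respect to $\gamma_n$ (equivalently, the limit of $\int |\nabla h_j|\,d\gamma_n$ over smooth approximants $h_j$ taking values in $[-1,1]$) is $2\Gamma(h)$, not $\Gamma(h)$, because $h$ jumps by $2$ across the boundary of $\{h=1\}$; note that the paper's own proof accordingly only gets $\E[|\nabla h_1|]\le 2s$ for its smooth approximant. With the correct constant, your chain of inequalities gives $\int_{\R}\Gamma_W(\{y: H(y)>\theta\})\,d\theta \le 2s$, and averaging over your threshold window $[-1/2,1/2]$, which has length $1$, only certifies the existence of some $\theta^*$ with $\Gamma_W(\{H>\theta^*\})\le 2s$. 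That is far outside the required budget $s(1+\sqrt\epsilon)$, so the conclusion does not follow as written.

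The issue cannot be repaired simply by widening the window: to bring the averaged surface area down to $s(1+\sqrt\epsilon)$ you must average $\theta$ over essentially the full range, say $[-1+\eta,1-\eta]$ with $\eta\approx\sqrt\epsilon$, but then your closeness argument (a disagreement with $g$ forces $|H(x_W)-g(x)|\ge \eta$, followed by Chebyshev) only yields $\Pr[\tilde h_\theta\ne g]\le 8\epsilon/\eta^2=O(1)$, which is vacuous. This is precisely where the paper's proof does something different: it shows that the region where the projected function takes values in $[-1+\eta,1-\eta]$ has probability $p$ satisfying $\eta\sqrt p\le\|h-h_2\|_2=O(\epsilon)$ (using that $h$ itself is $\pm1$-valued and $L_2$-close to its projection onto $E$-juntas), so that for \emph{every} threshold in the wide window the thresholded function is within $2\sqrt p+\eta=O(\epsilon/\eta+\eta)$ of $h_2$ in $L_2$, hence $O(\sqrt\epsilon)$-close to $f$ after setting $\eta=\sqrt\epsilon$. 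Your $O(\epsilon)$ closeness bound for a constant-width window is correct and clean, but to rescue the surface-area bound you must widen the window and then replace the Chebyshev step by an argument of the above type; as it stands, the two halves of your proof require incompatible choices of the threshold range.
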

}
{\begin{proofof}{Theorem~\ref{thm:main1}}
If $f$ is a linear $k$-junta with surface area at most $s$, then it passes both the tests \textsf{Test-surface-area} as well as \textsf{Test-rank} with probability $1-\epsilon$. Thus, any linear $k$-junta with surface area at most $s$ passes with probability at least $1-2\epsilon$ (so as long as $\epsilon \le 0.05$, the test succeeds with probability $0.9$). 

On the other hand, suppose $f$ passes \textsf{Test-linear-junta} with probability $0.9$. Then, applying Theorem~\ref{thm:neeman-testing} is $((\epsilon/30)^4, s)$ smooth.
In other words, there is a function $h$ such that $\mathsf{surf}(h) \le (1+(\epsilon/30)^4) \cdot s$ which is $O(\epsilon^4)$-close to $f$. 
 Further, since $f$ passes \textsf{Test-rank} with probability $0.9$, Lemma~\ref{lem:far-k-junta} implies that 
$f$ is $\epsilon^2$-close to some linear $k$-junta $g$. We now apply Lemma~\ref{lem:dual-closeness} to obtain that $f$ is $O(\epsilon)$-close to some function $\tilde{h}: \mathbb{R}^n \rightarrow \{-1,1\}$ which is a linear $k$-junta and $\mathsf{surf}(h) \le (1+O(\epsilon)) s$. This concludes the proof.

\end{proofof}}
We now turn to describing the routine \textsf{Test-rank} and prove Lemma~\ref{lem:far-k-junta}.

\begin{figure}[tb]
\hrule
\vline
\begin{minipage}[t]{0.98\linewidth}
\vspace{10 pt}
\begin{center}
\begin{minipage}[h]{0.95\linewidth}
{\small
\underline{\textsf{Inputs}}
\vspace{5 pt}

\begin{tabular}{ccl}
$s$ &:=& surface area parameter \\
$\epsilon$ &:=& error parameter \\
$k$ &:=& rank parameter\\
\end{tabular}

\vspace{5 pt}
\underline{\textsf{Testing algorithm}}
\begin{enumerate}
\item Run algorithm \textsf{Test-surface-area} with surface area parameter $s$ and error parameter $(\epsilon/30)^4$. 
\item If  \textsf{Test-surface-area} outputs \textsf{yes}, then run the algorithm \textsf{Test-rank} 
with rank parameter $k$, surface area parameter $s$ and error parameter $\epsilon$. 
\item If \textsf{Test-rank} outputs \textsf{yes}, then output \textsf{yes}. If \textsf{Test-rank} outputs \textsf{no}, output \textsf{no}. 
\end{enumerate}

\vspace{5 pt}
}
\end{minipage}
\end{center}

\end{minipage}
\hfill \vline
\hrule
\caption{Description of the   algorithm \textsf{Test-linear-junta}}
\label{fig:tlj}
\end{figure}

\begin{figure}[tb]
\hrule
\vline
\begin{minipage}[t]{0.98\linewidth}
\vspace{10 pt}
\begin{center}
\begin{minipage}[h]{0.95\linewidth}
{\small
\underline{\textsf{Input}}
\vspace{5 pt}

\begin{tabular}{ccl}
$k$ &:=& rank parameter \\
$s$ &:=& surface area parameter \\ 
$\epsilon$ &:=& error parameter 
\end{tabular}

\underline{\textsf{Parameters}}
\vspace{5 pt}

\begin{tabular}{ccl}
$t$ &:=& $\frac{\epsilon^4}{900 s^2}$ \\
$r$ &:=& $\frac{k \cdot s^2}{\epsilon^7}$\\ 
$\kappa$ &:& $\frac{\epsilon^2}{40 r}$ \\
\end{tabular}

\vspace{5 pt}
\underline{\textsf{Testing algorithm}}
\begin{enumerate}
\item Sample directions $y_1,\ldots, y_r \sim \gamma_n$. 
\item Let  $A_{i,j}=\langle D_{} P_{t} f(y_i) , D_{} P_{t} f(y_j) \rangle$. 
\item For all $1 \le i,j \le r$, compute $A_{i,j}$ up to error $\kappa$ using Lemma~\ref{lem:inner-product-1}. Call the estimates $B_{i,j}$. 
\item For the matrix $B \in \mathbb{R}^{r \times r}$,  compute the top $k+1$ singular values of $B$.
\item Output \textsf{yes} if and only if the $(k+1)^{st}$ singular value is at most $\frac{\epsilon^2}{16}$.
\end{enumerate}

\vspace{5 pt}
}
\end{minipage}
\end{center}

\end{minipage}
\hfill \vline
\hrule
\caption{Description of the \textsf{Test-rank} algorithm}
\label{fig:trj}
\end{figure}

\begin{proofof}{Lemma~\ref{lem:far-k-junta}}
The bound on the query complexity of Lemma~\ref{lem:far-k-junta} is immediate from the settings of our parameters and query complexity of Lemma~\ref{lem:inner-product-1}. 

The first item (i.e., the completeness of \textsf{Test-rank}) follows from the fact that if $f$ is a linear $k$-junta, $P_t f$ is also a linear $k$-junta. Consequently,  $A$ is a rank-$k$ matrix. Then, $A$ has at most $k$ non-zero singular values. Thus, if $\sigma_1 \ge \sigma_2 \ge \ldots$ are the singular values of $A$ (in order), then $\sigma_{k+1}=0$. By invoking Weyl's inequality (Lemma~\ref{lem:Weyl}), the $(k+1)^{th}$ singular value of $B$ is at most $\epsilon^2/10$. This finishes the proof of the first item. 

The proof of the second item (i.e., the soundness of \textsf{Test-rank}) is more involved. In particular, we can restate the second item as proving the following lemma. 
\begin{lemma}~\label{lem:far-k-junta-1}
Let $f : \mathbb{R}^n \rightarrow \{-1,1\}$  be a $((\epsilon/30)^2,s)$-smooth function which is $\epsilon$-far from a linear $k$-junta,
 then the algorithm \textsf{Test-rank} outputs \textsf{no} with probability $1-\epsilon$. 
\end{lemma}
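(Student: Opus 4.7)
The plan is to show that when $f$ is $((\epsilon/30)^2, s)$-smooth and $\epsilon$-far from every linear $k$-junta, the exact Gram matrix $A$ with entries $A_{ij} = \langle D(P_tf)(y_i), D(P_tf)(y_j) \rangle$ satisfies $\sigma_{k+1}(A) \gg \epsilon^2/16$; Weyl's inequality then transfers the lower bound to the noisy estimate $B$.

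First, I would pass from $f$ to the nearby function $\tilde f$ afforded by the smoothness hypothesis, so that $\tilde f$ has surface area at most $s(1 + (\epsilon/30)^2)$, satisfies $\E|f - \tilde f| \le (\epsilon/30)^2$, and (since $f$ is $\epsilon$-far from every linear $k$-junta) remains $\Omega(\epsilon)$-far from every linear $k$-junta. Using Lemma~\ref{lem:derivative-shift} together with the contractive action of $P_t$ on $L^1$ differences, the Gram entries of $f$ and $\tilde f$ differ by $o(\epsilon^2)$ in Frobenius norm, so it suffices to lower bound $\sigma_{k+1}$ for $\tilde f$'s Gram matrix and absorb the transfer loss at the end. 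Next, I would invoke the subspace escape lemma (Lemma~\ref{lem:subspace-escape}) on $\tilde f$: for every $k$-dimensional subspace $U \subseteq \R^n$, the random vector $v_y := D(P_t\tilde f)(y)$ satisfies $\Pr_y[\|P_{U^\perp} v_y\|^2 \ge \mu^2] \ge \beta$, with $\mu, \beta = \poly(\epsilon)$ tuned so that $\beta \mu^2 = \Omega(\epsilon^2)$. Taking expectations and minimizing over $U$, Ky Fan's minimum principle yields $\sum_{i > k} \lambda_i(M) \ge \beta \mu^2 = \Omega(\epsilon^2)$, where $M := \E_y[v_y v_y^T]$.

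Third, I would transfer from $M$ to the empirical covariance $\widehat M = \frac{1}{r} \sum_i v_{y_i} v_{y_i}^T$. Using the uniform bound $\|v_y\|^2 \le 1/(e^{2t}-1) = O(s^2/\epsilon^4)$ from Proposition~\ref{prop:derivative-bound} and the sample size $r = k s^2/\epsilon^7$, dimension-free matrix concentration yields $\sum_{i > k} \lambda_i(\widehat M) \ge \frac{1}{2} \sum_{i > k} \lambda_i(M) \ge \Omega(\epsilon^2)$ with high probability. Writing $A = V^T V$ with $V = [v_{y_1} | \cdots | v_{y_r}]$, one has $\sigma_j(A) = r\,\lambda_j(\widehat M)$ for $j \le r$; since the $r - k$ tail singular values $\sigma_j(A)$ for $j > k$ are each at most $\sigma_{k+1}(A)$,
\[
\sigma_{k+1}(A) \ge \frac{1}{r-k} \sum_{j > k} \sigma_j(A) = \frac{r}{r-k} \sum_{i > k} \lambda_i(\widehat M) \ge \Omega(\epsilon^2),
\]
which, by the calibration of constants, exceeds $\epsilon^2/16 + \epsilon^2/40$. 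Finally, Lemma~\ref{lem:inner-product-1} gives $\|B - A\|_F \le r \kappa = \epsilon^2/40$, and Weyl's inequality (Lemma~\ref{lem:Weyl}) gives $\sigma_{k+1}(B) \ge \sigma_{k+1}(A) - \epsilon^2/40 > \epsilon^2/16$, so \textsf{Test-rank} correctly outputs no.

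The main obstacle is the dimension-free concentration in the third step: a naive matrix Bernstein bound on $\|\widehat M - M\|_2$ would introduce a factor of $\log n$, which is incompatible with the $n$-independent query complexity we seek. The workaround relies on two observations: (i) we only need concentration of the scalar tail-sum $\sum_{i > k} \lambda_i(\widehat M) = \min_{\dim U = k} \Tr(P_{U^\perp} \widehat M)$, a unitarily invariant quantity; and (ii) the minimizer can be restricted to the at-most-$r$-dimensional span of the samples. A covering argument over this low-dimensional Grassmannian, combined with the uniform bound on $\|v_y\|^2$ and per-cover Hoeffding estimates, then yields concentration with no $n$-dependence.
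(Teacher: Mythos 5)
Your overall skeleton (lower bound on the tail of the population matrix $M=\E_y[v_yv_y^T]$ via Ky Fan, transfer to the empirical Gram matrix, then Weyl) is coherent, but the step that carries all the difficulty — the ``dimension-free concentration'' of $\sum_{i>k}\lambda_i(\widehat M)$ — does not work as sketched, and this is exactly the point where the paper takes a different (and essentially unavoidable) route. Your workaround restricts the minimizing $k$-dimensional subspace to the span of the $r$ samples and then does a covering argument with per-element Hoeffding. The problem is circular: the Grassmannian of $k$-dimensional subspaces of an $r$-dimensional space needs a cover of size $\exp(\Theta(kr\log(1/\rho)))$, while the per-subspace concentration exponent for the average of $r$ i.i.d.\ terms bounded by $B=O(s^2/\epsilon^4)$ at accuracy $\Theta(\epsilon^2)$ is only $O(r\epsilon^{12}/s^4)$ (Hoeffding), or $O(r\epsilon^6/s^2)=O(k/\epsilon)$ even with Bernstein using the variance bound $B\cdot\E$. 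Both are far below the $\Omega(kr)$ needed for the union bound, and indeed with the algorithm's $r=k s^2/\epsilon^7$ even a \emph{single} fixed subspace does not concentrate at scale $\epsilon^2$ when $s$ is large. So uniform control of the tail sum is not available at this sample size. The paper avoids uniform concentration entirely: it processes the samples sequentially with a stopping-time/greedy argument, using Lemma~\ref{lem:subspace-escape} only in its probability form ($\Pr_y[\|D P_t f(y)\|_{W^\perp}^2\ge \epsilon^2/8]\ge\Omega(\epsilon^6/s^2)$ for the current top eigenspace $W$) to find, one at a time, $k+1$ directions each escaping the running eigenspace; a Courant--Fischer argument (Claim~\ref{clm:large-singular}) then shows the $(k+1)$-st singular value of $\sum_i D P_t f(y_i)\,D P_t f(y_i)^T$ is at least $\epsilon^2/8$, and monotonicity under adding psd terms preserves this. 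This only needs fresh per-step randomness and $r=O(ks^2/\epsilon^7)$ samples.

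Two smaller issues. First, your calibration of Lemma~\ref{lem:subspace-escape} is off: as stated it gives $\mu^2=\epsilon^2/8$ and $\beta=\Omega(\epsilon^6/s^2)$, so $\beta\mu^2=\Omega(\epsilon^8/s^2)$, not $\Omega(\epsilon^2)$; you cannot ``tune'' $\mu,\beta$ to fix this. The bound you actually need, $\min_U\E_y\|P_{U^\perp}v_y\|^2\ge\epsilon^2/4$, is true, but it comes from Proposition~\ref{prop:noise-stab-surf-1} together with the contrapositive of Lemma~\ref{lem:gradient-subspace} (this is how the paper derives its Equation~\eqref{eq:expectation}), not from the tail statement. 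Second, the initial detour through $\tilde f$ is both unnecessary and unjustified: the smoothness hypothesis is exactly what Lemma~\ref{lem:subspace-escape} and Proposition~\ref{prop:noise-stab-surf-1} require of $f$ itself, and the claim that the Gram entries of $f$ and $\tilde f$ differ by $o(\epsilon^2)$ in Frobenius norm does not follow from $\E|f-\tilde f|\le(\epsilon/30)^2$: that bound controls closeness only on average over $y$, the sampled $y_i$ are many ($r=k s^2/\epsilon^7$), and passing to $D P_t$ amplifies differences by $(e^{2t}-1)^{-1/2}=\Theta(s/\epsilon^2)$. Dropping the detour is easy; repairing the concentration step is not, and I would recommend replacing it by a sequential escape argument of the paper's type.
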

The task of proving this lemma shall be the agenda for the rest of this section.
\end{proofof}

In order to prove Lemma~\ref{lem:far-k-junta-1}, we will need a few preliminary lemmas. The following lemma says that if a function's gradient is almost always orthogonal to a subspace $V$. Then, the function is close to a $V$-junta. 
\begin{lemma}~\label{lem:gradient-subspace}
Let $f: \mathbb{R}^n \rightarrow \mathbb{R}$ (be a $\mathcal{C}^1$ function) and let $V$ be a subspace of rank $k$ and let $W = V^{\perp}$. Let us assume that $\mathbf{E}[\Vert (Df)_W \Vert_2^2] = \epsilon$. Then there is a $V$-junta $g: \mathbb{R}^n \rightarrow \mathbb{R}$ such that $\mathbf{E}[(g(x) - f(x))^2]\le \epsilon$. 
\end{lemma}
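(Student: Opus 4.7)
The plan is to define $g$ as the conditional expectation of $f$ given the coordinates along $V$ (i.e., integrate out the coordinates along $W$), and then bound the error by applying the Gaussian Poincar\'e inequality (Lemma~\ref{lem:Poincare}) fiber by fiber.

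More precisely, for any $x \in \mathbb{R}^n$ write $x = x_V + x_W$ where $x_V$ is the orthogonal projection onto $V$ and $x_W$ onto $W$. Since $\gamma_n$ is rotationally invariant, $x_V$ and $x_W$ are independent standard Gaussians on $V$ and $W$ respectively. Define
\[
g(x) \;=\; \mathbf{E}_{y_W \sim \gamma_W}\!\left[f(x_V + y_W)\right].
\]
By construction $g$ depends only on $x_V$, and hence $g$ is a $V$-junta (with the defining directions being any orthonormal basis of $V$).

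For each fixed $x_V$, define the one-fiber function $h_{x_V}: W \to \mathbb{R}$ by $h_{x_V}(z) = f(x_V + z)$. Then $g(x) = \mathbf{E}_{z \sim \gamma_W}[h_{x_V}(z)]$, so by the law of total variance (or just directly)
\[
\mathbf{E}[(f(x) - g(x))^2] \;=\; \mathbf{E}_{x_V}\!\left[\operatorname{Var}_{x_W}\bigl(h_{x_V}(x_W)\bigr)\right].
\]
Applying the Gaussian Poincar\'e inequality (Lemma~\ref{lem:Poincare}) on the $k$-codimensional Gaussian space $(W, \gamma_W)$ to each $h_{x_V}$ gives
\[
\operatorname{Var}_{x_W}\bigl(h_{x_V}(x_W)\bigr) \;\le\; \mathbf{E}_{x_W}\!\left[\Vert D_{x_W} h_{x_V}(x_W) \Vert_2^2\right].
\]
Finally observe that $D_{x_W} h_{x_V}(x_W)$ is exactly the $W$-component of $Df$ evaluated at $x_V + x_W$, namely $(Df)_W(x)$. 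Taking expectation over $x_V$ and combining with the hypothesis yields
\[
\mathbf{E}[(f(x) - g(x))^2] \;\le\; \mathbf{E}\!\left[\Vert (Df)_W(x) \Vert_2^2\right] \;=\; \epsilon,
\]
which is the desired bound. The main step to justify carefully is that $f \in \mathcal{C}^1$ suffices for Poincar\'e to apply on each fiber; if one wants to be pedantic about integrability, a standard mollification/truncation argument reduces to functions with bounded derivatives (as was done in the proof of Lemma~\ref{lem:derivative-shift}). There is no substantive obstacle beyond this routine regularity issue.
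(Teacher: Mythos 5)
Your proof is correct and follows essentially the same route as the paper: define $g$ by averaging $f$ over the $W$-directions, then apply the Gaussian Poincar\'e inequality to each fiber and observe that the fiber gradient is exactly $(Df)_W$. The only cosmetic difference is that the paper first rotates coordinates so that $V$ is spanned by the first $k$ axes, whereas you work with the abstract decomposition $x = x_V + x_W$.
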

\begin{proof}
Let us rotate the space so that $V = \{(x_1, \ldots, x_k, 0, \ldots, 0): x_1, \ldots, x_k \in \mathbb{R}\}$. 
Let us now define $g: \mathbb{R}^n \rightarrow \mathbb{R}$ as 
\[
g(x) = \mathop{\mathbf{E}}_{z \sim \gamma_{n-k}} [f(x_1, \ldots, x_k, z_1, \ldots, z_{n-k})]. 
\]
Observe that $g$ is a $V$-junta. Now, for every choice $X = (x_1, \ldots, x_k)$, consider the function $h_X: \mathbb{R}^{n-k} \rightarrow \mathbb{R}$ as 
\[
h_X(z_1, \ldots, z_{n-k}) = f(x_1, \ldots, x_k, z_1, \ldots, z_{n-k}) - g(x). 
\]
Observe that $\mathbf{E}_{(z_1, \ldots, z_{n-k}) \sim \gamma_{n-k}} [h_X(z_1, \ldots, z_{n-k})]=0$. 
By applying Lemma~\ref{lem:Poincare}, 
\[
\mathbf{E}[h_X^2(z_1, \ldots, z_{n-k})] = \mathsf{Var}[h_X(z_1, \ldots, z_{n-k})] \leq \mathbf{E}[\Vert Dh_X \Vert_2^2]. 
\]
Observe that $Dh_X(z_1, \ldots, z_{n-k}) =Df(x_1,\ldots, x_k,  z_1, \ldots, z_{n-k})_W$. Thus, we get
\begin{eqnarray*}
\mathbf{E}[(f(x) - g(x))^2] &=& \mathop{\mathbf{E}}_{X \sim \gamma_k} \mathop{\mathbf{E}}_{Z\sim \gamma_{n-k}} [h_X^2 (Z)]  \leq \mathop{\mathbf{E}}_{X \sim \gamma_k} \mathop{\mathbf{E}}_{Z\sim \gamma_{n-k}} [\Vert Dh_X \Vert_2^2] \\
&=& \mathop{\mathbf{E}}_{X \sim \gamma_k} \mathop{\mathbf{E}}_{Z\sim \gamma_{n-k}} [\Vert Df(X,Z)_W \Vert_2^2] = \epsilon.
\end{eqnarray*}
This finishes the proof. 
\end{proof}
For the rest of this section, when we use the value $t$, it will bear the same relation as stated in the description of the algorithm \textsf{Test-rank} (see Figure~\ref{fig:trj}). 
\begin{proposition}~\label{prop:noise-stab-surf-1}
Let $f: \mathbb{R}^n \rightarrow \{-1,1\}$ which is $((\epsilon/30)^2,s)$-smooth. 
 Then, 
$
\mathbf{E}[|P_{t}f - f|^2] \le \frac{\epsilon^2}{5}. 
$
\end{proposition}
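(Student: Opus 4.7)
The plan is to approximate $f$ by a nearby function $g$ with bounded surface area, apply Proposition~\ref{prop:Ledoux}(1) to $g$ to bound its noise sensitivity, and then transfer the bound back to $f$ using the contractivity of $P_t$ on $L_2(\gamma_n)$ together with the triangle inequality.

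First I would invoke the definition of $((\epsilon/30)^2, s)$-smoothness to obtain a function $g: \R^n \to \{-1, 1\}$ with $\E[|f - g|] \le (\epsilon/30)^2$ and $\mathsf{surf}(g) \le s(1 + (\epsilon/30)^2)$. Since $f$ and $g$ are $\pm 1$-valued, $(f-g)^2 = 2|f-g|$ pointwise, so
\[
\|f - g\|_2^2 \;=\; 2\,\E[|f - g|] \;\le\; \frac{2\,\epsilon^2}{900} \;=\; \frac{\epsilon^2}{450}.
\]

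Next I would apply Proposition~\ref{prop:Ledoux}(1) to the surface-area-bounded function $g$. The algorithm's choice $t = \epsilon^4/(900 s^2)$ gives $\sqrt{t} = \epsilon^2/(30 s)$, hence
\[
\E[(g - P_t g)^2] \;\le\; 8\,\mathsf{surf}(g)\,\sqrt{t} \;\le\; \frac{8\,(1 + (\epsilon/30)^2)\,\epsilon^2}{30}.
\]

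Finally I would combine these two estimates. Because $P_t$ is an $L_2(\gamma_n)$-contraction (an immediate consequence of Jensen's inequality applied to the defining average $P_t h(x) = \E_z[h(e^{-t}x + \sqrt{1-e^{-2t}}z)]$), we have $\|P_t f - P_t g\|_2 \le \|f - g\|_2$. The triangle inequality then yields
\[
\|f - P_t f\|_2 \;\le\; \|f - g\|_2 + \|g - P_t g\|_2 + \|P_t g - P_t f\|_2 \;\le\; 2\,\|f - g\|_2 + \|g - P_t g\|_2.
\]
Substituting the two displays above, using $\epsilon \le 1$ so that the $(\epsilon/30)^2$ correction in the surface area contributes only a $1 + o(1)$ factor, and squaring yields the desired inequality $\E[|f - P_t f|^2] \le \epsilon^2/5$.

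There is no conceptual obstacle in the argument; the only mild point to watch is that the numerical constants $30$ and $900$ appearing in $t$ and in the smoothness parameter are calibrated so that the Ledoux contribution $\|g - P_t g\|_2 = O(\epsilon)$ dominates the approximation contribution $\|f - g\|_2 = O(\epsilon/\sqrt{450})$, and the resulting squared constant falls below $1/5$. If this calibration is tight, one can invoke the slightly sharper form of Ledoux's inequality (with the $\sqrt{\pi}$ factor in the denominator) visible from the proof of Proposition~\ref{prop:Ledoux} to obtain the stated constant.
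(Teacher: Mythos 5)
Your proposal is correct and follows essentially the same route as the paper's own proof: extract the smoothness witness $g$, bound $\|f-g\|_2^2$ by the $\ell_1$ distance (you use the exact identity $(f-g)^2=2|f-g|$ where the paper uses $\le 4|f-g|$), apply Proposition~\ref{prop:Ledoux} to $g$ with $\sqrt t=\epsilon^2/(30s)$, and combine via contractivity of $P_t$ and the triangle inequality (the paper instead uses $(a+b+c)^2\le 3(a^2+b^2+c^2)$). The constant-calibration caveat you flag at the end is real but no worse than in the paper itself, which asserts $\mathbf{E}[|P_tg-g|^2]\le\epsilon^2/30$ although the stated $8s\sqrt t$ bound gives roughly $8\epsilon^2/30$; in both write-ups the precise constant $1/5$ is immaterial downstream and can be restored by taking a slightly smaller constant in the choice of $t$ (or the sharper $\sqrt\pi$ form of Ledoux's bound).
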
 
\begin{proof}
Since $f$ is $((\epsilon/30)^2,s)$ smooth, we know that there is a function $g$ such that $\mathbf{E}[|f-g|] \le (\frac{\epsilon}{30})^2$ and $\mathsf{surf}(g)\le s\big(1+(\frac{\epsilon}{30})^2\big)$. 
By using the fact that the operator $P_{t}$ is contractive, we have, 
\[
\mathbf{E}[|P_{t} f - P_{t}g|^2] \le \mathbf{E}[|f-g|^2 ]  \le 4 \mathbf{E}[\Vert f - g \Vert_1]\le \frac{\epsilon^2}{200}. 
\]
Next, we use Proposition~\ref{prop:Ledoux} to get that 
$
\mathbf{E}[|P_{t}g - g|^2 ] \le \frac{\epsilon^2}{30}. 
$
We can now combine these to get 
\[
\mathbf{E}[|P_{t} f - f|^2] = 3 \big( \mathbf{E}[|P_{t} f - P_{t} g|^2] + \mathbf{E}[|P_{t} g -  g|^2] + \mathbf{E}[| f -  g|^2]\big) \le \frac{\epsilon^2}{5}. 
\]
\end{proof}

\begin{lemma}~\label{lem:subspace-escape}
Let $f: \mathbb{R}^n \rightarrow \{-1,1\}$ be a $((\epsilon/30)^2,s)$-smooth function which is $\epsilon$-far from any linear $k$-junta. For any subspace $W$ of co-dimension at most $k$,
\[
\Pr_{y \sim \gamma_n} \bigg[\Vert D_{}P_{t}f(y) \Vert_{W}^2 \ge \frac{\epsilon^2}{8}\bigg] \ge \Omega\bigg(\frac{\epsilon^6}{s^2}\bigg). 
\]
\end{lemma}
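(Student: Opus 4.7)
The plan is to proceed by contradiction and exploit the chain: ``gradient almost never escapes $W$'' $\Rightarrow$ ``$P_t f$ is nearly a $W^\perp$-junta'' $\Rightarrow$ ``$f$ is nearly a linear $k$-junta'', each link supplied by a lemma already proved in the paper. Concretely, suppose for contradiction that $\Pr_y[\,\Vert DP_tf(y)\Vert_W^2 \ge \epsilon^2/8\,] \le c\,\epsilon^6/s^2$ for a small absolute constant $c$ to be chosen. I want to upper bound $\mathbf{E}_y[\Vert DP_tf(y)\Vert_W^2]$ by splitting on this event.

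On the complement of the event, $\Vert DP_tf(y)\Vert_W^2 \le \epsilon^2/8$ deterministically. On the event itself, I bound $\Vert DP_tf(y)\Vert_W^2 \le \Vert DP_tf(y)\Vert_2^2$ using Proposition~\ref{prop:derivative-bound}, which gives $\Vert DP_tf(y)\Vert_2^2 \le (e^{2t}-1)^{-1} \le 1/(2t) \cdot O(1) = O(s^2/\epsilon^4)$ for our choice of $t = \epsilon^4/(900 s^2)$. So the event contributes at most $(c\,\epsilon^6/s^2) \cdot O(s^2/\epsilon^4) = O(c\,\epsilon^2)$, and choosing $c$ small enough yields $\mathbf{E}_y[\Vert DP_tf(y)\Vert_W^2] \le \epsilon^2/4$. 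Now applying Lemma~\ref{lem:gradient-subspace} to the $\mathcal{C}^1$ function $P_tf$ (smoothness of $P_tf$ is automatic) with the subspace $V = W^\perp$, which has rank at most $k$, produces a $V$-junta $g$ with $\mathbf{E}[(g - P_tf)^2] \le \epsilon^2/4$.

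Next I combine this with Proposition~\ref{prop:noise-stab-surf-1}, which gives $\mathbf{E}[(P_tf - f)^2] \le \epsilon^2/5$ since $f$ is $((\epsilon/30)^2, s)$-smooth. By the triangle inequality in $L_2$, $\sqrt{\mathbf{E}[(g - f)^2]} \le \sqrt{\epsilon^2/4} + \sqrt{\epsilon^2/5}$, which is at most some explicit constant less than $1$ times $\epsilon$; in particular $\mathbf{E}[|g - f|] \le \sqrt{\mathbf{E}[(g-f)^2]} < \epsilon$ by Cauchy--Schwarz. Setting $h = \mathrm{sign}(g)$ yields a $\{-1,1\}$-valued $V$-junta (hence a linear $k$-junta), and a standard rounding argument shows that whenever $f(x) \ne h(x)$ one has $|f(x) - g(x)| \ge 1$, so $\Pr[f \ne h] \le \mathbf{E}[|f - g|] < \epsilon$. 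This contradicts the hypothesis that $f$ is $\epsilon$-far from every linear $k$-junta, completing the proof.

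The main obstacle, and the one on which the choice of $t$ hinges, is the trade-off in the splitting step: making $t$ small forces $\Vert DP_tf\Vert_2$ to blow up like $1/\sqrt{t}$, while making $t$ large weakens the approximation $P_tf \approx f$ coming from Proposition~\ref{prop:noise-stab-surf-1}, which scales like $s\sqrt{t}$. Balancing these two requires $t \sim \epsilon^4/s^2$ (matching the value in Figure~\ref{fig:trj}) and produces the $\epsilon^6/s^2$ probability lower bound; tracking the constants carefully throughout is the only delicate part of the argument.
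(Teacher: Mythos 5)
Your proof is correct and is essentially the paper's own argument run in the contrapositive: the paper lower-bounds $\mathbf{E}_y[\Vert DP_tf(y)\Vert_W^2]$ by $0.25\epsilon^2$ using Proposition~\ref{prop:noise-stab-surf-1} together with the contrapositive of Lemma~\ref{lem:gradient-subspace}, and then extracts the probability bound from the same uniform bound $\Vert DP_tf(y)\Vert_2^2 \le O(s^2/\epsilon^4)$ via an averaging argument. Your explicit sign-rounding of the real-valued $V$-junta is a detail the paper leaves implicit, but the ingredients, constants, and the balancing of $t$ are the same.
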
 
\begin{proof}
Applying Proposition~\ref{prop:noise-stab-surf-1}, we have that $\mathbf{E}[|P_{t}f-f|^2] \le \frac{\epsilon^2}{5}$. By applying Jensen's inequality, 
we have $\mathbf{E}[|P_{t}f-f|] \le \epsilon/\sqrt{5}$. Thus, $P_{t}f$ is $0.5\cdot \epsilon$-far from any linear $k$-junta (in $\ell_1$ distance). Consequently, we can say that for any $W$-junta $h$, $\mathbf{E}[\Vert P_{t} f - h \Vert_2^2] > 0.25 \epsilon^2$. By contrapositive of Lemma~\ref{lem:gradient-subspace}, we have that 
\begin{equation}~\label{eq:expectation}
\mathbf{E}[\Vert D_{}P_{t}f(y) \Vert_{W}^2] > 0.25 \cdot \epsilon^2. \end{equation} 

Next, observe that Lemma~\ref{lem:derivative-shift}  implies that 
\[
\Vert D_{}P_{t}f(y) \Vert_{W}^2 \le \frac{1}{e^{2t}-1} \cdot \Vert \mathcal{W}_1(f_{t,y}) \Vert_2^2 \le \frac{1}{{e^{2t}-1}}
\le O(1/t) \le O\left(\frac{s^2}{\epsilon^4}\right).
\]
The second inequality follows immediately from that $f_{t,y}$ has range bounded between $[-1,1]$. Combining this with (\ref{eq:expectation}), this implies that 
\[
\Pr\bigg[\Vert D_{}P_{t}f(y) \Vert_{W}^2 \ge \frac{\epsilon^2}{8}\bigg] \ge \Omega\bigg(\frac{\epsilon^6}{s^2}\bigg).  
\]
\end{proof}

We are now in a position to finish the proof of Lemma~\ref{lem:far-k-junta-1}.

\begin{proofof}{Lemma~\ref{lem:far-k-junta-1}}
    Let $M_i \in \mathbb{R}^n$ denote $M_i = D_{} (P_{t}f)(y_i)$. As in Figure~\ref{fig:trj}, consider the matrix $A \in \mathbb{R}^{r \times r}$ whose $(i,j)$ entry is $A_{i,j} = \langle D_{} (P_{t}f)(y_i), D_{} (P_{t}f)(y_j) \rangle$. Now, consider the matrix $M \in \mathbb{R}^{n \times k}$ whose $i^{th}$ column is $M_i$. Then, observe that $A = M^t \cdot M$. We would like to analyze the singular values of $A$. Observe that the non-zero singular values of $M^t \cdot M$ are the same as the non-zero singular values of $M \cdot M^t$. Now, observe that 
\[
M \cdot M^t = \sum_{i=1}^r D_{} (P_{t}f)(y_i) \cdot D_{} (P_{t}f)(y_i)^t 
\]
Instead of analyzing the non-zero singular values of $M^t \cdot M$, we will analyze the non-zero singular values of $M \cdot M^t$. From now on, let us use $h$ to denote  
$P_{t}f$. Let us define the sequence of stopping times $\{\tau_j \}_{j \ge 0}$ as follows: $\tau_0=0$ and let $\mathcal{G}_j = \sum_{\ell \le \tau_j}  D_{} h(y_\ell) \cdot D_{} h(y_\ell)^t$ and $W_j$ be the eigenspace formed by the top $j$ eigenvectors of $\mathcal{G}_j$. Then, $\tau_{j+1}$ is the smallest $\ell>j$ such that 
$\Vert (D_{} h(y_\ell))_{W_j^\perp} \Vert_2 \ge \frac{\epsilon}{2\sqrt{2}}$. We now make the following claim. 
\begin{claim}~\label{clm:large-singular}
For $j \le k+1$, the top $j$ singular values of $\mathcal{G}_j$ are at least $\epsilon^2/8$. 
\end{claim}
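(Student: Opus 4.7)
The plan is to establish Claim~\ref{clm:large-singular} by induction on $j$. The base case $j=0$ is vacuous, so assume the top $j-1$ singular values of $\mathcal{G}_{j-1}$ are at least $\epsilon^2/8$; I want to establish the same for $\mathcal{G}_j$.

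Since $\mathcal{G}_j - \mathcal{G}_{j-1} = \sum_{\ell \in (\tau_{j-1}, \tau_j]} Dh(y_\ell) Dh(y_\ell)^t$ is a sum of rank-one PSD matrices, Weyl's inequality (Lemma~\ref{lem:Weyl}) immediately gives $\sigma_i(\mathcal{G}_j) \ge \sigma_i(\mathcal{G}_{j-1}) \ge \epsilon^2/8$ for every $i \le j-1$. This handles the top $j-1$ singular values, so it remains to show that $\sigma_j(\mathcal{G}_j) \ge \epsilon^2/8$.

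For this step I would exhibit an explicit $j$-dimensional witness subspace via the Courant--Fischer max--min characterization. Define $\tilde u := (Dh(y_{\tau_j}))_{W_{j-1}^\perp}$, $\hat v := \tilde u/\|\tilde u\|$, and $V := W_{j-1} + \operatorname{span}(\hat v)$. The stopping-time definition forces $\|\tilde u\| \ge \epsilon/(2\sqrt{2})$, so $V$ has dimension exactly $j$. For any unit $x = w + c \hat v \in V$ with $w \in W_{j-1}$ and $\|w\|^2 + c^2 = 1$, I would bound
\[
x^t \mathcal{G}_j x \ge x^t \mathcal{G}_{j-1} x + (x^t Dh(y_{\tau_j}))^2.
\]
Since $W_{j-1}$ is an invariant subspace of $\mathcal{G}_{j-1}$ and $\hat v \perp W_{j-1}$, the cross term in $x^t \mathcal{G}_{j-1} x$ vanishes and the inductive hypothesis yields $x^t \mathcal{G}_{j-1} x \ge (\epsilon^2/8)\|w\|^2$. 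The second term expands as $(w^t p + c \gamma)^2$, where $p := (Dh(y_{\tau_j}))_{W_{j-1}}$ and $\gamma := \|\tilde u\| \ge \epsilon/(2\sqrt{2})$.

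The hard part will be closing the argument by minimizing the lower bound $(\epsilon^2/8)\|w\|^2 + (w^t p + c\gamma)^2$ over unit $x \in V$ and recovering $\epsilon^2/8$. This step is delicate because one can choose $w$ so that $w^t p$ partially cancels $c\gamma$, which a priori would depress the quadratic. To handle this I would leverage (i) the norm bound $\|p\|^2 + \gamma^2 = \|Dh(y_{\tau_j})\|^2 \le (e^{2t}-1)^{-1}$ from Proposition~\ref{prop:derivative-bound} to control the admissible range of $c$, and (ii) the PSD contributions from indices $\ell \in (\tau_{j-1}, \tau_j)$ that I dropped in the first inequality; these satisfy $\|(Dh(y_\ell))_{W_{j-1}^\perp}\|_2 < \epsilon/(2\sqrt{2})$ by the stopping-time definition, which constrains how the quadratic can be depressed on the $W_{j-1}$ side. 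Combining these constraints should yield $\sigma_j(\mathcal{G}_j) \ge \epsilon^2/8$, completing the induction.
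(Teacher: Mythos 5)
Your setup is the same as the paper's: induction on $j$, monotonicity/Weyl for the previously certified singular values, and a Courant--Fischer witness subspace spanned by $W_{j-1}$ together with the normalized orthogonal component $\hat v$ of the newly accepted gradient. The problem is that your proposal stops exactly at the crux. The step you defer --- lower-bounding $\min\{(\epsilon^2/8)\|w\|_2^2+(w^tp+c\gamma)^2 : w\in W_{j-1},\ \|w\|_2^2+c^2=1\}$ by $\epsilon^2/8$ --- is the entire content of the induction step, and the two ingredients you propose cannot close it. Proposition~\ref{prop:derivative-bound} only gives $\|p\|_2\le (e^{2t}-1)^{-1/2}=O(t^{-1/2})=O(s/\epsilon^2)$, which is enormous compared to $\epsilon$ and places no useful restriction on the cancellation: if $W_{j-1}$ carries $\mathcal{G}_{j-1}$-eigenvalue exactly $\epsilon^2/8$, $p$ has norm $\Theta(t^{-1/2})$, and $\gamma=\epsilon/(2\sqrt{2})$, then choosing $w$ antiparallel to $p$ with $\|w\|_2\approx c\gamma/\|p\|_2$ makes $w^tp+c\gamma$ essentially vanish and drives the quadratic down to order $\epsilon^2\gamma^2 t$, far below $\epsilon^2/8$. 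The intermediate rank-one terms with $\tau_{j-1}<\ell<\tau_j$ that you dropped do not rescue this either: the stopping rule only says their components orthogonal to $W_{j-1}$ are \emph{small}, which yields no lower bound on anything. So ``combining these constraints'' does not recover $\epsilon^2/8$; the step is a genuine gap, not a routine computation.

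For comparison, the paper's own proof dispatches this point in one line by asserting that $v^T Dh(y_{\tau_{j+1}})Dh(y_{\tau_{j+1}})^t v$ equals $v_2^T Dh(y_{\tau_{j+1}})Dh(y_{\tau_{j+1}})^t v_2$, i.e., by treating the new gradient as if it had no component inside $W_j$ --- precisely the cross term $w^tp$ that you (correctly) decline to drop. So you have isolated the genuinely delicate issue, but your proposal neither justifies discarding that cross term nor supplies a substitute argument; as written, the inductive step establishing $\sigma_j(\mathcal{G}_j)\ge\epsilon^2/8$ remains unproven, and any completion along these lines must bring in information beyond the norm bound of Proposition~\ref{prop:derivative-bound} and the positivity of the discarded intermediate terms.
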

\begin{proof}
We will prove this claim by induction. So, assume that 
the top $j$ singular values of $\mathcal{G}_j$ are all at least $\epsilon^2/8$. Now, for $\ell = \tau_{j+1}$, let 
$w$ be the unit vector in the direction of the component of $D_{} h(y_\ell)$ orthogonal to $W_j$. Let $\Gamma$ be the linear span of $W_j$ and $w$. Now, consider any unit vector $v \in \Gamma$ and express it as $v= v_1+ v_2$ where $v_1$ lies in $W_j$ and $v_2$ is parallel to $w$. Next, observe that
\begin{eqnarray*}
v^T \cdot \big( \mathcal{G}_j + D_{} h(y_\ell) \cdot D_{} h(y_\ell)^t \big) \cdot v  = v^T \cdot  \mathcal{G}_j \cdot v + v^T \cdot D_{} h(y_\ell) \cdot D_{} h(y_\ell)^t \cdot v. 
\end{eqnarray*}
The first term $v^T \cdot  \mathcal{G}_j \cdot v$ is at least as large as $v_1^T \cdot  \mathcal{G}_j \cdot v_1$ and the second term $v^T \cdot D_{} h(y_\ell) \cdot D_{y} h(y_\ell)^t \cdot v$ is the same as $v_2^T \cdot D_{} h(y_\ell) \cdot D_{} h(y_\ell)^t \cdot v_2$. Next, note that
\[
v_1^T \cdot  \mathcal{G}_j \cdot v_1\ge \frac{\epsilon^2}{8} \cdot \Vert v_1\Vert_2^2 ;\ \ \ \  \ v_2^T \cdot D_{} h(y_\ell) \cdot D_{} h(y_\ell)^t \cdot v_2\ge \frac{\epsilon^2}{8} \cdot \Vert v_2\Vert_2^2.
\]
Consequently,
\[
v^T \cdot \big( \mathcal{G}_j + D_{} h(y_\ell) \cdot D_{} h(y_\ell)^t \big) \cdot v \ge \frac{\epsilon^2 }{8} \cdot \big(\Vert v_1 \Vert_2^2 + \Vert v_2 \Vert_2^2 \big) = \frac{\epsilon^2}{8}. 
\]
Observe that 
\[
v^T \cdot \mathcal{G}_{j+1} \cdot v \ge v^T \cdot \big( \mathcal{G}_j + D_{} h(y_\ell) \cdot D_{} h(y_\ell)^t \big) \cdot v \ge \frac{\epsilon^2}{8}. 
\]
The first inequality is immediate from the fact that 
\[
\mathcal{G}_{j+1} -\big( \mathcal{G}_j + D_{} h(y_\ell) \cdot D_{} h(y_\ell)^t \big) = \sum_{\tau_j <i < \tau_{j+1}} D_{} h(y_i) \cdot D_{} h(y_i)^t
\]
is a psd matrix. Thus, we obtain that 
\begin{equation}~\label{eq:singular-1}
\inf_{v: \Vert v \Vert_2=1 \ \textrm{and} \ v \in \Gamma} v^T \cdot \mathcal{G}_{j+1} \cdot v \ge \frac{\epsilon^2}{8}. 
\end{equation}
Now, it is clear that $\mathcal{G}_{j+1}$ is a psd matrix. If the singular values of $\mathcal{G}_{j+1}$ are $\sigma_1 \ge \sigma_2 \ge \ldots$, then by Courant Fischer theorem, we have 
\[
\sigma_{k+1} = \max_{S_{k+1} \subseteq \mathbb{R}^n} \inf_{v: \Vert v \Vert_2=1 \ \textrm{and} \ v \in S_{k+1}} v^T \cdot \mathcal{G}_{j+1} \cdot v,
\]
where $S_{k+1}$ is the set of all $k+1$ dimensional subspaces of $\mathbb{R}^n$. Thus, by applying (\ref{eq:singular-1}) and observing $\mathsf{dim}(\Gamma) = k+1$, we get 
\[
\sigma_{k+1} \ge \inf_{v: \Vert v \Vert_2=1 \ \textrm{and} \ v \in \Gamma} v^T \cdot \mathcal{G}_{j+1} \cdot v \ge \frac{\epsilon^2}{8}.
\] 
This finishes the proof. 
\end{proof}
Now applying Lemma~\ref{lem:subspace-escape}, we have that  conditioned on $\tau_j$, $\tau_{j+1}-\tau_j$ is a geometric random variable with parameter (at least) $\Omega(\epsilon^6/s^2)$. From this, it is not difficult to see that with probability at least $1-\epsilon$, $\tau_{k+1} = O(s^2 \cdot k/\epsilon^7)$, 
Thus, with probability $1-\epsilon$, we can assume that the top $k+1$ singular values of $M \cdot M^t$ are all at least $\epsilon^2/8$.

Consequently, we get that the top $k+1$ singular values of $A = M^t \cdot M$ are all at least $\epsilon^2/8$. Now, the algorithm computes a matrix $B$ such that 
$\Vert A-B\Vert_F \le \epsilon^2/10$.  By Weyl's inequality~(Lemma~\ref{lem:Weyl}), we get that the top $k+1$ singular values of $B$ are all at least $\epsilon^2/16$. This proves the lemma. 

\end{proofof}
{
We finally give the proof of Lemma~\ref{lem:dual-closeness}. The proof relies on the so-called co-area formula. 
\begin{lemma}~\label{lem:coarea}
   Let $f: \R^n \to [-1, 1]$ be smooth and $\psi: [-1, 1] \to \R_+$ be bounded and measurable. Then
    \[
        \int_{-1}^1 \psi(s) \mathsf{surf}(\{x: f(x) \le s\}) \, ds = \int_{\R^n} \psi(f(x)) |\nabla f(x)| \, d\gamma(x).
    \]
\end{lemma}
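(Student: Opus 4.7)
My plan is to derive Lemma~\ref{lem:coarea} from the classical Euclidean co-area formula by weighting by the Gaussian density and then identifying the resulting surface integrals with the Gaussian surface area on level sets.

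First I would recall the standard co-area formula from geometric measure theory: for a smooth $f: \R^n \to \R$ and any nonnegative measurable $\phi: \R^n \to \R_+$,
\[
\int_{\R^n} \phi(x) \, |\nabla f(x)| \, dx \;=\; \int_{-\infty}^{\infty} \Big( \int_{f^{-1}(s)} \phi(x) \, d\mathcal{H}^{n-1}(x) \Big) \, ds,
\]
where $\mathcal{H}^{n-1}$ is the $(n{-}1)$-dimensional Hausdorff measure. Applying this with $\phi(x) = \psi(f(x)) \, \gamma_n(x)$ (recall $\gamma_n(x) = (2\pi)^{-n/2} e^{-\|x\|^2/2}$ is the Gaussian density), and noting that $f$ takes values in $[-1,1]$ so the $s$-integral collapses to $[-1,1]$, gives
\[
\int_{\R^n} \psi(f(x)) \, |\nabla f(x)| \, d\gamma_n(x) \;=\; \int_{-1}^{1} \psi(s) \Big( \int_{f^{-1}(s)} \gamma_n(x) \, d\mathcal{H}^{n-1}(x) \Big) \, ds.
\]

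The remaining task is to identify the inner integral with $\mathsf{surf}(\{x : f(x) \le s\})$. By Sard's theorem, the set of critical values of the smooth function $f$ has Lebesgue measure zero in $[-1,1]$, so it suffices to do this for a.e.\ regular value $s$. For such $s$, the level set $f^{-1}(s)$ is a smooth $(n{-}1)$-dimensional submanifold of $\R^n$ which coincides with the topological boundary of $A_s := \{x : f(x) \le s\}$. For sets whose boundary is a smooth hypersurface, the Minkowski-content definition of Gaussian surface area given in Definition~\ref{def:surface-area} is classically equal to the Gaussian perimeter integral $\int_{\partial A_s} \gamma_n(x) \, d\mathcal{H}^{n-1}(x)$; this is standard and follows, e.g., by approximating $\mathsf{vol}(A_{s,\delta}\setminus A_s)$ by the tube of width $\delta$ around $\partial A_s$ and passing to the limit using that $\gamma_n$ is continuous and $\partial A_s$ has finite $\mathcal{H}^{n-1}$-measure. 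Substituting this identification into the displayed equation above yields exactly the claimed formula.

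The only subtle point, and hence the main obstacle, is the identification of the Minkowski-content definition with the perimeter integral on non-regular level sets: the integrand $\psi(s) \cdot \mathsf{surf}(\{f \le s\})$ on the left side is defined through the $\liminf$ and need not \emph{a priori} be well-behaved at critical values $s$. However, since Sard's theorem guarantees these form a null set in $[-1,1]$ and $\psi$ is bounded, they contribute zero to the outer integral, so the identification on regular values is enough. (If one wishes to avoid even this step, one can take Lemma~\ref{lem:coarea} itself as the \emph{definition} of Gaussian perimeter integrated against $\psi$, which is the standard convention in BV theory; in either case the statement reduces to the Euclidean co-area formula applied to the Gaussian-weighted integrand.)
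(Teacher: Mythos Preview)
The paper does not actually give a proof of Lemma~\ref{lem:coarea}; it is stated as the ``so-called co-area formula'' and invoked without justification in the proof of Lemma~\ref{lem:dual-closeness}. Your derivation via the classical Euclidean co-area formula applied to the weighted integrand $\psi(f(x))\gamma_n(x)$, combined with Sard's theorem to identify the Gaussian Minkowski content with the perimeter integral on regular level sets, is the standard way to obtain the Gaussian version and is correct.
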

\begin{proofof}{Lemma~\ref{lem:dual-closeness}}
    By~\cite{maggi2012sets}, there is a smooth function $h_1: \R^n \to [-1, 1]$ with bounded gradient
    such that $\|h_1 - h\|_2 \le \epsilon$ and $\E[|\nabla h_1|] \le 2s$.
    Let $E$ be a $k$-dimensional subspace for which $g$ is an $E$-junta, and let $z$ be a standard Gaussian
    vector on $E^\perp$. Let $\Pi_E$ be the projection operator for subspace $E$ and define $h_2: \R^n \to [-1, 1]$ by $h_2(x) = \E_z[h_1(\Pi_E x + z)]$. By Jensen's inequality,
    $\E [|\nabla h_2|] \le \E[|\nabla h_1|] \le 2s$. Let $t$ be uniformly distributed in $[-1+\eta, 1-\eta]$,
    and define $\tilde h = \tilde h_t$ by
    \[
        \tilde h_t(x) = \tilde h(x) = \begin{cases}
            -1 &\text{if $h_2(x) \le t$} \\
            1 &\text{otherwise}.
        \end{cases}
    \]
    Note that $\tilde h_t$ is an $E$-junta (because $h_2$ is an $E$-junta).
    In expectation over $t$, the surface area of $\tilde h$ is
    \[
        \frac{1}{2-2\eta} \int_{-1+\eta}^{1-\eta} \mathsf{surf}\{x: h_2 \le s\}\, ds,
    \]
    which by the co-area formula is equal to
    \[
        \frac{1}{2-2\eta} \int_{\R^n} 1_{\{h_2(x) \in [-1+\eta, 1-\eta]\}} |\nabla h_2(x)|\, d\gamma(x)
        \le \frac{1}{2-2\eta} \E_{x \sim \gamma} [|\nabla h_2(x)|]
        \le \frac{s}{1-\eta}.
    \]
    In particular, there exists some $t \in [-1+\eta, 1-\eta]$ such that the surface area
    of $\tilde h_t$ is at most $\frac{s}{1-\eta}$.

    Next, we will estimate the distance of $\tilde h$ from $h$. By the triangle inequality, $\|h - g\|_2 \le 2\epsilon$
    and so $\|h_1 - g\|_2 \le 3\epsilon$. On the other hand, Pythagoras' theorem implies
    that $h_2$ minimizes $\|h_1 - h_2\|_2$ among all $E$-juntas; hence, $\|h_1 - h_2\|_2 \le 3\epsilon$
    and so $\|h - h_2\| \le 4 \epsilon$.
    Now, $h$ takes values in $\{-1, 1\}$ and so $|h(x) - h_2(x)| \ge \eta 1_{\{h_2(x) \in [-1+\eta, 1-\eta]\}}$.
    On the other hand, the definition of $\tilde h$ ensures that
    \[
        |\tilde h(x) - h_2(x)| \le \begin{cases}
            2 &\text{if $h_2(x) \in [-1+\eta, 1-\eta]$} \\
            \eta &\text{otherwise}.
        \end{cases}
    \]
    If $p$ is the probability that $h_2(x) \in [1 + \eta, 1-\eta]$, it follows that $\eta \sqrt{p} \le \|h - h_2\|_2$
    and so
    \[
        \|\tilde h - h_2\|_2 \le 2\sqrt{p} + \eta \le \frac{8\epsilon}{\eta} + \eta.
    \]
    By the triangle inequality $\|\tilde h - h\|_2 \le 4 \epsilon + \frac{8\epsilon}{\eta} + \eta$.
    Choosing $\eta = \sqrt \epsilon$ completes the proof.
\end{proofof}
}

\section{Algorithm to find hidden linear invariant structure}~\label{aff:inv}
In this section, we will prove the following main theorem. 
\begin{theorem}~\label{thm:main-affine-invariant}
Let $f: \mathbb{R}^n \rightarrow \{-1,1\}$ be a linear-$k$-junta with surface area $s$. Then, there is an algorithm \textsf{Find-invariant-structure} which for any error parameter $\epsilon>0$, 
makes $O(s \cdot k /\epsilon)^{O(k)}$ queries to $f$
and with probability $1-\epsilon$  outputs (for some $\ell \le k$) a function $g: \mathbb{R}^\ell \rightarrow [-1,1]$ so that the following holds:  there is an orthonormal set of vectors $w_1, \ldots, w_\ell \in \mathbb{R}^n$ such that 
$$
\mathbf{E}[|f(x) - g(\langle w_1, x\rangle, \ldots, \langle w_\ell, x \rangle)|] = O(\epsilon). 
$$
 {Further, there is a set $V = \{v_1, \ldots, v_k\}$ of orthonormal vectors such that for $1 \le j \le \ell$, $v_j = w_j$ and $\span\{v_1, \ldots, v_k\}$ is a relevant subspace of $f$.}
\end{theorem}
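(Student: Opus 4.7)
The plan is to execute the three-step strategy outlined in the introduction: first, identify an approximately relevant subspace implicitly via a stopping-time procedure analogous to the one used in \lref{far-k-junta-1}; second, construct implicit orthonormal directions that span it; third, perform hypothesis testing over an $\epsilon$-cover of low-dimensional smooth functions. In Stage 1 we would sample directions $y_1, y_2, \ldots \sim \gamma_n$ one at a time and grow a subspace $V_j$: at each step we approximately compute $\Vert D(P_t f)(y_{j+1})\Vert_{V_j^\perp}$ using \lref{inner-product-1} together with the current basis of $V_j$; if it exceeds a threshold $\tau = \Theta(\epsilon)$ we append $D(P_t f)(y_{j+1})$, otherwise we discard. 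The procedure halts once many consecutive samples fail to produce a new direction. Since $f$ is an exact linear $k$-junta, $D(P_t f)(y)$ always lies in $\span(u_1, \ldots, u_k)$, so the final dimension satisfies $\ell \le k$; \lref{subspace-escape} applied contrapositively guarantees that in the final state $P_t f$ (and hence $f$, by \pref{noise-stab-surf-1}) is $O(\epsilon)$-close to a $V_\ell$-junta.

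In Stage 2 we would build implicit orthonormal coordinates on $V_\ell$. Writing $D_j := D(P_t f)(y_j)$ for the $\ell$ retained directions, we cannot evaluate $D_j$ explicitly, but we can estimate the Gram matrix $G_{ij} = \langle D_i, D_j\rangle$ via \lref{inner-product-1} and can evaluate the linear functional $x \mapsto \langle D_j, x\rangle$ pointwise via \lref{compute-derivative-x}. The selection rule from Stage 1 ensures $\sigma_{\min}(G) = \Omega(\tau^2)$. Setting $w_i := \sum_j (G^{-1/2})_{ij} D_j$ gives an orthonormal basis $\{w_1, \ldots, w_\ell\}$ of $V_\ell \subseteq \span(u_1, \ldots, u_k)$, and for any queried $x$ the map $L_i(x) := \sum_j (G^{-1/2})_{ij} \langle D_j, x\rangle = \langle w_i, x\rangle$ can be evaluated approximately. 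The second conclusion of the theorem then follows by extending $w_1, \ldots, w_\ell$ arbitrarily to an orthonormal basis $v_1, \ldots, v_k$ of the full relevant subspace.

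For Stage 3, letting $L(x) := (L_1(x), \ldots, L_\ell(x))$, Stage 1 guarantees the existence of some $g^\star: \R^\ell \to [-1, 1]$ of surface area at most $s$ with $\E[|f(x) - g^\star(L(x))|] = O(\epsilon)$. The plan is to exploit the fact that the class of $\ell$-dimensional bounded-surface-area functions admits an $\epsilon$-cover of size $N = (sk/\epsilon)^{O(k)}$---for instance by truncating the Hermite expansion at degree $\poly(s/\epsilon)$ (justified by \pref{Ledoux}) and $\epsilon$-quantizing each of the $\binom{\ell + \poly(s/\epsilon)}{\poly(s/\epsilon)}$ coefficients. For each candidate $g$ in the cover we estimate $\E[|f(x) - g(L(x))|]$ using $O(\epsilon^{-2} \log N)$ fresh samples and return the minimizer; a standard union bound plus the triangle inequality yields the claimed $O(\epsilon)$ bound.

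The main obstacle will be careful bookkeeping of three compounding sources of error: (i) the Gram-matrix-inversion error, controlled using Corollary~\ref{corr:mat-perturb} thanks to the $\Omega(\tau^2)$ lower bound on $\sigma_{\min}(G)$; (ii) the per-query noise in evaluating $\langle D_j, x\rangle$, controlled by the tail bound in \lref{oracle-access-1}; and (iii) the gap between the recovered subspace $V_\ell$ and the true relevant subspace. The threshold $\tau$, the target Gram-matrix accuracy, and the cover granularity must all be calibrated together so that the final $L_1$ approximation error is $O(\epsilon)$ while the total query count remains $(sk/\epsilon)^{O(k)}$.
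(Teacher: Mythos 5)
Your three-stage plan has the same architecture as the paper's proof (grow a set of directions via the escape lemma, implicitly orthonormalize via the inverse square root of the Gram matrix, then hypothesis-test against a cover, reusing Lemma~\ref{lem:subspace-escape}, Lemma~\ref{lem:inner-product-1} and Lemma~\ref{lem:compute-derivative-x} exactly as the paper does), but two of your quantitative claims are wrong as stated and they sit exactly where your ``careful bookkeeping'' is supposed to happen. First, in Stage~2 the claim $\sigma_{\min}(G)=\Omega(\tau^2)$ is false: the retained vectors $D_j=D(P_tf)(y_j)$ have norms as large as $t^{-1/2}=\Theta(s/\epsilon^2)$, while only their components orthogonal to the span of the earlier ones are guaranteed to be at least $\tau$. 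A Kahan-type configuration then drives $\sigma_{\min}(G)$ down to roughly $(\tau/(\ell\, t^{-1/2}))^{\Theta(\ell)}$, i.e.\ exponentially small in $\ell$. This is precisely why the paper proves Proposition~\ref{prop:sing-1} and Lemma~\ref{prop:linear}, which give only $\sigma_{\min}(G)\ge (\gamma/(2\ell t^{-1/2}))^{2\ell+2}$ and consequently demand Gram-entry accuracy $(\gamma t/\ell)^{O(\ell)}$ before invoking Corollary~\ref{corr:mat-perturb}. The query budget still absorbs this (it costs $(sk/\epsilon)^{O(k)}$ per entry), but a calibration based on $\Omega(\tau^2)$ would set the estimation accuracy far too coarsely and the perturbation bound would not apply.

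Second, in Stage~3 you conflate the number of quantized parameters with the number of hypotheses: quantizing $(sk/\epsilon)^{O(k)}$ truncated-Hermite coefficients yields a cover of size $N=\exp\big((sk/\epsilon)^{O(k)}\big)$, i.e.\ $\log N=(sk/\epsilon)^{O(k)}$, not $N=(sk/\epsilon)^{O(k)}$ (the paper itself notes that even for $s=O(1)$ the class has $\exp(\exp(k))$ members). With the true $N$, your plan of ``$O(\epsilon^{-2}\log N)$ fresh samples per candidate'' costs $N\cdot O(\epsilon^{-2}\log N)$ queries and blows the budget; the fix, which is what \textsf{Estimate-closest-hypothesis} does, is to draw a single shared sample set of size $O(\epsilon^{-2}\log(N/\epsilon))$, query $f$ and the implicit coordinates only at those points, and evaluate all candidates offline with a union bound. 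Relatedly, the paper's cover (Theorem~\ref{thm:net}) consists of $O(t^{-1/2})$-Lipschitz functions precisely so that the pointwise coordinate errors $|\overline{x}_{i,j}-\langle w_j,x_i\rangle|$ translate into an $O(\epsilon)$ shift of each empirical objective; a truncated-Hermite cover carries no uniform Lipschitz bound, so you would need an additional argument to control error propagation through the noisy coordinates $L(x)$. All of these issues are repairable, and the remainder of your outline (escape-lemma growth of the span, stopping after many consecutive failures, extending $w_1,\ldots,w_\ell$ to a basis of the relevant subspace) matches the paper's argument.
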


Our algorithm is quite na\"ive. First, we ``identify'' -- in some implicit
sense -- the $k$-dimensional subspace on which the linear $k$-junta acts. We
take a fine net of functions defined on that space, and we test them all until
we fine the best one. Obviously, this algorithm is not computationally
efficient, and it is also not particularly efficient in terms of the query
complexity. However, the crucial feature of this algorithm
is that its query complexity does not depend on the
ambient dimension $n$. The main difficulty in constructing and analyzing this algorithm
is that we cannot explicitly identify even a single vector in the interesting
$k$-dimensional subspace -- that would require a number of queries that depends on $n$.
One consequence of this is that we do not know how to apply an off-the-shelf
learning algorithm (such as the one from~\cite{KOS:08}).

\begin{definition}~\label{def:vector-independence}
A set of vectors $v_1, \ldots, v_\ell \in \mathbb{R}^n$ is said to be $(\eta,\gamma)$-linearly independent if the following conditions hold: 
\begin{enumerate}
\item For all $1 \le i \le \ell$, 
$\Vert v_i \Vert_2 \le \eta$. 
\item For all $1 < i \le \ell$, $\mathsf{dist}(v_i, \mathsf{span}(v_1, \ldots, v_{i-1})) \ge \gamma$. 
\end{enumerate}
\end{definition}

\begin{definition}
For $f: \mathbb{R}^n \rightarrow [-1,1]$ and $t>0$, we say that a set of directions $(y_1, \ldots, y_\ell)$ is 
$\gamma$-linearly independent, if the following holds: 
For $1 \le i \le \ell$, let $v_i = DP_tf(y_i)$. If for all $i$, $\mathsf{dist}(v_i, \mathsf{span}(v_1, \ldots, v_{i-1})) \ge \gamma$. 
\end{definition}
By Proposition~\ref{prop:derivative-bound}, it is immediate that as long as $t \le 1/4$, $\Vert DP_t f (y) \Vert_2 \le t^{-1/2}$. Thus, if $(y_1, \ldots, y_\ell)$ is $\gamma$-linearly independent, then the directions $(v_1,\ldots, v_\ell)$ are $(t^{-1/2}, \gamma)$ linearly independent. 

\begin{figure}[h]
\hrule
\vline
\begin{minipage}[t]{0.98\linewidth}
\vspace{10 pt}
\begin{center}
\begin{minipage}[h]{0.95\linewidth}
{\small
\underline{\textsf{Inputs}}
\vspace{5 pt}

\begin{tabular}{ccl}
$t$ &:=& noise parameter \\
$y_1, \ldots, y_\ell$ &:=& $\frac{\gamma}{2}$-linearly independent directions \\
$\{\beta_{i,j}\}$ &:=& $\lambda$-accurate estimates of $\langle DP_tf(y_i), DP_tf(y_j)\rangle$ where \\
&& $\lambda=\lambda(\ell,\nu, t^{-\frac12}, \gamma/2)$ and $\nu = \frac{\gamma^2 \cdot t}{100 \ell^2}$ (from Lemma~\ref{prop:linear}) 
\\
$y_{\ell+1}$ &:=& candidate direction in $\mathbb{R}^n$. 
\end{tabular}

\vspace{5 pt}
\underline{\textsf{Testing algorithm}}
\begin{enumerate}
\item Find the numbers $\{\alpha_{1 \le i, j \le \ell}\}$ from Lemma~\ref{prop:linear}. 
\item Estimate $\langle DP_tf(y_{\ell+1}) ,  DP_tf(y_{\ell+1}) \rangle$ up to $\pm \frac{\gamma^2}{50}$ . Call the estimate $\tilde{\beta}_{\ell+1, \ell+1}$. 
\item Estimate $\langle DP_tf(y_{\ell+1}) ,  DP_tf(y_{j}) \rangle$ (for $1 \le j \le \ell$) up to accuracy 
$\frac{1}{\xi(\ell, t^{-1/2}, \gamma/2)} \cdot \frac{\gamma^2 \cdot \sqrt{t}}{100\ell^3}$ (using  Lemma~\ref{lem:inner-product-1}) where $\xi$ is the function from  Lemma~\ref{prop:linear}.  Call the estimates $\tilde{\beta}_{j, \ell+1}$. 
\item Compute quantity $\zeta_i = \sum_{1 \le j \le \ell} \alpha_{i,j} \cdot \tilde{\beta}_{j,\ell+1}$ for all $1 \le i \le \ell$. 
\item If the quantity $\tilde{\beta}_{\ell+1, \ell+1}^2 - \sum_{i=1}^\ell \zeta_i^2 > (\frac{3 \gamma}{4})^2$, then output \textsf{yes}. Else output \textsf{no}. 
\end{enumerate}

\vspace{5 pt}
}
\end{minipage}
\end{center}

\end{minipage}
\hfill \vline
\hrule
\caption{Description of the   algorithm \textsf{Test-candidate-direction}}
\label{fig:tlin-1}
\end{figure}

\begin{lemma}~\label{lem:test-candidate}
The algorithm \textsf{Test-candidate-direction} described in Figure~\ref{fig:tlin-1} has the following properties: For noise parameter $t$, directions $y_1, \ldots, y_\ell \in \mathbb{R}^n$,  $\{\beta_{i,j} \}$ and candidate direction $y_{\ell+1}$ (where $y_1, \ldots, y_\ell$ as well as $\{ \beta_{i,j} \}$ meet the requirements described in Figure~\ref{fig:tlin-1}), the algorithm satisfies
\begin{enumerate}
\item The query complexity of the algorithm is 
$T_{tc}(t, \gamma, \ell) = \big( \frac{\ell}{\sqrt{t} \cdot \gamma} \big)^{O(\ell)}$. 
\item If the Euclidean distance of $DP_tf(y_{\ell+1})$ is at least $\gamma$ from the subspace $\mathsf{span}(DP_tf(y_{1}), \ldots, DP_tf(y_{\ell}))$, then the algorithm outputs \textsf{yes}. Conversely, if the algorithm outputs \textsf{no}, then the Euclidean distance must be less than $\frac{\gamma}{2}$.  
\end{enumerate}
\end{lemma}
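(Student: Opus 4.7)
The plan is to interpret the statistic computed by the algorithm as an estimate of the squared Euclidean distance from $v_{\ell+1} := DP_tf(y_{\ell+1})$ to the subspace $\span(v_1, \ldots, v_\ell)$, where $v_i := DP_tf(y_i)$. Let $M \in \R^{\ell \times \ell}$ be the Gram matrix with $M_{ij} = \langle v_i, v_j\rangle$ and $b \in \R^\ell$ the vector with $b_i = \langle v_i, v_{\ell+1}\rangle$. A direct calculation shows the squared distance from $v_{\ell+1}$ to $\span(v_1,\ldots,v_\ell)$ equals $\|v_{\ell+1}\|^2 - b^T M^{-1} b = \|v_{\ell+1}\|^2 - \|M^{-1/2} b\|^2$. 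The $\gamma/2$-linear independence hypothesis (combined with \pref{derivative-bound} to upper-bound $\|v_i\|$) implies $\sigma_{\min}(M) \geq (\gamma/2)^2$, so $M^{-1/2}$ exists and is well-conditioned. I would let $\alpha_{i,j}$ be (approximately) the entries of $M^{-1/2}$ furnished by \pref{linear}, so that the $\zeta_i = \sum_j \alpha_{i,j}\tilde\beta_{j,\ell+1}$ approximate the coordinates of $M^{-1/2} b$, and $\tilde\beta_{\ell+1,\ell+1}$ approximates $\|v_{\ell+1}\|^2$. Thus the statistic $\tilde\beta_{\ell+1,\ell+1} - \sum_i \zeta_i^2$ is an empirical version of the exact squared distance, and the threshold test at $(3\gamma/4)^2$ separates the two cases in the lemma.

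For the correctness (item 2), I would show that the total error between the statistic and the true squared distance is at most, say, $\gamma^2/4$, so that a true distance $\geq \gamma$ gives statistic $\geq \gamma^2 - \gamma^2/4 > (3\gamma/4)^2$ (yes), while a statistic $\geq (3\gamma/4)^2$ forces the true squared distance $\geq (3\gamma/4)^2 - \gamma^2/4 > (\gamma/2)^2$. Splitting the total error into three parts: (i) the error in $\tilde\beta_{\ell+1,\ell+1}$ is controlled directly by step 2, giving $O(\gamma^2)$ accuracy; (ii) the error in each $\tilde\beta_{j,\ell+1}$ coming from \lref{inner-product-1}; (iii) the error in the $\alpha_{i,j}$ coming from the finite accuracy $\lambda$ of the input estimates $\beta_{i,j}$ of $M$, which propagates to $M^{-1/2}$ via the perturbation bound \cref{mat-perturb}. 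The factors $\xi(\ell,t^{-1/2},\gamma/2)$ and $\nu = \gamma^2 t/(100\ell^2)$ in the accuracy specification are precisely designed so that, together with the spectral lower bound on $M$, these three sources of error sum to at most $\gamma^2/4$; this is essentially an exercise in tracking Cauchy--Schwarz bounds on $\sum_i \zeta_i^2$ versus $\|M^{-1/2}b\|^2$.

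For the query complexity (item 1), the only oracle calls are made in steps 2 and 3, which invoke \lref{inner-product-1} a total of $\ell+1$ times. Each such invocation requires accuracy that is polynomial in $\ell^{-1}$, $\gamma$, $\sqrt{t}$, and $1/\xi(\ell,t^{-1/2},\gamma/2)$. By \lref{inner-product-1}, the query cost scales polynomially in the inverse accuracy and in $1/t$. Since $\xi$ (from \pref{linear}) grows polynomially in $\ell$, $t^{-1/2}$, and $(\gamma/2)^{-1}$, the overall query complexity bounds out to $(\ell/(\sqrt{t}\gamma))^{O(\ell)}$, matching the claim.

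The main obstacle is the error propagation through the approximate $M^{-1/2}$: both the Gram matrix $M$ and its inverse square root can be badly conditioned when the $v_i$ are near-dependent, so the accuracy required of the input estimates $\beta_{i,j}$ (through $\lambda = \lambda(\ell,\nu,t^{-1/2},\gamma/2)$) must be sharp enough that the amplification by $\|M^{-1/2}\|_{op} \leq O(1/\gamma)$ and by the Cauchy--Schwarz factor $\sqrt{\ell}$ still leaves a usable error. Keeping this bookkeeping clean — and verifying that \pref{linear} delivers $\alpha_{i,j}$ at exactly the accuracy required to make the three error contributions comparable — is where the real work lies.
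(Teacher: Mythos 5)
Your proposal follows essentially the same route as the paper: the statistic $\tilde\beta_{\ell+1,\ell+1}-\sum_i\zeta_i^2$ is analyzed as an approximation of $\|v_{\ell+1}\|_2^2-\sum_i\langle w_i,v_{\ell+1}\rangle^2=\dist^2\big(v_{\ell+1},\span(v_1,\dots,v_\ell)\big)$, where $(w_1,\dots,w_\ell)$ is the near-orthonormal basis furnished by Lemma~\ref{prop:linear}; your Gram-matrix identity $\dist^2=\|v_{\ell+1}\|^2-b^TM^{-1}b$ is the same statement, since $w_i=\sum_j(M^{-1/2})_{ji}v_j$ and $\sum_i\langle w_i,v_{\ell+1}\rangle^2=b^TM^{-1}b$. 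The three-way error split (accuracy of $\tilde\beta_{\ell+1,\ell+1}$, accuracy of the $\tilde\beta_{j,\ell+1}$ weighted by $|\alpha_{i,j}|$, and the $\nu$-error in the $\alpha_{i,j}$), the use of $\|v_i\|_2\le t^{-1/2}$ from Proposition~\ref{prop:derivative-bound}, and the $(3\gamma/4)^2$ threshold argument are exactly what the paper does.

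One caveat on your quantitative side claims: the conditioning bounds you assert are too optimistic. The $\gamma/2$-linear independence only controls the distance of each $v_i$ from the span of its predecessors, while $\|v_i\|_2$ can be as large as $t^{-1/2}$, so $\sigma_{\min}(M)$ is only guaranteed to be of order $\big(\frac{\gamma\sqrt t}{\ell}\big)^{\Theta(\ell)}$ (this is Proposition~\ref{prop:sing-1}), not $(\gamma/2)^2$; correspondingly $\|M^{-1/2}\|$ and the coefficient bound $\xi(\ell,t^{-1/2},\gamma/2)$ from Lemma~\ref{prop:linear} grow exponentially in $\ell$, not polynomially. This does not derail your argument, because the accuracies in Figure~\ref{fig:tlin-1} are already scaled by $1/\xi$ (and $\lambda$ by a matching factor) precisely to absorb this amplification, and the exponential $\xi$ is also exactly why the query complexity is $\big(\frac{\ell}{\sqrt t\,\gamma}\big)^{O(\ell)}$ rather than polynomial; but a complete write-up must run the bookkeeping with the bounds $|\alpha_{i,j}|\le\xi$ and $\|w_i-\sum_j\alpha_{i,j}v_j\|_2\le\nu$ from Lemma~\ref{prop:linear}, rather than with an $O(1/\gamma)$ bound on $\|M^{-1/2}\|$.
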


\begin{proof}
The query complexity bound is just immediate from Lemma~\ref{lem:inner-product-1} and plugging in the value of $\xi(\ell, t^{-1/2}, \gamma)$ from Lemma~\ref{prop:linear}. To prove the second guarantee, 
let us use $v_j$ to denote $DP_tf(y_j)$. Since $(v_1, \ldots, v_j)$ are $(1/t^{-1/2}, \frac{\gamma}{2})$-linearly independent, hence by Lemma~\ref{prop:linear}, we obtain that there are orthonormal vectors $(w_1, \ldots, w_\ell)$ (which span $v_1, \ldots, v_\ell$) 
such that 
$$
\Vert w_i - \sum_{j} \alpha_{i,j} v_j \Vert_2 \le \frac{\gamma^2 \cdot {t}}{100 \ell^2}. 
$$
This implies that if we let $v_{\ell+1} = DP_tf(y_{\ell+1})$ (using $\Vert v_{\ell+1} \Vert \le t^{-1/2}$), then 
$$
\big| \langle w_i, v_{\ell+1} \rangle - \sum_{j} \alpha_{i,j} 
\langle v_j , v_{\ell+1} \rangle \big| \le \frac{\gamma^2 \sqrt{t}}{100 \ell^2}. 
$$
Consequently, we have 
\begin{eqnarray*}
\big| \langle w_i, v_{\ell+1} \rangle - \sum_{j} \alpha_{i,j} \cdot \tilde{\beta}_{j,\ell+1} \big| &\le& \frac{\gamma^2 \cdot \sqrt{t}}{100 \ell^2} + \sum_{j}|\alpha_{i,j}| \cdot |\langle v_j , v_{\ell+1} \rangle - \tilde{\beta}_{j,\ell+1} |  \\
&\le& \frac{\gamma^2\cdot \sqrt{t}}{100 \ell^2} + \sum_{j} \xi (\ell, t^{-1/2}, \gamma/2) \cdot \frac{1}{\xi (\ell, t^{-1/2}, \gamma/2)}\cdot \frac{\gamma^2\sqrt{t}}{100 \ell^3} \le \frac{\gamma^2\sqrt{t}}{50 \ell^2}. 
\end{eqnarray*}
The penultimate inequality follows from the bound on $|\alpha_{i,j}|$ from Lemma~\ref{prop:linear} and the accuracy of estimates $\tilde{\beta}_{j,\ell+1}$. 
This implies that for any $i$, 
\begin{equation}~\label{eq:bound-diff1}
\big|\big| \langle w_i, v_{\ell+1} \rangle \big|^2- \big|\sum_{j} \alpha_{i,j} \cdot \tilde{\beta}_{j,\ell+1} \big|^2\big|  \le \frac{\gamma^2\sqrt{t}}{50 \ell^2} \cdot \big| \langle w_i, v_{\ell+1} \rangle + \sum_{j} \alpha_{i,j} \cdot \tilde{\beta}_{j,\ell+1} \big| \le \frac{\gamma^2\sqrt{t}}{50 \ell^2} \cdot 2 \cdot t^{-\frac12} = \frac{\gamma^2}{25 \ell^2}.
\end{equation}
The second inequality uses that fact that $w_i$ is a unit vector whereas $\Vert v_{\ell+1} \Vert_2 \le t^{-\frac12}$.  Thus, 
\begin{eqnarray*}
\dist^2\big(DP_tf(y_{\ell+1}), \mathsf{span}(DP_tf(y_{1}), \ldots, DP_tf(y_{\ell}))\big) &=& \Vert DP_tf(y_{\ell+1})\Vert_2^2 - \sum_{j=1}^\ell \langle DP_tf(y_{\ell+1}), w_j\rangle^2 \\
&=&  \Vert DP_tf(y_{\ell+1})\Vert_2^2 - \sum_{j=1}^\ell \zeta_j^2 + \theta
\end{eqnarray*}
where $|\theta| \le \frac{\gamma^2}{25\ell}$ (from \ref{eq:bound-diff1}). Using the fact that $ |\tilde{\beta}_{\ell+1, \ell+1}^2-  \Vert D P_tf(y_{\ell+1})\Vert_2^2 | \le \frac{\gamma^2}{50}$, we can conclude that 
$$
\big|\dist^2\big(DP_tf(y_{\ell+1}), \mathsf{span}(DP_tf(y_{1}), \ldots, DP_tf(y_{\ell}))\big)-  \tilde{\beta}_{\ell+1, \ell+1}^2 - \sum_{i=1}^\ell \zeta_i^2\big| \le \frac{\gamma^2}{25}. 
$$
Item 2 in the claim is now an immediate consequence. 
\end{proof}

\begin{figure}[tb]
\hrule
\vline
\begin{minipage}[t]{0.98\linewidth}
\vspace{10 pt}
\begin{center}
\begin{minipage}[h]{0.95\linewidth}
{\small
\underline{\textsf{Inputs}}
\vspace{5 pt}

\begin{tabular}{ccl}
$s$ &:=& surface parameter \\
$\epsilon$ &:=& error parameter\\
\end{tabular}
~\\
~\\
\underline{\textsf{Parameters}}
\vspace{5 pt}

\begin{tabular}{ccl}
$t$ &:=& $\frac{\epsilon^4}{900 s^2}$ \\
$\gamma$ &:=& $\frac{\epsilon^2}{8}$\\
$\lambda$ &=& $\lambda(k, \nu, t^{-\frac12}, \gamma)$ (where $\lambda(\cdot)$ is the function from Lemma~\ref{prop:linear}) and $\nu = \frac{\gamma^2 \cdot t}{100k^2}$. \\
$\tau_{\mathsf{succ}}$  &:=& $\frac{\epsilon^6}{s^2}$ \\
$T_{\mathsf{succ}}$  &:=& $\frac{1}{\tau_{\mathsf{succ}}} \cdot \log (10k/\epsilon)$.\\
\end{tabular}
~\\
~\\
\underline{\textsf{Testing algorithm}}
\begin{enumerate}
\item Initialize $S$ to be the empty set. 
\item Initialize $\mathsf{count}=0$. 
\item If $\mathsf{count} =k$, exit; 
\item else set $S =\{y_1, \ldots, y_\ell\}$ and compute $\{\beta_{i,j}\}$ as $\lambda$-accurate estimates of $\langle DP_tf(y_i), DP_tf(y_j) \rangle$ (Lemma~\ref{lem:inner-product-1}).  
\item Repeat $T_{\mathsf{succ}}$ times  
\item \hspace{7pt} Choose $z \sim \gamma_n$. 
\item \hspace{7pt} Run \textsf{Test-candidate-direction} with $S=\{y_1, \ldots,y_\ell\}$, candidate direction $z$,  $\gamma, t$ as defined in \textsf{Parameters} and $\{\beta_{i,j}\}$ \hspace{3pt} as computed above. 
\item \hspace{7pt} If \textsf{Test-candidate-direction} outputs \textsf{yes}, 
 add $z$ to $S$;  $\mathsf{count}+=1$; 
 go to step 3; 
 \item If the size of $S$ does not increase in $T_{\mathsf{succ}}$ steps, then exit; 
\end{enumerate}

\vspace{5 pt}
}
\end{minipage}
\end{center}

\end{minipage}
\hfill \vline
\hrule
\caption{Description of the   algorithm \textsf{Find-candidate-directions}}
\label{fig:flin-1}
\end{figure}
We now give an algorithm which  finds out directions $\{y_1, \ldots, y_\ell\}$ such that for $t$ defined before (as $t : = \frac{\epsilon^4}{900 s^2}$), $P_t f$ is close to a junta on the directions $\{DP_tf(y_1), \ldots, DP_tf(y_\ell)\}$. 
\begin{lemma}~\label{lem:find-dirs}
The algorithm \textsf{Find-candidate-directions} described in Figure~\ref{fig:flin-1} has the following properties: For noise parameter $t$, error parameter $\epsilon$, surface area parameter $s$, if the function $f: \mathbb{R}^n \rightarrow [-1,1]$ has surface area $s$ and is a linear $k$-junta, then with probability $1-\epsilon$, the algorithm outputs vectors $y_1, \ldots, y_\ell \in \mathbb{R}^n$ ($\ell \le k$) such that for $\{v_1, \ldots, v_\ell\}$ defined as 
$v_i = DP_tf(y_i)$, the function is $\epsilon$-close to 
a junta on $\mathsf{span}(v_1, \ldots, v_\ell)$. Further, the directions $(y_1, \ldots, y_\ell)$ are at
least $\gamma/2 = \frac{\epsilon^2}{16}$ linearly independent. 
 The query complexity of this algorithm is $T_{fc} (s,k,\epsilon) =\big( \frac{s \cdot k}{\epsilon}\big)^{O(k)}$. 
\end{lemma}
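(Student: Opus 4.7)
The plan is to decompose the analysis into (i) a deterministic bound on the query complexity, (ii) a deterministic argument establishing $\gamma/2$-linear independence of the returned directions, and (iii) a probabilistic argument showing that upon termination $f$ is $O(\epsilon)$-close to a junta on $V := \mathsf{span}(v_1, \ldots, v_\ell)$, where $v_i = DP_tf(y_i)$. The outer loop runs at most $k$ times and the inner repetition costs at most $T_{\mathsf{succ}} = (s^2/\epsilon^6)\log(k/\epsilon)$ invocations of \textsf{Test-candidate-direction}, each of cost $T_{tc}(t,\gamma,\ell) = (\ell/(\sqrt{t}\gamma))^{O(\ell)}$. Substituting $\sqrt{t} = \Theta(\epsilon^2/s)$, $\gamma = \epsilon^2/8$, and $\ell \le k$ collapses the total query complexity to $(sk/\epsilon)^{O(k)}$; the one-time estimation of the $\{\beta_{i,j}\}$ via Lemma~\ref{lem:inner-product-1} contributes a lower-order term.

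Linear independence follows from inspecting the threshold in \textsf{Test-candidate-direction}: the test outputs \textsf{yes} only when the estimated squared distance exceeds $(3\gamma/4)^2$, and the $\gamma^2/25$ accuracy of this estimate forces the true distance $\dist(DP_tf(z), V_{\text{prev}})$ to be at least $\gamma/2$ whenever $z$ is added. Combined with the input guarantee that the starting $(y_1,\ldots,y_\ell)$ are $\gamma/2$-linearly independent, this is maintained inductively. For the closeness claim I would split on termination: if $\mathsf{count} = k$, then since $f$ is a linear $k$-junta with relevant subspace $U$ of dimension $k$, every $DP_tf(y)$ lies in $U$, so $V \subseteq U$ with $\dim V = k$ forces $V = U$ and $f$ is trivially a $V$-junta.

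The more interesting case is termination via $T_{\mathsf{succ}}$ consecutive failures in an iteration. Set $W = V^\perp$. By the test's first guarantee (item 2 of Lemma~\ref{lem:test-candidate}), any $z$ with $\|DP_tf(z)\|_W \ge \gamma$ forces \textsf{yes}; hence a trial succeeds with probability at least $p := \Pr_z[\|DP_tf(z)\|_W \ge \gamma]$. If $p \ge \tau_{\mathsf{succ}}$, the probability that all $T_{\mathsf{succ}}$ trials fail is at most $(1-\tau_{\mathsf{succ}})^{T_{\mathsf{succ}}} \le \epsilon/(10k)$, so a union bound over the $\le k$ iterations gives that, with probability $\ge 1-\epsilon/10$, every failing iteration satisfies $p < \tau_{\mathsf{succ}} = \epsilon^6/s^2$. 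Truncating the expectation and using the pointwise bound $\|DP_tf(z)\|_2^2 \le 1/(e^{2t}-1) = O(s^2/\epsilon^4)$ from Proposition~\ref{prop:derivative-bound} yields
\[
\E_z[\|DP_tf(z)\|_W^2] \le \gamma^2 + \tau_{\mathsf{succ}} \cdot O(s^2/\epsilon^4) = O(\epsilon^2).
\]
Now apply Lemma~\ref{lem:gradient-subspace} to $P_tf$ with subspace $V$ to obtain a $V$-junta $h$ satisfying $\E[(h-P_tf)^2] = O(\epsilon^2)$, and apply Proposition~\ref{prop:noise-stab-surf-1} (valid since $f$ has surface area $s$, hence is $(0,s)$-smooth) to get $\E[(f-P_tf)^2] \le \epsilon^2/5$. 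Triangle inequality and Jensen's inequality together yield $\E[|f-h|] = O(\epsilon)$, establishing closeness to a $V$-junta.

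The main obstacle I anticipate is bookkeeping the high-probability events simultaneously: the $O(k^2)$ pre-computed $\beta_{i,j}$ estimates, the inner estimates inside each of the $k \cdot T_{\mathsf{succ}}$ test invocations, and the random choices of $z$ must all succeed. This requires driving the per-call confidence parameters to $\poly(\epsilon/(ks))$, which inflates query counts by only logarithmic factors and so is absorbed into $(sk/\epsilon)^{O(k)}$. A secondary subtlety is the calibration of parameters: $\tau_{\mathsf{succ}}$ must be small enough for the truncation bound above to give $O(\epsilon^2)$ when combined with the $O(s^2/\epsilon^4)$ pointwise bound on $\|DP_tf\|_2^2$, which is exactly why the algorithm takes $\tau_{\mathsf{succ}} = \epsilon^6/s^2$ and $\gamma = \epsilon^2/8$.
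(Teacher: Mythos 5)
Your proposal is correct and follows essentially the same route as the paper: the same deterministic query-complexity accounting, the same use of the \textsf{Test-candidate-direction} guarantee for $\gamma/2$-linear independence, the same two-case analysis at termination ($\mathsf{count}=k$ versus $T_{\mathsf{succ}}$ consecutive failures), and the same geometric/Chernoff bookkeeping. The only cosmetic difference is that where the paper cites Lemma~\ref{lem:subspace-escape}, you re-derive its content in contrapositive form (low per-trial success probability $\Rightarrow$ small $\E[\Vert DP_tf\Vert_{V^\perp}^2]$ via Proposition~\ref{prop:derivative-bound}, then Lemma~\ref{lem:gradient-subspace} and Proposition~\ref{prop:noise-stab-surf-1}), which uses exactly the same ingredients and, if anything, matches the needed per-subspace statement more directly than the paper's citation.
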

\begin{proof}
{{The bound on the query complexity of this algorithm is immediate by just plugging in the query complexity of the routine \textsf{test-candidate-direction} (Lemma~\ref{lem:test-candidate}) and the query complexity of Step  4~(Lemma~\ref{lem:inner-product-1}).} }

Next, observe that by the guarantee of \textsf{Test-candidate-direction}, the set $S$ output by the algorithm consists of $\gamma/2$-linearly independent 
directions. 
Finally, assume that $f$ is a $W$-junta where $\mathsf{dim}(W) \le k$. Then, note that for any $y \in \mathbb{R}^n$, $DP_tf(y) \in W$. Now, there are two  possibilities:  (For the rest of this proof, we will use $v_i$ as a shorthand for $DP_tf(y_i)$)
\begin{itemize}
\item[(a)] If $\mathsf{count}=k$, then note that we have found $k$ directions $y_1, \ldots, y_k$ such that $v_i\in W$. Further, the directions $(v_1, \ldots, v_k)$ are $(t^{-1/2}, \gamma)$-linearly independent. Thus, $\mathsf{span}(v_1, \ldots, v_k)= W$. So, in this case, $P_tf$ is indeed a junta on $\mathsf{span}(v_1, \ldots, v_k)$ (where $S= \{y_1, \ldots, y_k\}$). 
\item[(b)] If $\mathsf{count}<k$, then we are in one of the two situations: either $f$ is $\epsilon$-close to a junta on $\mathsf{span}(v_1, \ldots, v_\ell)$ where $S=\{y_1, \ldots, y_\ell\}$. In this case, we are already done. If not, then we apply Lemma~\ref{lem:subspace-escape} and obtain that with probability at least $\tau_{\mathsf{succ}}$, a randomly chosen direction $z$  will be at least $\gamma=\epsilon^2/8$-far from the subspace $\mathsf{span}(v_1, \ldots, v_\ell)$ and will thus pass the algorithm \textsf{Test-candidate-direction}. Thus, over $T_{\mathsf{succ}}$ trials, with probability at least $1- \frac{\epsilon}{10k}$, the set $S$ will increase in size and we will continue inductively. 
 Since the outer loop (i.e., the loop for $\mathsf{count}$ will run at most $k$ times), the total probability that $P_t f$ is not $\epsilon$-close to a $W$-junta for $W = \mathsf{span}(v_1, \ldots, v_\ell)$ but the algorithm terminates is at most $1-\frac{\epsilon}{10}$. This finishes the proof. 
\end{itemize}
\end{proof}
With the aid of the algorithm \textsf{Find-candidate-directions}, we are able to find implicitly find directions $\{v_1, \ldots, v_\ell\}$ such that $P_t f$ is close to a junta on $\mathsf{span}(v_1, \ldots, v_\ell)$. In the next subsection, we essentially do a hypothesis testing over a set of functions which form a cover for all juntas on $\mathsf{span}(v_1, \ldots, v_\ell)$. 

\subsection{Hypothesis testing against subspace juntas} 
The following lemma says how given the directions $y_1, \ldots, y_\ell$ and an error parameter $\tau$, we can implicitly find directions which form an orthonormal 
basis of $\mathsf{span}(v_1,\ldots, v_\ell)$ (as before, we are using $v_1, \ldots, v_\ell$ as a shorthand for $DP_tf(y_1), \ldots, DP_tf(y_\ell)$ respectively). All the symbols below will have the same value as Lemma~\ref{lem:find-dirs} unless mentioned otherwise. 
\begin{lemma}~\label{lem:orthogonalize}
Choose any error parameter $\tau>0$ and let $y_1, \ldots, y_\ell$ be $\gamma/2$-linearly independent directions for $P_t f$. Then, there is a procedure \textsf{Compute-ortho-transform} which makes  $T_{\mathsf{ortho}} =\mathsf{poly}(1/\tau) \cdot \big( \frac{\ell}{\gamma \cdot t}\big)^{O(\ell)}$ queries to $f$, we can obtain numbers $\{\alpha_{i,j} \}_{1 \le i,j \le \ell}$ such that the following holds:  
\begin{enumerate}
\item For $\Lambda(\ell, t, \gamma) = (\frac{\ell}{t \gamma})^{O(\ell)}$, all the numbers $|\alpha_{i,j}| \le \Lambda(\ell, t, \gamma)$. 
\item There exists an orthonormal basis $(w_1, \ldots, w_\ell)$ of $\mathsf{span}(v_1, \ldots, v_\ell)$ such that for all $1 \le i \leq \ell$, 
\[
\Vert w_i  - \sum_{j} \alpha_{i,j} v_j \Vert_2 \le \tau. 
\]
\end{enumerate}
\end{lemma}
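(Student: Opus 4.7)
The plan is to estimate the Gram matrix of the (inaccessible) vectors $v_i = DP_tf(y_i)$ using the inner-product oracle of Lemma~\ref{lem:inner-product-1}, and then perform a noise-tolerant Gram--Schmidt / Cholesky factorization on the resulting approximation. Conceptually, let $V$ be the $n \times \ell$ matrix with columns $v_1,\ldots,v_\ell$ and write $V = WR$ with $W$ orthonormal (an orthonormal basis of $\span(v_1,\ldots,v_\ell)$) and $R$ upper-triangular. Then $w_i = \sum_j (R^{-1})_{j,i}\, v_j$, so the target coefficients are exactly $\alpha_{i,j} = (R^{-1})_{j,i}$, and our task reduces to approximating $R^{-1}$ from noisy access to $G := V^T V = R^T R$.

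The first step is to call Lemma~\ref{lem:inner-product-1} on every pair $(y_i,y_j)$ to obtain $\tilde G_{i,j}$ with $|\tilde G_{i,j} - G_{i,j}| \le \epsilon'$; this costs $\mathrm{poly}(1/\epsilon')$ queries per entry, and there are $O(\ell^2)$ entries. The second step is to Cholesky-factor $\tilde G = \tilde R^T \tilde R$ and set $\tilde\alpha_{i,j} := (\tilde R^{-1})_{j,i}$. The $\gamma/2$-linear independence hypothesis is what makes this stable: by Proposition~\ref{prop:derivative-bound} we have $\|v_i\|_2 \le t^{-1/2}$, and by construction the Gram--Schmidt residuals $R_{i,i} = \dist(v_i, \span(v_1,\ldots,v_{i-1}))$ are at least $\gamma/2$. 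Combining $|\det R| = \prod_i R_{i,i} \ge (\gamma/2)^\ell$ with $\|R\|_2 \le \sqrt{\ell}\, t^{-1/2}$ via the Hadamard-type bound $\|R^{-1}\|_2 \le \|R\|_2^{\,\ell-1}/|\det R|$ yields
\[
\|R^{-1}\|_2 \le \bigl(\ell/(t\gamma)\bigr)^{O(\ell)},
\]
which is precisely the quantity $\Lambda(\ell,t,\gamma)$ asserted in item~(1); a small additive slack to absorb the perturbation of $\tilde R$ preserves this.

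For item~(2) I would propagate the $\epsilon'$-error through the factorization using the matrix perturbation estimates already in the paper: Fact~\ref{lem:mat-1-bound} and Fact~\ref{lem:mat-2-bound} (or directly Corollary~\ref{corr:mat-perturb}) applied to $G$ and then to $\tilde R$ (interpreting $\tilde R$ as $\tilde G^{1/2}$ up to an orthogonal change of basis) give $\|\tilde R - R\|_2 \le \epsilon'/\sqrt{\gamma}$ and then $\|\tilde R^{-1} - R^{-1}\|_2 \le \epsilon' \cdot \mathrm{poly}(\|R^{-1}\|_2)$. Writing $\tilde w_i := \sum_j \tilde\alpha_{i,j} v_j$, we then bound
\[
\|w_i - \tilde w_i\|_2 \le \|V\|_2 \cdot \|\tilde R^{-1} - R^{-1}\|_F \le \sqrt{\ell}\, t^{-1/2} \cdot \epsilon' \cdot \bigl(\ell/(t\gamma)\bigr)^{O(\ell)}.
\]
Choosing $\epsilon' = \tau \cdot (t\gamma/\ell)^{O(\ell)}$ makes the right-hand side at most $\tau$, and the total query cost is $O(\ell^2) \cdot \mathrm{poly}(1/\epsilon') = \mathrm{poly}(1/\tau) \cdot (\ell/(\gamma t))^{O(\ell)}$, as claimed.

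The main technical obstacle is the stability analysis in the regime where $R$ has condition number exponential in $\ell$: a naive use of Fact~\ref{lem:mat-2-bound} on $\tilde G^{-1}$ could in principle produce a doubly-exponential blow-up. The care is in applying the perturbation bounds to the \emph{square-root} / Cholesky factor rather than to $G^{-1}$ directly, so that only one factor of $\|R^{-1}\|_2^{O(\ell)}$ appears; this is exactly the slack absorbed into the $(t\gamma/\ell)^{O(\ell)}$ choice of $\epsilon'$. Everything else (existence of Cholesky for the perturbed matrix, triangular back-substitution, turning $(\tilde R^{-1})_{j,i}$ into the claimed $\alpha_{i,j}$) is routine once the condition-number accounting is in place.
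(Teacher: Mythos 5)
Your overall strategy is the same as the paper's: estimate the Gram matrix $G_{i,j}=\langle v_i,v_j\rangle$ entrywise via Lemma~\ref{lem:inner-products}/\ref{lem:inner-product-1}, exploit the $(t^{-1/2},\gamma/2)$-linear independence to lower bound $\sigma_{\min}$ by a quantity of the form $(\gamma t/\ell)^{O(\ell)}$ (your Hadamard/determinant bound is a perfectly fine substitute for the paper's Proposition~\ref{prop:sing-1}), and then run a quantitative perturbation argument, taking the per-entry accuracy $\epsilon'$ exponentially small in $\ell$ so that the condition-number blow-up is absorbed. The one real difference is the choice of orthogonalization: you use the QR/Cholesky (Gram--Schmidt) factor $V=WR$, $\alpha=(R^{-1})^{T}$, whereas the paper (Lemma~\ref{prop:linear}) uses the symmetric orthogonalization $W=V\Sigma^{-1/2}$, $\alpha=\widetilde{\Sigma}^{-1/2}$, precisely because the perturbation facts it has on hand (Facts~\ref{lem:mat-1-bound}, \ref{lem:mat-2-bound}, Corollary~\ref{corr:mat-perturb}) are statements about the \emph{symmetric psd} square root and inverses of psd matrices.

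This is where your write-up has a genuine gap: the step ``$\|\tilde R - R\|_2 \le \epsilon'/\sqrt{\gamma}$ by Fact~\ref{lem:mat-1-bound}, interpreting $\tilde R$ as $\tilde G^{1/2}$ up to an orthogonal change of basis'' is not justified. Indeed $R=Q\,G^{1/2}$ and $\tilde R=\tilde Q\,\tilde G^{1/2}$ for \emph{different} orthogonal matrices $Q,\tilde Q$, so closeness of $G^{1/2}$ and $\tilde G^{1/2}$ says nothing by itself about closeness of $R$ and $\tilde R$; controlling $\|Q-\tilde Q\|$ is exactly the content of a Cholesky/QR perturbation bound, which the paper's facts do not supply and which you would have to prove (it is true, e.g.\ by observing that $\tilde R R^{-1}$ is upper triangular with positive diagonal and satisfies $(\tilde R R^{-1})^T(\tilde R R^{-1}) = I + R^{-T}(\tilde G-G)R^{-1}$, hence is within $O(\|R^{-1}\|_2^2\|\tilde G-G\|_2)$ of the identity, giving $\|\tilde R-R\|_2 \le O(\|R\|_2\|R^{-1}\|_2^2\|\tilde G-G\|_2)$ --- note the extra condition-number factor compared to your claimed bound, which your choice of $\epsilon'$ would nonetheless absorb). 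Alternatively you can avoid triangular factors altogether, either by switching to $\alpha=\widetilde{G}^{-1/2}$ so that Corollary~\ref{corr:mat-perturb} applies verbatim (this is the paper's route), or by a polar-decomposition argument: $\hat W := V\tilde R^{-1}$ satisfies $\|\hat W^T\hat W - I\|\le \|\tilde R^{-1}\|_2^2\,\ell\epsilon'$, so $\hat W$ is within that distance of an exactly orthonormal basis of $\span(v_1,\ldots,v_\ell)$, which is all the lemma asks for. With any of these repairs, and a one-line check (via Weyl, Lemma~\ref{lem:Weyl}) that $\tilde G$ is positive definite so its Cholesky factor exists, your argument goes through with the stated query complexity.
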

\begin{proof}
Let $\lambda(\cdot)$ be the function defined in Lemma~\ref{prop:linear}. Now, observe that 
\[
\lambda(\ell, \tau, t^{-1/2}, \gamma) = \tau \cdot \bigg(\frac{\gamma \cdot t}{2 \cdot \ell } \bigg)^{O(\ell)}. 
\]
Thus, using Lemma~\ref{lem:inner-product-1}, we can use 
$T_{\mathsf{ortho}}$ queries to $f$ to obtain numbers $\{\beta_{i,j}\}_{1 \le i,j \le \ell}$ such that 
\[
\big| \beta_{i,j} - \langle D_{y_i} h(y_i) , D_{y_j} h(y_j) \rangle \big| \le \lambda(\ell, \tau, t^{-1/2}, \gamma). 
\]
As $(y_1, \ldots, y_\ell)$ are $\gamma$-linearly independent, hence the vectors $(v_1, \ldots, v_\ell)$ are $(t^{-\frac12}, \gamma)$-linearly independent. 
With this, we can now apply Lemma~\ref{prop:linear} to obtain numbers $\{\alpha_{i,j}\}$ such that there is an orthonormal basis $(w_1, \ldots, w_\ell)$ of $\mathsf{span}(DP_tf(y_1), \ldots, DP_tf(y_\ell))$  with the property that (a) 
 $
\Vert w_i  - \sum_{j} \alpha_{i,j} v_j \Vert_2 \le \tau$ 
and (b)
$|\alpha_{i,j}| \le \Lambda(\ell, t,\gamma)$ where $\Lambda(\ell, t, \gamma) = (\frac{\ell}{t \gamma})^{O(\ell)}$. 
\end{proof}

\begin{figure}[tb]
\hrule
\vline
\begin{minipage}[t]{0.98\linewidth}
\vspace{10 pt}
\begin{center}
\begin{minipage}[h]{0.95\linewidth}
{\small
\underline{\textsf{Inputs}}
\vspace{5 pt}

\begin{tabular}{ccl}
$s$ &:=& surface parameter \\
$\epsilon$ &:=& error parameter\\
$y_1, \ldots, y_\ell$ &:=& $\frac{\gamma}{2}$-linearly independent directions for $P_t f$\\
%
\end{tabular}
~\\
~\\
\underline{\textsf{Parameters}}
\vspace{5 pt}

\begin{tabular}{ccl}
$t$ &:=& $\frac{\epsilon^4}{900 s^2}$ \\
$\gamma$ &:=& $\frac{\epsilon^2}{8}$\\
$\tau$ &=& $\frac{\epsilon^2 \cdot \sqrt{t}}{100  \cdot \ell^{3/2}}$  \\
$\delta$  &:=& $\frac{\epsilon}{10}$ \\
$K$ &:=& $\ell^2 \cdot \Lambda(\ell,t,\gamma)$ where $\Lambda(\cdot)$ is defined in Lemma~\ref{lem:orthogonalize}. \\
$\xi$ &:=& $\frac{\epsilon^2 \cdot \sqrt{t}}{K \cdot \ell^3 }$\\
$\mu$ &:=& $\frac{\epsilon}{|\mathsf{Cover}(t,\ell,\delta)|}$ where $\mathsf{Cover}(\cdot, \cdot, \cdot)$ is the set from Theorem~\ref{thm:net}. \\
$J$ &:=& $\frac{10}{\epsilon^2} \cdot \log(1/\mu)$ \\
\end{tabular}
~\\
~\\
\underline{\textsf{Testing algorithm}}
\begin{enumerate}
\item Run the procedure \textsf{Compute-ortho-transform} with directions $(y_1,\ldots, y_\ell)$ and $\gamma$, $t$ and $\tau$ as set above. 
\item Let the output be parameters $\{\alpha_{i,j}\}_{1 \le i, j \le \ell}$. 
\item Sample $J$ points from $\gamma_n$. Call the points $x_1$, $\ldots$, $x_J$. 
\item For each of the points $x_i$ and each direction $y_j$, 
\item \hspace{7pt} Compute the function $f_{\partial, \xi, t, y_j}(x_i)$ (up to error $\xi$) using Lemma~\ref{lem:compute-derivative-x}. Call this $\zeta_{i,j}$. 
\item \hspace{7pt} Compute $\overline{x}_{i,j'} = \sum_{j}\alpha_{j',j} \cdot \zeta_{i,j}$.  
\item For all $g \in \mathsf{Cover}(t, \ell, \delta)$,  
compute $\mathcal{O}_g= \frac1s \cdot \sum_{i=1}^s |P_tf(x_i) - g(\overline{x}_{i,1}, \ldots, \overline{x}_{i,\ell})|$.
\item Return the $g$ which has the smallest value of $\mathcal{O}_g$.  
 
\end{enumerate}

\vspace{5 pt}
}
\end{minipage}
\end{center}

\end{minipage}
\hfill \vline
\hrule
\caption{Description of the   algorithm \textsf{Estimate-closest-hypothesis}}
\label{fig:hyp}
\end{figure}
 Let us now again set the parameters $t$ and $\gamma$ exactly the same as Lemma~\ref{lem:find-dirs}. Namely, we set $t= \frac{\epsilon^4}{900s^2}$ and $\gamma=\frac{\epsilon^2}{8}$. With this setting of parameters, we state the following lemma. 
\begin{lemma}~\label{lem:test-hypothesis}
There is an algorithm \textsf{Estimate-closest-hypothesis} (described in Figure~\ref{fig:hyp})  which takes as input  oracle access to $f: \mathbb{R}^n \rightarrow \{-1,1\}$,  directions $(y_1, \ldots, y_\ell)$ which are $\gamma/2$-linearly independent,  error parameter $\epsilon$, 
surface area parameter $s$.  The algorithm has the following guarantee: 
\begin{enumerate}
\item It makes $O\big( \frac{s \cdot \ell}{\epsilon}\big)^{O(\ell)}$ queries to $f$. 
\item There is an orthonormal basis $(w_1, \ldots, w_\ell)$ of $\mathsf{span}(DP_tf(y_1), \ldots, DP_tf(y_\ell))$ (which is independent of $g$) such that with probability $1-\epsilon$, outputs a function $g: \mathbb{R}^\ell \rightarrow [-1,1]$ with the following guarantee: Let $\mathsf{Cover}(t,\ell, \delta)$ be the set of functions from Theorem~\ref{thm:net} where the parameters $t, \delta$ are set as in Figure~\ref{fig:hyp}.  Then, 
\[
\mathbf{E}[|P_tf(x) - g(\langle w_1,x\rangle, \ldots, \langle w_\ell,x\rangle)|] \le \min_{g^\ast \in \mathsf{Cover}(t,\ell,\delta)} \mathbf{E}[|P_tf(x) - g^\ast(\langle w_1,x\rangle, \ldots, \langle w_\ell,x\rangle)|] + 5\epsilon.
\]
\end{enumerate}
\end{lemma}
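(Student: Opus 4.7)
\medskip
\noindent\textbf{Proof plan.}
The plan is to decompose the analysis into three tasks: (i) bounding the query complexity, (ii) establishing that the scalars $\overline{x}_{i,j'}$ computed by the algorithm closely approximate $\langle w_{j'}, x_i\rangle$ for a fixed orthonormal basis $(w_1, \ldots, w_\ell)$ of $\mathsf{span}(DP_tf(y_1), \ldots, DP_tf(y_\ell))$, and (iii) carrying out uniform convergence of the empirical averages $\mathcal{O}_g$ over the cover $\mathsf{Cover}(t, \ell, \delta)$. The query complexity bound is straightforward: summing the cost of \textsf{Compute-ortho-transform} (Lemma~\ref{lem:orthogonalize}), the $J\cdot \ell$ evaluations of $f_{\partial, \xi, t, y_j}(x_i)$ (Lemma~\ref{lem:compute-derivative-x}), and the $J$ evaluations of $P_t f$, and plugging in the chosen parameters $t = \Theta(\epsilon^4/s^2)$, $\gamma=\Theta(\epsilon^2)$, $\tau$, $\xi$, and $J = O(\epsilon^{-2}\log|\mathsf{Cover}|)$, each term is at most $(s\ell/\epsilon)^{O(\ell)}$, so the total is of the claimed order.

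\medskip
For the second task, I would appeal to Lemma~\ref{lem:orthogonalize}: the coefficients $\{\alpha_{i,j}\}$ produced in Step~1 are associated with a fixed orthonormal basis $(w_1, \ldots, w_\ell)$ of $\mathsf{span}(v_1, \ldots, v_\ell)$, where $v_j = DP_tf(y_j)$, satisfying $\|w_{j'} - \sum_j \alpha_{j',j} v_j\|_2 \le \tau$ and $|\alpha_{j',j}| \le \Lambda(\ell, t, \gamma)$. Crucially, this basis depends only on $(y_1, \ldots, y_\ell)$ and not on any individual $g$ in the cover. For each sample $x_i$, Lemma~\ref{lem:compute-derivative-x} ensures that the computed $\zeta_{i,j}$ is within $\xi$ of $f_{\partial, \xi, t, y_j}(x_i)$, which itself is within $\lambda\xi$ of $\langle v_j, x_i\rangle$ with probability $1 - \lambda^{-2}$ over $x_i$. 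Applying the triangle inequality,
\begin{align*}
|\overline{x}_{i,j'} - \langle w_{j'}, x_i\rangle|
&\le \sum_{j=1}^\ell |\alpha_{j',j}| \cdot |\zeta_{i,j} - \langle v_j, x_i\rangle|
+ \Bigl|\Bigl\langle \textstyle\sum_j \alpha_{j',j} v_j - w_{j'},\ x_i\Bigr\rangle\Bigr|.
\end{align*}
The first term is controlled by a union bound over $J\ell$ pairs using Chebyshev together with the choice of $\xi$ (which carries an extra factor of $K = \ell^2\Lambda$ in the denominator precisely to absorb the amplification by $\ell\cdot\Lambda$); the second term is a centred Gaussian of standard deviation at most $\tau$, hence is $O(\tau\sqrt{\log J})$ with high probability. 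Summing, with probability $1 - \epsilon/10$ over the draw of $x_1, \ldots, x_J$ and the internal randomness of the subroutines, every $|\overline{x}_{i,j'} - \langle w_{j'}, x_i\rangle|$ is at most $\epsilon$.

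\medskip
For the third task, I would fix any $g \in \mathsf{Cover}(t, \ell, \delta)$ and compare $\mathcal{O}_g$ with the idealized empirical average $\frac{1}{J}\sum_i |P_tf(x_i) - g(\langle w_1, x_i\rangle, \ldots, \langle w_\ell, x_i\rangle)|$. The functions in the cover produced by Theorem~\ref{thm:net} are smoothed via $P_t$ and hence Lipschitz in their $\ell$ coordinates, so the pointwise discrepancy $|g(\overline{x}_{i,\cdot}) - g(\langle w, x_i\rangle)|$ is controlled by the bound from the previous step, and the two empirical averages differ by at most $O(\epsilon)$. A standard Hoeffding bound (using that the summands lie in $[0,2]$) shows that the idealized empirical average is within $\epsilon$ of $\mathbf{E}[|P_tf(x) - g(\langle w_1, x\rangle, \ldots, \langle w_\ell, x\rangle)|]$ with probability $1 - \mu$, and a union bound over $g \in \mathsf{Cover}(t, \ell, \delta)$ (enabled by the choice $J = \Theta(\epsilon^{-2}\log(1/\mu))$ with $\mu = \epsilon/|\mathsf{Cover}|$) extends this to all $g$ simultaneously. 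Putting the three approximations together, for every $g$ in the cover $\mathcal{O}_g$ lies within $O(\epsilon)$ of $\mathbf{E}[|P_tf - g(\langle w, \cdot\rangle)|]$, so the minimizer returned by the algorithm realizes the claimed $\min + 5\epsilon$ bound.

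\medskip
The main obstacle will be the propagation of error through the $\alpha_{j',j}$ coefficients, which can be as large as $\Lambda(\ell, t, \gamma) = (\ell/(t\gamma))^{O(\ell)}$: this forces the per-direction accuracy parameter $\xi$ (and hence the query budget per pair) to shrink by a corresponding $\Lambda$ factor, and it requires careful bookkeeping of the probabilistic failure events—both over the random sample points $x_i$ (Chebyshev over inputs to $f_{\partial, \xi, t, y_j}$, and Gaussian tail for $\langle\sum_j \alpha_{j',j}v_j - w_{j'}, x_i\rangle$) and over the internal randomness of the estimation subroutines—to keep the overall failure probability below $\epsilon$ while preserving the $(s\ell/\epsilon)^{O(\ell)}$ query bound.
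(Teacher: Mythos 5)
Your overall strategy is the same as the paper's (implicit orthonormalization via \textsf{Compute-ortho-transform}, approximating the coordinates $\langle w_{j'}, x_i\rangle$ by $\overline{x}_{i,j'}$, exploiting the Lipschitzness of the cover functions, and then a Chernoff/Hoeffding bound union-bounded over $\mathsf{Cover}(t,\ell,\delta)$), but there is a concrete quantitative gap in your step (ii)--(iii) interface. You claim it suffices that every $|\overline{x}_{i,j'} - \langle w_{j'}, x_i\rangle| \le \epsilon$, and then assert that Lipschitzness of $g$ turns this into a pointwise discrepancy $|g(\overline{x}_{i,\cdot}) - g(\langle w, x_i\rangle)| = O(\epsilon)$. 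That inference fails: the functions in the cover are only $2C_t$-Lipschitz with $C_t = \Theta(t^{-1/2}) = \Theta(s/\epsilon^2)$, so a per-coordinate error of $\epsilon$ only yields a discrepancy of order $\epsilon \sqrt{\ell}\, t^{-1/2} \gg \epsilon$. What is actually needed is per-coordinate accuracy of order $\epsilon\sqrt{t}/\mathrm{poly}(\ell)$, and this is exactly why the paper's notion of a \emph{good} point uses the thresholds $\frac{\ell\xi}{\epsilon}$ and $\frac{\epsilon\sqrt{t}}{100\ell^2}$, and why $\tau$ and $\xi$ in Figure~\ref{fig:hyp} carry explicit $\sqrt{t}$ factors; the bound the paper derives in \eqref{eq:good} is $|\overline{x}_{i,j'} - \langle w_{j'}, x_i\rangle| \le \epsilon\sqrt{t}/\ell^2$, which is what makes the $t^{-1/2}$-Lipschitz step produce an $O(\epsilon)$ discrepancy. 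Your proposal recognizes that $\xi$ is shrunk by the factor $K = \ell^2\Lambda$ to absorb the $\ell\Lambda$ amplification, but the stated target ``at most $\epsilon$'' is too weak for the argument to close as written.

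A secondary, structural difference is that you insist that \emph{all} $J\ell$ (sample, direction) pairs be good simultaneously via Chebyshev plus a union bound, with failure probability $\epsilon/10$. Since $J = \Theta(\epsilon^{-2}\log(|\mathsf{Cover}|/\epsilon))$ and $\log|\mathsf{Cover}|$ is exponential in $\ell$, this forces per-pair failure probability below $\epsilon/(10 J\ell)$, and because Lemma~\ref{lem:compute-derivative-x} only gives quadratic (Chebyshev) tail decay, you must then verify that with the fixed $\xi$ of Figure~\ref{fig:hyp} the resulting error threshold is still below $\epsilon\sqrt{t}/\mathrm{poly}(\ell)$ -- a delicate comparison between $\Lambda^2$ and $\log|\mathsf{Cover}|$ that your sketch does not carry out. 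The paper avoids this entirely: it only bounds the Gaussian measure of bad points by $O(\epsilon^2/\ell)$, compares \emph{expectations} directly (bad points contribute at most $2\Pr[\text{bad}]$ since the integrand is bounded by $2$), and then applies a single Chernoff bound per cover element, with the union bound taken only over the cover. If you tighten the coordinate-accuracy target as above and either adopt the paper's expectation-based handling of bad points or explicitly verify the union-bound arithmetic, your argument goes through.
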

\begin{proof}
As usual, the query complexity of the procedure is easily seen to be $O\big( \frac{s \cdot \ell}{\epsilon}\big)^{O(\ell)} $ by just plugging in the values of the parameters along with the guarantees on the query complexity of \textsf{Compute-ortho-transform} (Lemma~\ref{lem:orthogonalize}) as well Lemma~\ref{lem:compute-derivative-x}. 

To analyze the algorithm, let us now define a point $x \in \mathbb{R}^n$ to be \emph{good} if the following two conditions hold: 
\begin{enumerate}
\item For $1 \le i \le \ell$, 
\[
\big|f_{\partial, \xi, t, y_i}(x) - \langle DP_tf (y_i), x\rangle \big| \le \frac{\ell \cdot \xi}{\epsilon}. 
\]
\item For all $1 \le i \le \ell$, 
\[
\big| \sum_j \alpha_{i,j} \langle DP_tf(y_j), x\rangle - \langle w_i, x\rangle \big| \le \frac{\epsilon \cdot \sqrt{t}}{100 \ell^2}. 
\]
\end{enumerate} 
\begin{claim}
For $x \sim \gamma_n$, $\Pr[x \textrm{ is good}] \ge 1-\frac{2\epsilon^2}{\ell}$. 
\end{claim}
\begin{proof}
Lemma~\ref{lem:compute-derivative-x} guarantees that for any specific choice of $i$, 
$\Pr[\big|f_{\partial, \xi, t, y_i}(x) - \langle DP_tf (y_i), x\rangle \big| \le \frac{\ell \cdot \xi}{\epsilon}] \le \frac{\epsilon^2}{\ell^2}$. Thus, with probability $1-\frac{\epsilon^2}{\ell}$, item $1$ holds for $x \sim \gamma_n$. Likewise, notice that 
$$
\Vert \sum_j \alpha_{i,j} DP_tf(y_j)- w_i \Vert_2 \le \tau. 
$$ 
Thus, for any $x_i \sim \gamma_n$, with probability $1- \frac{\epsilon^2}{\ell^2}$, item 2 holds. Thus, by a union bound, it holds for all $1 \le i\le \ell$ simultaneously, with probability $1-\frac{\epsilon^2}{\ell}$. This proves the claim. 
\end{proof}
Next, observe that if a point $x_i$ is \emph{good}, then the following holds for every $j'$: 
\begin{eqnarray}
\big|\overline{x}_{i,j'} - \langle w_{j'}, x_i \rangle\big| &\leq& \sum_{j} \big| \alpha_{j',j} \cdot \zeta_{i,j} - \alpha_{j',j}  \cdot f_{\partial, \xi, t, y_j}(x_i) \big| + \big| \langle w_j', x_i \rangle - \sum_{j} \alpha_{j',j}  \cdot f_{\partial, \xi, t, y_j}(x_i) \big| \nonumber \\
&\le& \xi \cdot \sum_{j} |\alpha_{j',j}|  + \big| \langle w_j', x_i \rangle - \sum_{j} \alpha_{j',j}  \cdot f_{\partial, \xi, t, y_j}(x_i) \big| \nonumber\\ &\le&  \frac{\epsilon^2\sqrt{t}}{2 \ell^4 }+ \big| \langle w_j', x_i \rangle - \sum_{j} \alpha_{j',j}  \cdot f_{\partial, \xi, t, y_j}(x_i) \big|  \nonumber \\ 
&\le&  \frac{\epsilon^2\sqrt{t}}{2 \ell^4 } + 
\big| \langle w_j', x_i \rangle - \sum_{j} \alpha_{j',j} \langle DP_tf(y_j), x_i \rangle \big|   + 
\sum_{j} |\alpha_{j',j}| \cdot \big| \langle DP_tf(y_j), x_i - f_{\partial, \xi, t, y_j}(x_i) \big|
\nonumber \\
&\le& \frac{\epsilon^2\sqrt{t}}{2 \ell^4 } + \frac{\epsilon^2 \sqrt{t}}{100 \cdot \ell^2}  + \frac{\epsilon \sqrt{t}}{100 \cdot \ell^2} \le \frac{\epsilon \cdot \sqrt{t}}{\ell^2}. ~\label{eq:good}
\end{eqnarray}
The penultimate inequalities just follow from the condition that $x_i$ is \emph{good} and the values of the parameters. Now, observe that 
\begin{eqnarray}
&& \big| \mathbf{E}_{x\sim \gamma_n} [|P_tf(x) - g(\overline{x}_1,\ldots, \overline{x}_\ell)|] - \mathbf{E}_{x\sim \gamma_n} [|P_tf(x) - g(\langle w_1,x \rangle, \ldots, \langle w_\ell, x \rangle)|] \big| \nonumber \\ &\le& \mathbf{E}_{x\sim \gamma_n} [|g(\overline{x}_1,\ldots, \overline{x}_\ell)-g(\langle w_1,x \rangle, \ldots, \langle w_\ell, x \rangle)|] 
\end{eqnarray}
Now, observe that by definition, the term inside the expectation is uniformly bounded by $2$. On the other hand, if a point $x$ is good, then by (\ref{eq:good}) and exploiting $g$ is $t^{-1/2}$-Lipschitz, then $|g(\overline{x}_1,\ldots, \overline{x}_\ell)-g(\langle w_1,x \rangle, \ldots, \langle w_\ell, x \rangle)| \le \epsilon$. Since the fraction of good points is at least $1-\frac{\epsilon^2}{\ell}$, we get that for any $g \in \mathsf{Cover}(t, \ell, \delta)$, 
$$
\big| \mathbf{E}_{x\sim \gamma_n} [|P_tf(x) - g(\overline{x}_1,\ldots, \overline{x}_\ell)|] - \mathbf{E}_{x\sim \gamma_n} [|P_tf(x) - g(\langle w_1,x \rangle, \ldots, \langle w_\ell, x \rangle)|] \big| \le 2\epsilon. 
$$
Now a standard Chernoff bound implies that with for any $g \in \mathsf{Cover}(t, \ell, \delta)$, $\mathcal{O}_g$ is within $\pm \epsilon/2$ $\mathbf{E}[|P_tf(x) - g (\langle w_1, x\rangle, \ldots, \langle w_\ell, x \rangle)|]$ with probability $1- \frac{\epsilon}{10 \cdot |\mathsf{Cover}(t, \ell, \delta)|}$. Thus, by a union bound, with probability $1-\frac{\epsilon}{10}$, for all $g \in \mathsf{Cover}(t, \ell, \delta)$, $\mathcal{O}_g$ is within $\pm \epsilon/2$ of $\mathbf{E}[|P_tf(x) - g (\langle w_1, x\rangle, \ldots, \langle w_\ell, x \rangle)|]$. This finishes the proof.

\end{proof}
We are now ready to prove Theorem~\ref{thm:main-affine-invariant}. 
\begin{proofof}{Theorem~\ref{thm:main-affine-invariant}}
Set $t = \frac{\epsilon^4}{900 s^2}$ (this is the same setting as Lemma~\ref{lem:find-dirs} and Lemma~\ref{lem:test-hypothesis}). Observe that with this choice of $t$, since $f$ has surface area bounded by $s$, then by Proposition~\ref{prop:noise-stab-surf-1}, we get that 
$$
\mathbf{E}[|P_tf(x) - f(x)|] \le \sqrt{\mathbf{E}[|P_tf(x) - f(x)|^2]} \le \frac{\epsilon}{\sqrt{5}}. 
$$
We now run the algorithm \textsf{Find-candidate-directions} with noise parameter $t$, error parameter $\epsilon$ and surface area parameter $s$. We are guaranteed that with probability $1-\epsilon$, we will get $\ell \le k$ directions $y_1, \ldots, y_\ell$ which are $\gamma/2$-linearly independent and $P_tf$ is $\epsilon$-close to a junta on the subspace $\mathsf{span}(v_1, \ldots, v_\ell)$ where $v_i = DP_tf(y_i)$ (call this event $\mathcal{E}_1$). The query complexity of this (from Lemma~\ref{lem:find-dirs}) is $(s \cdot k /\epsilon)^{O(k)}$.

Next, we run the routine \textsf{Estimate-closest-hypothesis} with the directions $y_1, \ldots, y_\ell$, surface area parameter $s$, error parameter $\epsilon$. 
Observe that the query complexity of \textsf{Estimate-closest-hypothesis} is also  $(s \cdot k /\epsilon)^{O(k)}$. Thus, the total query complexity remains $(s \cdot k /\epsilon)^{O(k)}$. 

By guarantee of \textsf{Estimate-closest-hypothesis}, we have the following: there is an orthonormal basis $(w_1, \ldots, w_\ell)$ of $\mathsf{span}(DP_tf(y_1), \ldots, DP_tf(y_\ell))$ such that 
\[
\mathbf{E}[|P_tf(x) - g(\langle w_1,x\rangle, \ldots, \langle w_\ell,x\rangle)|] \le \min_{g^\ast \in \mathsf{Cover}(t,\ell,\delta)} \mathbf{E}[|P_tf(x) - g^\ast(\langle w_1,x\rangle, \ldots, \langle w_\ell,x\rangle)|] + 5\epsilon.
\]
However, conditioned on $\mathcal{E}_1$, $P_tf $ is $\epsilon$-close to a junta on $\mathsf{span}(DP_t f(y_1), \ldots, DP_t f(y_\ell))$. By Theorem~\ref{thm:net}, this  implies that the quantity $\min_{g^\ast \in \mathsf{Cover}(t,\ell,\delta)} \mathbf{E}[|P_tf(x) - g^\ast(\langle w_1,x\rangle, \ldots, \langle w_\ell,x\rangle)|] \le 3 \epsilon$. This means that if we output the function $g$, then 
$\mathbf{E}[|P_tf(x) - g(\langle w_1,x\rangle, \ldots, \langle w_\ell,x\rangle)|] = O(\epsilon)$. {Consider the subspace $V$ spanned by vectors $\{DP_tf(y)\}_{y \in \mathbb{R}^n}$. Note that $\mathsf{dim}(V) \le k$ and $V$ is a relevant subspace for $f$. Thus, $w_1, \ldots, w_\ell$ can be extended to a basis for $V$, finishing the proof. 
}


\end{proofof}

{
\begin{remark}~\label{rem:gaussian}
A crucial point about the routine \textsf{Find-invariant-structure}, which will be useful in the next section, is the following: The marginal distribution of all the queries is distributed as the standard $n$-dimensional Gaussian distribution $\gamma_n$. To see this, note that 
\begin{enumerate}
\item In the routine \textsf{Find-candidate-directions} , each of the directions $y_i$ is sampled from $\gamma_n$. Further, for $y_i$ and $y_j$  which are i.i.d. samples from $\gamma_n$, the queries made to the oracle for $f$ in computing $\langle DP_tf(y_i), DP_tf(y_j ) \rangle$ are also distributed as $\gamma_n$ (see Lemma~\ref{lem:inner-product-1}). 
\item  In the routine \textsf{Estimate-closest-hypothesis}, the points $x_i$ are sampled from $\gamma_n$ as are the directions $y_j$ (which are output of \textsf{Find-candidate-directions}). With this, the queries made to the oracle for $f$ for computing $f_{\partial, \xi, t, y_j}(x_i) $ are distributed as $\gamma_n$ (see Lemma~\ref{lem:compute-derivative-x}). 
\item One minor subtlety is that  while each sampled $y_j$ comes from $\gamma_n$, 
as stated, our algorithm \textsf{Find-invariant-structure} is adaptive.  Consequently, the above two items do not imply that the marginal distribution of all queries is coming from $\gamma_n$. 
The  cause of non-adaptivity is that in the routine \textsf{Find-candidate-directions}, while we sample each $y_j$ from $\gamma_n$, subsequently, we only use a subset of the sampled $y_j$'s (namely, the subset $S$). However, we can easily make this algorithm non-adaptive at no asymptotic increase in the sample complexity.  This is because the number of candidate directions sampled by the procedure \textsf{Find-candidate-directions} is at most  $k \cdot T_{\mathsf{succ}} = \mathsf{poly}(k \cdot s/\epsilon)$. We can run the subsequent routines 
namely \textsf{Compute-ortho-transform} and \textsf{Estimate-closest-hypothesis} with all the $y_j$'s instead of just those in set $S$ but only use those which are part of the set $S$ output by \textsf{Find-candidate-directions}. This will only increase the query complexity by a factor of $\mathsf{poly}(k \cdot s/\epsilon)$. 
\end{enumerate}
\end{remark}}



\newcommand{\normal}{\mathcal{N}}

\section{A lower bound in terms of surface area} \label{sec:lb}

The query complexity of our testing algorithm depends
on the surface area of the set being tested. In this section,
we prove that a polynomial dependence on surface area is
necessary for non-adaptive tester, by proving a lower bound for distinguishing
1-juntas and 2-juntas in two dimensions.
In particular, we show the following theorem. 
\begin{theorem}~\label{thm:lb}
Any non-adaptive algorithm which can distinguish between a $1$-junta with surface area at most $s$ versus $\Omega(1)$-far from a linear $1$-junta makes at least $s^{\frac{1}{10}}$ queries. 
\end{theorem}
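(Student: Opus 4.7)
\medskip

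\noindent\textit{Proof proposal:}

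The plan is to apply Yao's minimax principle to a pair of distributions over functions that cannot be told apart from few non-adaptive queries. I will work in $\R^2$; the lower bound lifts trivially. Let $N \approx s$ (up to a universal constant), and pick $\theta$ uniformly from the unit circle. Let $T_\theta$ be the projection onto direction $\theta$, and partition $\R$ into consecutive intervals $I_1, I_2, \ldots$ each of Gaussian measure $1/N$ (so the strip $T_\theta^{-1}(I_j) \subset \R^2$ has Gaussian measure $1/N$). Distribution $\mathcal{D}_1$ assigns an independent uniform $\pm 1$ label $\sigma_j$ to each strip, yielding a $1$-junta $f_{\theta,\sigma}(x) = \sigma_{j(\theta,x)}$ where $j(\theta,x)$ identifies the strip containing $x$. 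A direct calculation shows the Gaussian surface area of each strip boundary is $O(1)$, so $\Gamma(f_{\theta,\sigma}) = O(N) = O(s)$ almost surely. Distribution $\mathcal{D}_2$ is defined by additionally choosing for each strip an independent cut at a uniform location along the orthogonal direction $\theta^{\perp}$ and then sampling two independent uniform $\pm 1$ labels, one for each half of the strip. The resulting function $g_{\theta, \sigma, \sigma', \tau}$ is a linear $2$-junta.

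The first main substep is to show that $g$ drawn from $\mathcal{D}_2$ is $\Omega(1)$-far from every linear $1$-junta with constant probability. The key observation is that for any candidate direction $u \in S^1$, the restriction of $g$ to a random line parallel to $u$ intersects many strips of the $\theta$-partition (when the angle between $u$ and $\theta$ is not exceptionally small) and, additionally, each such strip is cut once in the $\theta^{\perp}$ direction with independent labels on the two halves. A second-moment computation shows that the best $1$-junta approximation (a function depending only on $T_u x$) must miss a constant fraction of the values; averaging over the randomness of $\sigma, \sigma', \tau$ and taking a union bound over a $\delta$-net of directions $u$ (there are only $O(1/\delta)$ of them on $S^1$) finishes this step.

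The central step is query indistinguishability. Fix any non-adaptive query set $Q = \{x_1, \ldots, x_q\}$ in $\R^2$, sampled from any distribution (we may take it to be deterministic by Yao). Call a pair $(x_i, x_j)$ \emph{collision-prone} if $|T_\theta(x_i - x_j)| < 2/N$, and \emph{half-colliding} if additionally they lie on the same side of the random cut of their shared strip. Using that for a uniformly random direction $\theta \in S^1$, the density of $T_\theta(x_i - x_j)$ at zero is at most $C/\|x_i - x_j\|$, I get $\Pr_\theta[\text{collision-prone}] = O(1/(N \|x_i - x_j\|))$. With high probability under a typical query distribution (e.g., Gaussian, or worst case after conditioning on pairs being separated by $\Omega(1/\mathrm{poly}(q))$), this is $O(\mathrm{poly}(q)/N)$. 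Union bounding over the $\binom{q}{2}$ pairs, with probability $1 - O(q^2 \cdot \mathrm{poly}(q)/N) = 1 - O(q^{O(1)}/s)$ no pair is collision-prone. Conditional on this event, under $\mathcal{D}_1$ all $q$ labels are independent uniform $\pm 1$, and under $\mathcal{D}_2$ the same holds because each query lands in a distinct strip-half and each strip-half has an independent uniform label. Hence the total variation distance between the induced query distributions is at most $o(1)$ as long as $q \le s^{1/10}$, which even with crude bookkeeping of the polynomial in $q$ gives $q^{O(1)}/s = o(1)$.

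Combining the three steps: any non-adaptive tester using $q \le s^{1/10}$ queries produces output distributions that are $o(1)$-close in total variation on $\mathcal{D}_1$ (yes instances: $1$-juntas with surface area $O(s)$) and on $\mathcal{D}_2$ (no instances: $\Omega(1)$-far from every $1$-junta with constant probability). Therefore no such tester can succeed with probability bounded away from $1/2$, completing the proof by Yao's minimax lemma. The hardest step I anticipate is the ``far from every $1$-junta'' claim for $\mathcal{D}_2$, because one must rule out, via the net argument and second-moment method, all possible choices of approximating direction $u$, not just $u = \theta$ or $u = \theta^{\perp}$.
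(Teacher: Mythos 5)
Your high-level plan (Yao's principle with a distribution of striped $1$-juntas versus cut-striped functions, plus query indistinguishability and a ``far from every $1$-junta'' claim) is the same as the paper's, but two of your steps have genuine gaps. First, the indistinguishability argument as written does not survive adversarial query sets. You demand that \emph{no} pair of queries be ``collision-prone'' (share a strip), and bound this by $O\bigl(1/(N\|x_i-x_j\|)\bigr)$ per pair; this is vacuous when $\|x_i-x_j\|$ is tiny or zero, and your escape hatch -- assuming the queries are ``typical'' or conditioning on pairwise separation $\Omega(1/\mathrm{poly}(q))$ -- is not available, since in a non-adaptive lower bound the query set is worst case. The correct event (and the one the paper uses) is weaker: within each strip, all queries in that strip lie on one side of the cut. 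Then one argues a dichotomy: $\delta$-\emph{far} pairs rarely share a strip (small projection onto a random $\theta$, then the strip boundaries split them), while $\delta$-\emph{close} pairs may well share a strip but are rarely straddled by the cut, because their $\theta^\perp$-projections differ by at most $\delta$ and the cut height is uniform over a constant-length interval. Conditioned on that event the two label distributions couple exactly even though several queries occupy the same strip. Your construction also puts equal-Gaussian-measure strips over all of $\R$, so strips in the tails are extremely wide and far-apart adversarial queries placed there can share a strip with non-negligible probability; the paper avoids this by making the functions identically $1$ outside the slab $\{|\langle\theta,x\rangle|\le 1\}$ and drawing the cut height from $[-1,1]$.

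Second, and more seriously, your ``far from every $1$-junta'' step -- a second-moment computation plus a union bound over a $\delta$-net of directions with ``only $O(1/\delta)$'' elements -- glosses over the main difficulty. The distance from $g$ to the nearest $u$-junta is not Lipschitz in $u$ with any useful constant: the competing $1$-junta $\tilde h(\langle u,\cdot\rangle)$ may oscillate at arbitrarily fine scale, and $g$ itself has structure at scale $1/s$, so a net coarser than roughly $1/s$ cannot be transferred to nearby directions, and even with a net of size $\Theta(s)$ you need concentration much stronger than Chebyshev per direction \emph{and} an argument for directions between net points; neither is supplied. The paper's proof replaces the net entirely: it shows that, with high probability, for \emph{every} direction simultaneously a constant fraction of the lines orthogonal to it are $\delta$-balanced (both values of $g$ carry constant Gaussian mass along the line), which defeats \emph{every} competitor at once because any $1$-junta is constant on each such line. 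Achieving this uniformity is exactly where the work is: a deterministic treatment of ``steep'' directions via wide strips crossing the cut, and a multiscale dyadic-interval balancedness lemma (with Chernoff bounds and union bounds over scales) for the remaining directions. That mechanism, or something playing its role, is missing from your proposal, so the hardest step you yourself flag is not actually carried out.
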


To prove this theorem, as is standard, we will use the Yao's minimax lemma.  More specifically, we will describe a distribution $D_1$
over 1-juntas with surface area at most $\Theta(s)$ and a distribution
$D_2$ over functions that are far from 1-juntas and have surface area $\Theta(s)$, such that
for any choice of $x_1, \dots, x_n \in \R^2$ with $n = O(s^{1/10})$,
if $f \sim D_1$ and $g \sim D_2$ then $(f(x_1), \dots, f(x_n))$
and $(g(x_1), \dots, g(x_n))$ have almost the same distribution.

We begin with the description of $f \sim D_1$: let $\theta \in \R^2$
be a uniformly random unit vector. Choose $a_1, \dots, a_{s-1}$
uniformly from $[-1, 1]$, and then put them in increasing order.
We also set $a_0 = -1$ and $a_{s} = 1$.
Then choose independent
random bits $b_1, \dots, b_{s}$
and define $f$ by
\[
    f(x) = \begin{cases}
        b_i &\text{if $a_{i-1} < \langle x, \theta \rangle \le a_i$ for some $i \in \{1, \dots, s\}$} \\
        1 &\text{otherwise}.
    \end{cases}
\]
Clearly, such a function $f$ is a 1-junta,
and its surface area is at most $s+1$
because the boundary of $\{f = 1\}$ is a collection of at most $s+1$
lines, and each line has surface area at most $1/\sqrt{2\pi}$.

To describe the construction of $g \sim D_2$, we begin with
the same collection of random variables as before (i.e., $\theta$,
$a_1, \dots, a_{s-1}$, $b_1, \dots, b_s$).
Let $\theta^\perp$ be a $90^\circ$ clockwise rotation of $\theta$,
choose $z \in [-1, 1]$ independent of the other random variables,
and define $g$ by
\[
    g(x) = \begin{cases}
        b_i \sgn(\langle x, \theta^\perp \rangle - z) &\text{if $a_{i-1} < \langle x, \theta \rangle \le a_i$ for some $i \in \{1, \dots, s\}$} \\
        1 &\text{otherwise}.
    \end{cases}
\]
Note that the boundary of $\{g = 1\}$ is contained in at most $s+2$
lines, and so it has surface area at most $s+2$.
We will prove below that (with high probability) functions drawn from $D_2$ are far from 1-juntas.
Then the following Theorem will demonstrate that testing 1-juntas with surface area $\Theta(s)$
requires $\mathsf{poly}(1/s)$ queries.

\begin{theorem}\label{thm:surface-area-lower-bound}
    For any query set $x_1, \dots, x_n$ with $n \le s^{1/10}$, if $f \sim D_1$ and $g \sim D_2$
    then the distributions of $(f(x_1), \dots, f(x_n))$ and $(g(x_1), \dots, g(x_n))$ are $C s^{-1/10}$-close
    in total variation distance.
\end{theorem}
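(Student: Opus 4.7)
\medskip
\noindent\emph{Proof proposal.}
The plan is to couple $f$ and $g$ by sharing the random variables $\theta,(a_i),(b_i)$ and then to compute, conditional on $(\theta,(a_i))$, the TV distance between the laws of $(f(x_1),\ldots,f(x_n))$ and $(g(x_1),\ldots,g(x_n))$. For each query $x_i$, let $I(x_i)\in\{1,\ldots,s\}$ be the strip index it falls in, with $I(x_i)=0$ if $\langle x_i,\theta\rangle\notin(-1,1]$. Group the indices by strip: $G_k=\{i:I(x_i)=k\}$. For $i\in G_0$ we have $f(x_i)=g(x_i)=1$ deterministically. For $i\in G_k$ with $k\geq 1$, we have $f(x_i)=b_k$ and $g(x_i)=b_k\cdot\mathrm{sgn}(\tau_i-z)$, where $\tau_i=\langle x_i,\theta^\perp\rangle$. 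Because distinct groups use independent $b_k$'s, the joint law factors across groups once we further condition on $z$.

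The key observation is that within a single group $G_k$, if all signs $\epsilon_i=\mathrm{sgn}(\tau_i-z)$ ($i\in G_k$) agree, then $(g(x_i))_{i\in G_k}=\pm(\epsilon,\ldots,\epsilon)$ is uniform on $\{(+1)^{|G_k|},(-1)^{|G_k|}\}$, exactly matching the law of $(f(x_i))_{i\in G_k}$. If the $\epsilon_i$'s disagree, the two laws have disjoint supports (giving TV distance $1$). By tensorization of TV distance across independent groups and convexity in $z$, we obtain
\[
d_{\mathrm{TV}}\!\left(\mu_f(\cdot\mid\theta,a),\mu_g(\cdot\mid\theta,a)\right)
\;\leq\;\sum_{k}\Pr_{z}\!\left[\epsilon_i\text{ not all equal on }G_k\right]
\;\leq\;\sum_{k}\tfrac{\max_{i\in G_k}\tau_i-\min_{i\in G_k}\tau_i}{2}.
\]
The last sum is bounded in turn by $\sum_{i<j:I(x_i)=I(x_j)\neq 0}\tfrac{|\tau_i-\tau_j|}{2}$, so after taking expectation over $(\theta,a)$ and a union bound over pairs, it suffices to prove that for every fixed pair $x_i\neq x_j$,
\[
\mathbb{E}_{\theta,a}\!\left[\mathbf{1}_{\{I(x_i)=I(x_j)\neq 0\}}\cdot\tfrac{|\tau_i-\tau_j|}{2}\right]\;\leq\;\tfrac{C}{s}.
\]

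To establish this pair bound, set $v=x_i-x_j$, $\alpha=\|v\|$, and let $\phi$ be the angle between $v$ and $\theta$; then $|u_i-u_j|=\alpha|\cos\phi|$ and $|\tau_i-\tau_j|=\alpha|\sin\phi|$. Conditional on $\theta$, the probability that none of the $s-1$ uniform cuts $a_k$ lies in the interval between $u_i$ and $u_j$ is at most $(1-\alpha|\cos\phi|/2)_+^{s-1}$. Integrating over uniform $\phi$ and using the substitution $u=\alpha|\cos\phi|/2$ (with $du=\alpha|\sin\phi|/2\,d\phi$) yields
\[
\frac{4}{2\pi}\int_{0}^{\pi/2}(1-\alpha\cos\phi/2)_+^{s-1}\cdot\tfrac{\alpha\sin\phi}{2}\,d\phi
\;=\;\tfrac{2}{\pi}\int_{0}^{\min(\alpha,2)/2}(1-u)^{s-1}\,du
\;\leq\;\tfrac{2}{\pi s},
\]
uniformly in $\alpha$. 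Summing over the $\binom{n}{2}$ pairs gives a TV bound of $O(n^2/s)$, which is $O(s^{-4/5})=O(s^{-1/10})$ under the hypothesis $n\leq s^{1/10}$.

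The plan is essentially a calculation rather than having a single hard step; the only subtlety is handling close pairs (where $\alpha$ is small), which is resolved cleanly by the observation that the $|\sin\phi|$ factor in $|\tau_i-\tau_j|$ precisely cancels the divergence in the probability of falling in the same strip — this is exactly what the change of variables exploits. One also needs to verify that the coupling/tensorization step is legitimate when the shared variable $z$ induces dependence across groups; conditioning on $z$ first and then using convexity of TV distance to average it out handles this cleanly.
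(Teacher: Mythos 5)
Your proposal is correct, but it takes a genuinely different route from the paper. The paper's proof never computes a conditional total-variation distance: it introduces the event that every strip $S_i$ contains query points on at most one side of the line $\{x:\langle x,\theta^\perp\rangle=z\}$, observes that conditioned on this event the two query tuples can be perfectly coupled, and bounds the failure probability via Lemma~\ref{lem:close-and-far} with a hard threshold $\delta=s^{-1/3}$: pairs farther than $\delta$ apart get separated into different strips with probability $1-O(\delta)-O(e^{-\delta^2(s-1)})$ (using Lemma~\ref{lem:random-inner-product} to control the projection onto $\theta$), while pairs closer than $\delta$ land on the same side of the $z$-line with probability $1-O(\delta)$, and a union bound over pairs gives $O(n^2 s^{-1/3})$ plus the exponential term. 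You instead condition on $(\theta,a)$, use independence of the $b_k$'s to tensorize the TV distance over strips, note that the per-strip TV is the probability (over $z$) that the $\theta^\perp$-projections in that strip straddle $z$, and then bound each pair's contribution by the exact angular integral $\tfrac{2}{\pi}\int_0^{\pi/2}(1-\alpha\cos\phi/2)_+^{s-1}\tfrac{\alpha\sin\phi}{2}\,d\phi\le\tfrac{2}{\pi s}$, where the substitution $u=\alpha\cos\phi/2$ makes the $\sin\phi$ factor (chance that $z$ splits the pair) cancel against the $\cos\phi$ factor (chance of sharing a strip), so no close/far case split is needed. What each approach buys: the paper's coupling argument is softer and needs only crude probability estimates, but pays for it with the arbitrary threshold $\delta$ and a weaker quantitative bound; your direct computation gives the cleaner and stronger bound $O(n^2/s)$, which would in fact support a lower bound of roughly $s^{1/2-o(1)}$ queries rather than $s^{1/10}$, and trivially implies the stated $Cs^{-1/10}$ closeness. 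Two cosmetic points: the final "$O(s^{-4/5})=O(s^{-1/10})$" should be an inequality, and in the averaging over $(\theta,a)$ and $z$ you are using convexity of TV under mixtures (which you do flag), not a union bound per se — but both steps are sound as written.
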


In order to study the distinguishability of $D_1$ and $D_2$, we give a slightly different
description of $f \sim D_1$ and $g \sim D_2$:
for $i = 1, \dots, s$ set
\begin{eqnarray*}
    S_i^+ &=&\{x: a_{i-1} < \langle x, \theta \rangle \le a_i \text{ and } \langle x, \theta^\perp \rangle \ge z\} \\
    S_i^- &=&\{x: a_{i-1} < \langle x, \theta \rangle \le a_i \text{ and } \langle x, \theta^\perp \rangle < z\} \\
    S_i &=& S_i^- \cup S_i^+,
\end{eqnarray*}
and note that $f$ was defined by independently assigning a random $\pm 1$ value on each set $S_i$,
while $g$ was defined by independently assigning opposite random $\pm 1$ values on each pair
$S_i^+$, $S_i^-$.
Also, $f$ and $g$ are both identically one on $\R^2 \setminus \bigcup S_i$.

Let $x_1, \dots, x_n$ be the set of query points, and consider the event
that for every $i$, at least one of $S_i^+$ or $S_i^-$ contains no point in $x_1, \dots, x_n$;
call this event $A$.
Then $A$ depends on $x_1, \dots, x_n$, $\theta$, and $a_1, \dots, a_{s-1}$, but not on $b_1,\dots,b_s$.
Thanks to the description of $f$ and $g$ above, conditioned on
$A$ the random variables $(f(x_1), \dots, f(x_n))$ and $(g(x_1), \dots, g(x_n))$
have the same distribution. In particular, we can couple $f$ and $g$ so that
$(f(x_1), \dots, f(x_n)) = (g(x_1), \dots, g(x_n))$ with probability at least $1 - \Pr[A]$,
and so we will prove Theorem~\ref{thm:surface-area-lower-bound} by showing that for
any choice of $x_1, \dots, x_n$ with $n \le s^{1/10}$,
$\Pr[A] \le C s^{-1/10}$.
To do this, we will divide the pairs $(x_i, x_j)$ into
``close'' pairs and ``far'' pairs: we say that $x_i$ and $x_j$ are $\delta$-close if
$|x_i - x_j| \le \delta$, and $\delta$-far otherwise.

The following lemma will complete the proof of Theorem~\ref{thm:surface-area-lower-bound}, because
it implies that with high probability no pair of points lies in the same strips $S_i$,
but on different sides of the line $\{x: \langle x, \theta^\perp \rangle = z\}$.

\begin{lemma}\label{lem:close-and-far}
    Suppose that $n \le s^{1/10}$ and set $\delta = s^{-1/3}$. For any set $x_1, \dots, x_n$, with probability
    at least $1 - C s^{-1/10}$:
    \begin{enumerate}
        \item every pair of points $x_i, x_j$ that are $\delta$-far do not belong to the same set $S_k$
            for any $k \in \{1, \dots, s\}$.
        \item every pair of points $x_i, x_j$ that are $\delta$-close
            lie on the same side of the line $\{x: \langle x, \theta^\perp\rangle = z\}$.
    \end{enumerate}
\end{lemma}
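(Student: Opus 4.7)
The plan is to prove items 1 and 2 separately, each via a union bound over the at most $\binom{n}{2} \le n^2 \le s^{1/5}$ pairs of query points, with the only coupling between items being a final union bound on the two failure probabilities.

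For item 1, I would fix a $\delta$-far pair $(x_i, x_j)$ and set $v = x_j - x_i$, so $|v| \ge \delta = s^{-1/3}$. Writing $\theta = (\cos\phi,\sin\phi)$ for $\phi$ uniform on $[0,2\pi)$, we have $\langle v,\theta\rangle = |v|\cos(\phi-\phi_0)$ for some deterministic $\phi_0$, which gives the one-dimensional anti-concentration bound $\Pr[|\langle v,\theta\rangle| \le \epsilon^*] \le C\epsilon^*/|v| \le C\epsilon^*/\delta$ for any $\epsilon^* \in (0,|v|)$. Observe that $x_i, x_j$ can lie in a common strip $S_k$ only if $\langle x_i,\theta\rangle, \langle x_j,\theta\rangle \in [-1,1]$ and no $a_k$ separates them (if either projection lies outside $[-1,1]$, then that point is not in any $S_k$ and the bad event cannot happen). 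Conditional on $|\langle v,\theta\rangle| \ge \epsilon^*$, the probability that none of the i.i.d. uniform $a_1,\ldots,a_{s-1}$ falls in the length-$\epsilon^*$ interval between the two projections is at most $(1-\epsilon^*/2)^{s-1} \le \exp(-(s-1)\epsilon^*/2)$. Taking $\epsilon^* = s^{-19/30}$ makes the anti-concentration term $Cs^{-19/30+1/3} = Cs^{-3/10}$ and the exponential term $\exp(-\Theta(s^{11/30}))$, which is negligible. A union bound over the at most $s^{1/5}$ pairs then gives a total failure probability of at most $Cs^{-3/10+1/5} = Cs^{-1/10}$.

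For item 2, fix a $\delta$-close pair. The two points lie on opposite sides of the line $\{x:\langle x,\theta^\perp\rangle = z\}$ precisely when $z$ lies in the interval between $\langle x_i,\theta^\perp\rangle$ and $\langle x_j,\theta^\perp\rangle$, whose length is at most $|\langle x_j-x_i,\theta^\perp\rangle| \le |x_j-x_i| \le \delta$. Since $z$ is independent of everything else and uniform on $[-1,1]$, the conditional probability of this event is at most $\delta/2 = s^{-1/3}/2$. A union bound over at most $s^{1/5}$ close pairs yields a failure probability of at most $s^{-2/15}/2 \le Cs^{-1/10}$, and a final union bound combines the two items.

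The main obstacle is the parameter tradeoff in item 1: $\epsilon^*$ must be chosen small enough that the strip-avoidance factor $\exp(-\Theta(s\epsilon^*))$ is negligible (requiring $s\epsilon^* \to \infty$), yet large enough that the anti-concentration factor $C\epsilon^*/\delta$, multiplied by the $s^{1/5}$ union-bound count, stays below $s^{-1/10}$. The choice $\epsilon^* = s^{-19/30}$ lies comfortably in the resulting window and gives the claimed bound; no subtler conditioning on $\theta$ is needed, since the two random ingredients (the direction $\theta$ and the cut locations $a_k$) contribute independent sources of separation.
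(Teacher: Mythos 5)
Your proof is correct and follows essentially the same route as the paper: anti-concentration of $\langle \theta, x_i - x_j\rangle$ for far pairs followed by the probability that none of the $s-1$ uniform cuts separates them, and uniform anti-concentration of $z$ for close pairs, with union bounds over the $\le s^{1/5}$ pairs. The only cosmetic differences are your threshold $\epsilon^* = s^{-19/30}$ in place of the paper's $\delta^2 = s^{-2/3}$ and your per-pair conditioning instead of the paper's global event $\Omega_1$; both yield the same $1 - Cs^{-1/10}$ bound.
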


The first step of Lemma~\ref{lem:close-and-far} is the simple observation that far points remain
reasonably far even after projecting them in the direction $\theta$.
\begin{lemma}\label{lem:random-inner-product}
    For all sufficiently small $\delta$ and any $x \in \R^2$, $\Pr(|\langle \theta, x\rangle| \le \delta |x|) \le \delta$.
\end{lemma}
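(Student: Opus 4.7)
My plan is to reduce to a direct calculation on the circle, using the rotational symmetry of the uniform measure on $\mathbb{S}^1$. Since the law of $\theta$ is invariant under rotation, and the event in question depends only on $\theta$ through the inner product $\langle \theta, x \rangle$, I would rotate coordinates so that $x = (|x|, 0)$. Then $|\langle \theta, x \rangle| = |x| \cdot |\theta_1|$, so the event $|\langle \theta, x \rangle| \le \delta |x|$ is exactly the event $|\theta_1| \le \delta$.

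Parametrize $\theta = (\cos \phi, \sin \phi)$ with $\phi$ uniform on $[0, 2\pi)$. Then the event becomes $|\cos \phi| \le \delta$, which by elementary trigonometry corresponds to $\phi$ lying in the union of two arcs of total length $4 \arcsin(\delta)$ centered at $\pi/2$ and $3\pi/2$. Hence
\[
    \Pr(|\langle \theta, x \rangle| \le \delta |x|) \;=\; \frac{2 \arcsin(\delta)}{\pi}.
\]

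The remaining step is to observe that $\tfrac{2}{\pi} \arcsin(\delta) \le \delta$ for all sufficiently small $\delta$. Indeed, setting $f(\delta) = \tfrac{2}{\pi}\arcsin(\delta) - \delta$, one has $f(0) = 0$ and $f'(0) = \tfrac{2}{\pi} - 1 < 0$, so $f(\delta) \le 0$ in a neighborhood of $0$ (concretely this holds whenever $\delta \le \sqrt{1 - 4/\pi^2}$, which is well within the ``sufficiently small'' regime being invoked).

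There is no real obstacle here: the lemma is essentially a one-line calculation once one exploits the rotational invariance of $\theta$. The only mild care needed is in the quantitative comparison of $\arcsin(\delta)$ with $\delta$, which is handled by the derivative computation above.
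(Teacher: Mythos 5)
Your proof is correct and follows essentially the same route as the paper: both reduce to the angle between $\theta$ and $x$, compute the exact probability (your $\tfrac{2}{\pi}\arcsin(\delta)$ equals the paper's $\tfrac{\cos^{-1}(-\delta)-\cos^{-1}(\delta)}{\pi}$), and then compare with $\delta$ for small $\delta$. If anything, your explicit derivative argument for $\tfrac{2}{\pi}\arcsin(\delta) \le \delta$ is slightly more careful than the paper's ``approximately $\tfrac{2}{\pi}\delta$'' remark.
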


\begin{proof}
    If $\phi$ is the angle between $\theta$ and $x$ then $|\langle \theta, x\rangle| \le \delta |x|$
    exactly when $|\cos \phi| \le \delta$, which has probability $\frac{\cos^{-1}(-\delta) - \cos^{-1}(\delta)}{\pi}$.
    Since $\cos^{-1}$ has derivative 1 at zero, this is approximately $\frac{2}{\pi} \delta$ for small $\delta$.
    In particular, if $\delta > 0$ is sufficiently small then this probability is at most $\delta$.
\end{proof}

\begin{proof}[Proof of Lemma~\ref{lem:close-and-far}]
    Let $\ell_k = \ell_k(\theta, a_k)$ be the line $\{x: \langle x, \theta \rangle = a_k\}$.
    By Lemma~\ref{lem:random-inner-product} applied to $x_i - x_j$, if $x_i$ and $x_j$ are $\delta$-far
    then with probability at least $1 - \delta$, $|\langle \theta, x_i - x_j \rangle| \ge \delta^2$.
    By a union bound, with probability at least $1 - n^2 \delta$,
    $|\langle \theta, x_i - x_j \rangle| \ge \delta^2$ for every $\delta$-far pair $x_i, x_j$; from now
    on, we will condition on this event (call it $\Omega_1$) occurring.
    
    Now, if either $\langle \theta, x_i \rangle$
    or $\langle \theta, x_j \rangle$ lies outside of the interval $[-1, 1]$ then $x_i$ and $x_j$
    do not both lie in any single $S_k$. On the other hand, if both $\langle \theta, x_i \rangle$
    and $\langle \theta, x_j \rangle$ lie in $[-1, 1]$, then each line $\ell_k$ has (independently)
    probability $|\langle \theta, x_i - x_j \rangle| \ge \delta^2$ to ``split'' $x_i$ from $x_j$.
    Hence, with probability at least $1 - (1 - \delta^2)^{s-1} \ge 1 - \exp(\delta^2(s-1))$, there will
    be a line $\ell_k$ that splits $x_i$ from $x_j$, and so they will not belong to any single set $S_k$.
    Taking a union bound over all pairs $x_i, x_j$, we see that (conditioned on $\Omega_1$) with probability
    at least $1 - n^2 \exp(-\delta^2(s-1))$, no pair of $\delta$-far points lands in the same $S_k$.
    Removing the conditioning on $\Omega_1$ changes the probability bound to
    $1 - n^2 \delta - n^2 \exp(-\delta^2(s-1))$, which with our choice of parameters is at least
    $1 - C s^{-1/10}$.

    If $x_i$ and $x_j$ are $\delta$-close then $|\langle \theta^\perp, x_i - x_j\rangle| \le \delta$, and hence
    the probability that they land on opposite sides of the line
    $\{x: \langle x, \theta^\perp \rangle = z\}$ is at most $O(\delta)$. By a union bound over all pairs,
    with probability at least $1 - C n^2 \delta \ge 1 - C s^{-1/10}$,
    every pair of $\delta$-close $x_i, x_j$ land on the same side of that line.
\end{proof}

\subsection{$D_2$ is far from a 1-junta}

So far, we have shown that one cannot distinguish $D_1$ from $D_2$ from few samples. It remains to
show that functions
from $D_2$ are far (with high probability) from 1-juntas, it will follow that one cannot
1-juntas with $O(s)$ surface area with fewer than $s^{1/10}$ queries.

\begin{theorem}\label{thm:far-from-1-junta}
    There is a constant $c > 0$ such that with probability at least $1 - \mathsf{poly(1/s)}$
    over $g \sim D_2$, $g$ is $c$-far from every 1-junta.
\end{theorem}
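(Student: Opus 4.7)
\textbf{The plan} is to reduce the theorem to uniform $L^2$ control of the conditional expectation. For $w \in S^1$, set $F_w(t) := \E[g(x) \mid \langle x, w\rangle = t]$ where $t \sim \mathcal{N}(0,1)$; a standard identity gives $\min_h \Pr[g(x) \neq h(\langle x, w\rangle)] = \tfrac{1}{2}(1 - \E_t[|F_w(t)|])$, and Cauchy--Schwarz yields $\E_t[|F_w|] \le \|F_w\|_{L^2}$. Hence it suffices to show, with probability $1 - \mathsf{poly}(1/s)$, that $\|F_w\|_2^2 \le 1 - 2c$ for every $w \in S^1$ and some absolute $c > 0$. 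Decomposing $g = \mathbf{1}_{T^c} + \mathbf{1}_T \cdot \sum_i b_i \mathbf{1}_{S_i}(x)\, \sgn(\langle x, \theta^\perp\rangle - z)$ gives $F_w = p_w + \sum_i b_i q_{w,i}$, where $p_w(t) = \Pr[x \in T^c \mid \langle x, w\rangle = t]$ and $q_{w,i}(t) = \E[\mathbf{1}_{S_i}(x)\sgn(\langle x,\theta^\perp\rangle - z) \mid \langle x, w\rangle = t]$ depend only on $(\theta, a_\cdot, z)$. Independence of the Rademachers $b_i$ then gives $\E_b \|F_w\|_2^2 = \|p_w\|_2^2 + \sum_i \|q_{w,i}\|_2^2$.

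Next I bound $\E_b\|F_w\|_2^2$ by cases on $\beta := |\langle w^\perp, \theta\rangle|$. Always $\|p_w\|_2^2 \le \E[p_w] = \Pr[T^c] = 2\Phi(-1) \approx 0.32$. If $\beta \ge s^{-1/4}$, the conditional distribution of $\langle x, \theta\rangle$ given $\langle x, w\rangle = t$ is Gaussian with standard deviation $\beta$, so $\Pr[x \in S_i \mid \langle x, w\rangle = t] \le (a_i - a_{i-1})/(\beta \sqrt{2\pi})$; combined with the standard high-probability bound $\max_i(a_i - a_{i-1}) = O(\log s / s)$, this yields $\sum_i \|q_{w,i}\|_2^2 = O(\log s / s^{3/4}) = o(1)$. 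If instead $\beta < s^{-1/4}$ (so $w$ is nearly parallel to $\pm\theta$), a direct calculation, using the covariance structure $(u, u', v, v')$ of $(\langle x, \theta\rangle, \langle x', \theta\rangle, \langle x, \theta^\perp\rangle, \langle x', \theta^\perp\rangle)$ from the two-sample representation $\|F_w\|_2^2 = \E_{x, x'}[g(x) g(x')]$, shows that $\sum_i \|q_{w,i}\|_2^2 \le \Pr[T] \cdot (1 - 2\Phi(z))^2 + o(1) < 0.32$; the key point is that $z \in [-1, 1]$ forces $|1 - 2\Phi(z)|$ to be bounded away from $1$ by an absolute constant. Either way, $\E_b\|F_w\|_2^2 \le 0.65$.

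To promote this bound to a uniform statement over $w$, I combine concentration with a net. The quantity $\|F_w\|_2^2 - \|p_w\|_2^2$ is a quadratic form in the Rademacher vector $b$, so the Hanson--Wright inequality gives exponential concentration around its $b$-mean with deviation probability $\exp(-s^{\Omega(1)})$. I take a $\delta$-net $W \subset S^1$ with $\delta = s^{-3}$, apply Hanson--Wright at each of the $|W| = \mathsf{poly}(s)$ net points, and union-bound. To extend to all $w \in S^1$, I use the Lipschitz estimate $\|F_w - F_{w'}\|_2^2 \le 4\Pr[g(X_w) \neq g(X_{w'})]$ with $X_w := tw + \xi w^\perp$: the right-hand side is at most $O(s\delta)$ by union-bounding over the $O(s)$ boundary lines of $g$ and noting that $\langle X_w - X_{w'}, u\rangle$ is Gaussian with variance $\le 2\delta^2$ for any unit vector $u$, so each line separates $X_w$ from $X_{w'}$ with probability $O(\delta)$. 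With $\delta = s^{-3}$ the oscillation in $\|F_w\|_2^2$ between neighboring net points is $o(1)$, so the uniform bound $\|F_w\|_2^2 \le 0.7$ holds for every $w$ with probability $1 - \mathsf{poly}(1/s)$, yielding $\min_h \Pr[g \neq h] \ge (1 - \sqrt{0.7})/2 > 0$ as required.

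The step I expect to be the main obstacle is the near-parallel case $\beta < s^{-1/4}$, because the ``thin strips'' bound on $\Pr[x \in S_i \mid \langle x, w\rangle = t]$ becomes vacuous. One must instead carefully track the covariance structure of $(v, v')$ (which have correlation $\beta^2 \ll 1$) and the near-independence of the same-strip event from $(v, v')$ to conclude the $(1 - 2\Phi(z))^2$-type bound; this is precisely where the construction relies on $z$ being confined to $[-1,1]$, since without that constraint the $\sgn(\cdot - z)$ factor could degenerate and allow a perfect $\theta$-junta fit to $g$.
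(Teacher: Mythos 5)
Your overall route is genuinely different from the paper's: you bound, uniformly over directions $w$, the second moment of the conditional-expectation profile $F_w(t)=\E[g \mid \langle x,w\rangle=t]$, using independence of the signs $b_i$, Hanson--Wright concentration, and a net over $w$, whereas the paper slices $\R^2$ into the lines $\{x:\langle \phi^\perp,x\rangle=a\}$, shows that with high probability a constant fraction of these lines are ``balanced'' (via a deterministic wide-strip argument for steep lines and a dyadic-interval/Poissonization argument for shallow ones), and concludes by Fubini. Your reduction $\min_h \Pr[g\ne h]\ge \tfrac12\bigl(1-\|F_w\|_2\bigr)$, the identity $\E_b\|F_w\|_2^2=\|p_w\|_2^2+\sum_i\|q_{w,i}\|_2^2$, and the net/Lipschitz/concentration scheme for uniformity in $w$ are all sound, and this second-moment approach could in principle yield the theorem.

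The gap is exactly where you flagged it, and it is real: your two cases do not cover the regime in which $\beta=|\langle w^\perp,\theta\rangle|$ is comparable to the strip widths, i.e.\ $\beta=\Theta(1/s)$, which lies well below your threshold $s^{-1/4}$. There, conditioned on $\langle x,w\rangle=t$, each strip corresponds to a $\xi$-interval of width $\Theta(1)$ (where $\xi$ is the orthogonal Gaussian coordinate), so neither the thin-strip bound nor the approximation $q_{w,i}\approx \pm(1-2\Phi(z))\mathbf{1}_{\mathrm{strip}}$ applies; moreover the inequality you assert, $\sum_i\|q_{w,i}\|_2^2\le \Pr[T](1-2\Phi(z))^2+o(1)$, is false in that band. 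For instance take $z=0$ and $\beta$ of order $1/s$: for typical $t$, a strip whose $\xi$-window is roughly $[0.7,1.8]$ contributes $|q_{w,i}(t)|\approx \Phi(1.8)-\Phi(0.7)\approx 0.2$ with no sign cancellation, so $\sum_i\|q_{w,i}\|_2^2=\Omega(1)$ while your bound predicts $o(1)$; consequently the numeric chain leading to $\E_b\|F_w\|_2^2\le 0.65$ is unsubstantiated in this regime. The conclusion is likely salvageable: for each fixed $t$ the numbers $q_{w,i}(t)$ are signed Gaussian masses of a partition of the $\xi$-line into intervals, with the sign change located at a point $\xi^*$ satisfying $|\xi^*|\le 1+o(1)$, which forces $\sum_i q_{w,i}(t)^2\le \Phi(1)^2+(1-\Phi(1))^2+o(1)$, bounded away from $1$; combining this with control of the small set of $t$ near $\pm 1$ gives a uniform constant strictly below $1$. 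But that argument (and the accompanying check that the Hanson--Wright matrix still has small off-diagonal mass in this band) is not in your write-up, and it is precisely the regime the paper treats with its balanced-dyadic-intervals lemma---which is why the paper places its dichotomy at slope about $s$ (equivalently $\beta$ of order $1/s$) rather than at your $s^{-1/4}$.
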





Now recall that the construction of $D_1$ and $D_2$ involved dividing up the
strip $\{x: \langle \theta, x \rangle \in (-1, 1]\}$ into $s$ strips $S_1,
\dots, S_s$ and assigning random values on each strip.
Since both the construction of $D_2$ and the notion of distance to a 1-junta are rotationally
invariant, we will assume from now on that $\theta = e_1$, which means that the strips
$S_1, \dots, S_s$ are vertically oriented.
Let $U^+ = \bigcup_{i: b_i = 1} (a_i, a_{i+1}]$ and let $U^- = [-1, 1] \setminus U^+$.

\begin{definition}
    Let $I \subset [-1, 1]$ be an interval. We say that $I$ is \emph{$\delta$-balanced} if
    of both $|I \cap U^+|$ and $|I \cap U^-|$ are at least $\delta |I|$, where $|\cdot|$ denotes the one-dimensional
    Lebesgue measure.
    We say that $I$ is \emph{wide} if $|I| \ge \frac{1}{s}$.

    We extend these definitions to strips in two dimensions: say that $I \times \R$ is $\delta$-balanced (resp. wide)
    if $I$ is $\delta$-balanced (resp. wide).
\end{definition}

\begin{definition}
    For any line $\ell \subset \R^2$, we say that $\ell$ is $\delta$-balanced if both
    \[
        \int_{\ell \cap U^+} e^{-|x|^2/2} \, dx \quad \text{and} \quad
        \int_{\ell \cap U^-} e^{-|x|^2/2} \, dx
    \]
    are at least
    \[
        \delta \int_{\ell} e^{-|x|^2/2}\, dx.
    \]
\end{definition}

We will now describe the outline of Theorem~\ref{thm:far-from-1-junta}'s proof: note that if $h$ is a $1$-junta
then $h(x) = \tilde h(\langle \phi, x\rangle)$ for some $\phi$.
Now, Fubini's theorem implies that
\[
    2\pi \|h - g\|_1
    = \int_{\R^2} e^{-|x|^2/2} |h - g| \, dx
    = \int_\R \int_{\{x: \langle x, \phi^\perp \rangle = a\}} e^{-|x|^2/2} |\tilde h(a) - g(x)| \, dx \, da.
\]
Now, whenever the line $\{x: \langle x, \phi^\perp\}$ is $\delta$-balanced, the inner integral is
at least $\delta \int e^{-|x|^2 / 2} \, dx$. Therefore, in order to prove Theorem~\ref{thm:far-from-1-junta},
it suffices to show that there is a constant $\delta$ such that at least a constant fraction of the lines
$\{x: \langle x, \phi^\perp \rangle = a\}$ are $\delta$-balanced.
To be precise, let $L(\phi)$ be the set of lines of the form $\{x: \langle
\phi^\perp, x\rangle = a\}$ for $a \in [-10, 10]$. Since $e^{-|x|^2/2}$ is bounded from below on $[-10, 10]$,
it suffices to show that there is a constant $\delta > 0$ such that with high probability,
for every $\phi$, a constant fraction of $\ell \in L(\phi)$ are $\delta$-balanced. For the remainder of the
section, we will focus on proving the preceding statement.

We will consider two cases depending on $\phi$: if the lines in $L(\phi)$ are
``steep,'' then these lines will be balanced because a constant fraction of them will
cross the horizontal line $\{x: x_2 = z\}$ near the middle of a strip. Since the value of $g$ on a strip changes sign at 
that horizontal line,
this will imply that such a line is balanced.
On the other hand, if the lines are not steep, then they will be balanced because they cross many
strips, and $g$ will tend to take different values on different strips.

We will first deal with the case of steep lines. In this case, it is deterministically the case that
$g$ is far from $h$.

\begin{lemma}\label{lem:wide-strips}
    At least half of the points on the line segment from $(-1, 1/2)$ to $(1, 1/2)$ are
    in a wide strip $S_i$.
\end{lemma}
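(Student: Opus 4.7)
The plan is to observe that this lemma is purely deterministic: it follows from a simple pigeonhole argument on the widths of the strips $S_1, \dots, S_s$, with no recourse to the random choices at all.

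First I would note that the partition $-1 = a_0 \le a_1 \le \cdots \le a_{s-1} \le a_s = 1$ produces exactly $s$ strips whose widths $w_i := a_i - a_{i-1}$ satisfy $\sum_{i=1}^s w_i = 2$. Since $\theta = e_1$, the line segment from $(-1,1/2)$ to $(1,1/2)$ is simply a horizontal segment parametrized by its first coordinate on $[-1,1]$, and a point $(x_1, 1/2)$ lies in $S_i$ exactly when $a_{i-1} < x_1 \le a_i$. So the length of the intersection of this segment with $S_i$ is precisely $w_i$, and ``half the points'' of a length-$2$ segment means a subset of length at least $1$.

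The key step is then the pigeonhole estimate on the non-wide strips: by definition, a strip $S_i$ fails to be wide iff $w_i < 1/s$, so
\[
\sum_{i : S_i \text{ not wide}} w_i < s \cdot \frac{1}{s} = 1.
\]
Subtracting from the total gives $\sum_{i : S_i \text{ wide}} w_i > 2 - 1 = 1$, which is exactly what we want.

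I do not expect any obstacle here: the statement is deterministic, depends only on the counting identity $\sum_i w_i = 2$ together with the definition of ``wide'', and the inequalities are strict. The only minor subtlety worth flagging in the write-up is that the claim is about the horizontal segment at height $1/2$ but uses only the horizontal coordinate because $\theta = e_1$; any height would work equally well.
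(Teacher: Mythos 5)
Your argument is correct and is the same pigeonhole computation as the paper's one-sentence proof: the at most $s$ non-wide strips each have width less than $1/s$, so they cover total width less than $1$ out of the segment's length $2$. No issues.
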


\begin{proof}
    There are $s$ strips in total, and so the narrow ones can take up at most a total width of 1,
    which is only half of the width of the line segment in question.
\end{proof}

\begin{lemma}
    There is a constant $c > 0$ such that if the absolute value of the slope of $\{x: \langle \phi^\perp, x \rangle = 0\}$
    is at least $s$ then a $c$-fraction of $\ell \in L(\phi)$ are $c$-balanced.
\end{lemma}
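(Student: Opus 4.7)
The plan is to combine Lemma~\ref{lem:wide-strips} with the geometric fact that a near-vertical line that enters a wide strip must traverse the horizontal cut $\{x_2 = z\}$ before exiting. Under the hypothesis, writing $\phi = (\cos\alpha, \sin\alpha)$ we have $|\tan\alpha| \ge s$, so $\phi$ makes angle at most $O(1/s)$ with the vertical and $\phi^\perp$ at most $O(1/s)$ with the horizontal. Consequently every $\ell_a \in L(\phi)$ is nearly vertical, and the sweep map $a \mapsto \ell_a \cap \{x_2 = 1/2\}$ is a bi-Lipschitz bijection from $[-1,1]$ onto an interval of $x_1$-values of length $2 + O(1/s)$ centered near $0$.

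First, combining Lemma~\ref{lem:wide-strips} (wide strips, i.e.\ those with $a_i - a_{i-1} \ge 1/s$, cover at least half of the segment $[-1,1] \times \{1/2\}$) with the bi-Lipschitz property above, I conclude that a constant fraction of $a \in [-10,10]$ --- a constant fraction of $L(\phi)$ --- yield a line $\ell_a$ whose intersection with $\{x_2 = 1/2\}$ lies in some wide strip $S_i$. Call these the good values of $a$.

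For any good $a$, let $w \ge 1/s$ be the $x_1$-width of the containing strip $S_i$. Since $\ell_a$ has slope $\ge s$, parametrizing by $x_2$ gives $|dx_1/dx_2| \le 1/s$, so $\ell_a$ remains in $S_i$ as long as $|x_2 - 1/2| \le sw/2 \ge 1/2$. Thus $\ell_a \cap S_i$ contains a sub-segment of Euclidean length $\Omega(1)$ whose $x_2$-range is an interval $I_a$ of length at least $1$, containing $1/2$, and contained in $[-1/2, 3/2]$. Conditioning on the constant-probability event $z \in [1/4, 3/4] \subset I_a$, the cut $\{x_2 = z\}$ splits this sub-segment into two pieces of Euclidean length $\Omega(1)$, each lying in a bounded window of $\R^2$. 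Each piece therefore carries a Gaussian mass of order $\Omega(1)$, and since the total Gaussian mass $\int_{\ell_a} e^{-|x|^2/2}\,dx$ is $O(1)$ (and of order $\Omega(1)$ for $a$ in a bounded interval), each piece accounts for a constant fraction of the total. On one piece $g \equiv b_i$ and on the other $g \equiv -b_i$, so both $\{x \in \ell_a : g(x) = +1\}$ and $\{x \in \ell_a : g(x) = -1\}$ have Gaussian measure a constant fraction of the total Gaussian measure of $\ell_a$.

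The main obstacle is definitional rather than analytic: the stated definition of $\delta$-balanced uses $U^+$ and $U^-$, which depend only on the $b_i$'s and ignore the cut $\{x_2 = z\}$; a steep line confined to a single strip lies entirely in $U^+$ or entirely in $U^-$, so the $U^\pm$ notion of balance fails outright in the steep regime. The proof must therefore reinterpret ``balanced'' in the sense actually needed for the Fubini step preceding the lemma --- balance with respect to the sign of $g$ itself --- or equivalently replace $U^\pm$ by the strip-halves $S_i^\pm$ already introduced in the text (which incorporate the $z$-cut). Under either convention, the argument above yields the claim with an absolute constant $c > 0$.
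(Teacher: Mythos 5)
Your overall mechanism is the same as the paper's: restrict to lines whose crossing with a fixed horizontal segment lands well inside a wide strip (Lemma~\ref{lem:wide-strips}), use steepness to keep the line inside that strip for an $\Omega(1)$ vertical stretch, and let the horizontal cut split that stretch so that $g$ takes both signs on constant Gaussian fractions of the line; your observation that ``balanced'' must here be read with respect to the sign of $g$ (equivalently the strip-halves $S_i^{\pm}$) rather than the $U^{\pm}$ definition is exactly what the paper's proof implicitly does. Still, there are two gaps. The minor one: the step ``$\ell_a$ remains in $S_i$ as long as $|x_2-1/2|\le sw/2$'' is false when the crossing lies near an edge of the strip; a slope-$\ge s$ line is only guaranteed to stay in the strip for vertical distance $s\cdot d$, where $d$ is the distance from the crossing to the \emph{nearer} boundary. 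You must restrict, as the paper does, to crossings in the middle third of a wide strip (these still account for a constant fraction of $a\in[-10,10]$, which, incidentally, is the parameter range, not $[-1,1]$), giving a guaranteed in-strip window such as $x_2\in[1/6,5/6]$.

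The substantive gap is your conditioning on $z\in[1/4,3/4]$. The lemma carries no probability qualifier: it is meant to hold for every realization of $z$ (the paper stresses just before Lemma~\ref{lem:wide-strips} that the steep case is handled deterministically), and this matters downstream, since Theorem~\ref{thm:far-from-1-junta} tolerates only $\mathsf{poly}(1/s)$ failure probability; an argument that is silent when $z\notin[1/4,3/4]$, an event of constant probability, cannot be absorbed into that error. The clean fix is not to anchor the wide-strip argument at height $1/2$: because the strips are vertical, Lemma~\ref{lem:wide-strips} (whose proof only counts widths) applies to the horizontal segment at \emph{any} height, in particular at the cut height $z$ itself. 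A steep line crossing height $z$ in the middle third of a wide strip stays in that strip for vertical distance at least $1/3$ on each side of the cut, and both pieces lie in $[-1,1]\times[z-1/3,z+1/3]\subseteq[-1,1]\times[-4/3,4/3]$ since $|z|\le 1$, so each carries an $\Omega(1)$ fraction of the line's Gaussian mass while $g$ takes opposite signs on them --- a deterministic conclusion. (The paper's own write-up anchors at height $1/2$ after assuming $z\le 0$ and is terse about why the cut actually meets the in-strip segment; anchoring at the cut height as above is the way to make the deterministic claim airtight, and it is the idea your version is missing.)
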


\begin{proof}
    We may assume without loss of generality that $z \le 0$.
    By Lemma~\ref{lem:wide-strips}, at least a constant fraction of $\ell \in L(\phi)$
    intersect the line $\{x: x_2 = 1/2\}$  in the middle third of a wide strip $S_k$. In this case,
    $\ell$ belongs to $S_k^+$ for a distance of at least 1/3, and to $S_k^-$ for a distance
    of at least 1/3, and it follows that $\ell$ is $c$-balanced for a constant $c$
    depending on the minimum and maximum values of $e^{-|x|^2/2}$ for $x \in [-1, 1]^2$.
\end{proof}

For the remainder of the section we will deal with lines that are not steep. For $k$
with $2^{-k} \le 2/s$, consider an interval of the form $[j 2^{-k}, (j+1) 2^{-k}] \subset [-1, 1]$;
let $D_k$ be the set of all such intervals.

\begin{lemma}\label{lem:balanced-dyadics}
    There is a constant $C$ such that
    with probability at least $1 - \mathsf{poly}(1/s)$,
    for every $k$ for which $2^{-k} \le 2/s$, at least a $\frac{1}{C}$-fraction of the intervals $I \in D_k$ are $\frac{1}{C}$-balanced.
\end{lemma}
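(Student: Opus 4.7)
The plan is a per-interval expectation bound followed by a concentration argument, then a union bound over scales. First I would fix a scale $k$ with $\ell := 2^{-k} \le 2/s$ and a single interval $I \in D_k$, and show $\Pr[I \text{ is } (1/C)\text{-balanced}] \ge c$ for universal constants $c, C > 0$. To do this, let $I' \subset I$ be the concentric sub-interval of length $(1 - 2/C)\ell$; if $I'$ contains at least one of the random cut points $a_j$ and the two strips adjacent to such a cut point have opposite labels $b_i \neq b_{i+1}$, then both $|I \cap U^+|$ and $|I \cap U^-|$ are at least $(1/C)\ell$, so $I$ is $(1/C)$-balanced. Since the $a_j$ are i.i.d.\ uniform on $[-1,1]$, the count of $a_j$ in $I'$ is Binomial with mean $\Theta(\ell s)$; when $\ell s = \Theta(1)$ this count is at least $1$ with probability bounded below by a positive constant, and the adjacent labels differ with independent probability $1/2$.

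Second, I would transfer this pointwise bound to a high-probability lower bound on $|\{I \in D_k : X_I = 1\}|$, where $X_I := \mathbf{1}\{I \text{ is } (1/C)\text{-balanced}\}$. The indicators $X_I$ depend weakly on the underlying randomness: altering a single $a_j$ changes $X_I$ for at most $O(1)$ intervals (the intervals containing the old and new positions of $a_j$ and their immediate neighbors), and altering a single $b_j$ changes $X_I$ only for intervals intersecting strip $S_i$, again a bounded number since $\ell \le 2/s$ is no larger than a typical strip width. McDiarmid's bounded-differences inequality applied to $\sum_{I \in D_k} X_I$, viewed as a function of the $2s-1$ independent random variables $(a_j)_j, (b_j)_j$, then gives concentration within $O(\sqrt{s \log(1/\delta)})$ of its expectation $\Omega(|D_k|)$, yielding a $(1/C')$-fraction of balanced intervals with failure probability $\exp(-\Omega(|D_k|^2/s))$. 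A union bound over the $O(\log s)$ valid scales $k$ then gives total failure probability at most $\mathsf{poly}(1/s)$.

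The main obstacle is the very-fine-scale regime where $\ell s \ll 1$: there the per-interval balancedness probability degrades to $\Theta(\ell s)$, so the fraction of balanced intervals shrinks with $k$ and cannot meet a single universal constant threshold. I would handle this by restricting the quantifier to $k$ with $\ell \in [c/s, 2/s]$, which is the only regime invoked by the dyadic telescoping argument for Theorem~\ref{thm:far-from-1-junta}: balancedness at coarser-than-$1/s$ scales controls the bulk of $\int_\ell e^{-|x|^2/2} |\tilde h(a) - g(x)|\, dx$, while finer scales contribute a negligible amount to the $L^1$ distance computation. If instead one insists on the lemma as literally stated, the correct conclusion is an $\Omega(\min(1,\ell s))$-fraction of balanced intervals at scale $k$, obtained via the same McDiarmid argument but with the sharper Binomial lower bound on $\Pr[X_I = 1]$; the downstream proof of Theorem~\ref{thm:far-from-1-junta} adapts unchanged because it only ever sums this fraction against a Gaussian weight that is bounded on $[-10,10]^2$.
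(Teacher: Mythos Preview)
Your McDiarmid step has a genuine gap. The bounded-differences claim is false as stated: in the model of the paper, the labels $b_i$ are indexed by the \emph{sorted order} of the cut points, so replacing a single $a_j$ can change its rank from $r$ to $r'$ and thereby shift the label of every strip between ranks $r$ and $r'$. In the worst case this relabels $\Theta(s)$ strips and flips the balancedness indicator of $\Theta(|D_k|)$ dyadic intervals, not $O(1)$. The same issue arises for $b_i$: the $i$-th strip can have width $\Theta(1)$, so flipping $b_i$ can touch $\Theta(|D_k|)$ intervals. With these correct worst-case Lipschitz constants, McDiarmid yields nothing. You would need either a reparametrization in which labels are attached to points rather than to ranked gaps (and then a further conditioning argument to control strip widths), or a concentration inequality that tolerates large worst-case but small typical differences; neither is in your sketch.

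The paper sidesteps this entirely by Poissonizing the cut-point process, which makes the restrictions of $(U^+,U^-)$ to disjoint primitive intervals independent. It then splits the scales into two regimes. For widths up to $1/\sqrt s$ each interval is $\Omega(1)$-balanced with constant probability and there are $\Omega(\sqrt s)$ of them, so Chernoff gives a constant balanced fraction with failure $\exp(-\Omega(\sqrt s))$. For widths above $1/\sqrt s$ each interval contains $\Omega(\sqrt s)$ independent primitives, so Chernoff shows \emph{each} such interval is balanced with failure $\exp(-\Omega(\sqrt s))$, and a union bound finishes. Your sketch has no analogue of this second regime; your concentration bound $\exp(-\Omega(|D_k|^2/s))$ is vacuous once $|D_k|\le \sqrt s$.

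Finally, your worry about $\ell s\ll 1$ is an artifact of a typo in the displayed inequality. The intervals actually needed downstream have width $2^{-k}\ge \Omega(1/s)$ (one per slope class $[2^{k-1},2^k]$ with slope $\le s$), ranging from the primitive scale up to $\Theta(1)$; there is no sub-primitive regime to handle, and your proposed restriction to $\ell\in[c/s,2/s]$ would cover only one scale rather than the full range the application requires.
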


\begin{proof}
    For technical convenience, we will consider a slightly different way of generating the strips $S_i$.
    Instead of dividing $[-1, 1] \times \R$ using exactly $s-1$ vertical lines, we will take a Poisson number
    (with mean $s-1$) of vertical lines. We will prove the claim for this modified model, with a probability
    estimate of at least $1 - \exp(-\Omega(\sqrt s))$, and since a Poisson random variable is equal to its mean
    with probability $\mathsf{poly}(1/s)$, the claim will also follow for the original model.

    Our first claim is that for $2^{-k} \le 1/\sqrt s$, each interval in $D_k$ has a constant
    probability of being $\Omega(1)$-balanced.
    First, consider the largest $k$ for which $2^{-k} \le 2/s$. In this case, the width of each $I \in D_k$
    is within a factor 2 of $s$ (we will call such an interval a
    \emph{primitive} interval.
    It is easy to verify that for each $I \in D_k$, there is a constant probability that $I$ will intersect
    exactly two strips, each taking up at least 1/3 of the width of $I$, and that these two strips
    will receive different labels $b_i$. Hence, there is a constant probability that $I$ is $1/3$-balanced.

    Now consider $k$ for which $2^{-k} \le 1 / \sqrt s$. Every $I \in D_k$ is made up of
    $\Theta(s 2^k)$ primitive intervals, each of which has a constant probability of being balanced.
    Moreover (thanks to our Poissonized model) the events that different primitive intervals are balanced
    are independent. By Chebyshev's inequality, there is a constant probability that at least a constant
    fraction of $I$'s primitive intervals are $1/3$-balanced, and so $I$ has a constant probability of being
    $\Omega(1)$-balanced.
    This proves our first claim (that for each $2^{-k} \le 1/\sqrt s$, each interval in $D_k$ has a constant
    probability of being $\Omega(1)$-balanced). Now, for each such $k$ there are at least $\Omega(\sqrt s)$
    such intervals, and so a Chernoff bound implies that with probability at least $1 - \exp(-\Omega(\sqrt s))$,
    at least a constant fraction of these intervals are balanced. Taking a union bound over $k$
    proves the claim whenever $2^{-k} \le 1/\sqrt s$.

    For smaller $k$, we claim that with high probability, \emph{every} $I \in D_k$ is balanced.
    Indeed, such $I \in D_k$ contain at least $\sqrt s$ primitive intervals, and so a Chernoff bound
    implies that with probability $1 - \exp(-\Omega(\sqrt s))$, at least a constant fraction of those
    primitive intervals are balanced, and so $I$ is balanced also. We can take a union bound over all $k$
    and all $I$.
\end{proof}

To complete the proof of Theorem~\ref{thm:far-from-1-junta}, it remains to show that with high probability,
every non-steep line is balanced.

\begin{lemma}
    There is a constant $c > 0$ such that if the absolute value of the slope of $\{x: \langle \phi^\perp, x \rangle = 0\}$
    is at most $s$ then with probability at least $1 - \mathsf{poly}(1/s)$, a $c$-fraction of $\ell \in L(\phi)$ are $c$-balanced.
\end{lemma}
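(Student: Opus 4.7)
The plan is to reduce each line's balancedness to a one-dimensional weighted question and then to attack two slope regimes separately.

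After rotating so that $\theta = e_1$, I would write $\phi = (\cos\psi, \sin\psi)$ with $m := \tan\psi$ and $|m|\le s$, and parametrize each $\ell_a \in L(\phi)$ as the graph $\{(x_1, mx_1 + c) : x_1 \in \R\}$ with $c = -a/\cos\psi$. A change of variables gives
\[
\int_{\ell_a \cap (U^\pm \times \R)} e^{-|x|^2/2}\,dx \;=\; \sqrt{1+m^2}\int_{U^\pm} w_c(x_1)\,dx_1,
\]
where $w_c(x_1) = e^{-(x_1^2+(mx_1+c)^2)/2}$ is, up to a multiplicative constant, a Gaussian in $x_1$ with mean $x_1^* = -mc/(1+m^2)$ (so $|x_1^*| = |a\sin\psi|$) and variance $\sigma^2 = 1/(1+m^2)$. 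I would then restrict attention to the ``well-positioned'' sub-family $|x_1^*| \le 1/2$, whose $a$-measure is $\min(1/|\sin\psi|, 20) \ge 1$, a constant fraction of $|L(\phi)| = 20$. For such $a$, a direct Gaussian tail estimate yields $\int_{-1}^1 w_c \ge \tfrac12 \int_\R w_c$, so it suffices to show $\int_{U^\pm} w_c \ge c \int_{-1}^1 w_c$ for a constant fraction of these lines.

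In the \emph{small-slope} regime $|m| \le s/\log^3 s$ (so $\sigma \ge \log^3 s / s$), I would use Hoeffding's inequality. Conditioning on the strip endpoints, $\int_{U^+} w_c = \sum_i 1_{b_i=1} Z_i^a$ is an independent sum with $Z_i^a = \int_{a_{i-1}}^{a_i} w_c$; using the standard high-probability bound $\max_i (a_i - a_{i-1}) = O(\log s / s)$ on uniform order statistics together with $\sum_i Z_i^a \ge \tfrac12 \sigma \sqrt{2\pi} \max w_c$, one obtains $(\sum Z_i^a)^2 / \sum (Z_i^a)^2 = \Omega(\log^2 s)$. Hoeffding then gives that each well-positioned line is $\tfrac18$-balanced with probability $1 - s^{-\Omega(\log s)}$ over the random signs, and a Fubini-plus-Markov argument on the expected measure of non-balanced well-positioned lines upgrades this to the claim that a constant fraction of lines in $L(\phi)$ are balanced, with probability $1 - \poly(1/s)$.

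In the \emph{large-slope} regime $|m| \in [s/\log^3 s, s]$, I would apply Lemma~\ref{lem:balanced-dyadics} at the dyadic scale $k^*$ with $2^{-k^*} = \min(2/s, \sigma/10)$, which is $\le 2/s$ by construction. Since $|\partial_{x_1} \log w_c| = (1+m^2)|x_1 - x_1^*| = O(\sqrt{1+m^2})$ on the essential support of $w_c$, the weight $w_c$ varies by at most a constant factor on each $I \in D_{k^*}$, so each $(1/C)$-balanced $I$ satisfies $\int_{I \cap U^\pm} w_c \ge c_2 \int_I w_c$. The main obstacle --- what I expect to be the hardest step --- is that a constant fraction of balanced intervals \emph{in count} does not a priori imply that they carry a constant fraction of the Gaussian-weighted mass $\int_{-1}^1 w_c$, since the balanced intervals could cluster outside the essential support $[x_1^* - O(\sigma), x_1^* + O(\sigma)]$ of $w_c$. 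I plan to resolve this by sharpening Lemma~\ref{lem:balanced-dyadics} to a \emph{local} statement: exploit the independence of the $b_i$'s across disjoint strips to apply a Chernoff bound, then union-bound over a cover of $[-1, 1]$ by windows of length $\Theta(\sigma)$, concluding that with probability $1 - \poly(1/s)$, inside every such window a constant fraction of the $D_{k^*}$-refinements are balanced. Summing $w_c$ against this local balancedness over the $O(1)$ windows inside the essential support then yields $\int_{U^\pm} w_c \ge c \int_\R w_c$ for every well-positioned $a$, completing the proof.
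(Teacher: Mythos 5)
Your reduction to the one--dimensional weighted problem and your small--slope Hoeffding argument (for $|m|\le s/\log^3 s$, conditioning on the gaps and using $\max_i(a_i-a_{i-1})=O(\log s/s)$) are sound for a fixed $\phi$. The genuine gap is the large--slope regime, and it sits exactly at the step you flagged as hardest. First, applying Lemma~\ref{lem:balanced-dyadics} at scale $2^{-k^*}=\min(2/s,\sigma/10)$ is not legitimate when $\sigma/10<2/s$: despite the typo in that lemma's hypothesis, its proof only handles intervals at least as wide as a typical strip (the ``primitive'' scale $\approx 2/s$); an interval narrower than a strip typically lies entirely inside one strip and cannot be balanced at all. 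Second, the proposed local strengthening --- ``with probability $1-\poly(1/s)$, \emph{every} window of length $\Theta(\sigma)$ contains a constant fraction of balanced refinements'' --- is false when $|m|=\Theta(s)$, i.e.\ $\sigma=\Theta(1/s)$: with probability $1-e^{-\Omega(s)}$ some strip has width $\ge 4/s$, and every $\Theta(\sigma)$-window inside it contains no balanced sub-interval whatsoever; and even for $|m|$ between roughly $s/\log^3 s$ and $s/\log s$, each window contains only $O(\mathrm{polylog}\,s)$ (or $O(\log s)$) primitive intervals, so the per-window Chernoff failure probability is at best $s^{-O(1)}$ and does not survive the union bound over $\Omega(s/\mathrm{polylog}\,s)$ windows. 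Third, the conclusion you are driving at, ``$\int_{U^\pm}w_c\ge c\int_\R w_c$ for \emph{every} well-positioned $a$,'' is itself false in this regime: when $\sigma\approx 1/s$, a constant fraction of well-positioned lines have essentially all of their $w_c$-mass inside a single strip (center at distance $\ge 3\sigma$ from every boundary), hence are not balanced. The lemma only claims a constant \emph{fraction} of lines, and in this regime that is all that is true.

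The paper avoids this entirely by not localizing to the line's Gaussian peak. For slope in $[2^{k-1},2^k]$ it looks at the segment where $\ell$ crosses the vertical strip over a balanced dyadic interval of width $2^{-k}$ while staying at heights in $[-3,0]$ (below $z\ge 0$): that segment has length $\Theta(1)$ and lies at bounded distance from the origin, so the portions over $U^+\cap I$ and $U^-\cap I$ each carry weight bounded below by an absolute constant, while the \emph{total} weight of the line is at most $\sqrt{2\pi}$; balancedness follows with no reference to where the line's mass concentrates, and only a constant fraction of $a\in[-10,10]$ need to hit such a rectangle. Two further advantages of that route are worth noting if you try to repair your argument: (i) the only probabilistic input is Lemma~\ref{lem:balanced-dyadics}, whose event does not depend on $\phi$, so the conclusion holds simultaneously for all $\phi$ --- which is what the proof of Theorem~\ref{thm:far-from-1-junta} actually needs, whereas your per-line Hoeffding/Markov argument is per-$\phi$ and would require an additional net over directions; (ii) keeping the relevant mass below the cut $x_2=z$ is what makes ``balanced'' deliver that $g$ genuinely takes both signs on the line, which your essential-support localization does not address.
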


\begin{proof}
    Choose $k$ so that the slope of all lines in $L(\phi)$ are between $2^{k-1}$ and $2^{k}$.
    Consider a rectangle of the form $Q = [j 2^{-k}, (j+1) 2^{-k}] \times [-2, -1]$, where
the interval $[j 2^{-k}, (j+1) 2^{-k}]$ is balanced. Since the slope of $\phi$
is at most $2^{k}$, if the line $\ell$ intersects the rectangle $Q$ then it
crosses the entire vertical strip $[j 2^{-k}, (j+1) 2^{-k}] \times \R$ within the horizontal strip $[-3, 0]$.
Since the interval $[j 2^{-k}, (j+1) 2^{-k}]$ is balanced, it follows that the line $\ell$ is also balanced.
(We're assuming here, without loss of generality, that $z \ge 0$).

Finally, it is easy to verify that if a constant fraction of the intervals $[j 2^{-k}, (j+1) 2^{-k}]$
are balanced then a constant fraction of $\ell \in L(\phi)$ intersect with some rectangle of the form above.
By Lemma~\ref{lem:balanced-dyadics}, this completes the proof.
\end{proof}

 
\bibliography{allrefs}

\newcommand{\etalchar}[1]{$^{#1}$}
\begin{thebibliography}{KNOW14}

\bibitem[Bak94]{Bakry:94}
D.~Bakry.
\newblock {\em L'hypercontractivit\'e et son utilisation en th\'eorie des
  semigroupes}, pages 1--114.
\newblock Springer, 1994.

\bibitem[BBBY12]{balcan2012active}
M.-F Balcan, E.~Blais, A.~Blum, and L.~Yang.
\newblock Active property testing.
\newblock In {\em {IEEE 53rd Annual Symposium onFoundations of Computer Science
  (FOCS), 2012 }}, pages 21--30, 2012.

\bibitem[BBM12]{blais2012property}
Eric Blais, Joshua Brody, and Kevin Matulef.
\newblock Property testing lower bounds via communication complexity.
\newblock {\em Computational Complexity}, 21(2):311--358, 2012.

\bibitem[BC14]{ba2014deep}
J.~Ba and R.~Caruana.
\newblock Do deep nets really need to be deep?
\newblock In {\em {Advances in Neural Information Processing Systems}}, pages
  2654--2662, 2014.

\bibitem[BCNM06]{bucilua2006model}
C.~Bucilu\u{a}, R.~Caruana, and A.~Niculescu-Mizil.
\newblock Model compression.
\newblock In {\em {Proceedings of the 12th ACM SIGKDD International Conference
  on Knowledge discovery and Data mining}}, pages 535--541, 2006.

\bibitem[BFH{\etalchar{+}}13]{bhattacharyya2013}
A.~Bhattacharyya, E.~Fischer, H.~Hatami, P.~Hatami, and S.~Lovett.
\newblock Every locally characterized affine-invariant property is testable.
\newblock In {\em Proceedings of the forty-fifth annual ACM symposium on Theory
  of computing}, pages 429--436. ACM, 2013.

\bibitem[BGS98]{belgolsud98}
Mihir Bellare, Oded Goldreich, and Madhu Sudan.
\newblock Free bits, {PCP}s, and nonapproximability--towards tight results.
\newblock {\em SIAM Journal on Computing}, 27(3):804--915, 1998.

\bibitem[BGS15]{bhattacharyya2015unified}
A.~Bhattacharyya, E.~Grigorescu, and A.~Shapira.
\newblock A unified framework for testing linear-invariant properties.
\newblock {\em {Random Structures \& Algorithms}}, 46(2):232--260, 2015.

\bibitem[BK97]{BlumKannan:97}
A.~Blum and R.~Kannan.
\newblock Learning an intersection of a constant number of halfspaces under a
  uniform distribution.
\newblock {\em Journal of Computer and System Sciences}, 54(2):371--380, 1997.

\bibitem[BKS{\etalchar{+}}10]{bhattacharyya2010optimal}
A.~Bhattacharyya, S.~Kopparty, G.~Schoenebeck, M.~Sudan, and D.~Zuckerman.
\newblock Optimal testing of reed-muller codes.
\newblock In {\em Foundations of Computer Science (FOCS), 2010 51st Annual IEEE
  Symposium on}, pages 488--497. IEEE, 2010.

\bibitem[BL97]{BlumLangley:97}
A.~Blum and P.~Langley.
\newblock Selection of relevant features and examples in machine learning.
\newblock {\em Artificial Intelligence}, 97(1-2):245--271, 1997.

\bibitem[BL13]{balcan2013active}
M.-F Balcan and P.~Long.
\newblock Active and passive learning of linear separators under log-concave
  distributions.
\newblock In {\em {Conference on Learning Theory}}, pages 288--316, 2013.

\bibitem[Bla09]{blais2009testing}
E.~Blais.
\newblock Testing juntas nearly optimally.
\newblock In {\em Proceedings of the forty-first annual ACM symposium on Theory
  of computing}, pages 151--158. ACM, 2009.

\bibitem[BLR93]{BLR93}
M.~Blum, M.~Luby, and R.~Rubinfeld.
\newblock Self-testing/correcting with applications to numerical problems.
\newblock {\em J. Comp. Sys. Sci.}, 47:549--595, 1993.
\newblock Earlier version in STOC'90.

\bibitem[Blu94]{Blum:94}
A.~Blum.
\newblock Relevant examples and relevant features: Thoughts from computational
  learning theory.
\newblock in AAAI Fall Symposium on `Relevance', 1994.

\bibitem[CFSS17]{chen2017sample}
X.~Chen, A.~Freilich, R.~Servedio, and T.~Sun.
\newblock Sample-based high-dimensional convexity testing.
\newblock {\em {Approximation, Randomization, and Combinatorial Optimization.
  Algorithms and Techniques}}, 2017.

\bibitem[CLS{\etalchar{+}}18]{CLSSX18}
X.~Chen, Z.~Liu, Rocco~A. Servedio, Y.~Sheng, and J.~Xie.
\newblock Distribution free junta testing.
\newblock In {\em Proceedings of the ACM STOC 2018}, 2018.

\bibitem[CS16]{chakrabarty2016n}
D.~Chakrabarty and C.~Seshadhri.
\newblock An o(n) monotonicity tester for boolean functions over the hypercube.
\newblock {\em {SIAM Journal on Computing}}, 45(2):461--472, 2016.

\bibitem[CST{\etalchar{+}}17]{Chen:2017:SQC}
Xi~Chen, Rocco~A. Servedio, Li-Yang Tan, Erik Waingarten, and Jinyu Xie.
\newblock Settling the query complexity of non-adaptive junta testing.
\newblock In {\em Proceedings of the 32Nd Computational Complexity Conference},
  pages 26:1--26:19, 2017.

\bibitem[DKS18]{diakonikolas2018learning}
I.~Diakonikolas, D.~Kane, and A.~Stewart.
\newblock Learning geometric concepts with nasty noise.
\newblock In {\em {Proceedings of the 50th Annual ACM SIGACT Symposium on
  Theory of Computing}}, pages 1061--1073, 2018.

\bibitem[FKR{\etalchar{+}}04]{FKRSS03}
E.~Fischer, G.~Kindler, D.~Ron, S.~Safra, and A.~Samorodnitsky.
\newblock Testing juntas.
\newblock {\em J. Computer \& System Sciences}, 68(4):753--787, 2004.

\bibitem[FLN{\etalchar{+}}02]{FLNRRS}
E.~Fischer, E.~Lehman, I.~Newman, S.~Raskhodnikova, R.~Rubinfeld, and
  A.~Samrodnitsky.
\newblock Monotonicity testing over general poset domains.
\newblock In {\em Proc.\ 34th Annual ACM Symposium on the Theory of Computing},
  pages 474--483, 2002.

\bibitem[GKM12]{gopalan2012learning}
P.~Gopalan, A.~Klivans, and R.~Meka.
\newblock Learning functions of halfspaces using prefix covers.
\newblock In {\em Conference on Learning Theory}, pages 15--1, 2012.

\bibitem[Gol17]{goldreich_2017}
Oded Goldreich.
\newblock {\em Introduction to Property Testing}.
\newblock Cambridge University Press, 2017.

\bibitem[GOS{\etalchar{+}}09]{gopalan2009testing}
Parikshit Gopalan, Ryan O’Donnell, Rocco~A Servedio, Amir Shpilka, and Karl
  Wimmer.
\newblock Testing fourier dimensionality and sparsity.
\newblock In {\em International Colloquium on Automata, Languages, and
  Programming}, pages 500--512. Springer, 2009.

\bibitem[Har19]{harms19}
N.~Harms.
\newblock Testing halfspaces over rotationally invariant distributions.
\newblock In {\em {Proceedings of SODA 2019}}, 2019.

\bibitem[HKM12]{harsha2012invariance}
P.~Harsha, A.~Klivans, and R.~Meka.
\newblock An invariance principle for polytopes.
\newblock {\em {Journal of the ACM (JACM)}}, 59(6):29, 2012.

\bibitem[JPRZ04]{jutpatrudzuc04}
Charanjit~S. Jutla, Anindya~C. Patthak, Atri Rudra, and David Zuckerman.
\newblock Testing low-degree polynomials over prime fields.
\newblock In {\em Proc.\ 45th IEEE Symposium on Foundations of Computer Science
  (FOCS)}, pages 423--432. IEEE Computer Society Press, 2004.

\bibitem[KMS15]{khot2015monotonicity}
S.~Khot, D.~Minzer, and M.~Safra.
\newblock {On monotonicity testing and Boolean isoperimetric type theorems}.
\newblock In {\em Foundations of Computer Science (FOCS), 2015 IEEE 56th Annual
  Symposium on}, pages 52--58. IEEE, 2015.

\bibitem[KNOW14]{kothari2014testing}
P.~Kothari, A.~Nayyeri, R.~O'Donnell, and C.~Wu.
\newblock Testing surface area.
\newblock In {\em {Proceedings of the twenty-fifth annual ACM-SIAM symposium on
  Discrete algorithms}}, pages 1204--1214. SIAM, 2014.

\bibitem[KOS02]{KOS:02}
A.~Klivans, R.~O'Donnell, and R.~Servedio.
\newblock Learning intersections and thresholds of halfspaces.
\newblock In {\em Proceedings of the 43rd Annual Symposium on Foundations of
  Computer Science}, pages 177--186, 2002.

\bibitem[KOS08]{KOS:08}
A.~Klivans, R.~O'Donnell, and R.~Servedio.
\newblock Learning geometric concepts via {G}aussian surface area.
\newblock In {\em Proc.\ 49th IEEE Symposium on Foundations of Computer Science
  (FOCS)}, pages 541--550, 2008.

\bibitem[KS08]{KaufmanSudan:08}
T.~Kaufman and M.~Sudan.
\newblock Algebraic property testing: the role of invariance.
\newblock In {\em Proc.\ 40th Annual ACM Symposium on Theory of Computing
  (STOC)}, pages 403--412, 2008.

\bibitem[Led94]{Ledoux:94}
M.~Ledoux.
\newblock Semigroup proofs of the isoperimetric inequality in {E}uclidean and
  {G}auss space.
\newblock {\em Bull. Sci. Math.}, 118:485--510, 1994.

\bibitem[LT91]{LedouxTalagrand}
M.~Ledoux and M.~Talagrand.
\newblock {\em Probability in Banach Spaces}.
\newblock Springer, 1991.

\bibitem[Mag12]{maggi2012sets}
Francesco Maggi.
\newblock {\em {Sets of finite perimeter and geometric variational problems: an
  introduction to Geometric Measure Theory}}.
\newblock Number 135. {Cambridge University Press}, 2012.

\bibitem[McS34]{mcshane34}
Edward~James McShane.
\newblock Extension of range of functions.
\newblock {\em Bulletin of the American Mathematical Society}, 40(12):837--842,
  1934.

\bibitem[MN15]{mossel2015robust}
E.~Mossel and J.~Neeman.
\newblock Robust optimality of gaussian noise stability.
\newblock {\em {Journal of the European Mathematical Society}}, 17(2):433--482,
  2015.

\bibitem[MORS09]{MORS:09}
K.~Matulef, R.~O'Donnell, R.~Rubinfeld, and R.~Servedio.
\newblock Testing halfspaces.
\newblock In {\em Proc.\ 20th Annual Symposium on Discrete Algorithms (SODA)},
  2009.

\bibitem[MORS10]{MORS:10}
K.~Matulef, R.~O'Donnell, R.~Rubinfeld, and R.~Servedio.
\newblock Testing halfspaces.
\newblock {\em SIAM J. on Comput.}, 39(5):2004--2047, 2010.

\bibitem[Nee14]{neeman2014testing}
J.~Neeman.
\newblock Testing surface area with arbitrary accuracy.
\newblock In {\em {Proceedings of the forty-sixth annual ACM symposium on
  Theory of computing}}, pages 393--397. ACM, 2014.

\bibitem[O'D14]{o2014analysis}
Ryan O'Donnell.
\newblock {\em {Analysis of Boolean Functions}}.
\newblock Cambridge University Press, 2014.

\bibitem[Pis86]{Pisier:86}
G.~Pisier.
\newblock {\em Probabilistic methods in the geometry of {B}anach spaces}, pages
  167--241.
\newblock Springer, 1986.

\bibitem[PRS02]{PRS02}
M.~Parnas, D.~Ron, and A.~Samorodnitsky.
\newblock Testing basic boolean formulae.
\newblock {\em SIAM J. Disc. Math.}, 16:20--46, 2002.

\bibitem[R{\etalchar{+}}10]{ron2010algorithmic}
Dana Ron et~al.
\newblock Algorithmic and analysis techniques in property testing.
\newblock {\em Foundations and Trends{\textregistered} in Theoretical Computer
  Science}, 5(2):73--205, 2010.

\bibitem[RS96]{RS96}
R.~Rubinfeld and M.~Sudan.
\newblock Robust characterizations of polynomials with applications to program
  testing.
\newblock {\em SIAM J. on Comput.}, 25:252--271, 1996.

\bibitem[RS15]{RS:13}
D.~Ron and R.~Servedio.
\newblock {Exponentially Improved Algorithms and Lower Bounds for Testing
  Signed Majorities}.
\newblock {\em Algorithmica}, 72(2):400--429, 2015.

\bibitem[Sch92]{schmitt1992perturbation}
B.~Schmitt.
\newblock {Perturbation bounds for matrix square roots and Pythagorean sums}.
\newblock {\em {Linear algebra and its applications}}, 174:215--227, 1992.

\bibitem[Ste73]{stewart1973introduction}
G.~Stewart.
\newblock {\em Introduction to matrix computations}.
\newblock Elsevier, 1973.

\bibitem[Vem10a]{vempala2010learning}
S.~Vempala.
\newblock {Learning convex concepts from gaussian distributions with PCA}.
\newblock In {\em 2010 IEEE 51st Annual Symposium on Foundations of Computer
  Science}, pages 124--130, 2010.

\bibitem[Vem10b]{vempala2010random}
S.~Vempala.
\newblock A random-sampling-based algorithm for learning intersections of
  halfspaces.
\newblock {\em {Journal of the ACM (JACM)}}, 57(6):32, 2010.

\bibitem[VX13]{VX13}
S.~Vempala and Y.~Xiao.
\newblock {Complexity of learning subspace juntas and {ICA}}.
\newblock In {\em 2013 Asilomar Conference on Signals, Systems and Computers,
  Pacific Grove, CA, USA, November 3-6, 2013}, pages 320--324, 2013.

\end{thebibliography}
\bibliographystyle{alpha}

\appendix
\section{Small net for noise attenuated linear juntas}~\label{sec:noise-attenuated}
In this section, we are going to prove the following theorem which essentially shows the existence of a small cover for noise stable linear juntas. 
{To state this theorem, we will require one crucial fact about noise attenuated functions (due to Bakry and Ledoux~\cite{Bakry:94}) 
\begin{lemma}~\label{prop:gradient-bound}~
Let $f: \mathbb{R}^n \rightarrow [-1,1]$. Then, $P_t f$ is $C_t$-Lipschitz for $C_t = O(t^{-1/2})$. 
\end{lemma}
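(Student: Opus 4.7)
The plan is to reduce the claim directly to Proposition~\ref{prop:derivative-bound}, which already provides the pointwise gradient bound we need. The idea is that for a differentiable function, the Lipschitz constant equals the supremum of the operator norm of its gradient. Since $P_t f$ is smooth (the Mehler representation $P_t f(x) = \int f(e^{-t}x + \sqrt{1-e^{-2t}}y)\,d\gamma_n(y)$ writes $P_t f$ as a convolution with a smooth Gaussian kernel, which is differentiable to all orders), the mean-value theorem gives
$$
|P_t f(x) - P_t f(x')| \le \Big(\sup_{y \in \mathbb{R}^n} \|D(P_t f)(y)\|_2\Big) \cdot \|x - x'\|_2,
$$
so it suffices to produce a uniform bound on $\|D(P_t f)(y)\|_2$.

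Next I would invoke Proposition~\ref{prop:derivative-bound} verbatim: for any $f : \mathbb{R}^n \to [-1,1]$ it asserts $\|D(P_t f)(y)\|_2 \le (e^{2t}-1)^{-1/2}$ for every $y$. Plugging this into the displayed inequality shows that $P_t f$ is $C_t$-Lipschitz with $C_t = (e^{2t}-1)^{-1/2}$. It remains only to verify the asymptotic $C_t = O(t^{-1/2})$. For $t \in (0,1]$ the elementary bound $e^{2t} - 1 \ge 2t$ gives $C_t \le 1/\sqrt{2t}$; for $t \ge 1$ one has $C_t \le e^{-t}$, which is dominated by a constant multiple of $t^{-1/2}$. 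Combining the two regimes yields the claim.

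The only delicate point is ensuring the chain of implications works when $f$ is merely bounded and measurable rather than smooth. I do not expect this to be a real obstacle: smoothness of $P_t f$ holds because one can differentiate under the integral sign in the Mehler representation (the Gaussian density is Schwartz-class, and $f$ is bounded, so dominated convergence justifies differentiation), and the identity underlying Proposition~\ref{prop:derivative-bound} — namely Lemma~\ref{lem:derivative-shift} obtained via Gaussian integration by parts — is already proved in the paper for general bounded $f$ by the approximation argument recorded there. Thus the full Lipschitz bound $C_t = O(t^{-1/2})$ follows without further technical work.
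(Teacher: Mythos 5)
Your argument is correct, but it follows a different route from the paper: the paper does not prove Lemma~\ref{prop:gradient-bound} at all, it simply cites it as a known smoothing estimate of Bakry and Ledoux~\cite{Bakry:94}, whereas you derive it internally from Proposition~\ref{prop:derivative-bound}. Your chain of reasoning is sound and non-circular: Proposition~\ref{prop:derivative-bound} (via Lemma~\ref{lem:derivative-shift}, i.e.\ Gaussian integration by parts plus the trivial bound $\Vert \mathcal{W}_1(f_{t,y})\Vert_2 \le 1$ for $[-1,1]$-valued functions) gives the uniform gradient bound $\Vert D(P_t f)(y)\Vert_2 \le (e^{2t}-1)^{-1/2}$, smoothness of $P_t f$ justifies the mean-value step, and $e^{2t}-1 \ge 2t$ for all $t>0$ already gives $C_t \le (2t)^{-1/2}$ without any case split (your claim $C_t \le e^{-t}$ for $t \ge 1$ is off by a constant, since $(e^{2t}-1)^{-1/2} > e^{-t}$, but this is immaterial for the $O(t^{-1/2})$ conclusion). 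Notably, your constant $(e^{2t}-1)^{-1/2} = e^{-t}/\sqrt{1-e^{-2t}}$ coincides with the sharp classical Ornstein--Uhlenbeck bound, so nothing is lost. What the paper's citation buys is generality and brevity -- the Bakry--Ledoux inequality holds in much broader semigroup settings -- while your derivation buys self-containedness: it reuses machinery the paper has already set up for bounded measurable $f$ (including the approximation argument in Lemma~\ref{lem:derivative-shift}), so no external black box is needed.
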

For the rest of this section, we are going to use $C_t$ to denote this quantity. 
}
We can now state the main theorem of this section. 
\begin{theorem}~\label{thm:net}
For any error parameter $\delta>0$, noise parameter $t>0$ and $k \in \mathbb{N}$, there is a set of functions $\mathsf{Cover}(t,k,\delta)$ (mapping $\mathbb{R}^k$ to $[-1,1]$) such that the following holds: 
\begin{enumerate}
\item Let $f: \mathbb{R}^n \rightarrow [-1,1]$ and $W$ be a $k$-dimensional space such that $P_t f$ is $\delta$-close to a $W$-junta. Further, $(w_1, \ldots, w_k)$ be any orthonormal basis of $W$. Then, $P_t f$ is $3\delta$-close to $h(\langle w_1, x \rangle, \ldots, \langle w_k, x \rangle)$ for some $h \in \mathsf{Cover}(t,k,\delta)$. 
\item Every function in $\mathsf{Cover}(t, k, \delta)$ is $2C_t$-Lipschitz.
\item $\log |\mathsf{Cover}(t, k, \delta)| \le \left(\frac{C \sqrt k \log^2(1/\delta)}{\delta \sqrt t}\right)^k$.
\end{enumerate}
\end{theorem}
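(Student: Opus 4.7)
The plan is to construct $\mathsf{Cover}(t,k,\delta)$ as a discrete $L_1(\gamma_k)$-net for the class of $C_t$-Lipschitz functions $\R^k \to [-1,1]$, and reduce the theorem to this via a conditional-expectation projection onto $W$-juntas. Let $\pi_W$ be the orthogonal projection onto $W$ and define $g(x) := \E_{z \sim \gamma_{W^\perp}}[P_tf(\pi_W x + z)]$. Then $g$ is a $W$-junta, and by Jensen applied to Lemma~\ref{prop:gradient-bound} it is $C_t$-Lipschitz. If $g'$ is any $W$-junta with $\E[|P_tf - g'|] \le \delta$, then using that $\pi_W x + z$ has the same law as $x$ (for $z \sim \gamma_{W^\perp}$ independent of $x \sim \gamma_n$) and that $g'$ depends only on $\pi_W$, we obtain $\E|g - g'| \le \delta$; the triangle inequality then gives $\E|P_tf - g| \le 2\delta$.

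For any orthonormal basis $(w_1,\dots,w_k)$ of $W$, set $\tilde g: \R^k \to [-1,1]$ by $\tilde g(y) = g(\sum_i y_i w_i)$; this is $C_t$-Lipschitz since $(w_i)$ is orthonormal. Rotational invariance of $\gamma_n$ ensures that $(\langle w_1, x\rangle, \dots, \langle w_k, x\rangle) \sim \gamma_k$, so any $h \in \mathsf{Cover}$ satisfying $\E_{y\sim\gamma_k}|\tilde g(y) - h(y)| \le \delta$ automatically yields $\E|P_tf - h(\langle w_1,\cdot\rangle,\dots,\langle w_k,\cdot\rangle)| \le 3\delta$ by combining with the previous step. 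Thus it suffices to build a $\delta$-net in $L_1(\gamma_k)$ for $C_t$-Lipschitz functions $\R^k \to [-1,1]$.

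To construct the net, fix $R = \Theta(\sqrt{k + \log(1/\delta)})$ so that $\Pr_{y \sim \gamma_k}[\|y\|_2 > R] \le \delta/4$, choose grid spacing $\eta \asymp \delta/C_t$, and set $\Lambda = (\eta\Z)^k \cap B(0,R)$. For each labeling $\varphi: \Lambda \to \{-1, -1 + \Theta(\delta), \dots, 1\}$ define $h_\varphi(x) := \max(-1, \min(1, \inf_{p \in \Lambda}[\varphi(p) + 2C_t\|x-p\|_2]))$, the clipped $\ell_2$-McShane extension. A standard argument shows $h_\varphi$ is automatically $2C_t$-Lipschitz and $[-1,1]$-valued, and when $\varphi$ is $2C_t$-Lipschitz on $\Lambda$ (as a function of the discrete $\ell_2$ metric) the infimum is attained at $p$ itself, so $h_\varphi|_\Lambda = \varphi$. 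For any $C_t$-Lipschitz $\tilde g$, snapping the values of $\tilde g$ on $\Lambda$ to the nearest quantization level produces a $\varphi$ with adjacent differences at most $C_t\|p-q\|_2 + \delta/2 \le 2 C_t\|p-q\|_2$ (using $\|p-q\|_2 \ge \eta$), so the resulting $h_\varphi$ approximates $\tilde g$ to pointwise error $O(\delta)$ inside $B(0,R)$ and contributes $O(\delta)$ from the tail.

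For the cardinality, $|\Lambda| \le (2R/\eta)^k = O(R C_t / \delta)^k$ and each site carries $O(1/\delta)$ labels, so $\log|\mathsf{Cover}| \le |\Lambda| \log(1/\delta)$. Substituting $R = O(\sqrt{k\log(1/\delta)})$, $C_t = O(1/\sqrt t)$, and absorbing the outer $\log(1/\delta)$ into the base (using $k \ge 1$ so that $\log^{k/2 + 1}(1/\delta) \le \log^{2k}(1/\delta)$) gives the claimed $(C\sqrt k\, \log^2(1/\delta)/(\delta\sqrt t))^k$ bound. The main technical obstacle is preserving the $2 C_t$ Lipschitz constant on cover functions while keeping the grid count under control: naive simplicial or nearest-neighbor interpolations inflate the gradient by a $\sqrt k$ factor, but the $\ell_2$-McShane extension sidesteps this cleanly, requiring only a constant-factor refinement of $\eta$ to accommodate the quantization snap.
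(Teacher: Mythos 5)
Your overall route is the same as the paper's: reduce to a cover of $C_t$-Lipschitz functions on $\mathbb{R}^k$ via (i) the Bakry--Ledoux bound (Lemma~\ref{prop:gradient-bound}), (ii) a conditional-expectation projection onto $W$-juntas that preserves the Lipschitz constant and at most doubles the distance (this is exactly the paper's Lemma~\ref{lem:Lip-1}), and (iii) a discretized-values-plus-McShane-extension net on a ball of radius $\Theta(\sqrt{k}\log(1/\delta))$ (the paper's Lemma~\ref{lem:net-1}). Steps (i)--(ii) and the $2\delta+\delta=3\delta$ bookkeeping are fine.

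The gap is in the quantitative accounting of the net. With grid $\Lambda=(\eta\Z)^k\cap B(0,R)$ and spacing $\eta\asymp\delta/C_t$, the covering radius of $\Lambda$ is $\eta\sqrt{k}/2$, not $\eta$: a typical Gaussian point is at $\ell_2$-distance $\Theta(\eta\sqrt{k})$ from $\Lambda$. Hence the clipped inf-convolution does \emph{not} achieve pointwise (or $L_1(\gamma_k)$) error $O(\delta)$ inside $B(0,R)$; already for $\tilde g\equiv 0$ one gets $h_\varphi(x)=2C_t\,\mathrm{dist}(x,\Lambda)$, whose expectation is $\Theta(C_t\eta\sqrt{k})=\Theta(\delta\sqrt{k})$. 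The $\ell_2$-McShane extension avoids inflating the Lipschitz constant, but it cannot avoid this covering-radius loss, which hits the approximation error rather than the gradient. If you repair it by refining to $\eta\asymp\delta/(C_t\sqrt{k})$ while keeping your cube count $|\Lambda|\le(2R/\eta)^k$, the bound becomes $(CkC_t\sqrt{\log(1/\delta)}/\delta)^k$, which exceeds the claimed $\bigl(C\sqrt{k}\log^2(1/\delta)/(\delta\sqrt{t})\bigr)^k$ by a factor $k^{k/2}$ that cannot be absorbed. The fix is either to count the refined grid points inside $B(0,R)$ via the ball's volume, $\mathrm{vol}(B(0,R))\le (C/\sqrt{k})^k R^k$, which restores a count of $(CRC_t/\delta)^k$, or to do what the paper does: take a maximal $\delta/(2C_t)$-separated subset of $B(0,R)$, which is automatically a $\delta/C_t$-net of the ball and has cardinality at most $(CRC_t/\delta)^k$ by the standard volumetric packing bound; with either repair the rest of your argument goes through.
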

The proof of this theorem relies on the following two lemmas. 
\begin{lemma}~\label{lem:net-1}
For any $L>0$, error parameter  $\delta>0$ and $k \in \mathbb{N}$, there is a set $\mathsf{Cover}_{k,L,\delta}$  consisting of functions mapping $\mathbb{R}^k \mapsto [-1,1]$ such that the following holds:
\begin{enumerate}
\item For every $g: \mathbb{R}^k \rightarrow [-1,1]$ which is $L$-Lipschitz, there is a function $h \in \mathsf{Cover}_{k,L,\delta}$ such that $\mathbf{E}[|g(x) - h(x)|] \leq \delta$. 
\item Every function in $\mathsf{Cover}_{k,L,\delta}$ is $2L$-Lipschitz.
\item $\log |\mathsf{Cover}_{k,L,\delta}| \le \left(\frac{C L \sqrt k \log^2(1/\delta)}{\delta}\right)^k$.
\end{enumerate}
\end{lemma}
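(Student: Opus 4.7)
The plan is a standard two-stage discretization. First, truncate the problem to a Euclidean ball $B_R \subset \mathbb{R}^k$ of radius $R = C_1\sqrt{k + \log(1/\delta)}$, chosen via the usual Gaussian concentration estimate so that $\gamma_k(B_R^c) \le \delta/8$; because all candidates are bounded in $[-1,1]$, this tail contributes at most $\delta/4$ to any $L^1(\gamma_k)$ distance and can be ignored. Second, discretize both the domain and the range: take the lattice $\mathcal{G} = \eta\mathbb{Z}^k \cap B_R$ with spacing $\eta = c\delta/(L\sqrt{k})$ (so that cells of diameter $\eta\sqrt{k}$ have $L$-Lipschitz oscillation $\le \delta/4$), and a value set $V = \{-1, -1+\delta', \dots, 1\}$ of quantized levels with $\delta' := \eta L/4$. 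Let $N := |\mathcal{G}| \le (C_2 R/\eta)^k$.

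For each $\mathbf{v} \in V^{\mathcal{G}}$, define the candidate
\[
h_{\mathbf{v}}(x) \;=\; \mathsf{clip}_{[-1,1]}\!\Bigl(\inf_{g_i \in \mathcal{G}} \bigl(v_i + 2L\,\|x - g_i\|\bigr)\Bigr),
\]
which is automatically $2L$-Lipschitz (clipping preserves the Lipschitz constant). Let $\mathsf{Cover}_{k,L,\delta}$ consist of all such $h_{\mathbf{v}}$. Given any $L$-Lipschitz $g : \mathbb{R}^k \to [-1,1]$, define $\mathbf{v}$ by rounding $g(g_i)$ to the nearest value in $V$, so that $|v_i - g(g_i)| \le \delta'/2$. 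For any two grid points, $|v_i - v_j| \le L\|g_i - g_j\| + \delta' \le 2L\|g_i - g_j\|$, where the last inequality uses $\delta' = \eta L/4$ and $\|g_i - g_j\| \ge \eta$ for distinct lattice points. This "$2L$-consistency" of $\mathbf{v}$ guarantees that $h_{\mathbf{v}}(g_i) = v_i$, and combining the bounds $|h_{\mathbf{v}}(x) - h_{\mathbf{v}}(g_{i(x)})| \le 2L\cdot\eta\sqrt{k}/2$, $|h_{\mathbf{v}}(g_{i(x)}) - g(g_{i(x)})| \le \delta'/2$, and $|g(g_{i(x)}) - g(x)| \le L\eta\sqrt{k}/2$, one obtains $\|h_{\mathbf{v}} - g\|_{L^\infty(B_R)} = O(L\eta\sqrt k + \delta') \le \delta/2$, which together with the tail contribution yields $\mathbb{E}|g - h_{\mathbf{v}}| \le \delta$.

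The cardinality is bounded crudely by $|\mathsf{Cover}_{k,L,\delta}| \le |V|^N$, so
\[
\log|\mathsf{Cover}_{k,L,\delta}| \;\le\; N\log|V| \;\le\; \bigl(C_2 R/\eta\bigr)^k \log(3/\delta')
\;\le\; \Bigl(\tfrac{C L \sqrt{k}\log^2(1/\delta)}{\delta}\Bigr)^{\!k},
\]
after absorbing the factor $\sqrt{k + \log(1/\delta)}$ from $R$ and the outer $\log(k/\delta)$ factor into the $\log^2(1/\delta)$ inside the parentheses (using $k \ge 1$ and elementary manipulation). The main technical point, and the only place the argument is not entirely routine, is the balance between the grid spacing $\eta$ and the quantization step $\delta'$: we need $\delta'$ small enough that the rounding error contributes $O(\delta)$, yet large enough that the rounded values are $2L$-Lipschitz consistent rather than merely $L$-Lipschitz consistent, so the McShane--Whitney extension lies in the declared class. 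The choice $\delta' = \eta L/4$ achieves both simultaneously, and ensures the extension agrees with $\mathbf{v}$ on $\mathcal{G}$, which is what is needed for the chain of triangle inequalities above.
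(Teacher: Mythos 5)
Your construction follows essentially the same route as the paper's proof: truncate to a ball capturing all but a $\delta$-fraction of the Gaussian mass, discretize both the domain and the range, and extend the discretized values by the McShane inf-convolution, which is automatically $2L$-Lipschitz. Your check that the rounded values are ``$2L$-consistent'' (so the extension interpolates them exactly) plays the same role as the paper's observation that rounding to multiples of $\delta/100$ perturbs values by far less than $L$ times the $\delta/(2L)$-separation of its net. Items 1 and 2 of the lemma are correctly established by your argument (modulo the harmless boundary issue that a point of $B_R$ near the sphere may have its nearest lattice point just outside $B_R$, fixed by taking the lattice inside a slightly larger ball).

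The gap is in item 3. Because your domain discretization is an axis-aligned lattice, you must take spacing $\eta = c\delta/(L\sqrt{k})$ so that each cell has Euclidean diameter $O(\delta/L)$, and you then bound the number of grid points by $(C_2 R/\eta)^k$ with $R = C_1\sqrt{k+\log(1/\delta)}$. That gives $N \le \bigl(C L \sqrt{k}\,\sqrt{k+\log(1/\delta)}/\delta\bigr)^k$, and the factor $\sqrt{k+\log(1/\delta)} \ge \sqrt{k}$ cannot be ``absorbed into $\log^2(1/\delta)$'' as you claim, since $\log^2(1/\delta)$ does not grow with $k$: for fixed $\delta$ (say $\delta = 1/100$) and $k \to \infty$, your bound exceeds the claimed $\bigl(CL\sqrt{k}\log^2(1/\delta)/\delta\bigr)^k$ by a factor $k^{\Omega(k)}$. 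The fix is standard and is exactly what the paper does: replace the lattice by a maximal $\delta/(2L)$-packing (hence a $\delta/L$-net) of the Euclidean ball, whose cardinality is $(CLR/\delta)^k$ with no extra $k^{k/2}$; alternatively, keep your lattice but count only the lattice points inside the Euclidean ball using $\mathrm{vol}(B_R) \approx (CR/\sqrt{k})^k\,$ rather than the bounding cube. With either correction, the remaining $\log(C\sqrt{k}/\delta)$ factor coming from the range discretization is absorbed by the extra power of $\log(1/\delta)$, just as in the paper, and the stated bound follows.
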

\begin{proof}
    Let $\mathcal{B}  = \{x: \Vert x \Vert_2 \le \sqrt{k} \cdot \log
    (100/\delta)\}$. Let $\mathcal{A}$ be a maximal $\delta/(2L)$-packing of
    $\mathcal{B}$ (that is, a maximal subset of $\mathcal{B}$ such that any two
    distinct points in $\mathcal{A}$ are at least $\delta/(2L)$ apart.
    It is well-known (see, e.g.~\cite{LedouxTalagrand}) that $\mathcal{A}$ is a $\delta/L$-net
    of $\mathcal{B}$ and that $|\mathcal{A}| \le (C L \sqrt k \log (1/\delta)/\delta)^k$
    (the $\sqrt k \log (1/\delta)$ term comes from the diameter of $\mathcal{B}$.

    For $f: \mathbb{R}^n \rightarrow [-1,1]$, we now define $f_{\mathsf{int}}: \mathcal{A} \to [-1, 1]$ by
    simply rounding $f$ to the nearest integer multiple of $\delta/100$. To check the Lipschitz
    constant of $f_\mathsf{int}$, note that if $x, y \in \mathcal{A}$ then
    \[
        |f_{\mathsf{int}}(x) - f_{\mathsf{int}}(y)| \le |f(x) - f(y)| + \delta/50
        \le L \|x - y\| + \frac{L}{25} \|x - y\|,
    \]
    where the last inequality used the fact that $f$ is $L$-Lipschitz and that
    every pair of points in $\mathcal{A}$ is $\delta/(2L)$-separated.
    In particular, $f_{\mathsf{int}}$ is $2L$-Lipschitz.
    Let $\mathsf{Cover}'$ be the set of all functions $f_{\mathsf{int}}$ obtained
    in this way. Then the size of $\mathsf{Cover}'$ is at most $\exp((C L \sqrt k \log^2(1/\delta) \delta^{-1})^k)$,
    because there are at most $C/\delta$ choices for the value of each point, and there are $|\mathcal{A}|$ points.
    Finally, we construct $\mathsf{Cover}_{k,L,\delta}$ by extending each function in $\mathsf{Cover}'$
    to a function $\R^n \to [-1, 1]$. McShane's Lemma~\cite{mcshane34} implies that this extension can be done
    without increasing its Lipschitz constant. Hence, properties 2 and 3 hold.

    To check property 1, note that if $x \in \mathcal{B}$ and $y \in \mathcal{A}$ is the closest point to $x$
    then
    \[
        |f(x) - f_{\mathsf{int}}(x)|
        \le |f(x) - f(y)| + |f(y) - f_{\mathsf{int}}(y)| + |f_{\mathsf{int}}(y) - f_{\mathsf{int}}(x)|
        \le 3L \|x - y\| + \delta/100 \le 4\delta.
    \]
    It then follows that
    \[
    \mathbf{E}[|f(x) - f_{\mathsf{int}}(x)|] \le 2 \cdot \Pr[x \not \in \mathcal{B}] + \max_{x \in \mathcal{B}} [|f(x) - f_{\mathsf{int}}(x)|] \le \delta + 4 \delta \le 5 \delta.
    \]
The last inequality just follows from the fact that a $k$-dimensional standard Gaussian is in a ball of radius $\sqrt{k} \log (1/\delta)$ with probability $1-\delta/2$. This proves property 1 modulo the constant $5$, which can be dropped by redefining $\delta$.
\end{proof}

\begin{lemma}~\label{lem:Lip-1}
Let $f: \mathbb{R}^n \rightarrow [-1,1]$ be a $C$-Lipschitz function. Further, for $\kappa>0$, let $g: \mathbb{R}^n \rightarrow [-1,1]$ be a $W$-junta such that $f$ is $\kappa$-close to $g$. Then, there is a function $f_W:\mathbb{R}^n \rightarrow [-1,1]$ which is $C$-Lipschitz and $W$-junta which is $2\kappa$-close to $f$. 
\end{lemma}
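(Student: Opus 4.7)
The plan is to construct $f_W$ by conditional Gaussian averaging along $W^\perp$: let $\Pi_W$ denote the orthogonal projection onto $W$, let $\gamma_{W^\perp}$ be the standard Gaussian on $W^\perp$, and define
\[
f_W(x) := \mathop{\mathbf{E}}_{z \sim \gamma_{W^\perp}} \big[f(\Pi_W x + z)\big].
\]
By construction $f_W$ depends only on $\Pi_W x$, so it is a $W$-junta, and since $f$ takes values in $[-1,1]$ and the definition is an average, $f_W$ also takes values in $[-1,1]$.

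For the Lipschitz estimate, I would apply Jensen's inequality inside the expectation together with the bound $\|\Pi_W x - \Pi_W y\|_2 \le \|x - y\|_2$:
\[
|f_W(x) - f_W(y)| \le \mathop{\mathbf{E}}_{z}\big[|f(\Pi_W x + z) - f(\Pi_W y + z)|\big] \le C \|\Pi_W x - \Pi_W y\|_2 \le C \|x-y\|_2,
\]
giving the desired Lipschitz constant $C$.

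For the closeness bound, the key observation is that since $g$ is a $W$-junta, $g(\Pi_W x + z) = g(x)$ for every $z \in W^\perp$. Therefore
\[
|f_W(x) - g(x)| = \Big|\mathop{\mathbf{E}}_{z}\big[f(\Pi_W x + z) - g(\Pi_W x + z)\big]\Big| \le \mathop{\mathbf{E}}_{z}\big[|f - g|(\Pi_W x + z)\big].
\]
Taking expectation over $x \sim \gamma_n$ and using that $\gamma_n$ factorizes as $\gamma_W \otimes \gamma_{W^\perp}$ (so that $(\Pi_W x + z)$ with $x \sim \gamma_n$ and an independent $z \sim \gamma_{W^\perp}$ has the same law as $x \sim \gamma_n$ itself) gives $\mathbf{E}|f_W - g| \le \mathbf{E}|f - g| \le \kappa$. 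The triangle inequality then yields $\mathbf{E}|f - f_W| \le \mathbf{E}|f-g| + \mathbf{E}|g - f_W| \le 2\kappa$.

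I do not anticipate any real obstacle; the only thing to be careful about is invoking the product structure of $\gamma_n$ correctly when averaging out $W^\perp$, which is exactly what makes the $L_1$ distance not degrade under the projection operation.
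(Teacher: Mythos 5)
Your proposal is correct and follows essentially the same route as the paper: both define $f_W$ by averaging $f$ over the Gaussian measure on $W^\perp$ (the paper just rotates coordinates so $W$ is spanned by the first axes), use Jensen's inequality plus the fact that projection onto $W$ is a contraction to preserve the Lipschitz constant $C$, and combine the bound $\mathbf{E}|f_W - g| \le \mathbf{E}|f-g| \le \kappa$ with the triangle inequality to get $2\kappa$-closeness. No gaps; the product-structure step you flag is exactly the point the paper also relies on.
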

\begin{proof}
Reorient the axes so that $W$ is the space spanned by the first $\ell$-axes. Let us define the $W$-junta $f_{W}: \mathbb{R}^n \rightarrow [-1,1]$ defined as 
\[
f_W(x) = \mathbf{E}_{y_{\ell+1}, \ldots, y_n} [f(x_1, \ldots, x_\ell, y_{\ell+1}, \ldots, y_n)
\]
For any fixed choice of $x_1, \ldots, x_\ell$, we have
\[
\mathbf{E}_{x_{\ell+1}, \ldots, x_n}[|f(x) - f_W(x)|] \leq \mathbf{E}_{x_{\ell+1}, \ldots, x_n}[|f(x) - g(x)|]  +|g(x) - f_W(x)|. 
\]
However, the second term can be bounded as 
\[
|g(x) - f_W(x)| = \big| g(x) - \mathbf{E}{x_{\ell+1}, \ldots, x_n}[f(x_1, \ldots,x_\ell, x_{\ell+1} , \ldots, x_n)] \big|  \le \mathbf{E}{x_{\ell+1}, \ldots, x_n} \big[ \big| g(x) - f(x) \big|\big]
\]
The last inequality is simply Jensen's inequality. Combining these two, we get 
\begin{equation}~\label{eq:junta-diff}
\mathbf{E}_{x_{\ell+1}, \ldots, x_n}[|f(x) - f_W(x)|] \leq2 \cdot \mathbf{E}_{x_{\ell+1}, \ldots, x_n}[|f(x) - g(x)|]. 
\end{equation}
This in turn implies that 
\begin{equation}~\label{eq:junta-diff-1}
\mathbf{E}_{x_{1}, \ldots, x_n}[|f(x) - f_W(x)|] \leq2 \cdot \mathbf{E}_{x_{1}, \ldots, x_n}[|f(x) - g(x)|] \leq 2\cdot \kappa. 
\end{equation}
Finally, we see that 
\begin{eqnarray*}
|f_W(x) - f_W(y)|  &=& \big| \mathbf{E}_{x_{\ell+1}, \ldots, x_n}[f(x_1, \ldots, x_\ell, x_{\ell+1} , \ldots, x_n) - f(y_1, \ldots, y_\ell, x_{\ell+1},\ldots, x_n)] \\
&\leq&  \mathbf{E}_{x_{\ell+1}, \ldots, x_n} \big[ \big| f(x_1, \ldots, x_\ell, x_{\ell+1} , \ldots, x_n) - f(y_1, \ldots, y_\ell, x_{\ell+1},\ldots, x_n) \big| \big] \\
&\le&  \mathbf{E}_{x_{\ell+1}, \ldots, x_n} [ C \cdot \Vert  (x_1, \ldots, x_\ell) - (y_1, \ldots, y_\ell) \Vert_2] \le C \Vert x -y\Vert_2. 
\end{eqnarray*}
This finishes the proof. 
\end{proof}
With these two lemmas, we can now finish the proof of Theorem~\ref{thm:net}. 
{\begin{proofof}{Theorem~\ref{thm:net}}
First, we apply Lemma~\ref{prop:gradient-bound} to obtain that $P_t f$ is $C_t=O(t^{-1/2})$-Lipschitz. Since $P_t f$ is $\delta$-close to a $W$-junta, we obtain that $P_t f$ is $2\delta$ close to a $W$-junta $g$ which is $C_t$-Lipschitz (follows from Lemma~\ref{lem:Lip-1}). Let $ \mathsf{Cover}(t,k,\delta)=\mathsf{Cover}_{k,C_t,\frac{\delta}{2}}$ (constructed  in Lemma~\ref{lem:net-1}). By a rotation of the coordinates, it follows from the definition of $ \mathsf{Cover}(t,k,\delta)$ that there exists $h \in\mathsf{Cover}(t,k,\delta)$ such that $h( \langle w_1, x \rangle, \ldots, \langle w_k, x \rangle)$ is $\frac{\delta}{4}$ close to $g$. The required properties now  follow from Lemma~\ref{lem:net-1}. 
\end{proofof}}

\section{Some useful results from linear algebra}
The next lemma states for any $v_1, \ldots, v_\ell$ which are $(\eta,\gamma)$-linearly independent, we can 
find a set of vectors $(w_1, \ldots, w_\ell)$ (expressed as linear combination of $(w_1, \ldots, w_\ell)$) which is close to being an orthonormal basis of the $\mathsf{span}(v_1, \ldots, v_\ell)$ provided we have sufficiently good approximations of $\{\langle v_i, v_j \rangle\}_{1 \le i,j \le \ell}$. Now, modulo the \emph{quantitative estimates}, this is essentially just a consequence of a procedure such as the Gram-Schmidt orthogonalization. However, the complexity of our testing algorithm is dependent on the quantitative estimates, so we work out the linear algebra here. 

\begin{lemma}~\label{prop:linear}
Let $v_1, \ldots, v_\ell$ be a $(\eta, \gamma)$-linearly independent vectors. Then, for any error parameter $\nu>0$ and $\lambda =\lambda(\ell,\nu,\eta, \gamma)$ defined as 
$$
\lambda= 2 \frac{\nu}{\ell^2 \cdot \eta} \cdot \big( \frac{\gamma}{2\cdot \ell \cdot \eta}\big)^{3\ell+3}, 
$$ given numbers $\{\beta_{i,j}\}_{1\le i,j \le \ell}$ such that $|\beta_{i,j} - \langle v_i, v_j \rangle| \le \lambda$, we can compute numbers $\{\alpha_{i,j}\}_{1\le i,j \le \ell}$ such that: 
\begin{enumerate}
\item For $\xi (\ell, \eta, \gamma)$ defined as 
\[
\xi ( \ell, \eta, \gamma) =\sqrt{2\ell} \cdot \bigg(\frac{2 \ell \cdot \eta}{\gamma} \bigg)^{\ell+1}, 
\]
 we have 
$|\alpha_{i,j}| \le \xi ( \ell,\eta, \gamma)$.
\item There is an orthonormal basis $(w_1, \ldots, w_\ell)$ of $\mathsf{span}(v_1, \ldots, v_\ell)$ such that for $\Vert w_{i} - \sum_{j}\alpha_{i,j} v_j \Vert_2 \le \nu$. 
\end{enumerate}
\end{lemma}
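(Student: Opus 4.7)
The plan is to run an approximate Gram--Schmidt orthogonalization, substituting the estimates $\beta_{i,j}$ for the true inner products $\langle v_i, v_j \rangle$ throughout. The key observation is that, viewed as a procedure producing the \emph{coefficients} that express an orthonormal basis as a linear combination of the input vectors, Gram--Schmidt requires only the Gram matrix of the inputs; hence the output coefficients $\alpha_{i,j}$ can be computed purely from the $\beta_{i,j}$, without ever touching the vectors $v_j$ themselves. Concretely, suppose inductively that we have already produced $\alpha_{k,j}$ for $k < i$, and write $\hat w_k = \sum_{j \le k} \alpha_{k,j} v_j$. To define $\alpha_{i,j}$, first form the (implicit) residual
\[
u_i \;=\; v_i \;-\; \sum_{k < i} \Bigl(\sum_{j \le k} \alpha_{k,j}\,\beta_{i,j}\Bigr)\hat w_k \;=\; \sum_{j \le i} \tau_{i,j}\, v_j,
\]
where the coefficients $\tau_{i,j}$ are computable directly from the known $\alpha_{k,j}$ and $\beta_{i,j}$. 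Then estimate $\|u_i\|^2$ by $\hat\sigma_i^2 := \sum_{a,b \le i} \tau_{i,a}\tau_{i,b}\,\beta_{a,b}$ and set $\alpha_{i,j} := \tau_{i,j}/\hat\sigma_i$.

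The analysis is a two-layer induction, comparing the output to the \emph{exact} Gram--Schmidt coefficients $c_{i,j}$ (those one would obtain by running the same procedure with the true inner products). First I will bound the sizes of the $c_{i,j}$: the exact residual at step $i$ has norm at least $\gamma$ by the $(\eta,\gamma)$-linear independence, while its coefficients grow by at most a factor of roughly $\ell\eta/\gamma$ at each step. A direct induction then gives $|c_{i,j}| \le \sqrt{2\ell}\,(2\ell\eta/\gamma)^{\ell+1}$, matching the claimed $\xi(\ell,\eta,\gamma)$; the same bound transfers to $\alpha_{i,j}$ once the errors are controlled. Second I will bound the error at each step: the approximation $|\beta_{a,b} - \langle v_a, v_b\rangle| \le \lambda$ introduces additive error proportional to $\lambda$ times the current coefficient sizes into both $\tau_{i,j}$ and $\hat\sigma_i^2$, and the normalization by $\hat\sigma_i$ then amplifies errors by a factor of roughly $1/\gamma^3$, via the Lipschitz estimate $|1/\sqrt{x} - 1/\sqrt{y}| \le |x-y|/(2\min(x,y)^{3/2})$ applied with $\min(x,y) \ge \gamma^2/2$. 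Iterating this over $\ell$ steps yields a total amplification of $(2\ell\eta/\gamma)^{O(\ell)}$, which matched against $\nu$ recovers the stated formula for $\lambda$; the factor $(\gamma/(2\ell\eta))^{3\ell+3}$ is precisely what absorbs the cubic $1/\gamma$ factor from the normalization step compounded over $\ell$ iterations, leaving $O(\ell)$ spare powers to dominate the numerator growth $\ell\eta$.

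The main obstacle is bookkeeping in the error recurrence, especially counting the powers of $1/\gamma$ correctly: each normalization step contributes a cubic factor rather than a single power, which is what forces the exponent $3\ell+3$ instead of $\ell+O(1)$ in the definition of $\lambda$. Once the recurrence is carefully set up --- separating the contributions from (i) propagated errors in $\alpha_{k,j}$ for $k < i$, (ii) the raw errors in the $\beta_{i,j}$, and (iii) errors in estimating $\hat\sigma_i$ itself --- the verification becomes a mechanical, if tedious, induction.
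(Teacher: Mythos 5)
Your route (approximate Gram--Schmidt run entirely on the estimated Gram entries) is genuinely different from the paper's, which never iterates: the paper forms the estimated Gram matrix $\widetilde\Sigma$, shows via Proposition~\ref{prop:sing-1} and Weyl's inequality that $\sigma_{\min}(\widetilde\Sigma)\ge (1-\rho)\,\sigma_{\min}(\Sigma)$ with $\sigma_{\min}(\Sigma)\ge(\gamma/(2\ell\eta))^{2\ell+2}$, and takes $\alpha_{i,j}$ from $\widetilde\Sigma^{-1/2}$; the exponent $3\ell+3$ then appears in one shot as the $3/2$ power of $\sigma_{\min}(\Sigma)^{-1}$ in the perturbation bound $\Vert\Sigma^{-1/2}-\widetilde\Sigma^{-1/2}\Vert\lesssim\Vert\widetilde\Sigma-\Sigma\Vert/\sigma_{\min}(\Sigma)^{3/2}$, and item~1 comes from $\Vert\widetilde\Sigma^{-1/2}\Vert_F\le\sqrt{\ell}/\sigma_{\min}(\widetilde\Sigma)^{1/2}$. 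A stepwise Gram--Schmidt proof can in principle be made to work, but as sketched your error analysis has a real gap. You track \emph{coefficient-wise} errors $|\alpha_{k,j}-c_{k,j}|$ and feed them into step $i$ through the projection coefficients $\sum_{j'}\alpha_{k,j'}\beta_{i,j'}$; bounding that contribution termwise forces you to multiply the accumulated error by the coefficient magnitudes, which are themselves as large as $(2\ell\eta/\gamma)^{\ell+1}$. Compounded over $\ell$ steps this gives a total amplification of order $(2\ell\eta/\gamma)^{\Theta(\ell^2)}$, which the stated $\lambda$ (whose safety margin is only $(\gamma/(2\ell\eta))^{3\ell+3}$) cannot absorb. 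The assertion that ``iterating this over $\ell$ steps yields a total amplification of $(2\ell\eta/\gamma)^{O(\ell)}$'' is exactly the point that needs proof, and the recursion you set up does not deliver it.

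The fix is to measure propagated error at the level of the implicit vectors rather than the coefficients: with $\hat w_k=\sum_j\alpha_{k,j}v_j$ and $w_k$ the exact Gram--Schmidt vectors, one has the cancellation $\sum_{j'}(\alpha_{k,j'}-c_{k,j'})\langle v_i,v_{j'}\rangle=\langle v_i,\hat w_k-w_k\rangle\le\eta\Vert\hat w_k-w_k\Vert_2$, so the recursion for $\delta_k:=\Vert\hat w_k-w_k\Vert_2$ amplifies only by roughly $\ell\eta/\gamma$ per step (plus fresh $\lambda$-terms weighted by the coefficient sizes), giving a genuinely $(2\ell\eta/\gamma)^{O(\ell)}$ total. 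Relatedly, your accounting of where $3\ell+3$ comes from (``a cubic $1/\gamma$ factor from the normalization step compounded over $\ell$ iterations'') is not the right mechanism: the $1/\gamma^3$ from the Lipschitz bound on $x\mapsto x^{-1/2}$ hits only the fresh error in estimating $\Vert u_i\Vert^2$ at that step, not the propagated error, and if it really compounded multiplicatively as you describe the budget would again be blown. So either redo the induction at the vector level with explicit constants and verify that the specific $\lambda$ in the statement suffices, or adopt the global matrix-perturbation argument, which is what the paper does and which yields both items directly.
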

\begin{proof}
 Consider the symmetric matrix $\Sigma\in\mathbb{R}^{\ell \times \ell}$ defined as $\Sigma_{i,j} = \langle v_i, v_j \rangle$. By Proposition~\ref{prop:sing-1}, $\Sigma$ is non-singular. Define the matrix $\Gamma = \Sigma^{-1/2}$. It is easy to see that the columns of 
 $V \cdot \Sigma^{-1/2}$ form an orthonormal basis of $\mathsf{span}(v_1, \ldots, v_\ell)$. Here $V = [v_1 | \ldots  | v_\ell]$. Of course, we cannot compute the matrix $\Sigma$ exactly and consequently, we cannot compute the matrix $\Sigma^{-1/2}$ either. Instead, if we define the matrix $\widetilde{\Sigma}$ as $\widetilde{\Sigma}(i,j) = \beta_{i,j}$, then observe that $\widetilde{\Sigma}$ is symmetric. Next, observe that Proposition~\ref{prop:sing-1}, we have that 
 $$
 \sigma_{\min}(\Sigma) = \sigma_{\min}^2(V) \geq \bigg(\frac{\gamma}{2 \cdot \ell \cdot \eta} \bigg)^{2 \ell +2}. 
 $$
 Define a  parameter $\rho$ as
 $$
 \rho = \frac{2 \nu}{\ell \cdot \eta} \cdot \bigg( \frac{\gamma}{2 \cdot \ell \cdot \eta}\bigg)^{\ell+1}. 
 $$
 Now, with this setting, observe that 
 $$
 \ell \cdot \lambda = \rho \cdot \bigg( \frac{\gamma}{2 \cdot \ell \cdot \eta} \bigg)^{2\ell +2} \le \rho \cdot \sigma_{\min} (\Sigma).  
 $$
Further, since entrywise, $\Sigma$ and $\widetilde{\Sigma}$ differ by at most $\lambda$, hence $\Vert \widetilde{\Sigma} - \Sigma \Vert_F\le \ell \cdot \lambda$.  First, by Weyl's inequality (Lemma~\ref{lem:Weyl}), we have that 
\begin{equation}~\label{eq:sigma-min}
\sigma_{\min}(\widetilde{\Sigma}) \ge  \sigma_{\min}({\Sigma})- \Vert \Sigma-\widetilde{\Sigma} \Vert_F  \ge (1- \rho) \cdot \sigma_{\min}(\Sigma).
\end{equation}
Thus, $\widetilde{\Sigma}$ is also psd. 
Now, we apply the matrix perturbation bound to matrices $\Sigma$ and $\widetilde{\Sigma}$ (Corollary~\ref{corr:mat-perturb} with parameter $c = \big( \frac{\gamma}{2\cdot \ell \cdot \eta}\big)^{2\ell+2}$)
 to obtain that 
 \[
 \Vert \Sigma^{-1/2} - \widetilde{\Sigma}^{-1/2} \Vert \leq \frac{\rho}{2 \big( \frac{\gamma}{2\cdot \ell \cdot \eta}\big)^{\ell+1}} = \frac{2\nu}{\ell \cdot \eta}. 
 \]
We now define $\alpha_{i,j} = \widetilde{\Sigma}^{-\frac12}(j,i)$. We also define $\beta_{i,j} = {\Sigma}^{-\frac12}(i,j)$. Note that the vectors $w_i = \sum_{i} \beta_{j,i} v_j$ forms an orthonormal basis. As the matrices $\Sigma^{-\frac12}$ and $\widetilde{\Sigma}^{-\frac12}$ are $\frac{2\nu}{\ell \cdot \eta}$ close in operator norm, this immediately implies item 2. To get item 1, we recall the following basic inequality for Frobenius norm of an inverse matrix. In particular, for a symmetric matrix $A \in \mathbb{R}^{\ell \times \ell}$, 
$\sigma_{\min}(A) \cdot \Vert A^{-1} \Vert_F \le \sqrt{\ell}$. Thus, 
\[
\Vert \widetilde{\Sigma}^{-1/2} \Vert_F  \le \frac{\sqrt{\ell}}{\sigma_{\min}(\widetilde{\Sigma}^{1/2})} = \sqrt{\frac{\ell}{\sigma_{\min}(\widetilde{\Sigma})}} \le
 \sqrt{2\ell} \cdot \bigg(\frac{2 \ell \cdot \eta}{\gamma} \bigg)^{\ell+1}. 
\]
The last inequality uses (\ref{eq:sigma-min}) and the fact that $\rho \le \frac12$. This immediately implies 
the first item. 
 
\end{proof}
\begin{proposition}~\label{prop:sing-1}
Let $v_1, \ldots, v_\ell$ be a $(\eta, \gamma)$-linearly independent vectors.  Let $V = [v_1 | \ldots | v_\ell]$. Then, the smallest singular value of $V$
is at least $(\frac{ \gamma}{2 \cdot \ell \cdot \eta})^{\ell+1}$. 
\end{proposition}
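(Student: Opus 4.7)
The plan is to use the QR decomposition together with the classical identity $|\det R| = \prod_i \sigma_i(R)$. Applying Gram--Schmidt to $v_1, \ldots, v_\ell$ yields an orthonormal basis $q_1, \ldots, q_\ell$ of $\mathrm{span}(v_1, \ldots, v_\ell)$ together with an upper-triangular matrix $R \in \mathbb{R}^{\ell \times \ell}$ such that $V = QR$, where $Q = [q_1 | \cdots | q_\ell]$ has orthonormal columns. Since $Q^T Q = I_\ell$, for every unit vector $x \in \mathbb{R}^\ell$ we have $\|Vx\|_2 = \|Rx\|_2$, so $\sigma_{\min}(V) = \sigma_{\min}(R)$. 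Thus the problem reduces to lower-bounding the smallest singular value of the $\ell \times \ell$ matrix $R$.

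For the diagonal entries, a standard property of the QR decomposition is that $R_{i,i} = \mathrm{dist}(v_i, \mathrm{span}(v_1, \ldots, v_{i-1}))$, which by the second condition in Definition~\ref{def:vector-independence} is at least $\gamma$. For the off-diagonal entries, one has $R_{j,i} = \langle v_i, q_j \rangle$ for $j \le i$, so $\sum_{j \le i} R_{j,i}^2 = \|v_i\|_2^2 \le \eta^2$ by the first condition in Definition~\ref{def:vector-independence}. Therefore
\[
|\det R| = \prod_{i=1}^{\ell} R_{i,i} \ge \gamma^\ell,
\qquad
\|R\|_F^2 = \sum_{i=1}^{\ell} \sum_{j \le i} R_{j,i}^2 \le \ell \eta^2.
\]

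Finally, using that $\|R\|_F$ upper-bounds the operator norm, every singular value of $R$ is at most $\sqrt{\ell}\,\eta$. Combining this with $|\det R| = \prod_{i=1}^{\ell} \sigma_i(R)$ gives
\[
\sigma_{\min}(R)
= \frac{|\det R|}{\prod_{i=1}^{\ell-1} \sigma_i(R)}
\ge \frac{\gamma^{\ell}}{(\sqrt{\ell}\,\eta)^{\ell-1}}.
\]
A short arithmetic comparison, using that $\gamma \le \eta$ (which follows since $\mathrm{dist}(v_i, \mathrm{span}(v_1, \ldots, v_{i-1})) \le \|v_i\|_2 \le \eta$), shows that the right-hand side is at least $\bigl(\tfrac{\gamma}{2\ell\eta}\bigr)^{\ell+1}$, yielding the desired bound. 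The main (and only slightly delicate) step is recognizing the QR reduction; once that is in place, the diagonal lower bound and the Frobenius upper bound are immediate from Definition~\ref{def:vector-independence}, and the rest is a one-line determinantal computation.
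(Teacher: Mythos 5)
Your argument is correct and takes a genuinely different route from the paper's. The paper works directly with the variational characterization $\sigma_{\min}(V)=\inf_{\|x\|_2=1}\|Vx\|_2$: setting $\rho=\frac{\gamma}{2\ell\eta}$, it picks the largest index $j$ with $|x_j|\ge\rho^j$, lower-bounds the component of $\sum_{i\le j}x_iv_i$ orthogonal to $\mathrm{span}(v_1,\dots,v_{j-1})$ by $\gamma\rho^j$, and controls the tail $\sum_{i>j}x_iv_i$ by the triangle inequality, ending with $\sigma_{\min}(V)\ge\frac{\gamma}{2}\rho^{\ell}$. You instead pass to the QR factorization and combine $|\det R|=\prod_i\sigma_i(R)$ with $R_{i,i}=\mathrm{dist}(v_i,\mathrm{span}(v_1,\dots,v_{i-1}))\ge\gamma$ and $\sigma_{\max}(R)\le\|R\|_F\le\sqrt{\ell}\,\eta$; this is clean, avoids the coordinate-picking argument entirely, and in fact yields the slightly stronger intermediate bound $\gamma\bigl(\tfrac{\gamma}{\sqrt{\ell}\,\eta}\bigr)^{\ell-1}$. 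Two caveats, both shared with the paper rather than specific to you: (i) you use $R_{1,1}=\|v_1\|\ge\gamma$, which is not literally guaranteed by Definition~\ref{def:vector-independence} (its second condition starts at $i=2$), but the paper's own proof makes the same implicit assumption for $j=1$; (ii) your closing arithmetic comparison does not follow from $\gamma\le\eta$ alone --- your bound scales linearly in the $v_i$'s while the stated bound $\bigl(\tfrac{\gamma}{2\ell\eta}\bigr)^{\ell+1}$ is scale-invariant, so one also needs $\eta$ bounded away from $0$ (e.g.\ $\eta\ge 1$, which holds in every application since $\eta=t^{-1/2}$ with $t\le 1/4$). The paper's final step has exactly the same hidden requirement ($\frac{\gamma}{2}\rho^{\ell}\ge\rho^{\ell+1}$ forces $\ell\eta\ge1$), so this is a defect of the proposition as stated, not a gap in your proof.
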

\begin{proof}
Let us set a parameter $\rho - \frac{\gamma}{2\ell \eta}$. Recall that if $\sigma_{\min}(V)$ is the smallest singular value of $V$, then 
\[
\sigma_{\min}(V) = \inf_{x : \Vert x \Vert_2=1} \Vert V \cdot x \Vert_2
\]
Let us try to lower bound the right hand side. To do this, let $x \in \mathbb{R}^n$ be any unit vector and note that  $V \cdot x = \sum_{1 \le i \le \ell} v_i \cdot x_i$. 
Now, let $j$ be the largest coordinate such that $|x_j| \ge \rho^{j}$ (note that there has to be such a $j$ since $x$ is a unit vector and $\rho<1/2$). Define $w = \sum_{i \le j} v_i x_i$. Then, observe that its component in the direction orthogonal to the span of $\{v_1, \ldots, v_{j-1}\}$ is at least $\gamma \cdot \rho^j$ in magnitude. On the other hand, $\Vert \sum_{i > j} v_i x_i \Vert_2 \le \rho^{j+1} \cdot \ell \cdot \eta$. By triangle inequality, we obtain that 
\[
\Vert \sum_{i} v_i x_i \Vert_2 \ge \Vert \sum_{i \le j} v_i x_i \Vert_2 - \Vert \sum_{i >  j} v_i x_i \Vert_2 \ge \gamma \cdot \rho^j - \ell \cdot \eta \cdot \rho^{j+1}  \ge  \frac{\gamma \cdot \rho^j}{2}. 
\]
The last inequality uses the value of $\rho$. This finishes the proof.
\end{proof} 

\

\end{document}